\theoremstyle{definition}
\newtheorem{definition}{Definition}[chapter]
\newtheorem{theorem}{Theorem}[chapter]
\begin{document}
 % the front matter
\onehalfspacing
% some details about the thesis
\title{\textcolor{black}{Design and Analysis of Pairing-Friendly Elliptic Curves for Cryptographic Primitives}}
\author{Mahender Kumar}
\advisor{Prof. Satish Chand}
\dean{}

% about the degree
\degree{Doctor of Philosophy}
\field{Computer Science \& Technology}
\degreeyear{2020}
\degreemonth{June}

% about the university
\department{School of Computer \& Systems Sciences}
\university{Jawaharlal Nehru University}
\universitycity{New Delhi}
\universitycitypincode{110067}
\universitystate{}
\universitycountry{INDIA}
\maketitle
\copyrightpage

\addcontentsline{toc}{chapter}{Certificate}
\certificatepage

\addcontentsline{toc}{chapter}{Declaration}
\declarationpage

\justify

\addcontentsline{toc}{chapter}{Abstract}
\abstractpage

\addcontentsline{toc}{chapter}{List of contents}
\tableofcontents
\addcontentsline{toc}{chapter}{List of publications}
\listofpublications
\addcontentsline{toc}{chapter}{List of figures}
\addcontentsline{toc}{chapter}{List of tables}
\listoftables
\addcontentsline{toc}{chapter}{List of algorithms}
\listofalgorithms

\addcontentsline{toc}{chapter}{List of Symbols}
\listofsymbol

\raggedleft
\dedicationpage

\justify
\addcontentsline{toc}{chapter}{Acknowledgements}
\acknowledgments

% include each chapter...
%\include{chapters/chapter.tex}

\begin{savequote}[100mm] 
Computer science is no more about computers than astronomy
is about telescopes.
\qauthor{Edsger Dijkstra} 
\end{savequote}

\chapter{Introduction}

\begin{doublespace}
Cryptography has been used by military and defence organizations to secure communication since the early 1900s. Cryptography is increasingly used to protect online transactions, data communication, patient health information, internet access, and cloud computation. Cryptosystems are divided into two categories based on the number of cryptographic keys used: private-key cryptosystems (PrKC) or symmetric-key cryptosystems, and public-key cryptosystems (PKC) or asymmetric-key cryptosystems. In a PrKC, two parties use the same secret key for encryption and decryption. The symmetric key must be random or cryptographically generated to ensure security. Although PrKC is fast and straightforward, securely exchanging keys between two parties who have not met before is challenging. One significant cryptography breakthrough was the PKC development by Diffie and Hellman \cite{diffie1976new}. PKC addresses the key distribution problem using public and private keys. The keys are mathematically linked, and the message is encrypted with the public key and decrypted with the private key. This method is slower than PrKC but more flexible. Rivest \textit{et al.} \cite{rivest1978method} introduced RSA encryption, a PKC method based on the Diffie-Hellman model and integer factorization problem. RSA provides digital data and communication integrity, confidentiality, authenticity, and non-repudiation. The ElGamal encryption scheme, defined over the cyclic group of the elliptic curve, is another PKC method presented by ElGamal \cite{elgamal1985public} in 1985.

In 1985, Koblitz \cite{koblitz1987elliptic} and Miller \cite{miller1985use} independently discovered the Elliptic Curve Cryptosystem (ECC), a PKC’s variant, which was constructed on the group of points on an elliptic curve over a finite field. The ECC security is equivalent to solving the Discrete Logarithm Problem on the Elliptic Curve (ECDLP). In contrast, the security of the RSA-based cryptosystem is based on finding the factor of the product of two large prime integers. The factorization Problem (FP) is solved using the Number Field Sieve (NFS) Algorithm with sub-exponential complexity \footnote{In cryptography, sub-exponential-time refers to algorithms whose running time grows faster than polynomial-time and more slowly than exponential-time.}. In contrast, no known sub-exponential algorithm can solve the ECDLP if the curves are chosen suitably. Compared with RSA, the ECC generates small-size keys with the same level of security. For example, 256-bit ECC keys achieve the same security level as 3072-bit RSA keys. Hence, it is indispensable in applications requiring smaller bandwidth and memory.

The pairing-based cryptography (PBC) has recently gained much attention and is being standardized as a next-generation cryptosystem. The pairing maps the discrete logarithm in a subgroup of an elliptic curve to the discrete logarithm in a finite field. Menezes \textit{et al.} \cite{menezes1993reducing} present the Weil pairing and attack on pairing to efficiently reduce the elliptic curve-based discrete logarithm problem (ECDLP) to a discrete logarithm problem (DLP) over a finite field using Weil pairing.  This attack is known as the MOV attack, based on the initials of three inventors. Similarly, Frey and Ruck \cite{frey1999tate} use the Tate pairing to reduce the ECDLP to DLP in a finite field, known as FR attack. It is well-known that not all elliptic curves are suitable for pairing. Thus, it is challenging to find suitable elliptic curves containing a subgroup with optimal embedding degree k, such that k is big enough to secure against FR attack but small enough that the arithmetic in a finite field is efficiently computable. There have been in-depth studies of the elliptic curves suitable for pairing, known as Pairing-Friendly Elliptic Curves (PF-EC). Generally, the PF-EC has a sizeable prime order subgroup with a small embedding degree.

The ECC offers several advantages over the traditional RSA-based cryptosystem, including a smaller key size, efficient computation, and the same level of security. This has encouraged the exploration and adoption of the ECC, especially in applications that were not previously possible with the invention of the PBC. One such application is Identity-Based Encryption (IBE), widely used in real-time scenarios. Additionally, the ECC offers a solution to the key escrow problem in the identity-based cryptosystem. Our proposed work will focus on the basic building blocks of the ECC and PBC.

\section{Elliptic Curves}
Mathematicians' study of elliptic curves throughout the last century has resulted in a rich and fascinating history. These curves have been applied to solve various mathematical problems, including Fermat's last theorem. The elliptic curve's ability to construct a group structure is advantageous for implementing PKC. The difficulty of finding discrete logs in a group, due to the absence of a sub-exponential time algorithm, makes the elliptic curves a preferred choice for systems based on the multiplicative group for the finite field. In practical applications, ECC is utilized for its compact implementation and high performance. The Weierstrass equation is the fundamental concept behind ECC, as discussed below.

\subsection{Weierstrass Equations}

\theoremstyle{definition}
\begin{definition} (\textit{Weierstrass equation}). Suppose a field $K$, and its algebraic closure $\overline{K}$. A Weierstrass equation over $K$ is an equation of the form given in (\ref{eq1.1}).

 \begin{equation} \label{eq1.1}
    y^2+a_1 xy+a_3 y=x^3+a_2 x^2+a_4 x+a_6 
\end{equation}
where $a_i \in K$ and $1 \le i \le 6$.
\end{definition}

\begin{definition} (\textit{Elliptic curve}). An elliptic curve $E \subset K^2$ is a non-singular \footnote{Non-singular means that the graph has no cusps, self-intersections, or isolated points} projective closure of the smooth affine curve, given in  (\ref{eq1.2}), 

 \begin{equation} \label{eq1.2}
    y^2=x^3+ax+b 
\end{equation}
where $a,b \in K$. It is derived from the Weierstrass equation whose characteristic \footnote{The smallest number of 1s that sum to 0 is called the characteristic of the finite field, and the characteristic must be a prime number} $K$ does not equal to $2$ or $3$, $Char(K) \ne 2$ or $3$.
\end{definition}

In an elliptic curve, the projective point $O:= (0,1,0)$ is the point at infinity. Due to this, we individually deal with the elliptic curve as affine curves and viewpoints at infinity. This point plays an essential role in the implementation of modern cryptography. 

\begin{definition} (\textit{Discriminant}). To find the non-singularity of a curve defined by a Weierstrass equation, we need to compute the discriminant $\bigtriangleup$ value. The discriminant of a Weierstrass equation is defined in Equation (\ref{eq1.3}).

 \begin{equation} \label{eq1.3}
    \bigtriangleup=-16(4a^3+27b^2 ) 
\end{equation}
\end{definition}

\begin{definition} (\textit{Non-singular curve}). An elliptic curve E is considered non-singular if $\bigtriangleup \ne 0$. Let $a_i$ in (\ref{eq1.1}) are in $K \subset \overline{K}$ which state that the curve is defined over $K$ and can be written as $E/K$, where we define $E(K)$ as the set of K-rational points of $E$.

\end{definition}

\subsection{Group Structure on Elliptic Curve}
To utilize the elliptic curve in modern cryptography, the points on an elliptic curve must have an abelian group structure. Theorem 1.5 illustrates the group structure of the elliptic curve.

\begin{theorem}
Suppose an elliptic curve over $K$ is $E$, denoted as $E/K$. Let two points on elliptic curve $E/K$ are $P$ and $Q$, where $P=(x_P,y_P)$, $Q=(x_Q,y_Q)$. The points on elliptic curve $E/K$ constitute an abelian group, including identity element denoted as $O$, negation of an point $P$ is defined by $-P=(x_P,-y_P)$ and addition operation on $P$ and $Q$ such that $P \ne -Q$ defined by $P+Q=(x_{P+Q},y_{P+Q})$, where 

 \begin{equation} \label{eq1.4}
    x_{P+Q}=(\mu^2- x_P- x_Q )
\end{equation}

 \begin{equation} \label{eq1.5}
    y_{P+Q}=(\mu(x_{P+Q}- x_P )- y_P )
\end{equation}

and, $\mu$ is the slope defined in (\ref{eq1.6}),

\begin{equation} \label{eq1.6}
\mu = \left\{
  \begin{array}{lr}
    \frac{(y_Q - y_P)}{(x_Q - x_P)}modp, \hspace{1cm} if \hspace{0.5cm} P \neq Q \\
  \frac{(3x_P^2+m)}{2y_P}modp, \hspace{1.2cm} otherwise
  \end{array}\right.
\end{equation} 
\end{theorem}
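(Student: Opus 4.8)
The plan is to interpret the addition law geometrically through the chord-and-tangent construction and then verify each group axiom, deriving the stated coordinate formulas along the way. First I would fix the line $\ell$ through $P$ and $Q$ (the tangent line at $P$ when $P=Q$); its slope is exactly the $\mu$ of (\ref{eq1.6}), where the two cases correspond to the secant and tangent situations, the tangent slope arising from implicit differentiation of the curve (\ref{eq1.2}). Substituting the line $y=\mu(x-x_P)+y_P$ into the Weierstrass equation produces a monic cubic in $x$ whose three roots are $x_P$, $x_Q$, and the abscissa of the third intersection point. By Vieta's formulas the sum of the roots equals the coefficient of $x^2$, namely $\mu^2$, which gives (\ref{eq1.4}) after transposition; reflecting the third intersection across the $x$-axis (negation) then yields (\ref{eq1.5}). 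This same construction simultaneously establishes closure, since the resulting point lies on $E$ by design.

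Next I would dispatch the easy axioms. Commutativity is immediate, since the secant through $P$ and $Q$ does not depend on the order of the points. The point $O$ at infinity serves as the identity: a vertical line meets $E$ in $P$, $-P$, and $O$, so that both $P+O=P$ and $P+(-P)=O$ follow directly from the construction. The negation formula $-P=(x_P,-y_P)$ is verified by observing that $P$ and $-P$ share an abscissa and both lie on $E$ thanks to the $y\mapsto -y$ symmetry of (\ref{eq1.2}). Each of these reductions needs only a routine substitution into (\ref{eq1.4})--(\ref{eq1.6}) and involves no genuine difficulty.

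The main obstacle is associativity, $(P+Q)+R=P+(Q+R)$, which is the one axiom that is not geometrically transparent. I would treat the generic case first and then recover the degenerate configurations by a limiting argument (the generic locus being dense). For the generic case I see two viable routes. The first is a direct but lengthy algebraic verification, feeding the formulas (\ref{eq1.4})--(\ref{eq1.6}) into themselves and checking that both associations agree; this is mechanical but error-prone. The cleaner route is the synthetic nine-point argument: two plane cubics passing through eight common points are forced to share a ninth, so the two triples of collinear points dictated by the two associations must coincide, which pins down the common third point. Most conceptually, one may identify the point set with the degree-zero divisor class group via $P\mapsto [P]-[O]$, whereupon associativity is inherited from the abelian group law on divisor classes and Riemann--Roch guarantees a unique representative in each class. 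I expect associativity to absorb essentially all the real effort, while closure, identity, inverses, and commutativity remain short.
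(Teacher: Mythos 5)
Your proposal is correct, and it is considerably more complete than what the paper itself supplies. The paper does not really prove the theorem: immediately after the statement it remarks only that the addition operation is commutative ``from the formulae'' and that closure and the existence of additive inverses are ``straightforward to check'' by substituting into the Weierstrass equation; the geometric chord-and-tangent picture is then relegated to the subsequent definitions (additive inverse via B\'ezout, point addition as the reflected third intersection, point doubling via the tangent). Associativity --- the only axiom with genuine content --- is never mentioned, let alone proved. Your outline follows the same geometric starting point (secant/tangent line, substitution into the cubic, Vieta's formulas to extract $x_{P+Q}=\mu^2-x_P-x_Q$, reflection to get $y_{P+Q}$), so on the easy axioms the two treatments agree in spirit, with yours making explicit the derivation the paper only gestures at. Where you genuinely diverge is in confronting associativity at all: both of your proposed routes (the Cayley--Bacharach nine-point argument for two cubics through eight common points, and the identification of $E$ with the degree-zero divisor class group via $P\mapsto [P]-[O]$ with Riemann--Roch supplying unique representatives) are standard and correct, and either would close the gap the paper leaves open. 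The only caution is the one you already flag: the nine-point argument needs the degenerate configurations (repeated points, points involving $O$) handled separately or by a density/limiting argument, and the divisor-class route quietly presupposes machinery (Riemann--Roch, linear equivalence) that the paper introduces only later in its discussion of divisors. Either way, your proposal proves strictly more than the paper does.
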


From  (\ref{eq1.4})-(\ref{eq1.6}), we can say that the addition operation in $E/K$ is commutative and it is straightforward to check that $E/K$ is closure under addition and additive inverse by using the corresponding formulae into elliptic curve’s Weierstrass equation. 

\begin{definition} (\textit{Additive Inverse}). According to B´ezout’s Theorem, the line passing through a point P and O meet on E/K at a unique point, which we denote -P. The additive inverse of point P mirrors point P about the x-axis.
\end{definition}

\begin{definition} (\textit{Point Addition}). The line through P and Q intersect at the third point, denoted as $-(P+Q)$ on curve E/K and the additive inverse of point $-(P+Q)$, i.e., $(P+Q)$ is known as an addition operation on points on $P$ and $Q$ on the elliptic curve. Thus, the \textit{addition operation} is defined as given three aligned non-zero points on an elliptic curve, $P+Q-(P+Q)=O$.
\end{definition}

\begin{definition} (\textit{Point Doubling}). Consider a point $P \in E/K$, the line passing through $P$ is the tangent, and the addition of point $P$ to itself is known as the \textit{point doubling}.
\end{definition}

\begin{figure} 
\begin{subfigure}{.5\textwidth}
  \centering
  \includegraphics[width=1\linewidth]{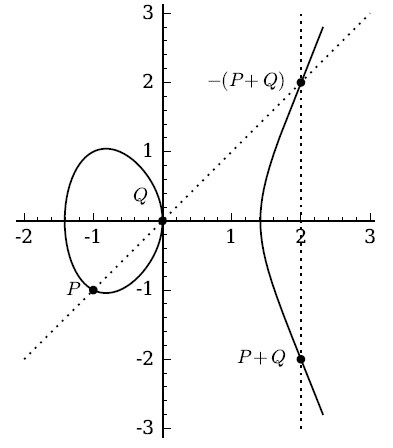}
  \caption{\\$E:y^2=x^3-2x$,\\ $P=(-1,-1),Q=(0,0)$\\
Point addition
}
  \label{fig1.1a}
\end{subfigure}%
\begin{subfigure}{.5\textwidth}
  \centering
  \includegraphics[width=.8\linewidth]{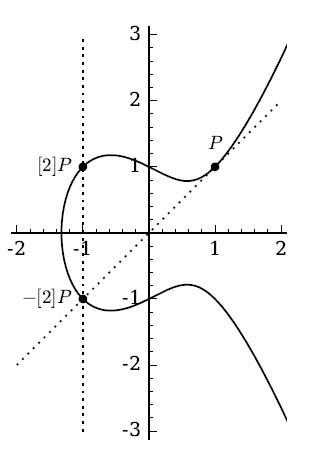}
  \caption{\\ $E:y^2=x^3-x+1$, \\$P=(1,1)$ \\
Point doubling
}
  \label{fig1.1b}
\end{subfigure}
\caption{Addition operation on elliptic curve over $\mathbb{Q}$}
\label{fig1.1}
\end{figure}

 Figure (\ref{fig1.1}) depicts the point addition and doubling on an elliptic curve. This makes it possible for the elliptic curve to be used in cryptography. 
 
\begin{definition} (\textit{Scalar multiplication on elliptic curve}). For any two points P and Q on elliptic curve $E/K$, an element $a \in K$ exists, such as $Q=aP$. Adding the point P using the group addition law is known as scalar point multiplication. 
\begin{align*}
    Q=P+P+\ldots+P \hspace{3mm}
    (a \hspace{2mm} times)
\end{align*}

\end{definition}

The smallest positive integer $q$ such that $qP=O$ is known as the order of point $P$. The points $<O,P,2P,3P,..(q-1)P>$ form a group generated by $P$, denoted as $< P>$.

\subsection{Elliptic Curve over Finite Field }
A finite field is a set of a finite number of elements, for example, the set of integer modulo prime number p, denoted as $\mathbb{Z}_p$ or $\mathbb{F}_p$ or $\mathbb{GF}(p)$ is the finite fields. The finite field $\mathbb{F}_p$ have $p$ elements from $0$ to $(p-1)$. It is noted that the value of p must be prime; otherwise, it will not be a field as it does not have the multiplicative inverse of all integers. For example, a set of integer modulo 10 is not a field because 2 has no multiplicative inverse in the field.

\begin{definition} (\textit{Elliptic curve over finite field}). The elliptic curves E over finite fields $\mathbb{F}_p$, denoted as $E(\mathbb{F}_p)$, hold the required properties to form an abelian group structure. The total number of points in a group is called the order of that group, denoted as $\#E(\mathbb{F}_p)$. 
\end{definition}

It is infeasible to count all possible values from 0 to $(p-1)$ as it needs $O(p)$ steps. For a larger prime, Schoof’s algorithm computes the order of the prime field which is run in polynomial time $Olog(p)$. Under the group addition rule, the set of points on elliptic curves forms a group where O is considered the identity element. This makes it possible for the elliptic curve to be used in cryptography. Point addition, point doubling and point inverse are the basic group operations on the elliptic curves. 

\subsection{Public and Private Keys over Elliptic Curve}
The scalar multiplication on the elliptic curves can define the public and private keys.

\begin{definition} (\textit{Public and private key on elliptic curve}). Suppose $P$ and $Q$ be the points on elliptic curve $E(\mathbb{F}_p)$, such that $Q=aP$, and $a \in \mathbb{F}_p$. In this scenario, $P$ is a group generator, $Q$ is the public key, and a is a private key. A positive integer $k$ such that $q|(p^k-1)$ is known as the embedding degree of the elliptic curve. 
\end{definition}

\subsection{Discrete Logarithm Problem on Elliptic Curves}
The security of elliptic curve cryptography is defined as the Discrete Logarithm Problem on Elliptic Curves (ECDLP).

\begin{definition} (\textit{ECDLP}). Let $\mathbb{G}_1$ be a group of points on the curve $E$. The ECDLP states that given $P,Q \in \mathbb{G}_1$, where $Q=xP$, the advantage of finding $x \in \mathbb{F}_p$ is negligible $\epsilon$, i.e.,

 \begin{equation} \label{eq1.7}
    |Pr[x \in \mathbb{F}_p | A(P,Q)]| \ge \epsilon
\end{equation}
\end{definition}

The ECDLP creates a platform for constructing the asymmetric key algorithms, for example, Elliptic Curve Digital Signature Algorithm (ECDSA), Elliptic Curve Diffie-Hellman Key Exchange (ECDHE) protocol and Elliptic Curve ElGamal-Based Encryption.

\section{Bilinear Pairing on Elliptic Curve}

\subsection{Torsion Group}

A torsion group on an elliptic curve plays a primary role in understanding the concept of PBC.

\begin{definition} (\textit{Torsion point}). Suppose an elliptic curve $E$, we define the $q$-torsion group of $E$ by $E[q]$, that is defined as

\begin{equation} \label{eq1.8}
    E[q] = \{P \in E(K)[q]|[q]P=0\}
\end{equation}
\end{definition}

The point $P$ is a point of finite order or torsion point. It can be seen that if $P,Q \in E[q]$ then $P+Q \in E[q]$ and $-P \in E[q]$. Thus, $E[q]$ is said to be a subgroup of $E$. Here, $P=E(K)[q]$ is represented as the point $P$ lies on the particular field $K$, like $\mathbb{Q}$ or $\mathbb{R}$  and $\mathbb{F}_p$.

\subsection{Divisor on Elliptic Curve}
\begin{definition} (\textit{Divisor on elliptic curve \footnote{Divisors are a way to keep track of the zeroes and poles of a function, where poles are related to projective coordinates}}). Suppose $E$ is an elliptic curve and $f(x,y)$ is a rational function of two variables $x$ and $y$. Some points exist on $E$ where the numerator of $f$ vanishes, and some points of $E$ where the denominator of $f$ vanishes. That means $f$ has zeros and poles on $E$. The divisor associated with $f$ is the formal sum, given in (\ref{eq1.9}). 

 \begin{equation} \label{eq1.9}
    D=div(f)= \sum_{P \in E}n_P[P]
\end{equation}

Where, coefficient $n_P \in \mathbb{Z}_q$, in which many of them are nonzero, so D is the finite sum. The degree of the divisor is defined as the sum of its coefficient, as given in (\ref{eq1.10}).

 \begin{equation} \label{eq1.10}
    deg(D) = deg\left(\sum_{P \in E} n_P[P] \right) = \sum_{P \in E} n_P
\end{equation}
\end{definition}

Now, we define the sum of the divisor, as given in Equation (\ref{eq1.11}).

 \begin{equation} \label{eq1.11}
    \sum(D) = \sum \left( \sum_{P \in E} n_P[P] \right) = \sum_{P \in E} n_P P
\end{equation}

Here, $n_PP$ is the additive operation on $P$ to itself $n_P$ times. 

\subsection{Miller Algorithm}

Miller's algorithm maps two points in an elliptic curve into an element of a finite field. Suppose we have two points on an elliptic curve, say P and Q, then the Miller algorithm, denoted as e yields an element $x \leftarrow e(P, Q)$ on finite fields. Recall that points addition (e.g. $R \leftarrow P+Q$) and scalar multiplication operations (e.g. $R \leftarrow kP$) on an elliptic curve are equivalent to the elements multiplication (e.g. $R \leftarrow kP$) and exponentiation operations (e.g.  $R \leftarrow p^k$) on the field, respectively.

\begin{algorithm}
	\caption{Miller Algorithm} 
	\begin{algorithmic}[1]
	    \State Set $T=P$ and $f=1$
	    \For {$iteration=(n-2),\ldots 0 $}
            \State Set $f=f^2.g_{T,T}$
	        \State Set $T=2T$
	        \If $a_1=1$
	           \State Set $f=f.g_{T,T}$
	           \State Set $T=T+P$
	       \EndIf
	   \EndFor
	   \State Return the value $f$
    \end{algorithmic} 
\end{algorithm}

\begin{definition} (\textit{Miller algorithm}). Let $a \ge 1$ is an integer, which can be written in binary expansion, as given in Equation (\ref{eq1.12}).
 
 \begin{equation} \label{eq1.12}
    a=a_02^0+a_1 2^1+ \dots +a_{n-1}2^{n-1}
\end{equation}

where, $a_i \in \{0,1\}$ and $a_{n-1} \ne 0$. The Miller algorithm gives the function $f_p$ whose divisor satisfies the Equation (\ref{eq1.13}).

 \begin{equation} \label{eq1.13}
    div(f_p )=q[P]-[qP]-(q-1)[O]
\end{equation}
\end{definition}

One of the useful properties of Miller’s algorithm is bilinearity. Suppose four points $P, Q, A$ and $B$ and two integers $a$ and $b$, where $A \leftarrow aP$ and  $B \leftarrow bQ$. The Miller algorithm computes $x \leftarrow e(P,Q)$ for $P$ and $Q$, and $y \leftarrow e(A,B)=e(aP,bQ)$ for $A$ and $B$, which will be related with  $y=x^{ab}$. In other words, the bilinearity property is defined by Equation (\ref{eq1.14}).

 \begin{equation} \label{eq1.14}
    e(aP,bQ)=e(P,Q)^{ab}=e(bP,aQ)
\end{equation}

With this fantastic property, many cryptographic protocols have been constructed, which could never be possible or too complicated, such as short signature, identity-based encryption, and attribute-based encryption.
where, $g_{P,Q}$ is defined in Equation (\ref{eq1.15}).        

\begin{equation} \label{eq1.15}
 g_{P,Q}= \left\{
  \begin{array}{lr}
    \frac{y-y_P-\mu(x-x_P)}{y+x_P+x_Q-\mu^2}, \hspace{1cm} if \hspace{0.5cm}  \mu \ne \infty \\
  x-x_P, \hspace{2cm} if \hspace{0.5cm} \mu = \infty
  \end{array} \right.             
\end{equation}

and $\mu$ is slope. 

\subsection{Weil Pairing}

\begin{definition} (\textit{Weil Pairing}). Let $P,Q \in E[q]$, and $f_P$ and $f_Q$ denote the rational functions on $E[q]$ satisfying $div(f_P)=q[P]-q[0]$ and $div(f_Q )=q[Q]-q[0]$. The Weil pairing $e_W$ of $P$ and $Q$ is given in Equation (\ref{eq1.16}).   

 \begin{equation} \label{eq1.16}
    ge_W (P,Q)= \frac{f_P (Q+S)/f_P(S)}{f_Q (P-S)/f_Q(-S)}
\end{equation}

where $S \in E[q]$ is any random point such that all elements on the right-hand side are defined and nonzero, i.e.,  $S \notin \{O,P,-Q,P-Q\}$.  
\end{definition}

The Weil pairing satisfies the following properties.
\begin{itemize}
    \item \textit{\textbf{$q^{th}$ root of unity}}:  For all $P,Q \in E[q]$, the Weil pairing is 
	
 \begin{equation} \label{eq1.17}
    e_W (P,Q)^q=1
\end{equation}
	\item \textbf{\textit{Bilinearity}}: For all $P,P_1,P_2,Q,Q_1,Q_2 \in E[q]$, 
	
\begin{equation} \label{eq1.18}
    e_W (P_1+P_2,Q)=e_W (P_1,Q) e_W (P_2,Q)
\end{equation}
 
and 
\begin{equation} \label{eq1.19}
    e_W (P,Q_1+Q_2 )=e_W (P,Q_1 ) e_W (P,Q_2)
\end{equation}
    \item \textbf{\textit{Alternating}}:  For all $P,Q \in E[q]$, the Weil pairing is 
\begin{equation} \label{eq1.20}
    e_W (P,Q)=1
\end{equation}

which implies that $e_W (P,Q)=e_W (Q,P)^{-1}$

    \item \textit{\textbf{Non-degeneracy}}: For all $P,Q \in E[q]$, If $e_W (P,Q)=1$, then $P=O$
\end{itemize}
	
\subsection{Tate Pairing }

In comparison with Weil pairing, Tate pairing is computationally more efficient. Suppose there is an elliptic curve $E$ over $\mathbb{F}q$, a prime $l$, point $P \in E(\mathbb{F}q)[l]$ and $Q \in (\mathbb{F}q)$. Pick a rational function $f_P$ on $E$ with divisor $div(f_P )=l[P]-l[O]$. The Tate pairing of $P$ and $Q$ is computed as 

\begin{equation} \label{eq1.21}
    \tau(P,Q)=\frac{f_P (Q+S)}{f_P (S)}\in \mathbb{F}q
\end{equation}

where $S \in E(\mathbb{F}q)$ is point such $f_P(Q+S)$ and $f_P(S)$ are defines and nonzero. 

\subsection{Weil Pairing over Field of Prime Power Order}

\begin{definition} (\textit{Embedding degree}).  Suppose an elliptic curve over $\mathbb{F}_p$ be $E$ and $r \ge 1$ be an integer such that $p$ is not divisible to $r$. Then, the embedding degree is defined as the smallest integer $k$ such that 

\begin{equation} \label{eq1.22}
   E(\mathbb{F}_{p^k}) \cong \mathbb{Z}_r  \times \mathbb{Z}_r
\end{equation}

\end{definition}

The embedding degree helps the Weil pairing to map ECDLP $E(\mathbb{F}_p)$ into the DLP in the field $\mathbb{F}_{p^k}$.

\begin{definition} (\textit{MOV algorithm}). The MOV algorithm reduces the ECDLP in elliptic cure $E({\mathbb{F}_p})$ to the DLP in the field $\mathbb{F}_{p^k}$. Algorithm 1.2 summarizes the MOV algorithm.
\end{definition}

\begin{algorithm}
	\caption{MOV Algorithm} 
	\begin{algorithmic}[1]
	    \State Set $N= \# E(\mathbb{F}_{p^k})$
	    \State Select any point $T \in E(\mathbb{F}_{p^k})$ but $T \notin E(\mathbb{F}_p)$
	    \State Let $T'=(N/l)T$
	    \If $T'=0$
	        \State run Step 2
	        \State $T'$ is point of order $l$, so run Step 8
	   \EndIf
	   \State Compute Weil pairing as $a=e_W (P,T') \in \mathbb{F}_p$ and $b=e_W (Q,T') \in \mathbb{F}_p$
	   \State Solve the DLP for $a, b \in  \mathbb{F}_p$, find an exponent $n$ such that $b=a^n$.
	   \State Also $Q=nP$, so the ECDLP has been solved.
    \end{algorithmic} 
\end{algorithm}

\begin{definition} (Distortion map). Suppose an elliptic curve over $\mathbb{F}_p$ be $E$ and $l \ge 3$ be a prime. Let a point $P \in E(\mathbb{F}_p)[l]$ on elliptic curve order l and $\Phi: E \rightarrow $E be mapped from $E$ to itself. Then mapping $\Phi$ is said to be a $l$-distortion map for $P$ if satisfied the two properties:
\begin{itemize}
    \item $\Phi(nP) = n\Phi(P)$ for all $n \ge 1$
    \item $e_W (P,\Phi(P))$ has $l^{th}$ root of unity, i.e., for any integer $r$, multiple of $l$, $e_W (P,\Phi(P))^r$ is 1.
\end{itemize}
\end{definition}

\begin{definition} (\textit{Modified Weil Pairing}). Suppose an elliptic curve over $\mathbb{F}_p$ be $E$ and $l \ge 3$ be a prime. Let point $P \in E(\mathbb{F}_p)[l]$ on elliptic curve order $l$ and $\Phi$ be an $l$-distortion map for $P$. The modified Weil pairing $e_W$ on $E[l]$ is defined by

\begin{equation} \label{eq1.23}
    e_W (Q,Q)=e_W (Q,\Phi(Q ))
\end{equation}
\end{definition}

Non-degeneracy is the important property of modified Weil pairing.

\subsection{Mathematical Assumptions}

We give some mathematical assumptions based on bilinear pairing on elliptic curves. 

\begin{definition} (\textit{Computational Diffie-Hellman (CDH) Problem}). The CDH problem states that given $P,Q,R \in E(\mathbb{F}_p)$ and $ \forall x,y \in \mathbb{F}_p$, where $Q=xP$  and $R=yP$, the advantage of finding $xyP$ is negligible $\epsilon$.

\begin{equation} \label{eq1.24}
    \left| \Pr \left[ xyP \in E(\mathbb{F}_p) \mid A(P,Q,R) \right] \right| \ge \epsilon
\end{equation}
\end{definition}

\begin{definition} (\textit{Decision Diffie-Hellman (DDH) problem}). The DDH problem \cite{boneh1998decision} states that given $P,Q,R,S \in E(\mathbb{F}_p)$ and  $ \forall x,y,z \in \mathbb{F}_p$, such that $Q=xP$, $R=yP$ and $S=zP$. The advantage to decide whether $z=xy$ is negligible $\epsilon$.

\begin{equation} \label{eq1.25}
    \left| \Pr \left[ z=xy \mid A(P,Q,R,S) \right] \right| \ge \epsilon
\end{equation}
\end{definition}

\begin{definition} (\textit{Bilinear Diffie-Hellman (BDH) Problem}). The BDH problem \cite{boneh2001identity} states that given $P,Q,R,S \in E(\mathbb{F}_p)$ and $ \forall x,y,z \in \mathbb{F}_p$, such that $Q=xP$, $R=yP$ and $S=zP$. The advantage of finding  $e(P,P)^{xyz} \in \mathbb{F}_p$  negligible $\epsilon$.

\begin{equation} \label{eq1.26}
    \left| \Pr \left[ e(P,P)^{xyz}\in \mathbb{F}_p \mid A(P,Q,R,S) \right] \right| \ge \epsilon
\end{equation}
\end{definition}

\begin{definition} (\textit{Gap Diffie-Hellman (GDH) Problem}). The GDH problem  \cite{choon2003identity} states that given $P,Q,R,S \in E(\mathbb{F}_p)$ and $\forall x,y,z \in \mathbb{F}_p$, such that $Q=xP$, $R=yP$ and $S=zP$. The advantage of finding  $e(P,P)^{xyz} \in \mathbb{F}_p$  negligible $\epsilon$.

\begin{equation} \label{eq1.27}
    \left| \Pr \left[ e(P,P)^{xyz}\in \mathbb{F}_p \mid A(P,Q,R,S) \right] \right| \ge \epsilon
\end{equation} 
\end{definition}

\section{Applications of PBC}

\subsection{Identity-Based Encryption}

One example of a modern pairing-based cryptosystem is the implementation of Identity-based Encryption (IBE). In 1984, Shamir proposed the concept of an identity-based cryptosystem (IBC) to address the key management problem associated with traditional PKC \cite{shamir1984identity}. IBC involves deriving a user's public key from unique identities such as email addresses, with the corresponding private key generated by a private key generator (PKG). However, until 2001, implementing a practical IBE was difficult. This changed when Boneh \textit{et al.} \cite{boneh2001identity} successfully implemented identity-based encryption using Weil pairing. The IBC framework consists of four algorithms that are defined as follows:

\begin{itemize}
    \item 	\textit{Setup}: PKG chooses a finite field $\mathbb{F}_p$ and an elliptic curve $E$ and point $P \in E(\mathbb{F}_p)[l]$ of prime order $q$ such that there exist an $l$-distortion map for $P$. Suppose a modified Weil is pairing $e_l$. Also, PKG needs to pick two cryptographic hash function $H_1:ID \rightarrow E(\mathbb{F}_p)$ and $H_2: \mathbb{F}_p \rightarrow |M|$. PKG computes its master key by selecting an integer $s$ and computing the public parameters $P_0=sP \in E(\mathbb{F}_p)$.
	\item \textit{Encryption}: Suppose Bob wish to send a message $M$ to Alice using her identity $ID_A$. Bob computes her public key as $P_A=H_1(ID_A) \in E(\mathbb{F}_p)$ and pick a random element $r \in \mathbb{Z}_q$, and computes ciphertext $C=(C_0,C_1)$ on message $M$, where 
$C_0=rP$ and $C_1=M \oplus H_2 (e_l (P_A,P_0)^r)$ 
	\item \textit{Key Extraction}: To decryption the ciphertext, Alice must have the private key corresponding to her $ID_A$. Thus, she requests PKG for her private key. Now, PKG computes the private key associated with $ID_A$ as 
$S_A=sP_A=sH_1 (ID_A) \in E(\mathbb{F}_p)$
	\item \textit{Decryption}: Using her private key $S_A$, Alice now decrypts the ciphertext and get the message. She first computes $e_l(S_A,C_0)$ which is considered as ephemeral key for decrypting ciphertext, is equal to $e_l(P_A,P_0)^r$. Alice can extract the message by evaluating it.

	\begin{align*}
	    C_1  \oplus H_2(e_l(S_A,C_0 ))= M \oplus H_2(e_l(P_A,P_0 )^r) \oplus H_2 (e_l (S_A,C_0 )) = M
	\end{align*}
	
\end{itemize}

\textit{Security of IBE is defined by two types of security}: The Weaker notion of security, which is known as semantic security (denoted as IND-ID-CPA), and the stronger notion of security (denoted as IND-ID-CCA) whose security is proven by the game playing between two entities, namely adversary Adv and challenger Ch. The complete security notion is discussed in \cite{boneh2001identity}.

\subsection{Tripartite DH One Round Key Agreement Protocol}

There is another useful pairing application that involves Joux's tripartite one-round key agreement protocol \cite{joux2000one}. This protocol enables three parties to generate a secret key in just one round, which is more efficient compared to the Diffie-Hellman protocol which requires two rounds.

Imagine three individuals named Alice, Bob, and Carl who want to exchange a shared secret key with each other in a single round of parameters. Antoine Joux has enabled this through the use of a pairing-based cryptosystem, as described below.

\begin{itemize}
    \item \textit{Setup}: First, Alice, Bob and Carl agree on an elliptic curve $E$ and point $P \in E(\mathbb{F}q)[l]$ of prime order $q$ such that there exist an $l$-distortion map for $P$. Suppose a modified Weil is pairing $e_l$. 
	\item \textit{Public parameter}: Alice picks $a$, and computes public parameter $A=aP$. Bob picks $b$ and computes the public parameter $B=bP$. Similarly, Carl picks $c$, and computes public parameter $C=cP$. Now, they publish the parameters $A, B$ and $C$.
	\item \textit{Key agreement}: To evaluate the shared secret key, Alice computes the modified pairing as defined as $K_A=e_l (B,C)^a =e_l (P,P)^{abc}$.
	
	Similarly,
	
	Bob and Carl compute the shared key as 
    
    Bob computes:                    $K_B=e_l(A,C)^b =e_l (P,P)^{abc}$
    
    Carl computes:                  $K_C=e_l (A,B)^c=e_l (P,P)^{abc}$
\end{itemize}
They now have the shared secret key as $e_l (P,P)^{abc}$.

\subsection{BLS Short Signature Scheme}

In 2003, Boneh \textit{et al.} \cite{boneh2001short} presented a short signature scheme that produces a short-size signature for the bandwidth-efficient environment. It is defined by the three algorithms: Setup, signature and verification. 
\begin{itemize}
    \item \textit{Setup}: Suppose $e_l$ be a bilinear pairing on group $(\mathbb{G}_1,\mathbb{G}_2)$ such that DHP in $\mathbb{G}_1$ is difficult. Let a cryptographic hash function $H:\{0,1\}^* \rightarrow \mathbb{G}_1$. Alice selects a random integer $a \in [1,n-1]$, which is her private key. Alice computes her public key as $A=aP$.
    \item \textit{Signature}: Using her private key $a$,  Alice computes a signature $S$ on message $m \in \{0,1\}^*$ as $S=aM$, where $M=H(m)$
	\item \textit{Verification}: Using Alice’s public key A, Bob verifies the signature $S$ by checking that the tuple $<A,P,M,S>$ is a valid decisional Diffie-Hellman tuple. Bob verifies from the Equation (\ref{eq1.28}).
	
\begin{equation} \label{eq1.28}
    e_l(P,S)=e_l (A,M)                                                 
\end{equation}
\end{itemize}

Suppose an attacker wants to forge a signature on message $m$. To do so, he needs to compute $S=aH(m)$ for given $P, A$, an instance of the DHP in $\mathbb{G}_1$. The short signature scheme is secure against existential forgery under adaptively chosen message and ID attacks (EF-ID-CMA).

Through our previous discussion, we have discovered that utilizing pairing on elliptic curves has significant advantages in creating cryptographic protocols that were previously unattainable. This has inspired us to delve deeper into the foundation of elliptic curves and pairing-based cryptography, in order to provide secure solutions for the latest advancements in technology. Moving forward, we will elaborate on our motivation and contribution to this thesis.

\section{Motivation and Contribution}

The pairing operation on an elliptic curve is crucial in numerous cryptographic applications in today's practical scenarios, such as big data analysis, cloud computing, and privacy-enhancing schemes. It is a fundamental building block for various cryptographic primitives, enabling secure and efficient operations.

One compelling application of pairing-based cryptography is Identity-Based Encryption (IBE) with pairing. This approach has gained considerable attention in wireless communication, sensor networks, e-commerce, and online transactions, where achieving user authentication with pre-interaction is crucial. By leveraging pairing, we can design cryptographic systems that enhance security and efficiency in ubiquitous technologies like wireless body area networks, the Internet of Things, and e-cash payment systems.

The architecture of an Identity-Based Cryptosystem (IBC) is another key motivation for this thesis. IBC addresses the challenges associated with certificate management in traditional public-key cryptography. However, due to issues like user slandering, key abuse, and secure key issuing problems, IBC has been limited to small network applications despite its usability features. By exploring pairing on elliptic curves, we aim to overcome these limitations and propose an IBC architecture that addresses the key escrow problem and enhances security.

In this thesis, we delve into the comprehensive study of pairing on elliptic curves and focus on selecting suitable curves that enable efficient pairing. Building upon this understanding, we propose an innovative architecture for an identity-based cryptosystem that solves the key escrow problem. Furthermore, we present novel IBE and IBS schemes based on this architecture and an identity-based blind signature scheme that enables secure signing without knowledge of the message's content. We develop e-cash payment and e-voting systems based on these schemes to showcase practical applications.

Moreover, we introduce an identity-based signcyrption scheme that eliminates key escrow concerns and proposes a data communication protocol tailored for wireless body area networks and peer-to-peer video-on-demand systems. Lastly, we present an authenticated key exchange protocol designed for resource-constrained environments. Our research in these areas aims to advance state-of-the-art pairing-based cryptography, providing improved security and usability for identity-based cryptosystems across a wide range of practical applications.

By addressing these motivations, we strive to contribute to the development of secure and efficient cryptographic solutions that can be applied to real-world scenarios, safeguarding sensitive information and facilitating trusted communication in various domains.

\section{Structure of Thesis}

The structure of the thesis is organized as follows: 

\begin{itemize}
    \item In Chapter \ref{chapter2}, we examine the significant families of PF-EC, including BN, BLS12, BLS24, KSS16, and KSS 18. However, these families are vulnerable to the recent NFS attack, necessitating an increase in the key size. We enhance Freeman \textit{et al.}'s taxonomy \cite{freeman2010taxonomy} by introducing some new families that were previously overlooked. Additionally, we evaluate the practical security of various pairing-friendly curve families to determine which families are superior to BN, KSS, and BLS in terms of the required key size. 
    \item In Chapter \ref{chapter3}, we explore a secure and efficient way of issuing keys to tackle the issues of key escrow, key issuance, and user defamation. We utilize multiple semi-trusted authorities and a single Key Generation Center (KGC), where the KGC registers the user, and the authorities provide safeguarded private key shares to the user. Additionally, our proposed structure leverages cloud computing to transfer the heavy computation workload to the cloud, with only minor operations being performed on the user and KGC. By implementing this architecture, we have created secure, escrow-free identity-based encryption against confidentiality breaches and an identity-based signature scheme impervious to forgery.
    \item In Chapter \ref{chapter4}, we introduce two secure Identity-Based Blind Signature (IDBS) schemes that are free of pairing and pairing-friendly. These schemes are resistant to forgeable attacks and possess the property of blindness. By utilizing our proposed IDBS schemes, we showcase two anonymous and secure electronic voting systems, as well as an End-to-End Verifiable Internet Voting (E2EVIV) system with batch verifiability. Our demonstration proves that these systems meet the necessary security requirements, including voter confidentiality, authenticity, anonymity, vote integrity, and resilience to bribery and coercion. 
    \item In chapter \ref{chapter5}, we propose an Identity-based Blind Signature Scheme with a Message Recovery (IDBS-MR) scheme that delegates the user's signature generation to the third-party signer without revealing the actual message except for the recipient after extracting it from the signature. It is secured against the forgeable attack under the chosen message and ID attack and achieves blindness property. It avoids expensive cryptographic operations, for example, pairing, map-to-hash function, and modular exponentiation operations and is hence suitable for the pairing-free environment. Further, we propose a privacy-preserving with integrity auditing scheme for cloud storage, whose security is defined by the proposed IDBS-MR.  
    \item In chapter \ref{chapter6}, we propose an Escrow-Resilient Identity-Based (Aggregated) Signcyrption (EF-ID[A]SC) scheme, which is secure against confidentiality and forgeability attacks. We extend the proposed IDSC to construct a secure Peer-to-Peer Video-on-Demand System (SecP2PVoD), secure against Pollution Attacks and Untrusted Service providers. Further, we implement a bandwidth-efficient secure data transmission for the wireless body area networks using the proposed IDASC scheme. 
    \item In chapter \ref{chapter7}, we discuss authenticated two-party and three-party key agreement protocols for low-constraint devices. The proposed two-party key agreement protocol is computationally efficient that avoids expensive operations. The three-party key agreement protocol allows three parties to develop a shared secret key in one round. We also present an anonymous authenticated key agreement protocol for the WBAN system, which anonymously provides an authenticated communication channel between the low-resource sensor and medical professional. These three proposed protocols achieve mutually authenticated, forward secrecy, and unlinkability.
    \item In Chapter \ref{chapter8}, we present a summary and conclusion, along with suggestions for future directions. Additionally, we expand on our thesis to address the COVID-19 pandemic by proposing a privacy-preserving data exchange scheme using Hyperledger Fabric (a permissioned blockchain) to standardize the exchange of medical data with a patient-centric approach.
\end{itemize}

\end{doublespace} \label{chapter1}
\begin{savequote}[75mm] 
Mathematics is the language with which God has written the universe.
\qauthor{Galileo galilei} 
\end{savequote}
\chapter{Recent Attacks on Pairing-Friendly Elliptic Curves and Opportunities}

\begin{doublespace}
In this chapter, we conduct a thorough analysis of various families of Pairing-Friendly Elliptic Curves (PF-EC) and their practical security, providing valuable insights into their adoption and usage in cryptographic applications. We extend Freeman's \cite{freeman2010taxonomy} taxonomy by introducing new families of PF-EC, including a complete family with variable discriminant and new sparse families of curves. Our framework comprehensively covers the construction of individual and parametric families of curves. We also estimate the practical security of several families of pairing-friendly curves, comparing them to well-known families such as BN, KSS, and BLS. Our aim is to identify families of curves that offer better security against recent attacks, such as Tower Number Field Sieve (TNFS) and Special Number Field Sieve (STNFS). We demonstrate that these attacks require an increase in key size to be effective. Furthermore, we compare the families of curves in terms of their required key size and recommend the most suitable elliptic curves for pairing. Additionally, we explore the adoption of pairing-friendly curves in international standards, cryptographic libraries, and applications.

\section{Introduction}
Tate pairing solves the discrete logarithm problem (DLP) in the divisor class group of the elliptic curve that has become the foundation of the Frey-Ruck (FR) attack \cite{frey1999tate}. The DLP can also be solved using the index calculus algorithm for a small embedding degree $k$. Menezes \textit{et al.} \cite{menezes1993reducing} have used the Weil pairing to reduce DLP in the divisor class group $E(\mathbb{F}_p)$ to the DLP in the finite field $\mathbb{F}_{p^k}$, which is termed as MOV attack. The elliptic curves with large prime-order subgroup $r$ and small embedding degree $k$, are the main ingredients for implementing pairing-based cryptographic systems. Such elliptic curves are recognized as PF-EC. The challenge with pairing is to mine those elliptic curves in the families of curves with an optimal embedding degree $k$.

The bilinear pairing maps a pair of points on the elliptic curve $E(\mathbb{F}_P)$ to the multiplicative group of the finite field $\mathbb{F}_{p^k}$. Here, $k$ is the embedding degree that embeds an instance of the DLP over an elliptic curve $E(\mathbb{F}_P)$ into an instance of DLP over the finite field elements $\mathbb{F}_{p^k}$. A Pairing-Friendly Elliptic Curve (PF-EC) must satisfy the following conditions: 
\begin{itemize}
    \item 	The subgroup of order $r$ is large enough so that Pollard’s rho method for solving DLP is hard, 
    \item The embedding degree $k$ is big enough so that the index calculus method for computing DLP in $\mathbb{F}_{p^k}$ is equally hard as in $E(\mathbb{F}_p)$, 
	\item $k$ is relatively small enough so that the arithmetic (pairing) in $\mathbb{F}_{p^k}$ is easily computable. 
\end{itemize}

Besides, the $\rho$ value, which determines the closeness of the subgroup size $r$ w.r.t. the size of field $p$, should be close to 1. Barreto \textit{et al.} \cite{barreto2005pairing} suggest that the (BN) elliptic curve with $k=12$ is optimal for a 128-bit security level an important finding in pairing-based cryptography.
Menezes \textit{et al.} \cite{menezes1993reducing} was the first to provide the supersingular curves with embedding degree $k \in \{2,3,4,6\}$. These curves are suitable for implementing pairing-based cryptosystems but have some deployment issues. Nevertheless, the possible deployment of supersingular curves is constructed on the small characteristic $K \ne 2$ or $3$. The first systematical method for making non-supersingular (ordinary) curves was described by Miyaji \textit{et al.} \cite{miyaji2001new}, which is usually known as MNT curves. The MNT curves are defined over a prime with embedding degrees $k \in \{2,3,4,6\}$. The schemes \cite{barreto2003selection,dupont2005building} introduce other ways for constructing ordinary curves with an arbitrary embedding degree and large $\rho \approx 2$.

A general construction of an elliptic curve of arbitrary embedding degree $k$ is given by Cocks \textit{et al.} \cite{cocks2001identity} such that  $\rho \approx r^2$, where $p$ is field size and $r$ is prime order of subgroup, but it responds to an inefficient implementation. Barreto at al. \cite{barreto2003selection} discuss an optimized method to select groups in ordinary curves and present the curves, usually known as BLS curves, by avoiding irrelevant factors during Tate pairing computation. Similarly, Brezing \textit{et al.} \cite{brezing2005elliptic} have extended the Cocks-Pinch method \cite{cocks2001identity} and discovered a well-known algorithm to construct a polynomial family of elliptic curves with smaller $\rho$ under certain circumstances. However, achieving $\rho \approx 1$ remains still a challenge. In 2008, Kachisa \textit{et al.} \cite{kachisa2008constructing} discussed a method to construct the Brezing-Weng curve \cite{brezing2005elliptic}, which has the least polynomials of integers in the cyclotomic field and present the families of elliptic curves with embedding degree $k \in \{16,18,36,40\}$. Tanaka \textit{et al.} \cite{tanaka2008constructing} presented more patterns of families of pairing-friendly curves with some adjustments. Drylo \cite{drylo2011constructing} extended the Brezing-Weng method to construct complete families with variable discriminant and sparse families. Barreto \textit{et al.} \cite{barreto2005pairing} presented a simple algorithm that generates the elliptic curve (BN curve) of prime order with degree $k=12$, which is considered the most effective and efficient elliptic curve in practice to date. Scott \textit{et al.} \cite{scott2006generating} presented the extension of general MNT construction to produce useful curves and show their use in real practices. Freeman \cite{freeman2006constructing} suggested that the existing methods for constructing elliptic curves of prime order with prescribed embedding degree can be defined as a general framework and utilized to construct a curve with embedding degree 10. Recently, Scott \textit{et al.} \cite{scott2018new} have presented a new family of pairing-friendly curves with embedding degree $k=54$, helpful in implementing pairing-based cryptosystems with high security. They have also tried to find the elliptic curve with $k=72$, but due to inefficiency, they did not. Freeman \textit{et al.} \cite{freeman2010taxonomy} demonstrated the selection of curve parameter size for efficient pairing-based cryptographic systems. 

The complexity to solve DLP on elliptic curve $E(\mathbb{F}_p)$ with a subgroup of order $r$ using Pollard’s rho algorithm is $O(\sqrt{r})$. Recently, ongoing progress on number field sieve (NFS) and its variants \cite{kim2016extended,kim2017extended} reduced the complexity of DLP in the extension field $\mathbb{F}_{p^k}$ for a composite embedding degree. Thus, a significant effect on selecting the elliptic curve groups over finite field extension $\mathbb{F}_{p^k}$ has been observed due to such attacks. Besides, an improvement in the special tower number field sieve (STNFS) conquered the complexity of DLP in the extension field $\mathbb{F}_{p^k}$ for the composite embedding degree. Fotiadis \textit{et al.} \cite{fotiadis2019tnfs} utilized the Brezing-Weng method to produce the families of PF-EC and extracted the list of curves with optimal embedding degrees by considering the SNTFS complexity on a 128-bit level of security. Kim \textit{et al.} \cite{kim2016extended} revisited a new kind of NFS algorithm, namely, an extended tower number field sieve (exTNFS) that has asymptotically dropped down the security level of elliptic curves for pairing. After Menezes \textit{et al.} \cite{menezes2016challenges}, Barbulescu \textit{et al.} \cite{barbulescu2019updating}, and Guillevic \textit{et al.} \cite{guillevic2021alpha} in 2019, Guillevic \textit{et al.} \cite{guillevic2020short} in 2020, re-examine the STNFS to produce an accurate finite field size for a given security level. Fotiadis \textit{et al.} \cite{fotiadis2018optimal} discussed the pairing-friendly curves with a composite embedding degree for a 128-bit security level. From the above discussions, it is concluded that the recommendations in Table 2.1 are suitable for selecting parameters of the curve with prime $k$; in the case of composite $k$, parameters need to be updated. Table 2.1 illustrates the parameters of a class of curves with embedding degree $k$ to obtain the level of security (in bits).

\begin{table}
        \centering
        \caption{Curve parameter size (in bits) with associated embedding degree for obtaining the prescribed level of security}
        \label{tbl2.1}
        \begin{tabular}{|c|c|c|c|c|}
            \hline
            \multicolumn{1}{|c|}{Security Level} & \multicolumn{1}{|c|}{Subgroup size} & \multicolumn{1}{|c|}{Extension field size} &            \multicolumn{2}{|c|}{Embedding degree k} \\
            \cline{3-5}
             (in bits) & $r$ (in bits) &  $p^k$ (in bits)  & $\rho=1$ &	$\rho=2$ \\
            \hline
            \hline
            80	& 160 &	960-1280 &	6-8 &	3-4 \\
            112	& 224 &	2200-3600 &	10-16 &	5-8 \\
            128	& 256 &	3000-5000 &	12-20 &	6-10 \\
            192 & 384 &	8000-10000 & 20-26 &	10-13\\
            256 & 512 &	14000-18000	& 28-36 &	14-18 \\
            \hline
        \end{tabular}
\end{table}

In this chapter, we discuss the following. 
\begin{itemize}
    \item 	We present a comprehensive literature study to obtain many techniques for constructing an ordinary elliptic curve with prescribed embedding degree and their classification based on the CM discriminant value. We also continue the taxonomy given by Freeman \textit{et al.} \cite{freeman2010taxonomy} and revive it with some new families such as the complete family with variable discriminant \cite{drylo2011constructing}. New sparse family \cite{fotiadis2018generating} that were not covered in the existing taxonomy \cite{drylo2011constructing, scott2018new}.
    \item We realize the generalization of various pairing-friendly curves and present a comprehensive framework for constructing the individual as well as parametric families of curves with a distinct embedding degree. After construction, we estimate the security by considering the key size of these families of curves to produce better families of curves than the previous families such as BN, KSS, and BLS.
    \item We also look at the recent attacks, such as TNFS, exTNFS, and SexTNFS \cite{kim2016extended,kim2017extended} on pairing and recommend that such attacks on pairing are required to increase the key size of most popular pairing-friendly curves such as BN, BLS and KSS. We observe that such attacks force us to update the key size for pairing. In this direction, we evaluate and compare the key size required for families of curves and consequently select the best choice of the elliptic curve from the available NFS secure family of the curve.
    \item We identify and accumulate the lists of possible curves against STNFS attacks for a 128-bit security level that have never been discussed before, except by Guillevic \textit{et al.} \cite{guillevic2020short}. Besides, we presented the adoption of pairing-friendly curves in international standards, cryptographic libraries, and applications on district security levels such as 128-bits, 192-bits, and 256-bits.
\end{itemize}

\section{Background on Elliptic Curve}
Here, we provide the basic building blocks and the framework for constructing pairing-friendly curves.

\newcommand{\embeddingDegree}{k}
\newcommand{\rhoValue}{\rho}

\subsection{Embedding Degree k and rho-value}

\theoremstyle{definition}
\begin{definition} \label{def2.1}
(Embedding Degree).
Let $E$ be an elliptic curve defined over a finite field $\mathbb{F}_{p^k}$, and let $r$ be a prime dividing $\#E(\mathbb{F}_p)$. The embedding degree of E with respect to $r$ is the smallest value of$k$such that $r$ divides $p^ {k-1}$ but does not divide $p^{i-1}$ for all $0 < i < k$. 
\end{definition}

$k$ decides the feasibility, in terms of security and efficiency, of pairing construction. The curves with large $k$ are not suitable for implementing pairing. The only feasible curves for pairing are those curves with small $k$. For example, suppose an elliptic curve $E(\mathbb{F}_p)$ over a finite field $\mathbb{F}_{p^2}$, where $log(p)=512$ bits and subgroup $log(r)=160$ bits. There exists a transformation from the elliptic curve $E(\mathbb{F}_p)$ to $\mathbb{F}_{p^2}$, where $p^2=1024$ bits. Here, exponent 2 is the embedding degree. Barreto \textit{et al.} \cite{barreto2002constructing} suggest that the optimal degree is the elliptic curve with embedding degree $k=12$. 

The security of pairing-based cryptographic protocols depends on the hardness of DLP on elliptic curve $E(\mathbb{F}_p)$ of order $r$ and finite field extension $\mathbb{F}_{p^k}$ of order $p$. Both problems are defined on their group order, so a new parameter denoted as $\rho$, evaluates the closeness of size of subgroup $r$ w.r.t. the size of field $p$, given in Equation (\ref{eq2.1}).

\begin{equation} \label{eq2.1}
    \rho = \frac{{\log(p)}}{{\log(r)}}
\end{equation}
                                                        
For $\rho \ge 1$, the elliptic curves respond to the dense coordinate of points due to large extension finite field $\mathbb{F}_{p^k}$ that affects arithmetic efficiency, as given in (\ref{eq2.2})

\begin{equation} \label{eq2.2}
    k\rho = \frac{{\log(p^k)}}{{\log(r)}}
\end{equation}

Suppose $p(x),r(x),t(x) \in \mathbb{Q}[x]$ and $(p,r,t)$ parameterizes a family of curves with embedding degree k. The $\rho$-value of $(p,r,t)$ is represented as follows: 

\begin{equation} \label{eq2.3}
    \rho(p, r, t) = \frac{{\text{deg}(p)}}{{\text{deg}(r)}}
\end{equation}

The elliptic curves with small $\rho$-values are considered robust and efficient in arithmetic computation. For instance, a curve of the 512-bit size of a subgroup with $\rho=1$ is built over an extension field size of 512-bits, while the same curve with $\rho=2$ is built over an extension field size of 1024-bits. The first group's operations are more efficient than the other. 

\textbf{Correlation between $\rho$ and k}. A pairing-friendly elliptic curve is characterized by the triplet $\langle \rho, k, d \rangle$, where $d$ is the maximum integer from the set of possible twists ${1, 2, 3, 4, 6}$, and the parameters $\rho$ and $k$ have been defined above. Suppose we define the size of the base field $\mathbb{G}_1$ as $|\mathbb{G}_1| = n \rho$ bits, the size of the extension field $\mathbb{G}_T$ as $|\mathbb{G}_T| = n \rho k$ bits, and the size of the field $\mathbb{G}_2$ as $|\mathbb{G}_2| = |\mathbb{G}_T|/d = n \rho k/d$ bits for a curve of order $n$ bits.

We can correlate the security of elliptic curves to match the security of AES variants (AES-128, AES-192, or AES-256). Usually, the group size of an elliptic curve is twice the AES-equivalent bits. For example, the group size $r$ of an elliptic curve (BN curve) with security similar to AES-128 is 256 bits. The triplet associated with such a curve and group size $r = 256$ is ${1, 12, 6}$, which gives $|\mathbb{G}_1| = 256$, $|\mathbb{G}_T| = 3072$, and $|\mathbb{G}_2| = 512$. Similarly, for the BLS12 curve with group size $n = 256$, the triplet is given as ${3/2, 12, 6}$, which yields $|\mathbb{G}_1| = 384$, $|\mathbb{G}_T| = 4608$, and $|\mathbb{G}_2| = 768$. 
	
\subsection{Pairing-Friendly Curve}

\begin{definition} \label{def2.2}
(Elliptic curve). An elliptic curve E over a field $K$, denoted as $E/K$ ($Char(K) \ne 2,3$) is the set of points in $K \times K$ of an equation $y^2=x^3+ax+b$, where $a,b \in K$ and $\Delta = -16(4a^3 + 27b^2) \ne 0$, together with some point O at infinity. 
Here,$k$can be a finite field $\mathbb{F}_p$ for a prime $p>3$ or a field extension $\mathbb{F}_{p^k}$ such that the set 

\begin{equation} \label{eq2.4}
    E(\mathbb{F}_(p^k ) )=\{(x,y) \in (\mathbb{F}_{p^k},\mathbb{F}_{p^k} )| y^2=x^3+ax+b\} \cup \{O\}
\end{equation}

is known the group of $\mathbb{F}_{p^k}$-rational points of E. The group $E(\mathbb{F}_{p^k})$ is known as the abelian group, in which O denotes the point at infinity. The number of points, also known as the order of group $E(\mathbb{F}_{p^k})$ is defined as $\#E(\mathbb{F}_{p^k })=r=p+1-t$, (Hasse’s Theorem) such that $|t| \le \sqrt{p}$, where $t$ is a trace of Frobenius. 
\end{definition}

\begin{definition} \label{def2.3}
(r-torsion point). Let $E$ be an elliptic curve defined over $\mathbb{F}_{p^k}$, and $E(\mathbb{F}_{p^k})$ denotes the group of $\mathbb{F}_{p^k}$-rational points. For some integer $r$, we suppose $E[r]$ as the group of $r$-torsion and $E(\mathbb{F}_{p^k})[r]$ the group of r-torsion points of E defined over $\mathbb{F}_{p^k}$. The set of r-torsion points in $E(\mathbb{F}_{p^k})$ is defined as 

\begin{equation} \label{eq2.5}
    E(\mathbb{F}_{p^k})[r]=\{P \in E(\mathbb{F}_{p^k})| [r]P=O\}
\end{equation}

Suppose $\mathbb{G}_1$ and $\mathbb{G}_2$ are two subgroups of $E(\mathbb{F}_{p^k})$ and $\mathbb{G}_T$ be a subgroup of $\mathbb{F}_{p^k}$, both of order $r$, there is an asymmetric pairing $e:\mathbb{G}_1 \times \mathbb{G}_2 \rightarrow \mathbb{G}_T$. The suitable combination of $p$, $k$, $E$ and $r$ constitute a strong pairing-based cryptographic protocol. 

\end{definition}

\begin{definition} \label{def2.4}
(Pairing-friendly elliptic curve). An elliptic curve over $\mathbb{F}_p$ is said to be pairing-friendly if the following conditions are satisfied:
\begin{itemize}
    \item 	there exists a prime, $r$ such that $r|E(\mathbb{F}_p)$,
    \item the value $\rho=\log(q)/\log(r)\le 2$, close to 1, while the embedding with respect to $r$ satisfies $k \le \log(r)/8$.
    \item the DLP in $\mathbb{G}_T$ is as computationally hard as DLP in $\mathbb{G}_1$ and $\mathbb{G}_2$.
    \item there is an efficient algorithm to perform the computation in $\mathbb{G}_T \subset \mathbb{F}_{p^k}$	
\end{itemize}
\end{definition}

\subsection{Parameterization of Pairing-Friendly Elliptic Eurve}

We now discuss the strategy to parameterize the traces of curves, given by Barreto \textit{et al.} \cite{barreto2005pairing}. First, we fix a polynomial $t(x)$, trace of Frobenius, and construct the polynomials $r(x)$ and $p(x)$ that are the order of elliptic curves and prime fields, respectively. More concretely, the complex multiplication (CM) algorithm can be used to construct an elliptic curve over $\mathbb{F}_{p(x_0)}$ with $r(x_0)$ points with embedding degree k, if $p(x_0)$ is prime for some $x_0$. 

\begin{theorem} \label{thm2.1}
\cite{freeman2006constructing} For an integer $k>0$, suppose $\Phi_k (x)$ is $k^{th}$ cyclotomic polynomial. Let $t(x)$ be a polynomial of trace with integer coefficient, $r(x)$ be an irreducible factor of $\Phi_k(t(x)-1)$, and $f(x)=4p(x)-t(x)^2$. For a positive square-free integer D, let there be a solution of equation $Dy^2=f(x)$ be $(x_0,y_0)$ in which $p(x_0)$ and $r(x_0)$ are prime.
For small D, there exists an efficient CM algorithm that constructs an elliptic curve E defined over $\mathbb{F}_p(x_0)$ such that $E(\mathbb{F}_p(x_0))$ has prime order $r{x_0}$ and E has embedding degree at most $k$.
\end{theorem}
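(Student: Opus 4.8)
The plan is to read the three algebraic hypotheses as a dictionary for the number-theoretic inputs that the complex-multiplication (CM) method requires, and then to invoke the CM method itself. Write $p_0 = p(x_0)$, $t_0 = t(x_0)$, and $r_0 = r(x_0)$, with $p_0$ and $r_0$ prime by hypothesis. First I would record what the CM equation buys us: evaluating $Dy^2 = f(x)$ at the solution $(x_0,y_0)$ gives $4p_0 - t_0^2 = Dy_0^2 \ge 0$, so $|t_0| \le 2\sqrt{p_0}$ (Hasse's bound, Definition~\ref{def2.2}) and $t_0^2 - 4p_0 = -Dy_0^2 < 0$. Because $D$ is square-free and $y_0^2$ is a perfect square, the square-free part of $4p_0 - t_0^2$ is exactly $D$, so $t_0$ is the trace of Frobenius of an \emph{ordinary} elliptic curve over $\mathbb{F}_{p_0}$ with CM by an order in the imaginary quadratic field $\mathbb{Q}(\sqrt{-D})$.

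Next I would dispatch the embedding-degree bookkeeping by specialization. Since $r(x)$ is an irreducible factor of $\Phi_k(t(x)-1)$ in $\mathbb{Q}[x]$, evaluating at $x_0$ yields the integer divisibility $r_0 \mid \Phi_k(t_0 - 1)$; hence $t_0 - 1$ is a root of $\Phi_k$ modulo $r_0$, i.e.\ a primitive $k$-th root of unity, so its order in $(\mathbb{Z}/r_0\mathbb{Z})^\times$ is exactly $k$ (in the non-degenerate case where $r_0$ does not divide $k$). The construction of $p(x)$ recalled above supplies the group-order relation $r(x) \mid p(x)+1-t(x)$, which specializes to $r_0 \mid \#E(\mathbb{F}_{p_0})$ and in particular to $p_0 \equiv t_0 - 1 \pmod{r_0}$. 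Combining the two, $p_0^k \equiv (t_0 - 1)^k \equiv 1 \pmod{r_0}$ with no smaller positive exponent, so the embedding degree of the curve with respect to $r_0$ is $k$ (Definition~\ref{def2.1}); the statement hedges ``at most $k$'' to absorb the degenerate specializations in which the order of $t_0 - 1$ collapses to a proper divisor of $k$. Under the prime-order normalization $p(x)+1-t(x)=r(x)$ the divisibility upgrades to the equality $\#E(\mathbb{F}_{p_0}) = r_0$, which is prime.

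With the arithmetic in place, the construction step is an appeal to the CM method. For small $D$ the class number $h(-D)$ is small, so the Hilbert class polynomial $H_{-D}(X)$ has small degree and can be computed and reduced modulo $p_0$ in time polynomial in $\log p_0$; any root $j_0 \in \mathbb{F}_{p_0}$ is the $j$-invariant of a curve $E_0/\mathbb{F}_{p_0}$ with CM by $\mathbb{Q}(\sqrt{-D})$, whose trace is $\pm t_0$ by Deuring's reduction theory. Selecting the correct quadratic twist pins the trace to $+t_0$, giving $\#E(\mathbb{F}_{p_0}) = p_0 + 1 - t_0$, which by the previous paragraph is divisible by --- indeed equal to --- $r_0$ and has embedding degree $k$. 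This exhibits the desired curve $E$.

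The hard part is the CM step, not the bookkeeping. Everything about the embedding degree and the group order is a routine specialization of polynomial identities, but the assertion that a curve with the \emph{prescribed} trace $t_0$ exists and can be built \emph{efficiently} draws on the full strength of CM theory: Deuring's theorem identifies the isomorphism classes of such curves with the roots of $H_{-D}$ modulo $p_0$, and feasibility hinges on $D$ being small, since the degree and coefficient size of $H_{-D}$ grow roughly like $\sqrt{D}$ and the method becomes impractical for large $D$. A secondary technical point to settle is the twist selection --- ensuring we land on trace $+t_0$ rather than $-t_0$ --- together with the primitivity check on $t_0 - 1$ modulo $r_0$, which is precisely what justifies the ``at most $k$'' hedge.
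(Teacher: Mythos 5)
Your proof is correct and follows essentially the same route as the paper's: invoke the CM method to construct a curve over $\mathbb{F}_{p(x_0)}$ with trace $t(x_0)$ and prime order $r(x_0)$, and derive the embedding-degree bound from the specialization of the divisibility $r(x)\mid\Phi_k(t(x)-1)$. The only difference is one of detail --- the paper outsources the cyclotomic step to Lemma~1 of Barreto \emph{et al.} and treats the CM construction as a black box, whereas you prove the order-of-$p_0$-modulo-$r_0$ argument directly and spell out the Hilbert class polynomial, Deuring reduction, and twist-selection points that the paper leaves implicit.
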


\begin{proof}
The well-known solution to equation $Dy^2=f(x)$ is $(x_0,y_0)$ in which $p(x_0)$ is prime. For small $D$, the CM algorithm constructs an elliptic curve E over $\mathbb{F}_{p(x_0)}$ of order $\#E(\mathbb{F}_{p(x_0)})=r(x_0)$. The elliptic curve $E(\mathbb{F}_{p(x_0)})$ has prime order since $r(x_0)$ is prime. Lemma 1 of Barreto \textit{et al.} \cite{barreto2002constructing} proves that E with embedding degree $k$ has prime order $r(x_0)$ that divides $\Phi_k(t(x_0 )-1)$ but not $\Phi_l(t(x_0 )-1)$ for $l<k$. Because we select the polynomial $r(x)$ to divide $\Phi_k(t(x)-1)$, so it is guaranteed that $r(x_0)$ divides  $p(x_0)^{k-1}$ and the embedding degree of $E$ is atmost $k$.
\end{proof}

\subsection{Families of Pairing-Friendly Curves}
To construct a curve of prime order with embedding degree k, we have first to choose the polynomials $p(x)$, $r(x)$ and $t(x)$ that fulfil the conditions of Theorem \ref{thm2.1} on various values of $x$ until $r(x)$ and $p(x)$ are prime. 

\begin{definition} \label{def2.5}
Suppose $p(x)$, $r(x)$ and $t(x)$ are nonzero polynomials with integer coefficients. For any given positive integer $k$ and positive square-free integer $D$, we say that polynomial triplet $(p(x),r(x),t(x))$ parameterizes a family of the pairing-friendly ordinary curve if the following conditions are fulfilled.  
\begin{itemize}
    \item $p(x)=q(x)^d$ for some $d \ge 1$ and $q(x)$ is irreducible prime.
    \item $r(x)=c.r'(x)$ with $c \ge 1 \in \mathbb{Z}$ and $r'(x)$ is irreducible prime.
    \item $p(x)+1-t(x)=h(x)r(x)$ for some $h(x) \in \mathbb{Q}$. 
	\item $r(x)|(\Phi_k t(x)-1)$, where $Phi_k$ is the $k$-the cyclotomic polynomial.
	\item The CM equation $Dy^2=4q(x)-t(x)^2$ has infinite many solutions.
\end{itemize}
 
\end{definition}

In this way, we can define a family of curves and construct an elliptic curve with embedding degree$k$and CM discriminant D. In practice, it is easy to find $r(x)$ and $p(x)$ that fulfil four conditions of definition \ref{def2.5}. However, choosing these parameters is quite difficult, such that $Dy^2=4p(x)-t(x)^2$ responds to infinitely many solutions. Usually, if $f(x)$ is a square-free polynomial with a degree of at least 3, then a finite number of solutions to equation $Dy^2=f(x)$ exists. Therefore, we ensure that the parameters $(p,r,t)$ can denote a family of curves.  

\begin{definition} \label{def2.6}
Choose an integer $k>0$, and polynomials $p(x), r(x)$ and $t(x) \in Z[x]$ that satisfy the first three conditions of definition \ref{def2.5}. Let $f(x)=4p(x)-t(x)^2$. Consider $f(x)=ax^2+bx+c$, where $a,b,c \in Z$, such that $a>0$ and $b^2-4ac \ne 0$. Let a square-free integer be $D$ such that $aD$ is not a square. If there exists a solution $(x_0,y_0)$ of equation $Dy^2=f(x)$, then tuple $(t,r,p)$ denotes a family of curves with embedding degree $k$.
\end{definition}

It is clear from Theorem \ref{thm2.1} that if $f(x)$ is square-free and quadratic, we can obtain a family of curves of prescribed degree for each $D$. Conversely, if $f(x)$ is a linear function time a square, then we can still obtain a family of curves for only a single $D$. Barreto and Naerhig \cite{barreto2005pairing} utilize the same method to construct the curve with embedding degree $k=12$. 

\section{Classification of Pairing-Friendly Elliptic Curves}

The methods for constructing pairing-friendly curves are classified into two categories: individual curves and parametric families of curves. The methods for constructing individual curves considering the integers $p$, $r$ and $k$ such that there exists an elliptic curve $E$ over $\mathbb{F}_p$ with subgroup $r$ and embedding degree $k$. While the methods for constructing the family of curves assume the polynomials $p(x)$ and $r(x)$ in such a way if $p(x_0)$ is prime power for some $x_0$, there must be an elliptic curve $E$ over $\mathbb{F}_{p(x_0)}$ with subgroup $r(x_0)$ and embedding degree $k$. Figure (\ref{fig2.1}) calculates a pairing-friendly elliptic curve.

\begin{figure}
  \centering
  \includegraphics[width=1\linewidth]{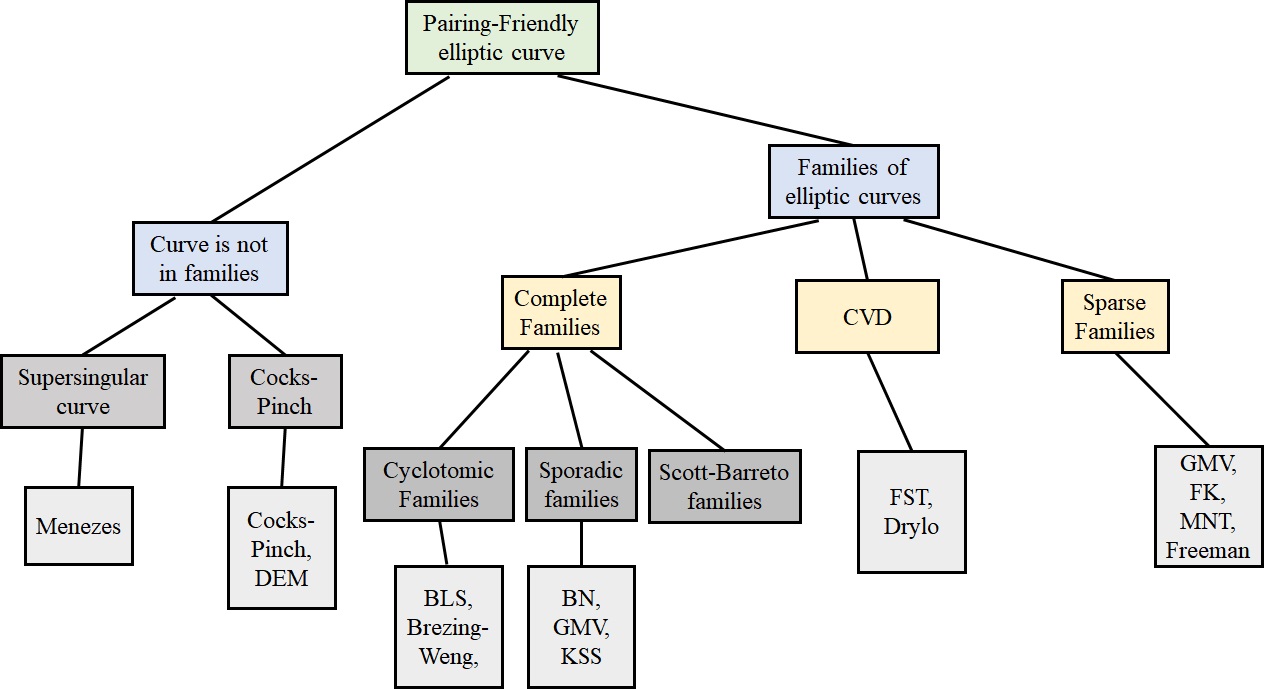}
  \caption{Revised Taxonomy of pairing-friendly elliptic curve}
\label{fig2.1}
\end{figure}

\subsection{Elliptic Curve Not in Family}

\textbf{Supersingular Curve}. An elliptic curve is called a supersingular curve if and only if t=0(mod p) and embedding degree $k\in \{1,2,3,4,6\}$; otherwise, it is an ordinary curve. The elliptic curve over a prime field $\mathbb{F}_p$ such that $p>3$, then $k=1$ or $k=2$. The supersingular elliptic curves with embedding degrees 4 and 6 are defined over characteristic two and three finite fields, respectively. Menezes \textit{et al.} \cite{menezes1993reducing} discussed the prime-order supersingular curves with embedding degree $k\in \{3,4,6\}$. These curves are considered weak due to the MOV and FR reduction attacks, so they are unsuitable for constructing a cryptographic system. Some researchers do not prefer supersingular curves since $k=2$ and 6 are too small, so they focus on the ordinary curves of prime order n, such that $r=n$. 

\textbf{Ordinary Curve}. Supersingular curves are restricted to small embedding degrees, i.e., $k=2$ for prime field and $k \le 6$ in general; so, we have to move to the ordinary curves. Cocks-Pinch \cite{cocks2001identity} method and Dupont \textit{et al.}’s \cite{dupont2005building} method have constructed the ordinary curves with a small embedding degree, however, they did not fall into the families. The main problem with ordinary curves is to determine the elliptic curve parameters $(p,r,t)$ that satisfy the conditions given in definition \ref{def2.5}. In both methods, the trace of Frobenius some integer boosts $t$ in $Z\mathbb{Z}_r$. Thus, the value of $t$ is approximately identical to the size of $r$, which implies the $\ rho$ value is around 2. In practice, such a choice of parameters is inefficient for pairing computations in well-known Tate pairing variants, such as Ate and twisted Ate pairing. Algorithm (\ref{alg2.1}) gives the Cocks-Pinch Method for constructing an ordinary curve of composite order. 

\begin{algorithm} 
	\caption{Cocks-Pinch construction of composite order} 
	Fix$k$and D, and do the following steps: 
	\begin{algorithmic}[1]
    	\State Choose a prime $r$ such that $k|(r-1)$ and $((-D))/2=1$
	    \State Evaluate $z$, a primitive $k^{th}$ root of unity in $\mathbb{Z}_r$  and $t_0=z+1$
	    \State Evaluate $t_0=(t_0-2)/\sqrt{(-D)} mod r$
	    \State Reduce $t=t_0 modr$ and $y=y_0 mod r$, let $p=(t^2+Dy^2)/4$ 
    \end{algorithmic} 
    \label{alg2.1}
\end{algorithm}

\subsection{Families of Pairing-Friendly Curves}
The Cocks-Pinch and DEM methods are inefficient that can be solved by considering the elliptic curve parameters $(p,r,t)$ as polynomials family $(p(x),r(x),t(x))\in  \mathbb{Q}[x]$.  Based on the CM  polynomial $f(x)=4q(x)-t(x)^2$ that is the right-side of the CM equation, the elliptic curves are categorised into three families: complete families with fixed discriminant \cite{barreto2005pairing,brezing2005elliptic,kachisa2008constructing,barreto2002constructing}, complete families with variable discriminant \cite{freeman2010taxonomy,drylo2011constructing,lee2009generating}, and sparse families \cite{miyaji2001new,freeman2006constructing,fotiadis2018generating}, shown in Figure (\ref{fig2.1}). Their constructions depend on the ability to find integers $x$ and $y$ that satisfy the CM equation. For some set of $(x,y)$, the equation has solutions that grow exponentially, we call such families sparse. On another side, if the equation satisfies any $x$, and we define $y$ as a polynomial in $x$, we call such families complete. Due to the small CM discriminant value, complete families, are in real-world practices. Although, such curves are easily vulnerable to many attacks on DLP. We have two more families to avoid such attacks: sparse families and complete families with variable discriminant. The sparse families of curves with embedding degree $k \in  \{3,4,6\}$ exhibit a low level of security at 80 bits. 

\subsubsection{Sparse Families of Elliptic Curve}
The primary idea of a sparse family of curves is surrounded by Miyaji \textit{et al.} method \cite{miyaji2001new}. Most of the sparse family curves are in prime order but they are restricted to embedding degree $k \le 10$. To construct a sparse family of curves, we choose the polynomials $p(x)$, $r(x)$ and $t(x)$, which satisfy the first four conditions of Definition \ref{def2.5} and for which the CM equation  $Dy^2=4p(x)-t(x)^2=4h(x)r(x)-(t(x)-2)^2$ has infinitely many solutions, where $h(x)$ is a cofactor satisfying $h(x)r(x)=p(x)-1+t(x)$. For $h(x)=1$, we can obtain a curve of prime order. Miyaji \textit{et al.} [6] are the first to construct the ordinary elliptic curves of prime order with a prescribed embedding degree. They rely on the that if $4h(x)r(x)-(t(x)-2)^2$ is a quadratic polynomial, then we can transform the equation into a generalized Pell equation. In this direction, the equation gives an infinite number of solutions in which we get a family of curves. Algorithm (\ref{alg2.2}) presents the construction of the MNT curve with embedding degree $k$.

\begin{algorithm} 
	\caption{Construction of MNT curve with embedding degree k} 
	 Input: Parameters $p(u)$, $t(u)$ and $k$.
    Output: Set of points for MNT curve
	\begin{algorithmic}[1]
	    \State Let $p(u)$ and $t(u)$ be associated with elliptic curve with embedding degree $k$. Using Hass theorem, compute $r(u)$ as $r(u)=p(u)+1-t(u)$. 
	    \State Pick an integer for u until we find $u_0$ such that $p(u_0)$ and $r(u_0)$ are both prime.
	    \State Using the CM algorithm, find the solutions to the norm equation $t^2-4p=DV^2$, such that $|D|$ is small.  Repeat this step until we have a small $|D|$.
	    \State Find solutions to norm equation $t^2-4p=DV^2$.  
	    \State If $d=-3D$ is positive and square-free, it can have infinitely many solutions.  
    \end{algorithmic} 
    \label{alg2.2}
\end{algorithm}

Freeman \textit{et al.} \cite{freeman2006constructing} exhibit such curves by considering that if $f(x)=4p(x)-t(x)^2$ is square-free, the equation defines the affine plane of genus $g=(\text{deg}(f)-1)/2$. If $f(x)$ is quadratic, then the genus is zero, which means there are either no solutions or infinitely many integral solutions. For infinitely many solutions, we can obtain a family $(t,r,p)$ using Definition \ref{def2.5}.

\subsubsection{Complete Families with Fixed Discriminant}

In order to construct the complete family of curves, we find the polynomials $t(x)$, $r(x)$, $p(x)$ that fulfil the certain divisibility conditions and the equation has infinitely many solutions $(x,y)$. The complete family construction chooses the parameters $D$, $t(x)$, $r(x)$, $p(x)$ in such a way that $4h(x)r(x)-(t(x)-2)^2$ is $D$ times the square of polynomials $y(x)$. Scott \textit{et al.} \cite{scott2006generating} and Barreto \textit{et al.} \cite{barreto2002constructing} methods are the two methods for constructing complete families. The complete family is generality to the Brezing \textit{et al.} \cite{brezing2005elliptic} curve.

\begin{algorithm} 
	\caption{Brezing-Weng Construction for a cyclotomic family of curves} 
	 Given fixed integer$k$and positive square-free integer D, perform the following steps:
	\begin{algorithmic}[1]
	    \State Picks a number field $k$ containing $\sqrt{(-D)}$ and primitive $k^{th}$ root of unity $\zeta_k$.
	    \State Compute an irreducible polynomial $r(x) \in Z[x]$ such that $\mathbb{Q}[x]/r(x) = K$
    	\State Suppose a polynomial be  $t(x) \in \mathbb{Q}[x]$  that maps to $\zeta_{k+1} \in K$.
    	\State Suppose a polynomial is  $y(x) \in \mathbb{Q}[x]$  that maps to $(\zeta_{k-1})/\sqrt{(-D)} \in K$.
    	\State Suppose a $p(x) \in \mathbb{Q}[x]$ is defined as $(t(x)^2+Dy(x)^2)/4$. 
	    \State If parameters $p(x)$ and $r(x)$ are prime then triple $(p(x),r(x),t(x))$ denotes the family of curve with embedding$k$and discriminant $\sqrt{(-D)}$.  
    \end{algorithmic} 
    \label{alg2.3}
\end{algorithm}

\textbf{Cyclotomic families of curves}. Barreto \textit{et al.} \cite{barreto2002constructing} and Brezing \textit{et al.} \cite{brezing2005elliptic} extended the Cocks-Pinch method by parameterizing $(t, r, p)$ as polynomials and reducing the $\rho$-value. Brezing \textit{et al.} discussed the fullest generality of the construction. Barreto \textit{et al.} \cite{barreto2002constructing} presented the first construction of a cyclotomic family of curves by considering the polynomial $r(x)$ defining the number field $k$ to be the $k$-th cyclotomic polynomial. It selects $\zeta_k$ in $k$ and uses the constraints that if $k$ is divisible by 3, then $\sqrt{-3} \in k$

Brezing \textit{et al.} \cite{brezing2005elliptic} discussed a more general construction by fixing $r(x)$ to be the cyclotomic polynomial $\Phi_l(t(x))$ of a multiple $l$ of the desired embedding degree $k$ and selecting distinct representatives of $\zeta_k \in \mathbb{Q}[x]/(r(x))$. In both constructions, the discriminant $D$ is usually 1 or 3, and the cyclotomic polynomial must satisfy condition 2 of Definition \ref{def2.5}.

Algorithm \ref{alg2.3} shows the Brezing-Weng construction for a cyclotomic family of curves.

\textbf{Sporadic families of curves}. To construct the elliptic curve for the cyclotomic family Brezing and Weng assume cyclotomic polynomials $r(x)$. However, the extension of cyclotomic fields, defined by the non-cyclotomic polynomials yields more effective results. To construct such an extension field to examine the cyclotomic polynomials $\Phi_l(x)$ on any polynomial $u(x)$.  For any factorization of cyclotomic polynomial $\Phi_l(u(x))$, we have some advantage for constructing a curve; otherwise, as is usually the case, it is required to evaluate the parameter $(t,r,p)$ at $u(x)$. Galbraith \textit{et al.} \cite{galbraith2007ordinary} have examined that  $\Phi_l(u(x))$ is factorized if $u$ is quadratic and $\Phi_l$ has degree 8. For $l=12$, there are two $u(x)$ that factorizes $\Phi_l(x)$. BN curve constructed by Barreto \textit{et al.} \cite{barreto2005pairing} uses one such factorization.

\textbf{Scott-Barreto families of curves}. The Scotte-Barreto family of curves assumes $k$ as an extension of the cyclotomic field, but it does not contain an element $\sqrt{-D}$. For any polynomial $t(x)$ and $r(x)$ to be irreducible to $\Phi_k(t(x)-1)$, the field $\mathbb{Q}[x]/r(x)$ evaluates to an extension of a cyclotomic field.

\subsubsection{	Complete with Variable Discriminant }
In the sparse and complete family of curves, the methods given by Brezing and Weng assume that the CM discriminant $D$ is known in advance, and based on such $D$, all curves are constructed. For instance, the mostly curves constructed by Barreto, Lynn and Scott, and Brezing and Weng demand $D=3$. On another side, there exist some families of PF-EC with variable CM discriminant $D$. For higher security, it is required to utilize the curves with sufficiently large discriminant. Thus, Freeman \textit{et al.} address the complete families with variable discriminant. 

\begin{definition} \label{def2.7}
A tuple of polynomials $(t(x),r(x),p(x))$ parameterizes a complete family of elliptic curves with embedding degree$k$and variable discriminant if it fulfils the first four conditions of Definition \ref{def2.5}, and $4p(x)-t(x)^2=xh(x)^2$ for some $h(x) \in \mathbb{Q}[x]$. Substituting $x=Dx^2$, such that $D>0$ is a square-free integer, outputs a complete family $r(Dx^2)$, $t(Dx^2)$, $p(Dx^2)$ with discriminant $D$ that satisfies the first two conditions of Definition \ref{def2.5}.
\end{definition}

Freeman \textit{et al.} \cite{freeman2010taxonomy} get the variable discriminant families from complete families $(t,r,p)$ with discriminant $D$ in such a way that there must be the polynomials $t'$, $r'$, $p'$, $y' \in \mathbb{Q}[x]$ such that  $t(x)=t'(x^2)$, $r(x)=r'(x^2)$, $p(x)=p'(x^2)$ and $4p(x)-t(x)^2=Dx^2(y'(x^2))^2$. This family $(t',r',p')$ satisfies definition \ref{def2.7} on $x \leftarrow x/D$. Drylo \textit{et al.} \cite{drylo2011constructing} illustrates a construction for a complete family with variable discriminant, as shown in Algorithm (\ref{alg2.4}). 

\begin{algorithm} 
	\caption{Drylo Construction for a complete family with variable discriminant} 
	 Given number field $k$ containing $k^{th}$ root of unity $\zeta_k$, performs the following steps:
	\begin{algorithmic}[1]
	    \State Picks $z \in K$ in such a way that $a=-z^2$ is primitive element of $K$.
	    \State Suppose a polynomial $r(x)$ is $a$, and set $K=\mathbb{Q}[x]/r(x)$.
	    \State Suppose a polynomial is  $t(x) \in \mathbb{Q}[x]$  that maps to $\zeta_{k+1} \in K$.
	    \State Suppose a polynomial is  $h(x) \in \mathbb{Q}[x]$  that maps to $(\zeta_{k-1})/\sqrt{(-x)} \in K$.
	    \State Suppose a $p(x) \in \mathbb{Q}[x]$ is defined as $(t(x)^2+Dy(x)^2)/4$. 
	    \State If parameters $p(x)$ and $r(x)$ are prime then the triple $(p(x),r(x),t(x))$ denotes the family of the curve with embedding$k$and variable discriminant.  
    \end{algorithmic} 
    \label{alg2.4}
\end{algorithm}

\section{Some Parameterized Examples of Pairing-Friendly Curves }

Here, we present the parameters of some existing PF-EC with embedding degree $k$. To select an elliptic curve for constructing pairing, we parameterize the number of points on the curve and the field size defined by the polynomials as $r(x)$ and $p(x)$, respectively. For each $x_0$, the CM algorithm constructs an elliptic curve on given prime values for polynomials $r(x_0)$ and $p(x_0)$. Some well-known elliptic curves are shown by Miyaji \textit{et al.} \cite{miyaji2001new}, Barreto \textit{et al.} \cite{barreto2002constructing}, Barreto \textit{et al.} \cite{barreto2005pairing}, Cocks \textit{et al.} \cite{cocks2001identity}, Brezing \textit{et al.} \cite{brezing2005elliptic}, Kachisa \textit{et al.} \cite{kachisa2008constructing}, and Freeman \cite{freeman2006constructing}. 

\subsection{Individual Elliptic Curve }

\textbf{Cocks-Pinch Method}. The Cocks and Pinch provided the general construction for curves of arbitrary embedding degree $k$. Algorithm 2.1 discusses the construction of the elliptic curve over $\mathbb{F}_p$ of embedding degree $k$ if $p$ is a prime integer. In this method, the size of field $\mathbb{F}_p$ is related to the subgroup of prime, $r$ such that $p=r^2$, which gives value $\rho=2$. It has been observed that the curve with small $\rho$ values efficiently performs the arithmetic operations. Thus, the Cocks-Pinch method leads to inefficient implementation, which is impractical. 

\subsection{Complete Family of Curves}

For complete family construction, Brezing and Weng presented a method which starts by selecting a fixed embedding degree and square-free CM discriminant $D$. The complete family with small discriminant $D$ is fascinating for implementation. However, such families are vulnerable to various attacks on DLP. 
Brezing-Weng method. Brezing and Weng \cite{brezing2005elliptic} extended the generalization of the Cock and Pinch \cite{cocks2001identity} method to construct the families of PF-EC with small $\rho$ values. This method has a similar form, but instead of using polynomial $r(x)$, it uses $r$. This method gives the PF-EC having $\rho$ values less than two and closer to 1. However, finding a suitable polynomial $r(x)$ is challenging. For $k<1000$, and $k$ is odd, the parameters  $(t(x),r(x),p(x))$ for complete family of a pairing-friendly elliptic curve with embedding degree$k$and discriminant 1 are as follows:

\begin{align*}
r(x)&=\Phi_{4k}(x)\\
t(x)&=-x^2+1\\
p(x)&=1/4(x^{2k+4}+2x^{2k+2}+x^2k+x^4-2x^2+1)
\end{align*}

This family has $\rho$ values $\frac{(k+2)}{\phi(k)}$

\textbf{Barretto-Lynn-Scott Curve with embedding degree $k \in \{12,48\}$}. In 2002, Barretto, Lynn, and Scott \cite{barreto2002constructing} presented the elliptic curves, known as BLS curves, suitable for constructing optimal ate pairing. BLS curves have a small discriminant parameter that yields a simple equation from which we can derive PF-EC of embedding degree $k=0(mod6)$ with a maximal twist of $d=6$, and have a relatively small $\rho$ values, as defined as $\rho=\frac{(2+k/3)}{\phi(k)}$.  

The BLS curve is a unique elliptic curve over a finite field $\mathbb{F}_p$ defined by  $E:y^2=x^3+b$. Similar to the BN curve, it has a twist of order six but does not have a prime order. While its order is divisible by a large parameterized prime, $r$ and the pairing is defined on r-torsion points. On distinct embedding degrees, the BLS curves vary. Here, we discuss the BLS12 and BLS48 curves families with embedding degrees 12 and 48, respectively. The following equations define the parameters p and $r$ to construct the BLS curves. Table 2.2 describes the parameters p and $r$ to create the BLS curves with embedding degrees 12 and 48.

\begin{table}
        \centering
        \caption{Complete family of the curve with embedding degrees 12 and 48, using BLS method}
        \label{tbl2.2}
        \begin{tabular}{|c|c|c|}
            \hline
            \textbf{BLS12:} & $p= \frac{((u-1)^2(u^4-u^2+1))}{3+u}$ &     $r=u^4-u^2+1$ \\
            \hline
            \textbf{BLS48:} & $p= \frac{((u-1)^2 (u^16-u^8+1))}{3+u}$     &     $r=u^16-u^8+1$ \\
           
            \hline
        \end{tabular}
\end{table}

Where $u$ is a randomly chosen integer.

\textbf{Barreto and Naerhig curve with embedding degree $k=12$}. In 2005, Barrato and Naehrig \cite{barreto2005pairing} presented a PF-EC, generally known as the BN curve, which is considered a breakthrough in cryptography. An Ate pairing construct over BN curves is considered the optimal pairing. An elliptic curve $E$ over finite field $\mathbb{F}_p$, where $p \ge 5$, such that $p$ and curve $E$ order $r$ are prime numbers that are defined by 

\begin{align*}
p&=36u^4+36u^3+24u^2+6u+1\\
r&=36u^4+36u^3+18u^2+6u+1
\end{align*}

Where $u$ is a randomly chosen integer, BN curves are suitable for pairing with a strong level of security and achieve efficient pairing-based cryptographic primitives. For example, a pairing over a 256-bit BN curve is considered the secure method for signing a message with a small signature size. 

\textbf{KSS curve with embedding degree $k \in \{8,16,18,32,36,40\}$}. Kachisa \textit{et al.} \cite{kachisa2008constructing} extended the idea of Brezing \textit{et al.} \cite{brezing2005elliptic} to discuss a new method to construct the PFEC.  This method uses a minimal polynomial of elements other than cyclotomic polynomial $\Phi_l(x)$ to define the cyclotomic field. This method constructs new families of PF-EC with embedding degree $k \in \{8,16,18,32,36,40\}$. 

\begin{sidewaystable}
        \centering
        \caption{Complete families of curve with embedding degree $k \in \{8,16,18,32,36,40\}$ by KSS method}
        \label{tbl2.3}
        \begin{tabular}{|c|c|}
            \hline
            k=8, D=3, & $p(u)= \frac{1}{3}(u^{10}+u^9+u^8-u^6+2u^5-u^4+u^2-32u+1)$ \\
            $\rho=5/4$ & $r(u)=u^8-u^4+1$ \hspace{10mm} $t(u)=u^5-u+1$ \\
            \hline
            \hline
            k=8, D=1, & $p(u)= \frac{1}{180}(u^6+2u^5-3u^4+8u^3-15u^2-8u+125)$ \\
            $\rho=3/2$ & $r(u)=u^4-8u^2+25$ \hspace{10mm} $t(u)=1/15(2u^3-11u+15)$ \\
            \hline
             \hline
            k=16, D=1, & $p(u)= \frac{1}{980} (u^{10}+2u^9+5u^8+48u^6+152u^5+240u^4+625u^2+2398u+3125)$ \\
            $rho=5/4$ & $r(u)=u^8+48u^4+625$ \hspace{10mm} $t(u)= \frac{1}{35}(2u^5+41u+35)$ \\
            \hline
             \hline
            k=18, D=3, & $p(u)= \frac{1}{21} (u^{10}+5u^7+7u^6+37u^5+188u^4+259u^3+343u^2+1763u+2401)$ \\
            $\rho=4/3$ & $r(u)=u^6+37u^3+343$ \hspace{10mm} $t(u)= \frac{1}{7}(u^4+16u+7)$ \\
            \hline
            \hline
            k=32, D=1, & $p(u)= \frac{1}{2970292} ((u^{18}-6u^{17}+13u^{16}+57120u^{10}+344632u^9+742560u^8+815730721u^2-4948305594u+1060449373))$ \\
            $\rho=9/8$ & $r(u)=u^{16}+57120u^8+815730721$ \hspace{10mm} $t(u)= \frac{1}{3107}(-2u^9+56403u+3107)$ \\
            \hline
             \hline
            k=32, D=3, & $p(u)= \frac{1}{28749}(u^{14}+46u^{13}+7u^{12}+683u^8-2510u^7+4781u^6+117649u^2-386569u+823543)$ \\
            $\rho=7/6$ & $r(u)=u^{12}+683u^6+117649$ \hspace{10mm} $t(u)= \frac{1}{259}(2u^7+757u+259)$ \\
            \hline
             \hline
            k=40, D=1, & $p(u)= \frac{1}{1123} (u^{22}-2u^{21}+5u^{20}+6232u^{12}+10568u^{11}+31160u^{10}+9765625u^2-13398638u+48828125)$ \\
            $\rho=11/8$ & $r(u)=u^{16}+8u^{14}+39u^{12}+112u^{10}-79u^8+2800u^6+24375u^4+125000u^2+390625$ \hspace{10mm} $t(u)= \frac{1}{1185}(2u^{11}+6469u+1185)$ \\
            \hline
        \end{tabular}
\end{sidewaystable}

\textbf{Scott-Guillevic with embedding degree k=54}. The elliptic curve with embedding degree 48 suggested by Barreto \textit{et al.} \cite{barreto2002constructing}  achieved the AES-256 level of security in practice. The curves with embedding degrees up to 50 are only considered by the Freeman \textit{et al.} \cite{freeman2010taxonomy}. Recently, Scott \textit{et al.} \cite{scott2018new} noticed new curves using the KSS method, but the curve with embedding degree 54 is not the kind of KSS curve. It is observed that such a family of curves was occasional and unrelated to any existing family. However, it exhibits a certain pattern, which shows some possibility that it might be a member of an undiscovered family of families. Thus, the authors announced the invention of new PF-EC with embedding degree 54, strengthening the difficulty of the discrete logarithm problems since it is used for pairing-based cryptography. There exists an element $-\zeta_{54}-\zeta_{54}^{10} \in Q(\zeta_{54})$ that using the KSS method yields the following parameters 

\begin{align*}
p(u)=310u^{20}+310u^{19}+39u^{18}+36u^{11}+36u^{10}+35u^9+3u^2+3u+1 \\
r(u)=39u^{18}+35u^9+1 \\
t(u)=35u^{10}+1 \\
c(u)=3u^2+3u+1
\end{align*}

where $c$ is cofactor. 

The total number of points on the curve is $\#E=cr$. Since $4p-t^2=3f^2$ for some polynomial $f$, it can be observed that the CM discriminant is 3. Therefore, like BN and BLS curves, it has twists of degree 6 that provide essential optimizations. 

\subsection{Complete Families with Variable Discriminant}
Recall that the family of curves with enormous discriminant values are less susceptible to various attacks on DLP. In this respect, complete families with variable discriminants use the CM discriminant in polynomial representation. In this family, the CM polynomial has the form $f(x)=g(x)y(x)^2$ and $\text{deg}(g)=1$. Such families can be constructed using the Brezing-Weng method by substituting the square-free $D$ with a linear term $g(x)$ such that $\sqrt{-g(x)}\in \mathbb{Q}[x]/r(x)$. These families exhibit preferable CM discriminant but with limited choices. To evaluate suitable parameters for variable CM discriminant, we look for $x_0=Dy^2 \in \mathbb{Z}$ such that $r(x_0)$ and $p(x_0)$ are prime.

Robert Drylo method for CVD. Drylo constructed the complete families with variable discriminants by generalizing the Brezing-Weng method. This method enables us to find families with few embedding degrees, which improves on $\rho$ -values with parameter $\rho=\frac{max\{2degt,1+2degh\}}{degr}$, which is equal to $\rho= \frac{(2degr-1)}{degr}$. 

\begin{sidewaystable}
        \centering
        \caption{Complete families of curve with variable discriminate with embedding degree $k \in \{8,9,15,28,30\}$ by KSS method}
        \label{tbl2.4}
        \begin{tabular}{|c|c|}
            \hline
            k=8, D=1,11, & $p(u)= \frac{1}{576}(4u^7-39u^6+170u^5-311u^4+52u^3+716u^2-384u+196)$ \\
            $\rho=7/4$ & $r(u)=u^4-4u^3+8u^2+8u+4$ \hspace{10mm} $t(u)= \frac{1}{12} (-u^3+5u^2-16u+14)$ \\
            \hline
            \hline
            k=9, D=1, & $p(u)= \frac{1}{4}(59049u^{10}+6561u^9+8748u^8+2916u^7+972u^6+1296u^5+108u^4+36u^3+12u^2+u+1)$ \\
            $\rho=5/3$ & $r(u)=729u^5+27u^3+1$ \hspace{10mm} $t(u)=243u^5+1$ \\
            \hline
             \hline
            k=15, D=1, & $p(u)= \frac{1}{4} (531441u^{13}-236196u^{11}+39366u^{10}+39366u^9-8748u^8-729u^7+486u^6-243u^5+135u^4+18u^3+18u^2+u+1)$ \\
            $\rho=13/8$ & $r(u)=6561u^8-2187u^7+243u^5-81u^4+27u^3-3u+1$ \hspace{10mm} $t(u)= 9u^2+1$ \\
            \hline
            \hline
            k=28, D=3,1, & $p(u)= \frac{1}{4} (2624144u^{18}+65536u^{17}-32768u^{15}+16384u^{14}+12288u^{13}-3072u^{11}+2816u^9-192u^7+48u^5+16u^4-8u^3+u+1)$ \\
            $\rho=3/2$ & $r(u)=4096u^{12}-1024u^{10}+256u^8-64u^6+16u^4-4u^2+1$ \hspace{10mm} $t(u)=512u^9+1$ \\
            \hline
             \hline
            k=30, D=1, & $p(u)= \frac{1}{4} (244140625u^{13}+195312500u^{12}+78125000u^{11}+19531250u^{10}+2353750u^9-140625u^9-43750u^6-6875u^5-125u^4+150u^3$\\
            $\rho=13/8$ & $-50u^2+9u+1)$ \\
             & $r(u)=390625u^8+78125u^7-3125u^5-625u^4-125u^2+5u+1$ \hspace{10mm} $t(u)=-25u^2+1$ \\
            \hline
        \end{tabular}
\end{sidewaystable}

\subsection{Sparse Families}
Identical to complete families with variable discriminant, sparse families have a large discriminant, so they avoid any attack on DLP. In this family, the CM polynomial has the form $f(x)=g(x)y(x)^2$ and $\text{deg}(g)=2$ and non-square. Several sparse family curves have prime order but are restricted to embedding degree $k\le 10$. Such families of curves fulfill conditions 1-4 of the definitions on given polynomials $p(x)$, $r(x)$, and $t(x)$, and help in responding to infinitely many solutions in the CM equation $Dy^2=4p(x)-t(x)^2=4h(x)r(x)-(t(x)-2)^2$, where $h(x)$ is a cofactor. It transforms the CM polynomial $4h(x)r(x)-(t(x)-2)^2$ into the generalized Pell equation.

\textbf{MNT curves with embedding degree $k \in \{3,4,6\}$}. In 2002, Miyaji, Nakabayashi, and Takano \cite{miyaji2001new} described a straightforward approach for constructing ordinary curves $E(\mathbb{F}_p)$ of prime order $n=r$ with embedding degree $k \in \{3,4,6\}$. Consider a prime $p$ and $E(\mathbb{F}_p)$ as an elliptic curve such that $r=\#E(\mathbb{F}_p)$ is prime. Let $t=p+1-r$ be a trace. Then, the elliptic curve has to embed degree$k$and if only $u \in Z$ exists, such that $p(u)$ and $t(u)$ are given in Table 2.5. 

\begin{table}
        \centering
        \caption{Parameterization for MNT curves with embedded degree $k \in \{3,4,6\}$}
        \label{tbl2.4a}
        \begin{tabular}{|c|c|c|}
            \hline
           $k$&	p(u) &	t(u) \\
            \hline
            \hline
            3 &	$12u^2-1$ &	$-1 \pm 6u$ \\
            4 &	$u^2+u-1$	& -u or $u+1$ \\
            6 &	$4u^2+1$	& $1 \pm 2u$ \\
            \hline
        \end{tabular}
\end{table}

For embedding degree, $k \in \{3,4,6\}$, the CM equation $Dy^2=4p(x)-t(x)^2$ determines an elliptic curve of genus-zero, with the right-hand side, are quadratic in $x$. With linear variables changes, this CM equation can be transformed into the generalized Pell equation $x^2-SDy^2=M$. The method finds the solution to the generalized Pell equation with a minimal positive integer solution. Algorithm 1 summarizes to find the resolution of the MNT curve.

\textbf{Scott and Barreto method}. Scott and Barreto \cite{scott2006generating} extended the MNT method by aiding a small constant cofactor $h$. They fix small integers h and d and replace $r$ with $\Phi_k (t-1)/d$ and $t$ with $x+1$ in (\ref{eq2.6}) that gives 
\begin{equation} \label{eq2.6}
    Dy^2=(4h\Phi_k (x))/d-(x-1)^2     
\end{equation}

It can be observed that the right-hand side of the equation (\ref{eq2.6}) is quadratic in $x$ for embedding degree $k \in \{3,4,6\}$ as similar to the MNT curve. So we write (\ref{eq2.6}) into the generalized Pell equation. The MNT method is applied to explore the curves with embedding degree $k \in \{3,4,6\}$ for almost prime order.

\textbf{GMV method}: Galbraith \textit{et al.} \cite{galbraith2007ordinary} also extended the MNT method for constructing an elliptic curve with an embedding degree $k \in \{3,4,6\}$ by allowing cofactor $h \in \{2,3,4,5\}$. They have followed the MNT method but substitute hr with $\#E(\mathbb{F}_p)$. In the case of prime order, every parameterization of $t(u)$ is linear, and every parameterization of $p(u)$ is quadratic. Thus final CM equations $Dy^2=4p(x)-t(x)^2$ are quadratic, which enables for transforming into a generalized Pell equation.

\textbf{Freeman curve of embedding degree $k=10$}. Freeman \cite{freeman2006constructing} addressed the open problem posed by Boneh \textit{et al.} \cite{boneh2001short} by constructing an elliptic curve with embedding degree 10. He chooses $n(x)$ and $t(x)$ in such a way that the high degree value of $t(x)^2$ is cancelled out with those of $4r(x)$ in quadratic equation $f(x)=4r(x)-(t(x)-2)^2$. He found that this is possible only with the curve of embedding degree 10. 

To construct a curve with embedding degree $k = 10$, Freeman chooses the following parameters: 

\begin{align*}
       r(u)=25u^4+25u^3+15u^2+5u+1\\
p(u)=25u^4+25u^3+25u^2+10u+3 
\end{align*}

We have 
\begin{align*}
    t(u)=10u^2+5u+3
\end{align*}

such that 
\begin{align*}
    r(u)|\Phi_10 (p(u))
\end{align*}

and,
\begin{align*}
    t(u)^2+4p(u)=-(15u^2+10u+3)
\end{align*}

If the norm equation $u^2-15Dv^2=-20$ has a solution with $u \equiv 5 (mod 15)$, then $\{p,r,t\}$ denotes the family of curves with embedding degree 10. 

\textbf{Robert Drylo method for sparse}. Drylo \cite{drylo2011constructing} also constructed the variable discriminant families using a sparse family. He generalized the Brezing-Weng method to construct such families, but the method is less efficient than the method for constructing variable discriminants using complete families. 

\begin{table}
        \centering
        \caption{Sparse families with variable discriminant and embedding degree $k \in \{8,10,12\}$ by Drylo method}
        \label{tbl2.6}
        \begin{tabular}{|c|c|}
            \hline
            k=8, D=1,11, & $p(u)=25u^4+25u^3+15u^2+5u+1$ \\
            $\rho=1$ & $r(u)=25u^4+25u^3+25u^2+10u+4$ \hspace{10mm} $t(u)= 10u^2+5u+3$ \\
            \hline
            \hline
            k=8, D=1, & $p(u)= \frac{1}{576}(u^6-6u^5+7u^4-36u^3+135u^2+186u-63)$ \\
            $\rho=3/2$ & $r(u)=u^4-2u^2+9$ \hspace{10mm} $t(u)= \frac{1}{12}(-u^3+3u^2+5u+9)$ \\
            \hline
             \hline
            k=12, D=1, & $p(u)= \frac{1}{900} (u^6-8u^5+18u^4-56u^3+202u^2+258u-423)$ \\
            $\rho=3/2$ & $r(u)=u^4-2u^3-3u^2+4u+13$ \hspace{10mm} $t(u)= \frac{1}{12} (-u^3+4u^2+5u+6)$ \\
            \hline
        \end{tabular}
\end{table}

\textbf{Fotiadis and Konstantinou}. Fotiadis \textit{et al.} \cite{fotiadis2018generating} extended the Drylo’s method \cite{drylo2011constructing} for constructing the first sparse families of curves and implementing different embedding degrees $k \in \{5,8,10,12\}$. Fotiadis \textit{et al.} \cite{fotiadis2018generating} suggested that the families $\rho(p,r,t)=2$ provide the right balance between the size of prime order $r$ and the size of the extension field. It also presented many numerical examples of cryptographic parameters after considering the recent TNFS attack method to reduce the complexity of DLP in field extension of composite degree. This method constructed the sparse families of curves by adopting the Lee-Park’s \cite{lee2009generating} method to produce an irreducible polynomial $r(x) \in \mathbb{Q}[x]$ and trace of polynomial $t(x) \in \mathbb{Q}[x]$ in such a way that $r(x)|\Phi_k(t(x)-1)$ on some fix embedding degree $k$. Then, it determines a non-square quadratic polynomial $g(x)$, such that the CM polynomial is $f(x)=g(x)y(x)^2$, with $p(x) \in \mathbb{Q}[x]$ using the Drylo’s method \cite{drylo2011constructing}. The rest polynomials $p(x)$ and $y(x)$ are straightforward. Table 2.7 provides the parameters of cyclotomic sparse families of curves.

\begin{table}
        \centering
        \caption{Cyclotomic sparse families of curves with an embedding degree and $\rho$-value}
        \label{tbl2.7}
        \begin{tabular}{|c|c|}
            \hline
            $k=5 \Phi(k)=4$, & $y(x)=-(2x^2+2x+1)$ \\
            $\rho=3/2$ & $g(x)=3x^2-2x+3$ \hspace{10mm} $t(x)= x+1 $ \\
            \hline
            \hline
            $k=8 \Phi(k)=4$, & $y(x)=-\frac{1}{17}(3x^2-x+3)$ \\
            $\rho=3/2$ & $g(x)=7x^2-26x+7$ \hspace{10mm} $t(x)= -x^3+1 $ \\
            \hline
            \hline
            $k=10 \Phi(k)=4$, & $y(x)= \frac{1}{11}(x^2+3x+1)$ \\
            $\rho=3/2$ & $g(x)=3x^2+10x+3$ \hspace{10mm} $t(x)= x^3+1 $ \\
            \hline
            \hline
            $k=10 \Phi(k)=4$, & $y(x)= \frac{1}{71}(7x^2-x+7)$ \\
            $\rho=3/2$ & $g(x)=15x^2+50x+15$ \hspace{10mm} $t(x)= x^3+1 $ \\
            \hline
            \hline
            $k=7 \Phi(k)=6$, & $y(x)= \frac{1}{55}(38x^4-23x^3+50x^2-23x+38)$ \\
            $\rho=5/3$ & $g(x)=208x^2+375x+208$ \hspace{10mm} $t(x)= x^5+1 $ \\
            \hline
            \hline
            $k=9 \Phi(k)=6$, & $y(x)= -\frac{1}{109}(x^4-18x^3-4x^2-18x+1)$ \\
            $\rho=5/3$ & $g(x)=8x^2+35x+8$ \hspace{10mm} $t(x)= x^5+1 $ \\
            \hline
            \hline
            $k=14 \Phi(k)=6$, & $y(x)= -(2x^4-5x^3+6x^2-5x+2)$ \\
            $\rho=5/3$ & $g(x)=4x^2+5x+4$ \hspace{10mm} $t(x)= x^5+1 $ \\
            \hline
            \hline
            $k=18 \Phi(k)=6$, & $y(x)= -\frac{1}{19}(3x^4-2x^3-8x^2-2x+3)$ \\
            $\rho=5/3$ & $g(x)=4x^2+9x+4$ \hspace{10mm} $t(x)= x^5+1 $ \\
            \hline
             \hline
            $k=30 \Phi(k)=8$, & $y(x)= \frac{1}{9755}(433x^6-293x^5-149x^4+637x^3-149x^2-293x+433)$ \\
            $\rho=7/4$ & $g(x)=155x^2+350x+155$ \hspace{10mm} $t(x)= x^7+1 $ \\
            \hline
            \hline
            $k=10 \Phi(k)=4$, & $y(x)= -\frac{1}{19}(8x^3-8x^2+1)$ \\
            $\rho=2$ & $g(x)=15x^2+50x+15$ \hspace{10mm} $t(x)= x+1 $ \\
            \hline
            \hline
            $k=14 \Phi(k)=6$, & $y(x)= 3x^5-4x^4+3x^3-2x+2$ \\
            $\rho=2$ & $g(x)=4x^2+5x+4$ \hspace{10mm} $t(x)= -x^2+1 $ \\
            \hline
             \hline
            $k=18 \Phi(k)=6$, & $y(x)= -\frac{1}{19} (7x^5-x^4-6x^2-6x+10)$ \\
            $\rho=2$ & $g(x)=4x^2+9x+4$ \hspace{10mm} $t(x)= x+1 $ \\
            \hline
             \hline
            $k=18 \Phi(k)=6$, & $y(x)= \frac{1}{37} (26x^5-14x^4-12x^2-12x^2-12x+29)$ \\
            $\rho=2$ & $g(x)=19x^2+30x+19$ \hspace{10mm} $t(x)= x+1 $ \\
            \hline
             \hline
            $k=15$, $\Phi(k)=15$, & $y(x)= \frac{1}{93} (20x^7-8x^6-22x^5+20x^4+14x^3\pm 6x^2+7x-15)$ \\
            $\rho=2$ & $g(x)=3x^2-18x+3$ \hspace{10mm} $t(x)= x^2+1 $ \\
            \hline
             \hline
            $k=20 \Phi(k)=15$, & $y(x)= -\frac{1}{505} (20x^7+23x^6-43x^5-4x^4+24x^3+68x^2-88x+20)$ \\
            $\rho=2$ & $g(x)=40x^2-55$ \hspace{10mm} $t(x)= x+1 $ \\
            \hline
        \end{tabular}
\end{table}

\section{Recent Attacks on PF-EC: TNFS}

\subsection{Security of Pairing-Friendly Curves}
The security of a pairing-friendly curve is based on solving the DLP on an elliptic curve and DLP on a finite field $\mathbb{F}_p$. The General Number Field Sieve (GNFS) is one of the algorithms to compute the discrete logs over a finite field $\mathbb{F}_p$ in sub-exponential time while Pollard’s rho algorithm breaks ECDLP in time $O(\sqrt{p})$, which is equivalent to $O(r)$, where $r$ is the subgroup order. An n-bit AES symmetric key achieves an equivalent security level as an elliptic curve of order 2n prime. Although, it is suggested that a good choice for security level should be larger than 128-bit which could be achieved by the curves with prime factor order larger than 256-bits.

There are other difficult problems over which the security of PF-EC is also evaluated. Such hard problems include computational Diffie-Hellman problem (CDHP) \cite{joux2003separating}, Decision Diffie-Hellman problem (DDHP) \cite{boneh1998decision} and gap DDHP \cite{choon2003identity}. It is believed that all such problems are reduced to the hardness of DLP and assumed to be easier than DLP. To break the pairing-based cryptography, an attacker could solve these problems, although no such attacks have yet been identified, so we address the hardness of DLP.

The next factor that estimates the pairing-friendly curve's security level is the computation overhead of an algorithm that solves the DLP. Some such well-known algorithms are index calculus and the Pollard rho algorithm. In 2001, the number field sieve algorithm was utilized to make index calculus more efficient. Thus, the choice of selecting the embedding degree$k$and the characteristic of the extension field determines the security of DLP in the finite extension field. To represent the complexity of DLP, we recall the L-notation given by

\begin{equation} \label{eq2.7}
    L_N[l:c] = \exp\big[\big(c+o(1)\big) \big(\ln N\big)^l \big(\ln\ln N\big)^{(t-l)}\big]
\end{equation}

where $l \in [0,1]$ is real constant and $c>0$, and $N=p^k$. Usually, the NFS attacks finite field extension with complexity $L_N[1/3,1.923]$. This complexity still works for finite extensions of prime degrees. Now, there has been recent progress on the TNFS method, such as extension TNFS (exTNFS) \cite{kim2016extended} and special exTNFS (SexTNFS) approaches that reduce the complexity of DLP on finite field extensions for composite$k$and special form of p, to $L_N[1/3,1.526]$.

\subsection{Recent Attacks on PF-EC}
Here, we discuss the impact of recent attacks on pairing-friendly curves. A pairing map points on the elliptic curve to values in some "extension field." Security depends on both the EC-DLP for the elliptic curve and also DLP on the extension field. Thus, the pairings can be attacked in two points: elliptic curve or finite extension field.

A p-bit BN curve has a prime field of p bits and an embedding degree $k$. Thus, the extension field is $modp^k$ bits. For instance, 256-bit BN curves have a prime field of 256-bit and embedding degree 12, so the embedding degree is $p^{12}$, i.e., 3072 bits. It has been claimed that 256-bit DLP on the elliptic curve and 3072-bit DLP on the finite field would have 128 bits of security. It is suggested that BN curves are not secure enough. In 2016, after Kim \textit{et al.}’s paper \cite{kim2017extended}, the security in the extension field case changed, making it challenging to provide a concrete estimation. Kim \textit{et al.} \cite{kim2017extended} proposed a new kind of NFS algorithm, the extended tower number field sieve (exTNFS) reduces the complexity of solving DLP in a finite field. The exTNFS asymptotically dropped down the security level of elliptic curves for pairing.

In this respect, it is suggested that DLP for 3072-bit $prime^{12}$  is easy to attack than DLP for a 3072-bit prime. Recently, Barbulescu \textit{et al.} \cite{barbulescu2019updating} estimated that the BN curves which had previously achieved 128 bits of security dropped to around 100 bits. Menezes \textit{et al.} \cite{menezes2016challenges} indicate that the BN curve used 383-bits length of p after applying exTNFS for achieving 128-bit of security, and that of BLS12 curves as 384-bits. Kiyomura \textit{et al.} \cite{kiyomura2017secure} indicates that a bit length of $p^k$ for BLS48 curves is 27,410 bits after applying exTNFS with 572 bit of $p$.

\section{Pairing-Friendly Curves Resilient to TNFS Attacks}
We recommend the pairing-friendly elliptic curve secured against the recent TNFS and its variant attacks. The new progress in TNFS work has a significant effect on constructing the pairing-friendly elliptic curve with a composite degree $k$. One immediate response is that the finite extension field must be larger than the previously chosen field extension and, thus, $\rho=1$ will not be a better option for composite k. For instance, BN curves with embedding degree 12 were an ideal choice that generates 256-bit prime and 3072-bit extension field and $\rho=1$. Before TNFS attacks, such parameters achieved a 128-bit security level. Accidentally, after the progress of the TNFS method, the extension field of such size holds a 110-bit security level. One should opt for the size of the extension field around 4608 bits in order to reach 128-bit security. Recall, BN curves have $\rho=1$ this will respond to the size of a subgroup of order $r=384$, that will imbalance the security of $\mathbb{G}_1$, $\mathbb{G}_2$ and the security of target group $\mathbb{G}_T$.

Recently, Fotiadis \textit{et al.} \cite{fotiadis2019tnfs} acknowledged the impact of variants of TNFS and revised the conditions for constructing polynomial families $(p(x),r(x),t(x))$ of curves. They presented the optimal families that achieve a balance of security level in a tuple of pairing group $(\mathbb{G}_1,\mathbb{G}_2,\mathbb{G}_T)$ and give the pairing-friendly parameters for composite embedding degrees that are secured against the TNFS attacks. Besides, they suggested the use of those polynomial families of prime embedding degrees that were not considered previously due to the enormous $\rho$ value.

\begin{sidewaystable}
        \centering
        \caption{Recommended  family of curves at 128-bit security level \cite{guillevic2020short}}
        \label{tbl2.8}
        \begin{tabular}{|c|c|c|c|c|c|c|c|}
            \hline
            Curves &	k	& D &	r bits &	p bits &	$p^k$ bits &	Seed u &	Security level \\
            \hline
            \hline
            Cock-Pinch &	6 &	3 &	256 &	672 &	12255 &	$2^{128}-2^{124}-2^{59}$	& 128 \\
            Cock-Pinch &	8 &	1 &	256	& 544 &	13799 &	$2^{64}-2^{54}+2^{37}+2^{32}-4$ &	131 \\
            Cyclo FM &	10 &	15 & 256 &	446 &	12255 &	$2^{32}-2^{26}-2^{17}+2^{10}-1$, $a=-3$ &	133 \\
            Cyclo FM &	11	& 3 &	258 &	333 &	11477 &	$-2^{13}+2^{10}-2^8-2^5-2^3-2$, $b=13$	& 131 \\
            Cyclo FM &	11	& 11 &	256 &	412 &	12255 &	$-2^{56}+2^{21}+2^{19}-2^{11}-2^9-1$, $a=2$ &	145 \\
            BN & 12 &	3	& 446 &	446 &	13799 &	$2^{110}+2^{36}+1$, $b=257$ &	132 \\
            Cyclo BLS &	12 &	3 &	229 &	446 &	12255 &	$-2^{74}-2^{73}-2^{63}-2^{57}-2^{50}-1$, $b=1$ &	132 \\
            FK &	12 &	3 &	296 &	446 &	11477 &	$-2^{72}-2^{71}-2^{36}$, $b=-2$	& 136 \\
            Cyclo &	13 &	3 &	267 &	310 &	12255 & $2^{11}+2^8-2^6-2^4$, $b=-17$ &	140 \\
            Cyclo &	14 &	3 &	256 &	340 &	13799 &	$2^{21}+2^{19}+2^{10}-2^6$, $b=-4$ &	148 \\
            KSS16 &	16 &	1 &	257 &	330 &	12255 &	$-2^{34}+2^{27}-2^{23}+2^{20}-2^{11}+1$, $a=1$	& 140 \\
            KSS16 &	16 &	1 &	256 &	330 &	11477 &	$2^{34}-2^{30}+2^{26}+2^{23}+2^{14}-2^5+1$, $a=1$ &	140\\
            \hline
        \end{tabular}
\end{sidewaystable}

\textbf{Composite embedding degree}. Fotiadis \textit{et al.} \cite{fotiadis2019tnfs} construct complete and CVD families of curves for different composite embedding degrees that have generated the suitable curve parameters secured against the TNFs attacks. Compared to the previously discussed results, these families have a large $\rho$ value to expand the size of the extension field. Subsequently, it enhances the complexity of the DLP in target group $\mathbb{G}_T$. The variant of TNFS attacks, SexTNFS achieved the asymptotic complexity of DLP in the extension field of the composite degree to $L_N [1/3,1.526]$ where $N=p^k$.

\begin{table}
        \centering
        \caption{Recommended  family of curves at the 192-bit security level \cite{guillevic2020short}}
        \label{tbl2.9}
        \begin{tabular}{|c|c|c|c|c|c|c|c|}
            \hline
            Curves &	k &	r bits &	p bits &	$p^k$ bits &	Seed u &	Security level \\
            \hline
            \hline
            BN	& 12 &	1024 &	1022 &	12255 &	$-2^{254}+2^{33}+2^6$ &	191 \\
            BLS12 &	12 &	768 &	1150 &	13799 &	$-2^{192}+2^{188}-2^{115}-2^{110}-2^{44}-1$	& 193 \\
            KSS16 &	16 &	605 &	766 &	12255 &	$2^{78}-2^{76}-2^{28}+2^{14}+2^7+1$ &	194 \\
            KSS18 &	18 &	474 &	638 &	11477 &	$2^{80}+2^{77}+2^{76}-2^{61}-2^{53}-2^{14}$ &	193\\
            BLS24 &	24 &	409 &	509 &	12202 &	$-2^{51}\neg -2^{28}+2^{11}-1$ & 	193\\
            
            \hline
        \end{tabular}
\end{table}

\textbf{Prime embedding degree}. Fotiadis \textit{et al.} \cite{fotiadis2019tnfs} also suggested those inefficient complete and CVD families of curves for prime embedding degrees that did not previously notice due to large $\rho$ value. The objective of choosing curves with large $\rho$ value is to produce a balanced security level in three $\mathbb{G}_1$, $\mathbb{G}_2$ and $\mathbb{G}_T$. However, the variant of TNFS attacks does not affect the prime degree extension fields, where the complexity of the DLP is evaluated to $L_N[1/3,1.923]$, where $N=p^k$.

\textbf{128-bit security}. For 128-bit security, the finite field size required for BLS12 and BN curves is about $12 \times 448 = 5376$. To ensure the security of the curve, the size of $r$ should be at least 256 bits. At the 128-bit security level, we set the limit as $3072 \le 256\rho k \le 5376$ to narrow down the choices for families of curves. We obtain $k \le 21$ and $k \ge 6$ for $\rho=1$ and $\rho=2$, respectively. Table 2.8 summarizes the recommended parameters for secure PF-EC with embedding degree $k={6,8,10,11,12,13,14,16}$ at the 128-bit security level.

\textbf{192-bit security}. Guillevic \textit{et al.} \cite{guillevic2020short} assumed the limit $7168 \le 384\rho k \le 14336$ for a 192-bit level of security. We can obtain $k \le 37$ and $k \ge 10$ for $\rho=1$ and $\rho=2$, respectively. Fotiadis \textit{et al.} \cite{fotiadis2019tnfs} curves with $\rho=2$ satisfy the boundary for $10 \le k \le 18$. There cannot find any cyclotomic family of curves with embedding degree $k=32$. Table 2.8 shows the recommended parameters for families of curves: BN, BLS12, BLS24, KSS16, and KSS18 at a 192-bit level of security.

\section{Adoption of Pairing-Friendly Curves}

Here, we discuss the adoption in international standards, libraries, and applications, followed by their classification by considering the general security level such as 128-bits, 192-bits, and 256-bits. Some of these adoptions considered the TNFS secured pairing-friendly curves. After considering the variant of TNFS attacks, we recommend the first two of three aspects to select the pairing-friendly curves: security, usage, and efficiency.
\subsection{International Standards}

The ISO/IEC is a technical board of the International Organization for Standardization (ISO) and the International Electro-technical Commission (IEC) that deals with the maintenance, development, and standards promotion in information technology \cite{gorbenko2017examination}. The 15946 series of ISO/IEC reviews the elliptic curve-based cryptographic primitives. The 15946-5 series of ISO/IEC includes some samples of BN curves of field size 160-bit, 192-bit, 224-bit, 256-bit, 384-bit, and 512-bit, Freeman curves of field size 224-bit and 256-bit and MNT curves of field size 160-bit and 256-bit. Such examples do not consider the impact of exTNFS. The trading card game (TCG) standard selects the BN256 and a BN curve of field size 638-bit. Fast ID Online (FIDO) alliance is a group of technology-agnostic security specifications for secure authentication. The World Wide Web Community (W3C) is an international community that articulates web standards that guarantee that the functionality and framework of websites are related in each web browser. Both the FIDO alliance and W3C adopt BN256 and BN512 \cite{verheul2016practical}.

\subsection{Cryptographic Libraries}
Many cryptographic libraries support the elliptic curve and pairing operations. PBC is one of the standard cryptographic libraries for pairing-based cryptography proposed by Ben Lynn that supports MNT curves, BN curves, Freeman curves, and supersingular curves \cite{lynn2010pairing}. Another library for pairing-based cryptography is the mcl that supports BN-254 curves, BN-SNARK1, BN-382, BN-462, and BLS12-381 \cite{abdulla2019hitc}. Tsukuba Elliptic Curve and Pairing Library (TEPLA) support BN254N and BN254B \footnote{A. Kanaoka, “Tepla elliptic curve and pairing library,” online, 2018. [Online]. Available: https://github.com/TEPLA/tepla-library}. Intel announces a cryptographic library named Intel Integrated Performance Primitives (Intel-IPP), which supports BN-256I.

\begin{table}
        \centering
        \caption{Adoption of pairing-friendly elliptic curve}
        \label{tbl2.10}
        \begin{tabular}{|c|c|c|c|c|c|}
            \hline
            \multicolumn{1}{|c|}{Category} & \multicolumn{1}{|c|}{Name} & \multicolumn{4}{|c|}{Security level} \\            
            \cline{3-6}
             & & 100-110 bit &	128 bit &	192 bit &	256 bit \\
            \hline
            \hline
            Standards &	ISO/IEC 15946-5 &	BN256 &	BN384 & & \\		
            \cline{2-6}
            & TCG &	BN256 & & & \\			
            \cline{2-6}
	            &FIDO/W3C	& BN256		& & & \\			
	       \hline
	       \hline
            Libraries &	PBC &	BN & & & \\		
                        \cline{2-6}
	                    & mcl	& BN254  & 	BN381\_1 (*) & & \\		
	                    \cline{2-6}
                        & & BN\_SNARK1 & BN462  & & \\
                        
                        &  & &   BLS12-381    & & \\
                        \cline{2-6}
		            & RELIC &	BN254 &  BLS12-381 & & \\
		             & & BN256	& BLS12-455	& & \\	
		             \cline{2-6}
		            & TEPLA &	BN254 &&& \\
		            \cline{2-6}
                    &	AMCL &	BN254 & BLS12-381 (*) & & BLS48\\
                    && BN256 & BLS12-383 (*) &&\\
                    &&& BLS12-461	&&\\
                    \cline{2-6}
                    & Intel IPP &	BN256 &&& \\
                    \cline{2-6}
                    & Kyushu Univ. &&&		&		BLS48\\
                    \cline{2-6}
                    & MIRACL &	BN254 &	BLS12		&&\\
            \hline
            \hline
            Applications &	Zcash &	BN128 & BLS12-381 && \\
            && BNSNARK			&&&\\
            \cline{2-6}
	        &Ethereum &	BN254 &	BN382 (*) && \\
	        &&& BLS12-381 (*)		&& \\
	        \cline{2-6}
	        & Chia Network & &		BLS12-381 (*)	&& \\
	    \hline
            
        \end{tabular}
\end{table}

RELIC \footnote{Aranha, “RELIC is an Efficient LIbrary for Cryptography,” online, 2017. [Online]. Available: https://github.com/relic-toolkit/relic} is an efficient cryptographic library that has various kinds of PF-EC including six BN curves (BN-158, BN-254R, BN-256R, BN-382, BN-446, and BN-638), six curves (BLS12-381, BLS12-446, BLS12-445, BLS12-638, BLS24-477, and BLS48-575), cock-pinch curves of embedding degree 8 with field size 544-bit, and KSS curve of embedding degree 18 with field size 511-bit. Besides, the Multi-precision Integer and Rational Arithmetic Cryptographic Library (MIRACL) \footnote{M. Integer and C. Rational Arithmetic, “C++ Library (MIRACL).” 2013} is a C software library that is extensively suggested by developers as the gold standard open-source SDK for elliptic curve cryptography (ECC) that supports BN-462 and BLS48-581.

\subsection{Applications}
In their library libsnark \cite{gabizon2019security}, Zcash implemented a BN128 curve. It is a cryptocurrency that uses elliptic curve cryptography to achieve privacy for its users. They proposed a new parameter of BLS12 as BLS12-381 after utilizing exTNFS and presented their implementations. Ethereum is a public, blockchain-based distributed computing platform that is built specifically for creating smart contracts. Ethereum 2.0 adopts BLS12-381, BN-254 and BN-382 curves \cite{bowe2017scalable}. The implementation of the Chia network integrated the RELIC toolkit. Based on the security level of PF-EC curves given in \cite{barbulescu2019updating, fotiadis2018optimal, guillevic2020cocks,mbiang2020computing}, we summarise the adoption status of pairing-friendly curves in Table 2.10.

\section{Summary}

This chapter presents a comprehensive overview of various methods for constructing elliptic curves suitable for pairing, categorizing them into individual and families of elliptic curves. The taxonomy introduced by Freeman \textit{et al.} \cite{freeman2010taxonomy} is extended to include new families such as complete with variable discriminant \cite{drylo2011constructing} and new sparse family \cite{fotiadis2018generating}. A comprehensive framework for constructing individual and parameterized families of curves is also presented. The functional security of various families of pairing-friendly curves is evaluated in terms of required key size, aiming to identify better families of curves than BN, KSS, and BLS. Recent attacks, such as TNFS on pairing, highlight the need to enhance the key size for pairing. As such, families of curves are analyzed in terms of their key size, and an appropriate choice of an elliptic curve is selected. The adoption of pairing-friendly curves in international standards, cryptographic libraries, and applications at different security levels is also demonstrated. This chapter provides a comprehensive overview of the state-of-the-art in constructing PF-EC and highlights the importance of selecting suitable elliptic curves for efficient and secure cryptographic applications.

\end{doublespace} \label{chapter2}
\begin{savequote}[75mm] 
"I never dreamed about success, I worked for it.
\qauthor{Estée Lauder- American founder of the cosmetic company} 
\end{savequote}
\chapter{Secure Key Issuing for ID-Based Encryption and Signature Schemes}

\begin{doublespace}

The key-escrow problem is a major challenge when using identity-based cryptosystems (IBC). This arises when the private key generator (PKG) generates private keys for users. To trust the PKG, it must be a reliable entity. However, if the PKG is involved in malicious actions, it can intentionally distribute private keys to unauthorized identities. This results in two significant difficulties: the PKG possessing the sender's private key can forge signatures undetected, and the user can accuse the PKG of misusing their private key. Furthermore, IBC faces the problem of secure private key distribution, making it unsuitable for large-network applications due to its key escrow, key abuse, and user slandering issues. Our chapter proposes an efficient and secure key issuing scheme that resolves these issues. We also use this scheme to implement escrow-free identity-based encryption (IBE) that is secure against confidentiality and an escrow-free identity-based signature (IBS) scheme that is forgeable and secured. Lastly, we discuss some existing solutions to these problems, but our proposed solution stands out as the most effective and reliable.

There have been discussed several solutions that address the key-escrow problem. Boneh \textit{et al.} \cite{boneh2001identity} exercise multiple PKGs to reduce the trust on a single PKG, where the master key is distributed among several PKGs, and each PKG computes the private key share. Paterson \textit{et al.} \cite{paterson2002cryptography} and Chen \textit{et al.} \cite{chen2002applications} employ multiple trusted authorities (TAs) that authenticate the user and provide private key share using their secret keys to him. However, these schemes \cite{boneh2001identity,paterson2002cryptography,chen2002applications} interestingly explicated the key escrow problem, but it needs multiple user authentications for each TA/PKG. Thus, the cost of injecting additional infrastructure and communication leads to inefficiency. Lee \textit{et al.} \cite{lee2004secure} introduce the multiple independent key privacy authorities (KPAs) along with the single KGC that sequentially protects the user’s private key so that PKG could not regain it. For mitigating the trust in a single TA, a Certificate-less public key cryptosystem (CL-PKC) scheme \cite{al2003certificateless} enables the TA to authenticate a user and provide a partial private key to the user. The partial private key and user-selected secret value form the user’s private key. Since many signature schemes based on the CL-PKC approach have been discussed \cite{karati2018pairing,karati2018provably}  but they lose the identity-based property because the recipient needs the user public key for verifying the signature.

Recently, Chen \textit{et al.} proposed five different schemes that address the key escrow problem \cite{chen2015removing,chen2015t,chen2015escrowcomp,chen2015escrowcloud,chen2016escrow}. In \cite{chen2015removing}, Chen \textit{et al.} employ multiple Key Privacy Authorities (KPAs) to limit the power of PKG for a Hierarchical Identity-Based Encryption (HIBE) scheme. In \cite{chen2015t}, Chen \textit{et al.} extend \cite{chen2015removing} to achieve the goal of mitigating the trust in PKG and resolving the secure key issuing problem using the blinding technique. In \cite{chen2015escrowcomp}, Chen \textit{et al.} use user-chosen secret value to tackle the key escrow problem, key abusing problem, and user slandering problem. The scheme is IND-ID-CCA secured using the Dual System Encryption methodology. In \cite{chen2015escrowcloud}, Chen \textit{et al.} address the key escrow problem in the HIBS scheme for cloud storage protection, in which they employ a third party to identify the misbehaviour of malicious PKG. In \cite{chen2016escrow}, Chen \textit{et al.} modify the signing algorithm and employ an arbitral party to keep the user public parameter for signature verification purposes to fix key escrow, user slandering, and key abusing problems for the HIBS scheme. 

From the above discussions, we observe that the escrow-free model using various authorities requires significant computation overhead. Inspired by this idea, we utilize cloud technology and propose an efficient and secure key issuing protocol (ESKI) in which most of the computation is offloaded to the cloud server, i.e., cloud privacy centres (CPCs). Now, we introduce our proposed solution in order to address the key-escrow problem.

\section{System Framework}

In this section, we introduce the framework of an efficient and secure key issuing (ESKI) scheme that will be the basis for constructing the proposed escrow-free IBE and IBS schemes. Further, we discuss the adversarial model and security goals for IBE and IBS schemes.

\subsection{System Model}

Here, we first give the system model of the proposed efficient and secure key issuing (ESKI) scheme, followed by IBE and IBS schemes based on the proposed ESKI scheme. The proposed ESKI scheme includes three entities: user, key generation centre (KGC) and multiple Cloud privacy centres (CPCs). They have the following accountabilities.
\begin{itemize}
    \item \textit{Key generation centre}: The KGC is a semi-trusted authority whose primary accountability is to authenticate a user and provide a blinded partial private key to him without knowing the original partial private key. 
    \item \textit{Cloud privacy centres}: The CPCs are the crux of our proposed ESKI model. Under the assumption that at least one CPC is honest, we employ n number of CPCs in our proposed model. The primary duties of CPCs are the following: they protect the user’s private key using their secret keys and send them to the user without knowing them, and perform each operation on the cloud. The number of CPCs will depend on the nature of the application. For instance, in an election mechanism, the KGC behaves as the election commission that monitors and controls the election processes, and multiple CPCs act as the observers that are forwarded by the nominated candidates to bring out unlawful activities to the notice of the election commission. In such an environment, the number of CPCs depends on the number of the nominated candidate in the election process. 
    \item \textit{User}: The user requests the KGC for his private key using a randomly chosen blinding value. The user obtains a private key share from CPCs, and combines them and extracts them to form a private key using his randomly chosen blind value.
\end{itemize}

It is noted that the user authentication is done once by the KGC. We use the ECC-based blinding technique to secure the communication medium between two entities. In this way, we design an efficient and secure key issuing mechanism to address the key escrow problem. 

Second, we deploy our proposed ESKI scheme to Boneh-Franklin IBE \cite{boneh2001identity} to design an Efficient and Secure Key Issuing Identity-Based Encryption (ESKI-IBE) scheme. The ESKI-IBE scheme utilizes the ESKI approach to generate the user’s private key, in a sequential manner, which eliminates the key escrow problem, secure key issuing problem, and user-slandering problem. 

Third, we present an escrow-free ID-based signature (EF-IBS) scheme based on the proposed ESKI approach. In the proposed EF-IBS scheme, the user generates his private key using the ESKI scheme in a parallel manner.

\subsection{Adversarial Model}

In order to gain any useful information, an active adversary requires an original private key whose production depends on the partial private key and collusion of KGC and CPCs. We must assume that an adversary does not compromise with at least one-out-of-n CPC. For our model, we give more capabilities to the adversary. First, an adversary can attack on random identity $ID_{ch} \in \{0,1\}^*$, such that $ID_{ch} \ne ID$. Second, during a chosen ciphertext attack on $ID_{ch}$, an adversary can obtain the partial private key from the KGC on $ID_{ch}$. Third, an adversary can obtain the private key from the CPCs on $ID_{ch}$. In this way, we model an adversary that possesses a set of partial private keys and private keys against some identities and is capable of adaptive attacking other identity IDs of its choice. Based on the adversary’s capability, we essentially define two different types of adversaries: 

\textit{Type I Adversary}: An active adversary that can access KGC’s master key or behave as a malicious KGC is defined as a Type-I adversary, denoted as $Adv_I$. The $Adv_I$’s intention is to retrieve some useful information about plaintext from the ciphertext using random identity $ID_{ch}$, where $ID$ is an identity for which a private key has been computed. In our model, $Adv_I$ can extract the partial key-issuing key and make decryption queries but cannot extract the private key extraction query. Such an adversary wants to obtain the user’s private key colluded with other entities. 

\textit{Type II Adversary}: An active adversary that can access CPC’s secret key or behave as a malicious CPC is defined as a Type-II adversary, denoted as $Adv_{II}$. The $Adv_{II}$’s intention is to retrieve some useful information about plaintext from the ciphertext using random identity $ID_{ch}$, where ID is an identity for which a private key has been computed. In our model, $Adv_{II}$ can extract the private key and make decryption queries but cannot extract the partial-key issuing query. Such an adversary wants to obtain the user’s private key colluded with other CPCs and KGC. 
	
\subsection{Security Goals for ID-based Encryption}

For our security model, we strengthened the security model of \cite{boneh2001identity} that gives adversaries the capabilities defined in the previous section. We give a brief overview of the indistinguishability of ciphertext against chosen ciphertext security (IND-ID-CCA) in terms of game playing between the adversary $Adv_k$, where $k \in \{I, II\}$ and challenger, say $Ch$. The game defined is as:

    \textit{Setup}: On given security parameter $k$, challenger $Ch$ runs the this algorithm and outputs public parameters $pp$, master key and secret keys. $Ch$ sends $pp$ to the adversary $Adv_k$. 
    
    \textit{Phase1}: $Adv_k$ runs the following queries:
    \begin{itemize}
        \item \textit{Partial Key-issuing query}: Given $ID_i$ and output of $q_1$ queries on $ID_i$, $Ch$ runs this queries, outputs $D_i$  and gives it to $Adv_k$. 
    	\item \textit{Private Key Extraction query}: Given $ID_i$ and $D_{IDi}$, $Ch$ runs this query to compute the private key $d_{IDi}$ and responds it to the $Adv_k$.
	    \item \textit{Decryption Query}. Given identity and ciphertext pair $<ID_i,C_i>$, $Ch$ runs decryption queries on $C_i$ which gives the plaintext $M$ if $C_i$ is in the ciphertext space, null otherwise. $Ch$ then pass $M$ to the $Adv_k$.
    \end{itemize}
	\textit{Challenger phase}: $Adv_k$ gives the two equal size message $M_0,M_1$ and challenge identity $ID_{ch} \ne ID_i$. Then, $Ch$ computes $C=Enc(M_c,ID_{ch},pp)$, where $c \in \{0,1\}$  is a random bit. 
	
    \textit{Phase 2}: $Adv_k$ again runs, $q_{pp}$,$q_p$, and $q_d$ queries, as similar to Phase 1.  
    
    \textit{Guess}: At the end, $Adv_k$ pick value $c' \in \{0,1\}$.  If $c = c'$, $Adv_k$ wins the game with advantage 
    
     \begin{equation} \label{eq3.1}
    Advantage(Adv_K ):= |Pr[c=c']-1/2| \ge \epsilon 
\end{equation}
where $a_i \in K$ and $1 \le i \le 6$.

\begin{definition} (\textit{IND-ID-CCA}). We define that our ESKI-IBE is indistinguishability secure against adaptive identity chosen ciphertext attack (IND-ID-CCA) if no polynomially bounded adversary of $Adv_k$ has a non-negligible advantage $\epsilon$ in the above game.
\end{definition}

\subsection{Security Goals for ID-based Signature}
According to [11], we consider three security attacks for our proposed Escrow-Free Identity-Based Signature (EF-IBS) scheme: 1) existential forgery attack against adaptive chosen message and identity (EF-ID-CMA); 2) existential user slander against adaptive chosen message and identity (EUS-ID-CMA), and 3) existential key abusing against adaptive chosen message and identity (EKA-ID-CMA). We ensure that the proposed EF-IBS is secured against such attacks.

\textit{EF-ID-CMA}: A signature scheme is secured against the existential forgery attack if the forger $F$ can forge a signature for at least one message. Following game playing between the forger $F$ and challenger$Ch$ shows that the proposed EF-IBS scheme is existential forgery secured against adaptive chosen message and identity attack (EF-ID-CMA). 

\textit{Setup}: Similar to IND-ID-CCA model, $Ch$ gives $pp$ to $F$. 

\textit{Oracles}. $F$ can run the following oracles: 

\begin{itemize}
    \item \textit{Private key extraction oracles}: For given identity $ID$, $Ch$ runs KeyGen algorithm and responds the private key $d_{ID}$ associated with $ID$ to $F$. $C$ also provides the system public parameter $Y$ to $F$. However, the secret keys of CPCs are kept secret.
    \item \textit{Signing oracles}: For a given input $\{ID, m\}$, $Ch$ runs the KeyGen algorithm if it does not have the private key corresponding to $ID$. Then, $Ch$ signs a message $m$ with key $d_{ID}$ and system public parameter $Y$ by running the signing algorithm.
\end{itemize}
\textit{Challenge}. Finally, $F$ responses the challenge with given triple $<ID^*,m^*,\sigma^*>$ with following restrictions
\begin{itemize}
    \item $F$ has not been queried the private key on $ID^*$ before.
    \item $F$ has not been queried the signature on $<ID^*,m^*>$  before.
\end{itemize}

If $\sigma^*$ is successfully verified on $ID^*$ then $F$ wins the EF-ID-CMA game. 

\begin{definition} (\textit{EF-ID-CMA}). In the above EF-ID-CMA game, if no probabilistic polynomial time (PPT) forger $F$ has a non-negligible advantage against the challenger $Ch$, we say that our proposed EF-IBS scheme is secure. 
\end{definition}

\textit{EKA-ID-CMA}: Following game playing between the attacker $Adv$ and challenger$Ch$ shows that the proposed EF-IBS scheme is resilient to existential key abuse against adaptive chosen massage and identity attack (EKA-ID-CMA). 

\textit{Setup}: Similar to IND-ID-CCA model, $Ch$ gives $pp$ to $Adv$. 
\textit{Oracles}. $Adv$ can run the private key extraction and signing oracles.
\begin{itemize}
    \item \textit{Private key extraction oracles}: Identical to private key extraction oracle of EF-ID-CMA game.  
	\item \textit{Signing oracles}: Identical to private key extraction oracle of EF-ID-CMA game. 
\end{itemize}

\textit{Challenge}. Finally, $Adv$ responses the challenge with given triplet $<ID^*,m^*,\sigma^*>$ with following restrictions
\begin{itemize}
    \item $Adv$ has queried the private key on $ID^*$ to obtain $<d^*_ID,Y^*>$
    \item $Adv$ has queried the signature on $<ID^*,m^*>$  for at least one $m^* \ne m$.
\end{itemize}
	
If $\sigma^*$ is successfully verified by verification algorithm on input $<ID^*,m^*,\sigma^*,Y^*>$ then $Adv$ wins the EKA-ID-CMA game. 

\begin{definition}(\textit{EKA-ID-CMA}). In the above EKA-ID-CMA game, if no PPT adversary $Adv$ has a non-negligible advantage against $Ch$, we say that our proposed EF-IBS scheme is secure. 
\end{definition}

\textit{\textbf{EUS-ID-CMA}}: Following game playing between the adversary $Adv$ and challenger $Ch$ shows that our proposed scheme is resilient to existential user slander under adaptive chosen message and identity.  

\textit{Setup}: Similar to IND-ID-CCA model, $Ch$ gives $pp$ to $Adv$.

\textit{Oracles}. $Adv$ runs the following oracle. 
\begin{itemize}
    \item \textit{Private key extraction oracles}: For given identity $ID$, $Ch$ runs Keygen algorithm and responds the private key $d_{ID}$ associated with $ID$ to $Adv$. $Ch$ also provides the system public parameter $Y$ and secret keys of CPCs to $Adv$. 
    \item \textit{Signing oracles}: Identical to private key extraction oracle of EF-ID-CMA game.  
\end{itemize}

\textit{Challenge}. At the end of this game, $Adv$ responses the challenge of given triple $<ID^*,m^*,\sigma^*>$ having following restrictions
\begin{itemize}
    \item $Adv$ has queried the private key on $ID^*$ before to obtain $<d_{ID}^*,Y^*>$.
    \item $Adv$ has not been queried the signature on $<ID^*,m^*>$ previsously.
\end{itemize}

Adv wins the EUS-ID-CMA game if and only if $\sigma^*$ is successfully verified on $ID^*$ with input $<ID^*,m^*,\sigma^*,Y^*>$.

\begin{definition} (\textit{EUS-ID-CMA}). Our proposed EF-IBS scheme is EUS-ID-CMA secure if no PPT adversary $Adv$ has a non-negligible advantage over the challenger $Ch$ in the above game.
\end{definition}

\section{Proposed Efficient and Secure Key Issuing ID-based Encryption)}

Here, we implement the IBE system based on the idea of the proposed ESKI scheme. It includes four algorithms: Setup, Key Extract, Encryption, and Decryption.

\begin{enumerate}
    \item \textit{\textbf{Setup}}: Suppose a security parameter k. Select a prime number $q$ of $k$-bits, two groups, $\mathbb{G}_1$ and $\mathbb{G}_2$ each of order $q$, and $P$ be the generator of group $\mathbb{G}_1$. Let a map function be  $e: \mathbb{G}_1  \times \mathbb{G}_1 \rightarrow \mathbb{G}_2$, and four hash function as $H_1:\{0,1\}^* \rightarrow \mathbb{G}_1, H_2:\mathbb{G}_2 \rightarrow \{0,1\}^n$, $H_3:\{0,1\}^n  \times \{0,1\}^n  \rightarrow \mathbb{Z}_q^*$, and $H_4: \{0,1\}^n \rightarrow \{0,1\}^n$. KGC selects an element $s_0 \in \mathbb{Z}_q^*$ known as a master secret key, computes public key $P_0=s_0P$, and sends $P_0$ to $CPC^1$.
    
    On given parameters $pp$, $CPC^i$ selects its secret key $s_i \in \mathbb{Z}_q^*$, and compute its public key  $P_i=s_iP$ and sets systems  public key as $Y_i=s_i Y_{(i-1)}$. $CPC^i$ sends $Y_i$ to $CPC^{i+1}$.  $CPC^i$  can verify correctness of $Y_{i-1}$ using $e(Y_{i-1},P_i) \overset{?}{=}e(Y_i,P)$, where  $Y_0=P_0$. Now, $CPC^n$ sets the system public key as $Y= Y_n=s_n s_{n-1}s_0P$ and sends it to the KGC. It is noted the system's public key is derived sequentially using each secret key of CPC. Now, KGC publishes the parameter $pp=<q,e,P,P_0,\mathbb{G}_1,\mathbb{G}_2,H_1,H_2,H_3,H_4,Y,P_1,..P_n>$ and the parameter $s_0$ is kept secret.
    \item \textit{\textbf{KeyGeneration}}: On given identity $ID$, the user’ private key $S_{ID}$ is computed as follows.
    \begin{itemize}
        \item KeyIssuing. User picks an element $x \in \mathbb{Z}_q^*$, computes  $X=xP$,  $D_{ID}=xQ_{ID}$, where   $Q_{ID}=H_1 (ID)$, $ID$ is his identity, and  sends $<X,ID,D_{ID}>$ to the KGC. KGC validates the user identity along with other receiving parameters, using $e(Q_{ID},X) \overset{?}{=}e(D_{ID}, P)$, computes the blinded partial private key $D_0=s_0D_{ID}$ and $X_0=s_0X$, and sends $<D_0, X_0>$ to the user. The user validates the parameter $<D_0,X_0>$ using $e(D_{ID},X_0 ) \overset{?}{=}e(D_0,X)$.
        \item \textit{KeySecuring}. On receiving parameter $<D_0,X_0>$, user sends $<ID,D_{i-1},X_{i-1}>$ to $CPC^i$, ($i=1,2,..,n$)  and asks it for the key protection. $CPC^i$ checks the correctness of parameters using $e(Q_{ID},X_{i-1}) \overset{?}{=}e(D_{i-1},P)$ and compute $D_i=s_iD_{i-1}$ and  $X_i=s_i X_{i-1}$, and send $D_i$ to the user. The user verifies the correctness of the blinded partial private key using $e(D_i, P) \overset{?}{=} e(D_{i-1}, P_i )$. Finally, user gets $D_n=s_n D_{n-1}$ from the last $CPC^n$.
        \item \textit{KeyExtraction}. User retrieves private key  $d_{ID}=x^{-1} D_n= s_0  s_1.... Q_{ID}$ that can be verified using  $e(d_{ID},P) \overset{?}{=} e(Q_ID,Y)$.
    \end{itemize}
    \item \textit{\textbf{Encryption}}:  On given message $M \in \{0,1\}^n$,  $pp$ and recipient identity $ID$, sender chooses an element  $z \in\{0,1\}^n$, sets $r= H_3(z,M)$, $g= e(Q_{ID},Y)$, where  $Q_{ID}=H_1{ID}$, and outputs the ciphertext $C=<U,V,W>=<rP,z   H_2 (g^r ),   M   H_4 (z)>$ . 
    \item \textit{\textbf{Decryption}}: On given ciphertext $C=<U,V,W>$ and  private key $S_{ID}$, receiver sets $g'= e(U,d_{ID})$ and $z'=  V \oplus  H_2 (g')$. Now, recipient extracts message  $M=W \oplus H_4 (z')$ and sets $r'= H_3 (z',M)$. The recipient accepts the  message $M$ is valid if $U=r'P$ otherwise $M$ is invalid.
\end{enumerate}
This completes the full description of our ESKI-IBE scheme.
	
\section{Proposed Escrow-free ID-based Signature Scheme }	

Here, we construct the proposed EF-IBS scheme based on the idea of the proposed ESKI scheme. The proposed EF-IBS scheme involves of four randomized PPT algorithms: Setup, keyGeneration, Signing and Verification. They are defined as follows. 

\begin{enumerate}
    \item \textit{\textbf{Setup}}: The KGC considers the security parameter  $k$ and two groupas $\mathbb{G}_1$ (additive group) and $\mathbb{G}_2$ (multiplicative group); both have order $q$, where $q$ is the length of $k$-bit. Suppose $P$ be a generator of $\mathbb{G}_1$, a bilinear function is $e: \mathbb{G}_1 \times \mathbb{G}_1 \rightarrow \mathbb{G}_2$ that maps the $\mathbb{G}_1$ to $\mathbb{G}_2$ and, $H_1:\{0,1\}^* \rightarrow \mathbb{G}_1$, $H_2:\{0,1\}^* \rightarrow \mathbb{Z}_q^*$  and $H_3:\mathbb{G}_1 \rightarrow \mathbb{Z}_q^*$ are three cryptographic hash functions. 
    
    The KGC chooses an element $s_0 \in \mathbb{Z}_q^*$ (master key), evaluate  $P_0=s_0P$ (Public key),  $g=e(P,P)$ and simultaneously sends $P_0$ to each CPCs.  
    
    For given $P_0$, $CPC^i$ chooses an element $s_i \in \mathbb{Z}_q^*$ (their secret keys), evaluate $P_i=s_iP$ (their public keys) and $Y_i=s_iP_0$ (system public key shares), where $1 \le i \le n$.  Now, $CPC^i$ sends $<P_i,Y_i>$ to the KGC and keeps $s_i$ secret. 
    
    KGC verifies the received parameters $<P_i,Y_i>$ using equation $Y_i=s_0 P_i$, and computes the system public key as $Y= \sum_{i=1}^n Y_i =s_0  (s_1+s_2+...+s_n)P$. Now, KGC publishes the parameter  $pp=<q,e,g,P,P_0,\mathbb{G}_1,\mathbb{G}_2,H_1,H_2,H_3,P_1,..P_n,Y>$ and keeps safe its master key $s_0$. 
    \item \textit{\textbf{KeyGeneration}}: It involves three sub-algorithms: KeyIssuing, KeySecuring and KeyExtraction, define as follows:
    \begin{itemize}
        \item \textit{KeyIssuing}. The user requests to the KGC for the partial private key on his identity $ID$. The KGC authenticates the user identity $ID$ and computes the partial private key as $P_{ID}=s_0Q_{ID}$, where $Q_{ID}=H_1(ID)$, and sends $P_ID$ to the user. 
        \item \textit{KeySecuring}. The user simultaneously asks $CPC^i$ for protecting the requested partial private key $P_{ID}$. The $CPC^i$ computes the private key share $X_i=s_iP_{ID}$, using its secret key $s_i$ and sends them to the user. The user can verify the received parameters using  $e(X_i,P) \overset{?}{=} e(P_{ID},P_i)$.
        \item \textit{KeyExtraction}. On given private key share $X_i$, the user combines and computes his original private key as $d_{ID}= \sum_{i=1}^nX_i = s_0 (s_1+s_2+...+s_n)Q_{ID}$.
    \end{itemize}
    Note that $d_{ID}$ is the original private key against the identity $ID$, which can be verified using $e(d_{ID},P) \overset{?}{=} e(Q_{ID},Y)$.
    \item \textit{\textbf{Signing}}. The sender chooses an element $r \in \mathbb{Z}_p^*$, and sign a given message $m$ to compute the signature $\sigma=<R,S>$  using his private key $d_{ID}$, where $R=rP$, and $S=r^{-1}(H_2(m)P+H_3 (R)d_{ID})$.
    \item \textit{\textbf{Verification}}. On given message-signature pair $<m, S, R>$, sender’s $ID$, and param, the recipient verifies the signature-message pair using (\ref{eq3.2}).
\vspace{-5mm}    
    \begin{equation} \label{eq3.2}
    e(S,R) \overset{?}{=} g^{H_2(m)} e(Q_ID,Y)^{H_3 (R)} 
    \end{equation}
\end{enumerate}

This completes the full description of our EF-IBS scheme.

\section{Security Analysis}

This section gives the security of proposed ESKI-IBE and EF-IBS schemes. 

\subsection{Security Analysis of ESKI-IBE}

\begin{theorem} \label{Thm3.1}
The proposed ESKI-IBE is correct.
\end{theorem}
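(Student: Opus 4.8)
\emph{Plan.} The statement asserts that a ciphertext produced by \textit{Encryption} is recovered exactly by \textit{Decryption} when the receiver holds the honestly generated private key. The whole argument reduces to checking that the two pairing values computed during encryption and decryption coincide; once that is in hand, the masking layers peel off by direct cancellation. So the plan is to first fix the algebraic shape of the keys, then prove the central pairing identity, and finally unwind the exclusive-ors.

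First I would pin down the form of the keys. Writing $s := s_0 s_1 \cdots s_n$, a short induction over the chain of \textit{KeySecuring} steps, each of which sets $D_i = s_i D_{i-1}$, gives $D_n = (s_0 s_1 \cdots s_n)\, x\, Q_{ID}$, so that \textit{KeyExtraction} returns $d_{ID} = x^{-1} D_n = s\, Q_{ID}$, where $Q_{ID} = H_1(ID)$. From \textit{Setup} the published system key is $Y = Y_n = s_n s_{n-1}\cdots s_0 P = s P$. The point extracted here is that $Y$ and $d_{ID}$ carry the \emph{same} scalar $s$, attached to the generator $P$ in one case and to the identity point $Q_{ID}$ in the other.

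Next I would establish the central identity $g' = g^{r}$. On the encryption side $g = e(Q_{ID}, Y) = e(Q_{ID}, sP)$, so the bilinearity property of Equation (\ref{eq1.14}) gives $g^{r} = e(Q_{ID}, P)^{s r}$. On the decryption side $g' = e(U, d_{ID}) = e(rP,\, s Q_{ID}) = e(P, Q_{ID})^{rs}$. Because the scheme instantiates $e$ as the (modified) symmetric pairing used in Boneh--Franklin IBE \cite{boneh2001identity}, we have $e(Q_{ID},P)=e(P,Q_{ID})$, and therefore $g' = g^{r}$. This single interchange of the two pairing arguments is the one spot needing care; I expect it to be the main obstacle, since correctness hinges entirely on the pairing being symmetric, and everything after it is mechanical.

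Finally I would unwind the masks. Since $g'=g^{r}$, we get $z' = V \oplus H_2(g') = z \oplus H_2(g^{r}) \oplus H_2(g^{r}) = z$; then $W \oplus H_4(z') = M \oplus H_4(z) \oplus H_4(z) = M$ recovers the plaintext, and $r' = H_3(z',M) = H_3(z,M) = r$ makes the consistency test $U = r'P$ hold, so \textit{Decryption} accepts the correct $M$. Collecting these three steps completes the correctness proof, with the pairing-symmetry step being the only nontrivial ingredient and the rest routine cancellation.
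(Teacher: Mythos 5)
Your proof is correct and follows essentially the same route as the paper's: both arguments reduce correctness to the pairing identity $e(Q_{ID},Y)^{r}=e(U,d_{ID})$ (via bilinearity and the symmetry of the pairing, which the paper uses implicitly when it equates $e(d_{ID},U)$ with $e(U,d_{ID})$) and then cancel the $H_2$ and $H_4$ masks by XOR. Your version is slightly more complete in that it also checks the final consistency test $U=r'P$, which the paper's proof omits, but this is a routine addition rather than a different approach.
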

\begin{proof}
Proof. The correctness of our proposed ESKI-IBE scheme is defined as follows.
\vspace{-5mm}    
\begin{align*} 
W \oplus  H_4 (z') &= W \oplus  H_4 (V  \oplus H_2 (g')) \\
&= W  \oplus H_4 (V \oplus  H_2 ( e(U,d_{ID} ))) \\
&= M  \oplus H_4 (z) \oplus H_4 (V  \oplus H_2 ( e(U,d_{ID} ))) \\
&= M \oplus  H_4 (z) \oplus H_4 ( z \oplus  H_2 (e(s_0  s_1...  .s_n Q_{ID},rP))  \oplus  H_2 ( e(U,d_{ID} ))) \\
&= M \oplus  H_4 (z) \oplus  H_4 ( z  \oplus H_2 (e(d_{ID},U))  \oplus  H_2 ( e(U,d_{ID} ))) \\
&= M \oplus  H_4 (z) \oplus H_4 ( z ) = M
\end{align*}

Thus, the proposed ESKI-IBE scheme is corrected.
\end{proof}

\begin{theorem} \label{thm 3.2}
Assume $H_1$, $H_2$, $H_3$, and $H_4$ are four random oracle models, and the BDH problem is infeasible to solve. Our proposed ESKI-IBE scheme is semantically secure against IND-ID-CCA attack. Assume there exists an adaptively chosen message $M_i$ and chosen identity $ID_i$. Adversary $Adv(q_{pp}, q_p, q_d, q_1, q_2, q_3, q_4, t, \epsilon)$ runs at most Partial-Key-Extract queries $q_{pp}$, private-key-Extract queries $q_p$, decryption queries $q_d$, $H_1$ queries $q_1$, $H_2$ queries $q_2$, $H_3$ queries $q_3$, and $H_4$ queries $q_4$, respectively. Adversary $Adv_k$ can break our proposed ESKI-IBE scheme in polynomial time $t$ and success probability $\epsilon$. Then, there exists a simulator $B$ that helps an adversary $Adv_k$ against ESKI-IBE and can solve the BDH problem with probability $\epsilon'$.

\begin{equation} \label{eq3.4}
    \epsilon' \ge \frac{1}{(q_3+q_4)} \Bigg( \big( \frac{\epsilon(k)}{e(1+nq_{pp} q_p+q_d)q_2}+1 \big) \big(1-\frac{2}{q}\big)^{q_d}-1 \Bigg)    
\end{equation}

and                                             $t'=t+t_s+q_{pp}+q_p+q_d+q_1+q_2+q_3+q_4$\\
where $t_s$ is algorithm B’s simulation time. 
\end{theorem}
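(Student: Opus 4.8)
The plan is to build a simulator $B$ that, given a random BDH instance $\langle P, xP, yP, zP\rangle$, plays the role of the challenger $Ch$ for $Adv_k$ and converts a successful IND-ID-CCA attack into the value $e(P,P)^{xyz}$. This follows the Boneh--Franklin paradigm, adapted to the sequential $n$-CPC key-issuing structure of ESKI. In the \textbf{Setup} step, $B$ publishes the system public key $Y = xP$, so that the effective master secret (the product $s_0 s_1 \cdots s_n$) is implicitly the unknown $x$. For a \emph{Type I} adversary (which holds $s_0$) $B$ supplies a consistent $s_0$ and simulates the remaining CPC contributions, whereas for a \emph{Type II} adversary (which holds the CPC secrets) $B$ withholds $s_0$; in neither case does $B$ need the factorisation of $x$, because all private keys will be produced through the $H_1$-programming trick rather than by scalar multiplication with the secret keys.

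Next I would answer the four oracles via lists. On each fresh $H_1(ID_i)$ query, $B$ flips a biased coin: with the appropriate probability it returns $H_1(ID_i) = b_i P$ for a known $b_i$, so that the private key $d_{ID_i} = b_i Y = b_i(xP)$ is computable without the master secret; otherwise it embeds the challenge by setting $H_1(ID_i) = yP$. Partial-key-issuing and private-key-extraction queries are then served from the known $b_i$, and $B$ \textbf{aborts} only if the adversary requests a key on an embedded identity or later selects a non-embedded identity as $ID_{ch}$. The oracles $H_2, H_3, H_4$ return fresh random values recorded on lists, and \textbf{Decryption} queries on $\langle ID_i, C_i\rangle$ are handled by the Fujisaki--Okamoto reject-and-lookup technique: scan the $H_3$ and $H_4$ lists for a pair $(z,M)$ consistent with $U = rP$ where $r = H_3(z,M)$, and reject otherwise. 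In the \textbf{Challenge} phase $B$ outputs $U = zP$ with random $V,W$, so the correct blinding value is $g^r = e(Q_{ID_{ch}}, Y)^r = e(yP, xP)^z = e(P,P)^{xyz}$, precisely the BDH target.

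The extraction step exploits the fact that, in the random-oracle model, $Adv_k$ can only gain advantage by querying $H_2$ at $e(P,P)^{xyz}$, so $B$ returns a random entry from the $H_3/H_4$-induced $H_2$ queries as its guess. The bound of Equation~(\ref{eq3.4}) is then assembled by chaining through an intermediate hybrid (``basic'') scheme, which accounts for the unusual $(\,\cdot\,+1)(1-2/q)^{q_d}-1$ shape: the factor $1/(q_3+q_4)$ is the chance of picking the correct oracle entry; the term $\frac{\epsilon(k)}{e(1+nq_{pp}q_p+q_d)q_2}$ combines the attack advantage with the probability of not aborting (the $e$ being Euler's number from the optimal coin bias over the $1+nq_{pp}q_p+q_d$ danger queries, and $q_2$ from the $H_2$ guess); and $(1-2/q)^{q_d}$ bounds the chance that the FO decryption simulation never spuriously rejects a valid ciphertext.

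I expect the main obstacle to be the \textbf{Decryption}-oracle simulation without the private key: soundness of the reject-and-lookup argument requires showing that the probability of $Adv_k$ submitting a ciphertext that decrypts validly yet was never routed through $H_3$ is at most $2/q$ per query, which is exactly the source of the $(1-2/q)^{q_d}$ factor. A secondary difficulty is making the abort analysis uniform across \emph{both} adversary types and the sequential $n$-CPC issuing chain at once, so that a single bound holds; correctly counting the $nq_{pp}q_p$ key-related danger queries and tuning the coin bias to obtain the $1/e$ non-abort probability is where the constants must be tracked with care.
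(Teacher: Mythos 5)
Your proposal is correct and follows essentially the same route as the paper: a Boneh--Franklin-style reduction in which $Y$ absorbs the BDH instance, $H_1$ is programmed with a biased coin (yielding the $1/e(1+nq_{pp}q_p+q_d)$ non-abort factor), decryption queries are answered by the Fujisaki--Okamoto list-lookup (yielding $(1-2/q)^{q_d}$), and the BDH solution is read off the $H_2$/$H_3$/$H_4$ lists (yielding $1/(q_3+q_4)$), with the bound assembled by chaining through the intermediate $BasicPub^{hy}$/$BasicPub$ hybrids exactly as the paper does in its three explicit reductions. The only cosmetic difference is that you narrate it as a single composite simulation (and treat the Type I/Type II split more explicitly), whereas the paper modularises it into three named reductions.
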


\begin{proof}
We proceed with the security reduction through the followings steps:

\textbf{Reduction 1}. Construct IND-CCA adversary $B$ against $BasicPubhy$ of [1] with the help of IND-ID-CCA adversary $Adv$ against ESKI-IBE. 

Given IND-ID-CCA adversary $Adv$ against ESKI-IBE, we first construct IND-CCA adversary B against BasicPubhy. B obtains public key $K_{Pub}=<P,P_{Pub},D_{ID},H_2,H_3,H_4>$ from $Adv$ during game against $BasicPub^{hy}$.

\textit{\textbf{Setup}}: $B$ helps $Adv_k$ by giving the public parameter $pp= <P,P_Pub,H_1,H_2,H_3,H_4>$. $H_1$ queries are simulated in the same as in Theorem 3.2. 

\textit{\textbf{Phase1}}: At any time during attacks , the $Adv_k$ runs following queries.
\begin{itemize}
    \item \textit{Partial key Extraction queries}. After executing $H_1$ query, $B$ gets the $D_i \in \mathbb{G}_1$ such that $H_1(ID_i)= D_i$. Assume tuples $<ID_i,D_i,b_i,c_i>$ against $ID_i$ in $H_1^{List}$ list, $B$ makes a list $PP^{List}$ of tuples $<ID_j,D_j,S_j,b_j,c_j>$ and do the following: 
    \begin{itemize}
        \item Aborts, if $c_i=1$. That means B’s attack on BasicPub failed.
        \item If $c_i=0$, we know $H_1(ID_i)= b_iP$. Compute $S_i= b_iP_{Pub}  \in \mathbb{G}_1$. Note that $S_i=s_0D_i$   is associated with a partial private key corresponding to $ID_i$. 
        \item $B$ then adds an entry $<ID_i,D_i,S_i,b_i,c_i>$ in $H_1^{List}$ list and gives $S_i$ to $Adv_k$.
    \end{itemize}
	
	\item \textit{Private Key Extraction queries}. After executing partial private key extraction queries, $B$ gets $S_i \in \mathbb{G}_1$ such that $S_i= b_iP_{Pub}$. Assume tuples $<ID_i,D_i,b_i,c_i>$ against $ID_i$ in $H_1^{List}$ list and tuples $<ID_i,D_i,S_i,b_i,c_i>$ against $S_i$ in $PP^P{List}$ list, $B$ do the following:
	\begin{itemize}
	    \item Aborts if $c_i=1$. That means $B$’s attack on BasicPub is failed. 
	    \item If $c_i=0$, we know $S_i= b_iP_Pub$. Computes $S_{IDi}= b_iY \in \mathbb{G}_1$. Note that $S_{IDi}= s_0s_1..s_nD_i \in  \mathbb{G}_1$ is associated with private key correspond to $ID_i$ and outputs to $Adv_k$. 
	\end{itemize}
 
	\item \textit{Decryption Query}. Suppose $Adv_k$ runs an decryption on given input $<ID_i,C_i>$, where $C_i= <U_i,V_i,W_i>$. Now simulator $B$ outputs as follows: 
	\begin{itemize}
	    \item To obtain $D_i \in \mathbb{G}_1$ such that $H_1 (ID_i )= D_i$, $B$ executes the above $H_1$ queries. Assume tuple $<ID_i,D_i,b_i,f_i>$ be one entry in the $H_1^{List}$.
	    \item To obtaining the $D_i \in \mathbb{G}_1$ such that $S_i= s_0 D_i$, execute the above partial private key extraction queries. Assume tuple $<ID_i,D_i,b_i,f_i>$ be one entry in the $H_1^{List}$.
	    \item If $f_i=0$, $B$ runs the private key extraction queries to produce the private key against $ID_i$.
	    \item If $f_i=1$, then $D_i= b_iQ_{ID}$. Recall that $U_i \in \mathbb{G}_1$. $B$ sets $C'_i = <b_i U_i,V_i,W_i>$. Let $S_i= s_0D_i$ be the FullESKI-IBE partial private key and private keys as $S_{IDi}= s_0 s_1..s_n D_i$ against the $ID_i$. The BasicPubhy decryption of $C'_i$ using $S_{ID}$ is equivalent to the decryption of $C_i$ using $S_{IDi}$ in FullESKI-IBE. Thus, we examine this observation as 
	    
	    $e(b_iU_i,S_{ID}) = e(b_i U_i,s_0 s_1..s_n Q_{ID})
        = e(U_i,b_is_0 s_1..s_n Q_{ID})
        = e(U_i,s_0s_1..s_n D_i )= e(U_i,S_ik )$
	\end{itemize}
	Then, gives the decryption query to $Ch$ and $Ch$ response back to $Adv_k$. 
\end{itemize}
	
\textit{\textbf{Challenger phase}}: Now, $Adv_k$ gives the two equal size message $M_0,M_1 \in \{0,1\}$ and challenge identity $ID_{ch} \ne ID$. Then, $Ch$ computes Ciphertext $C=Enc(M_c,ID_{ch},pp)$ on some chosen random bit $c \in  \{0,1\}$.  $B$ does as follows: 
\begin{itemize}
    \item Suppose any message $M_0,M_1 \in  \{0,1\}$ be challenged to the $Ch$ which it respond with a ciphertext $C= <U,V,W>$ where $C$ is the encrypted form of $M_c$ on bit $c in \{0, 1\}$.
    \item $B$ then runs the $H_1$ query to get $D \in \mathbb{G}_1$  such that $H_1(ID_{ch} )=D$ on tuple $<ID_{ch},D,b,f>$. $B$ aborts if $f = 0$ and attack become failed.
    \item For $f = 1$, $D=bQ_{ID}$ and we know $C= <U,V,W>$. For $b^{-1}$ is multiplicative inverse of $b$, $B$ set $C'= <b^{-1} U,V,W>$ and respond to $Adv_k$ with challenged $C'$. 
\end{itemize}
	
\textit{\textbf{Phase 2}}: Similar to phase 1, $Adv_k$ adaptively issue more query where no private key extraction is allowed on $ID_ch$. 

\textit{\textbf{Guess}}: At the end, $Adv_k$ gives the guess $c'\in \{0,1\}$. The game is in favour of the adversary if $c = c'$. 
If $B$ does not abort, then $Adv_k(A) = \left| Pr[c' = c] - \frac{1}{2} \right| \geq \epsilon$. The proof is given in [1].

The algorithm could abort for 3 reasons (Events):
\begin{itemize}
    \item $E_1$: During phase 1 or phase 2, an invalid partial private key and private key extraction queries run by $Adv$. 
    \item $E_2$: On invalid challenge identity $ID^{ch}$ challenged by $Adv_k$. 
	\item $E_3$: During phase 2, an invalid decryption query is run by adversary $Adv_k$.
\end{itemize}
 
For $nq_p$ private key queries in each $q_{pp}$ partial private key queries and $q_d$ decryption queries, we have a total $nq_{pp}q_p$ queries. We know the probability that B does not abort the game is:

\begin{center}
    $Pr[\neg E_1 \wedge \neg E_2 \wedge \neg E_3] \geq a^{nq_{pp} q_p + q_d}(1-\alpha)$
\end{center}

This value is maximized at $\alpha = \left(1 - \frac{1}{(1+q_{pp}q_p^n)}\right)$, which means $Pr[\text{not\_abort}] \geq \frac{1}{e(1+q_{pp}q_p^n+q_d)}$, where $e$ is the natural logarithm. Therefore, B's advantage against $BasicPub^{hy}$ is $\text{Advantage}(B) \geq \frac{1}{e(1+q_{pp}q_p^n+q_d)}$.

\textbf{Reduction 2}. We obtain PKE system BasicPubhy from BasicPub system using FO-transformation.  

From theorem 4.5 of BF-IBE \cite{boneh2001identity}, Let IND-CCA $Adv_k$ have advantage $\epsilon(k)$ while attacking BasicPubhy. Suppose $Adv_k$ has at most $q_d$ decryption queries and makes at most $q_3$ and $q_4$ queries to the hash functions $H_3$ and $H_4$ respectively. Then there is an IND-CPA simulator B against BasicPub with advantage $\epsilon_1$, where 

\begin{center}
    $ \epsilon_1 \ge \bigg( \frac{1}{(q_3+q_4 )}\bigg) \bigg( (\epsilon(k)+1) (1-\frac{2}{q})^{q_d}-1 \bigg)$
\end{center}

\textbf{Reduction 3}. As an output of FO-transformation, there exists an adversary $Adv_k$ against BasicPub, which can be used to construct a simulator B to solve the BDH problem by IND-ID-CPA of \cite{boneh2001identity}.

From the above 3 reductions, we bound an IND-ID-CPA adversary $Adv_k$ on BasicESKI-IBE with advantage $\epsilon(k)$ gives a BDH algorithm with advantage $\epsilon'$ as required, where,

\begin{center}
    $ \epsilon' \ge \bigg( \frac{1}{(q_3+q_4)} \bigg) \bigg( \frac{\epsilon(k)}{(e(1+q_{pp} q_p^n+q_d)q_2)+1)} \bigg) \bigg((1-\frac{2}{q})^{q_d}-1\bigg)$
\end{center}

\end{proof}

\begin{theorem} \label{thm3.3}
 Our proposed ESKI-IBE scheme is secured against an impersonation attack.
\end{theorem}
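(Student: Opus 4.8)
The plan is to model an impersonation attack as an attempt by an adversary $\mathcal{A}$ to masquerade as a legitimate party---either a user requesting a key, the KGC, or one of the cloud privacy centres $CPC^i$---so that the forged transcript passes every verification equation of the KeyGeneration protocol, and in particular yields an accepted private key $d_{ID}$ satisfying the final check $e(d_{ID},P)=e(Q_{ID},Y)$ for an identity $ID$ that $\mathcal{A}$ does not legitimately control. Under the standing assumption that at least one $CPC$ is honest, I would show that any such successful impersonation can be converted into an algorithm solving the Computational Diffie--Hellman (CDH) problem in $\mathbb{G}_1$ (as defined earlier), contradicting its assumed hardness, and hence that the impersonation advantage is negligible.

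First I would set up the reduction. Given a CDH instance $(P,aP,bP)$, the simulator publishes the system public key as $Y=aP$, thereby implicitly setting the product $s_0 s_1\cdots s_n$ equal to $a$, and it models $H_1$ as a random oracle. For every identity queried by $\mathcal{A}$ the simulator programs $H_1(ID)=\beta_{ID}P$ with a known random $\beta_{ID}\in\mathbb{Z}_q^*$, except for one guessed target identity $ID^*$ where it embeds $Q_{ID^*}=H_1(ID^*)=bP$. Key-issuing, key-securing, and extraction queries on any non-target identity are answered by returning $d_{ID}=\beta_{ID}Y=\beta_{ID}\,aP$, which is exactly the correct private key $s_0 s_1\cdots s_n Q_{ID}$ and is consistent with all intermediate checks; the blinding points $X=xP$, $D_{ID}=xQ_{ID}$ and the responses $D_i,X_i$ are simulated from the known $\beta_{ID}$ and a chosen $x$.

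Next I would exploit the verification equations themselves. Each pairwise check---$e(Q_{ID},X)=e(D_{ID},P)$ at key issuing, $e(D_{ID},X_0)=e(D_0,X)$ at the KGC response, and $e(D_i,P)=e(D_{i-1},P_i)$ at each securing step---forces every response to be the correct scalar multiple of the committed point relative to the published keys $P_0=s_0P$ and $P_i=s_iP$. Hence an impersonator who makes the honest user accept must deliver material whose aggregate scalar equals the discrete logarithm of $Y$; producing the accepted key $d_{ID^*}$ then means producing a point with $e(d_{ID^*},P)=e(Q_{ID^*},Y)=e(bP,aP)=e(P,P)^{ab}$, which forces $d_{ID^*}=abP$. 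The simulator returns $abP$ as its CDH solution. A standard accounting of the abort probability---the guess of $ID^*$ among the $q_1$ random-oracle queries, together with the events that no queried identity collides with $ID^*$---bounds the impersonation advantage by about $q_1$ times the CDH advantage, which is negligible.

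The hard part will be simulating the corrupted authorities faithfully while keeping $a$ hidden inside $Y$. Because $\mathcal{A}$ may play the role of the KGC or of all-but-one CPC, it controls those secret scalars and feeds the honest party adaptively chosen points, and the simulator must answer the honest CPC's step without knowing the factor of $a$ that this CPC contributes, yet still output points satisfying $e(D_i,P)=e(D_{i-1},P_i)$. I expect to resolve this by absorbing the \emph{entire} product $s_0 s_1\cdots s_n$ into the single honest contribution and programming the non-target oracle answers so that every honest response is $\beta_{ID}$ times a public point, so that no individual secret is ever needed; the target identity is the only one for which this fails, which is precisely where the CDH challenge is extracted. Verifying that this embedding is perfectly indistinguishable from a real execution---in particular that the distributions of the blinded points $X_0,X_i$ and of $Y$ match those of an honest run---is the main technical obstacle.
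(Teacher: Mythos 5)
Your proposal is correct in outline but proves a substantially stronger statement by a genuinely different route than the paper. The paper's own proof is an elementary algebraic check, not a reduction: it considers one specific attack in which an adversary with identity $ID_A$ submits $\langle ID_A, D_{ID}^*, X^*\rangle$ with $D_{ID}^* = x^* H_1(ID)$ (i.e., it reuses the victim's hash point under its own identity), and observes that the KGC's verification $e(Q_{ID_A},X^*) \overset{?}{=} e(D_{ID}^*,P)$ must fail, since bilinearity gives $e(Q_{ID_A},x^*P)=e(x^*Q_{ID_A},P)=e(D_{ID_A},P)\ne e(D_{ID}^*,P)$ whenever $Q_{ID_A}\ne Q_{ID}$. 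That argument only rules out this single substitution strategy and says nothing about other ways of forging an accepted transcript. Your game-based reduction to CDH in the random-oracle model --- publishing $Y=aP$, embedding $bP$ at a guessed target identity, answering non-target key queries via $d_{ID}=\beta_{ID}Y$, and extracting $abP$ from any accepted key for $ID^*$ --- covers the full class of impersonation adversaries, including malicious KGC/CPC coalitions, at the cost of the simulation machinery you describe. The obstacle you flag (answering the honest CPC's step without knowing its individual scalar) is real, and your fix of absorbing the whole product $s_0 s_1\cdots s_n$ into the single honest contribution is the standard and workable resolution; you would still need to argue that the simulated $X_0, X_i$ are distributed as in a real run when the adversary chooses the other scalars adaptively, which your sketch acknowledges but does not complete. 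In short: your proof, if completed, subsumes the paper's; the paper's proof buys brevity but establishes only that one particular forgery is detected by one particular pairing equation.
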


\begin{proof}
Proof. In the proposed keyIssuing algorithm, the user gives parameters $<ID, X, D_{ID}>$ such that $D_{ID}=xH_1(ID)$ for secret information x, and requests to the KGC for his partial private key. Suppose there is an adversary $Adv$ with identity $ID_A$, who wishes to impersonate the user by modifying the public parameters $<ID,X, D_{ID}>$ to $<ID_A, D_{ID}^*, X^*>$, where $D_{ID}^*=x^*H_1(ID)$, and $X^*= x^*P$ such that  $x^*\in \mathbb{Z}_q$, and passes it to KGC. The KGC first the authenticates the $ID_A$, then computes $Q_{IDA}=H_1(ID_A)$ to validate the received parameter $<ID_A, D_{ID}^*, X^*>$ using (\ref{eq3.5}).
\vspace{-5mm}
\begin{equation} \label{eq3.5}
    e(Q_{IDA},X^*) \overset{?}{=}  e(D_{ID}^*,P)
\end{equation}

The correctness of (\ref{eq3.5}) is verified as:
\vspace{-5mm}
\begin{align*}
    e(Q_{IDA},X^* )&= e(Q_{IDA},x^* P)
    =  e(x^* Q_{IDA},P)\\
    &=  e(D_{IDA},P) 
    \ne  e(D_{ID}^*,P)  
\end{align*}
The above inequalities demonstrate that the KGC detects if any malicious entity impersonates the user in the ESKI scheme. Therefore, our proposed scheme avoids the impersonation attack. 
\end{proof}

\begin{theorem} \label{thm3.4}
The proposed ESKI-IBE scheme is secured against insider attacks. 
\end{theorem}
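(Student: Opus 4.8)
The plan is to show that any \emph{insider} — a malicious KGC, or a coalition of at most $n-1$ corrupted CPCs, since by the adversarial model at least one CPC (say $CPC^j$) remains honest — cannot reconstruct a user's full private key $d_{ID} = s_0 s_1 \cdots s_n Q_{ID}$, and therefore cannot decrypt ciphertexts or forge keys on that user's behalf. First I would fix notation, writing $\sigma = s_0 s_1 \cdots s_n$ so that the published system key is $Y = \sigma P$ while the target private key is $d_{ID} = \sigma Q_{ID}$, where $Q_{ID} = H_1(ID)$. Since $H_1$ is modelled as a random oracle into $\mathbb{G}_1$, the discrete logarithm of $Q_{ID}$ with respect to $P$ is unknown to every party.

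The core observation is that an insider always possesses the public values $P$, $Y = \sigma P$, and the computable point $Q_{ID}$, yet recovering $d_{ID} = \sigma Q_{ID}$ from these is exactly the Computational Diffie--Hellman problem in $\mathbb{G}_1$. I would make this precise through a reduction: an insider algorithm that, on input the CDH challenge $(P, \sigma P, Q_{ID})$, outputs $\sigma Q_{ID}$ immediately solves CDH, contradicting the hardness assumption of Chapter~1. This already bounds a pure outsider; the work is to confirm that insider knowledge does not collapse the problem.

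Next I would argue that the partial secret information held by a corrupted coalition provides no shortcut. Such a coalition knows every secret key except the honest $s_j$, so it can form the partial product $\pi = \prod_{i \ne j} s_i$ and compute $Q' = \pi Q_{ID}$; to finish it must still obtain $s_j Q'$ from the public $P_j = s_j P$ and $Q'$, which is again a CDH instance because the discrete logarithm of $Q'$ remains unknown. I would also note that the blinding value $x$ shields the transcript: every message $D_{i-1} = x\, s_0 \cdots s_{i-1} Q_{ID}$ and $X_{i-1} = x\, s_0 \cdots s_{i-1} P$ that any entity sees is masked by the unknown $x$, and the verification identities $e(Q_{ID}, X_{i-1}) = e(D_{i-1}, P)$ and $e(d_{ID}, P) = e(Q_{ID}, Y)$ are bilinear relations that certify correctness without revealing any new discrete-logarithm data.

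The main obstacle will be arguing rigorously that the full sequential transcript — in particular the chained relations $Y_i = s_i Y_{i-1}$ and $P_i = s_i P$ — leaks nothing beyond the public values already accounted for, so that an insider cannot extract $s_j$ or assemble $\sigma Q_{ID}$ by some non-generic algebraic combination. I would dispatch this by classifying every quantity the insider observes as either public, a known-scalar multiple of a public point, or an $x$-blinded value, and showing that each candidate shortcut reduces to solving CDH or ECDLP in $\mathbb{G}_1$. Concluding that the success probability of any polynomial-time insider is negligible then establishes that the proposed ESKI-IBE scheme is secure against insider attacks.
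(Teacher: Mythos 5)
Your proposal proves a different property from the one the paper actually establishes under the name ``insider attack.'' In the paper, Theorem 3.4 is about \emph{detectability of share substitution}: the insider is a single curious $CPC^i$ that, instead of honestly computing $D_i = s_i D_{i-1}$, returns parameters of its own choosing, e.g.\ $D_i^* = r s_1 H_1(ID)$ and $X_i^* = r s_1 P$ for a random $r$. The claimed security is that the user \emph{detects} this, and the proof is a two-line computation with the verification equation: since $e(D_{i-1},P_i) = e(s_i D_{i-1},P) = e(D_i,P)$, any $D_i^* \neq s_i D_{i-1}$ fails the check $e(D_{i-1},P_i) \overset{?}{=} e(D_i^*,P)$, so the user rejects. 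Your argument — that a coalition of at most $n-1$ corrupted CPCs plus the KGC cannot reconstruct $d_{ID} = s_0 s_1 \cdots s_n Q_{ID}$ because doing so reduces to CDH on the honest party's share — never touches this detection property. It shows insiders cannot \emph{learn} the key, not that the user can \emph{catch} an insider who injects a malformed share; a CPC could still disrupt the protocol or plant a key share of its choice if the user had no way to verify it, and nothing in your reduction rules that out.

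That said, the mismatch cuts both ways: your interpretation of ``insider attack'' (collusion-resistance / key non-recoverability) is the more standard and arguably stronger reading of the theorem statement, and the paper's own proof does not establish it — the paper relies on the separate assumption that at least one CPC is honest without ever reducing key recovery to a hard problem in this theorem. If you want your write-up to match the paper, you need to add the verification-equation argument: state the check $e(D_{i-1},P_i) \overset{?}{=} e(D_i,P)$ performed by the user after each $CPC^i$ responds, and show it holds if and only if $D_i = s_i D_{i-1}$ (up to the usual caveat that equality of pairings only pins down $D_i^*$ whose difference from $s_i D_{i-1}$ pairs trivially with $P$, which in a prime-order group forces $D_i^* = s_i D_{i-1}$). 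Your CDH reduction is then best kept as a complementary remark rather than as the proof of this particular theorem.
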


\begin{proof}
Suppose a malicious CPC wishes to issue a signature on random parameters, instead of the received parameters without disclosing any hint to other CPC. Any malicious CPC and group of malicious CPC can control the user's private key. The proposed ESKI-IBE scheme is said to be secured against insider attacks if the user detects any curious CPC generates the signature in its choice parameter.
In our proposed scheme, the user receives the parameter $<X_{i-1},D_{i-1}>$ from the $CPC^{i-1}$, validates it and requests to $CPC^i$  for partial private key share by sending $<ID,X_{i-1},D_{i-1}>$. Suppose $CPC^i$ is an adversary who generates the signature on his choice without giving any hint to other CPC. $CPC^i$ chooses random $r \in \mathbb{Z}_q$, computes $D_i^*=rs_1 H_1 (ID)$, and $X_i^*=rs_1 P$, and sends  $<D_i^*, X_i^*>$ to user. The user verifies the correctness of parameters using (3.6).
\vspace{-5mm}
\begin{equation} \label{eq3.6}
    e(D_{i-1},P_i)?=  e(D_i^*,P)
\end{equation}

The correctness of  (\ref{eq3.6}) is verified as:
\vspace{-10mm}
\begin{align*}
    e(D_{i-1},P_i ) &= e(D_[i-1],s_iP)
    =  e(s_iD_{i-1},P)  \\
    &=   e( s_i s_{i-1}.. s_1 s_0 xQ(ID),P)
    = e(D_i,P) \ne e(D_i^*,P)
\end{align*}

The above inequalities show that the user detects if any curious CPC generates a false signature in its choice. This proves that our proposed ESKI systems are secure against insider attacks.
\end{proof}

\begin{theorem} \label{thm3.5}
The proposed ESKI scheme avoids the CPC’s incompetency. 
\end{theorem}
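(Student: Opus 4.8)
The plan is to formalize ``CPC's incompetency'' as the event in which some $CPC^i$ fails to apply its secret key $s_i$ correctly to the partial key it receives, that is, it returns a share $D_i^{*} \ne s_i D_{i-1}$ (whether through accidental error, omission, or genuine inability to compute), and then to show that the verification equation already built into the \textit{KeySecuring} step exposes every such deviation. The argument runs parallel to the insider-attack analysis of Theorem~\ref{thm3.4}, but the emphasis shifts from detecting a deliberately structured forgery to detecting any honest-but-faulty output, so the core task is to argue that the pairing check rejects \emph{all} incorrect shares, not merely maliciously chosen ones.

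First I would recall the check the user performs on receiving $D_i$ from $CPC^i$, namely $e(D_i, P) \overset{?}{=} e(D_{i-1}, P_i)$, where $P_i = s_i P$ is the public key that $CPC^i$ committed to during \textit{Setup} and whose consistency was already confirmed there via $e(Y_{i-1}, P_i) = e(Y_i, P)$. Using bilinearity, a competent $CPC^i$ outputs $D_i = s_i D_{i-1}$, and the check passes because $e(s_i D_{i-1}, P) = e(D_{i-1}, P)^{s_i} = e(D_{i-1}, s_i P) = e(D_{i-1}, P_i)$.

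Next I would suppose, toward a contradiction, that an incompetent $CPC^i$ returns $D_i^{*} \ne s_i D_{i-1}$. Substituting $D_i^{*}$ into the verification equation and invoking bilinearity once more yields $e(D_i^{*}, P) \ne e(D_{i-1}, P)^{s_i} = e(D_{i-1}, P_i)$, so the user's check fails and the faulty share is rejected before it can propagate into the extracted key $d_{ID} = x^{-1} D_n$. Because $P_i$ is fixed and publicly verifiable from \textit{Setup}, the faulty centre cannot retroactively alter $P_i$ to make a wrong $D_i^{*}$ pass, which closes the only escape route.

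The main obstacle I anticipate is justifying the strict inequality in the previous step: one must argue that \emph{every} output other than $s_i D_{i-1}$ is detected, rather than leaving open a coincidental collision that would let a wrong share slip through. Handling this cleanly amounts to observing that the homomorphism $A \mapsto e(A, P)$ on $\mathbb{G}_1$ has trivial kernel by non-degeneracy of the pairing $e$ (since $P$ generates $\mathbb{G}_1$), hence is injective; therefore $D_i^{*} \ne s_i D_{i-1}$ forces $e(D_i^{*}, P) \ne e(s_i D_{i-1}, P)$, and the verification equation then does the rest. Making this injectivity step explicit is what cleanly separates incompetency detection from the insider-attack result and guarantees the conclusion holds for arbitrary faulty behaviour.
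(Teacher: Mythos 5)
Your proof is internally correct, and the detection mechanism you analyze is exactly the check the paper relies on, namely $e(D_i,P)\overset{?}{=}e(D_{i-1},P_i)$; your explicit injectivity argument (the homomorphism $A\mapsto e(A,P)$ has trivial kernel because $\mathbb{G}_1$ has prime order and $e$ is non-degenerate) is a genuine improvement over the paper, which merely asserts the inequality $e(D_i,P)\ne e(D_i^*,P)$ without justifying why no coincidental collision can occur. However, you have formalized ``incompetency'' differently from the paper. The paper defines incompetency as the CPC's \emph{inability to check the validity of the parameters it receives without learning anything about them}, and its proof accordingly has two parts: first, that $CPC^i$ can blindly validate the incoming pair $\langle X_{i-1},D_{i-1}\rangle$ via $e(Q_{ID},X_{i-1})\overset{?}{=}e(D_{i-1},P)$; second, that an adversary who rescales both components consistently as $\langle rD_{i-1},rX_{i-1}\rangle$ --- a modification that still passes the CPC's own check --- is nevertheless caught by the user's downstream verification, because the CPC then outputs $D_i^*=s_i rD_{i-1}\ne s_iD_{i-1}$. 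Your version covers arbitrary faulty CPC outputs (a strictly larger class of deviations at that stage) but omits the blind-validation half and the specific in-transit tampering scenario the paper treats. Each approach buys something: yours gives a cleaner, fully general detection guarantee at the user; the paper's addresses the privacy-preserving validation capability of the CPC itself, which is what its own definition of incompetency is actually about. If you want your proof to stand in for the paper's, you should add a sentence establishing that $CPC^i$ can verify $\langle X_{i-1},D_{i-1}\rangle$ using only public values and the pairing, without learning $x$ or the underlying key material.
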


\begin{proof}
Incompetency refers to the inability of CPC to check the validity of the received parameter without knowing any information about the received parameters. The proposed ESKI-IBE scheme is said to be secured against the incompetency attack if the CPC can check the validity of receiver parameters and sign it without knowing any information about the received parameters. 

In the key-securing phase of our proposed system, the user sends their identity ID, signed blinding factor $X_{i-1}'$, and blinded partial private key $D_{i-1}$ to $CPC^i$ and sequentially requests them for key protection. The $CPC^i$ validates the parameters $\langle X_{i-1}, D_{i-1} \rangle$ using (\ref{eq3.7}).

\begin{equation} \label{eq3.7}
     e(Q_{ID},X_{i-1}) \overset{?}{=}e(D_{i-1},P)
\end{equation}

Suppose an adversary $Adv$ modified $<X_{i-1},D_{i-1}>$  as $<D_{i-1}^*=rD_{i-1}, X_{i-1}^*=rX_{i-1}>$ where $r$ is random variable. $CPC^i$ checks these parameters using (3.8).

\begin{equation} \label{eq3.8}
    e(Q_{ID},X_{i-1}^*) \overset{?}{=}e(D_{i-1}^*,P)
\end{equation}

On successful validation, $CPC^i$ set $<D_i^*=s_i D_{i-1}^*, X_i^*=s_i X_{i-1}^*>$ and outputs it to the user who verifies the correctness of parameters $<D_i^*, X_i^*>$ using Equation (\ref{eq3.9a}).
\vspace{-5mm}

\begin{equation} \label{eq3.9b}
    e(D_{i-1}, P_i) \overset{?}{=}  e(D_i^*,P)
\end{equation}
 
The correctness of (\ref{eq3.9b}) is verified as:

\vspace{-10mm}

\begin{align*}
    e(D_{i-1},P_i ) &= e(D_{i-1}, s_i P) =  e( s_i D_{i-1},P) \\
    &=   e( s_i s_{i-1}..s_1s_0 xQ(ID),P)  = e(D_i,P)     \ne e(D_i^*,P)
\end{align*}

These inequalities show that the user can easily detect adversarial misbehaviour during communication between the user and $CPC^i$. Therefore, our proposed ESKI Systems are secure against CPC’s incompetency attack. 
\end{proof}

\subsection{Security Analysis of EF-IBS scheme}

\begin{theorem} \label{thm3.6}
The proposed EF-IBS is consistent.
\end{theorem}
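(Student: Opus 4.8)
The plan is to establish consistency by directly verifying that an honestly generated signature always satisfies the verification equation (\ref{eq3.2}). I would begin by unwinding the definitions from the Signing algorithm: the signature on a message $m$ is $\sigma = \langle R, S \rangle$ with $R = rP$ and $S = r^{-1}(H_2(m)P + H_3(R)d_{ID})$, where $d_{ID}$ is the private key recovered in KeyExtraction. Before touching the pairing, I would first confirm that KeyExtraction indeed produces $d_{ID} = s_0(s_1 + s_2 + \cdots + s_n)Q_{ID}$: each $CPC^i$ returns the share $X_i = s_i P_{ID} = s_i s_0 Q_{ID}$, so summing over $i = 1, \ldots, n$ gives $d_{ID} = \sum_{i=1}^n X_i = s_0(s_1 + \cdots + s_n)Q_{ID}$, exactly as claimed.

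The core of the argument is then a bilinearity computation on the left-hand side of (\ref{eq3.2}). Substituting the definitions of $S$ and $R$ and expanding the pairing in its first argument, I would obtain
\begin{align*}
e(S, R) &= e\!\left(r^{-1}H_2(m)P + r^{-1}H_3(R)d_{ID},\ rP\right) \\
&= e(P,P)^{H_2(m)} \cdot e(d_{ID},P)^{H_3(R)},
\end{align*}
where the factors $r^{-1}$ and $r$ cancel inside each pairing. Recognizing $e(P,P) = g$, this reduces to $g^{H_2(m)} \cdot e(d_{ID},P)^{H_3(R)}$.

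The remaining step, and the only one requiring a nontrivial observation, is to identify $e(d_{ID},P)$ with $e(Q_{ID},Y)$. This holds because the combined private key $d_{ID} = s_0(s_1 + \cdots + s_n)Q_{ID}$ and the published system key $Y = s_0(s_1 + \cdots + s_n)P$ carry the \emph{same} scalar factor; pulling that scalar out of either slot gives $e(d_{ID},P) = e(Q_{ID},P)^{s_0(s_1 + \cdots + s_n)} = e(Q_{ID},Y)$. Substituting this equality yields precisely the right-hand side $g^{H_2(m)} e(Q_{ID},Y)^{H_3(R)}$ of (\ref{eq3.2}), which completes the verification.

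Since the argument is a mechanical application of bilinearity, I do not anticipate a genuine obstacle; the only conceptual content is the matching scalar structure between $d_{ID}$ and $Y$, which is exactly what the additive key-issuing design was engineered to guarantee. The single point I would be careful about is confirming that $r$ is invertible modulo the group order, which holds because $r$ is chosen from the nonzero residues $\mathbb{Z}_q^*$.
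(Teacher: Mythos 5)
Your proposal is correct and follows essentially the same route as the paper's proof: a direct bilinearity expansion of $e(S,R)$, cancellation of $r^{-1}$ against $r$, and the identification $e(d_{ID},P)=e(Q_{ID},Y)$ via the common scalar $s_0(s_1+\cdots+s_n)$ shared by $d_{ID}$ and $Y$. Your added checks (that KeyExtraction actually yields $d_{ID}=s_0(s_1+\cdots+s_n)Q_{ID}$ and that $r\in\mathbb{Z}_q^*$ is invertible) are sensible but do not change the argument.
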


\begin{proof}
The Equation (\ref{eq3.2}) ensures the validity of the signature-message pair. Since $R=rP$ and $S=r^{-1} (H_2(m),H_3 (R)d_{ID} )$ are given, the consistency of Equation (\ref{eq3.2}) is verified as follows. From LHS, 

\vspace{-15mm}

\begin{align*}
    e(S,R)&=e(r^{-1}(H_2 (m)P+H_3(R)d_{ID}),rP)\\
&=e((H_2 (m)P+H_3(R)d_{ID}),P) \\
&=e(H_2 (m)P,P)e(H_3(R)d_{ID},P) \\
&=e(H_2 (m)P,P)e(H_3(R)s_0  (s_1+s_2+...+s_n) Q_{ID},P) \\
&=e(H_2 (m)P,P)e(H_3(R)Q_{ID},Y) \\
&=e(P,P)^{H_2(m)}e(Q_{ID},Y)^{H_3 (R))}\\
&=g^{H_2(m)}e(Q_{ID},Y)^{H_3 (R)}
\end{align*}

This completes the correctness of Equation (\ref{eq3.2}).
\end{proof}

\begin{theorem} \label{thm3.7}
(\textbf{EF-ID-CMA}). Suppose $H_1, H_2$ and $H_3$ are three random oracle models and forger $F$ wants to forge a signature on message $m$. Suppose forger $F$ for an adaptive chosen message and identity attacks (EF-ID-CMA) to our EL-IBS scheme with advantage $k^{-n}$ and running time $t$, there exists an algorithm B that helps $F$. Thus, $F(t,q_1,q_2,q_3,q_p,q_S,k^{-n})$ have the following advantage to breaks the proposed scheme.
\vspace{-5mm}    
\begin{equation} \label{eq3.9a}
    |Pr[F(t,q_1,q_2,q_3,q_p,q_S,k^{-n})]| \ge k^{-n} (1-q_1/2^k  )^{q_p} (1-(q_2+q_3)/2^k)^{q_s}
\end{equation}
and,  $Time t'=t+t_B+q_1+q_2+q_3+q_P+q_S$

Where, $q_1,q_2,q_3,q_p$ and $q_S$ are the most queries to $H_1, H_2, H_3$ hash oracles, private key extraction and signature oracle respectively asked by $F$ and $t_B$ is running time for algorithm $B$.
\end{theorem}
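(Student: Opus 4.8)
The plan is to establish EF-ID-CMA security by a reduction, carried out in the random oracle model, from a Diffie--Hellman-type assumption in $\mathbb{G}_1$ (computational Diffie--Hellman, and most plausibly its gap variant GDH, given the pairing-rich verification relation). Writing the effective master secret as $s=s_0(s_1+\cdots+s_n)$, so that $Y=sP$ and $d_{ID}=sQ_{ID}$, the verification identity $e(S,R)=g^{H_2(m)}e(Q_{ID},Y)^{H_3(R)}$ shows that a valid signature testifies to possession of $d_{ID}=sQ_{ID}$. I would hand the algorithm $B$ a challenge $(P,aP,bP)$, have it publish $Y:=aP$ (the honest consistency relation among $P_0,P_1,\dots,P_n$ and $Y$ lives inside \textsc{Setup} and is never re-checked by the forger $F$), and plant the target identity via $H_1(ID^*)=bP$, so that the sought secret is exactly $d_{ID^*}=abP$.

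Next I would specify how $B$ answers $F$'s queries. On each fresh $H_1$ query $ID_i$, $B$ tosses a biased coin $c_i$: for the common outcome it returns $\beta_iP$ with known $\beta_i$ (a controlled identity, for which $d_{ID_i}=\beta_iY$ is computable), and otherwise it injects the challenge as $\beta_i(bP)$. Private-key extraction then returns $\beta_iY$ on controlled identities and aborts on injected ones, while $H_2$ and $H_3$ are faithful list-backed random oracles. Signing queries on controlled identities are answered honestly using the known $d_{ID_i}$. For a signing query on an injected identity, $B$ must manufacture a pair $(R,S)$ that passes verification without $d_{ID}$: it samples $R$ and an auxiliary scalar and then \emph{programs} $H_2(m)$ and $H_3(R)$ so that $e(S,R)=g^{H_2(m)}e(Q_{ID},Y)^{H_3(R)}$ holds, aborting only if either point or message was already assigned --- the per-query failure probability $\le(q_2+q_3)/2^k$ that yields the factor $(1-(q_2+q_3)/2^k)^{q_s}$.

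The probability accounting then assembles the claimed bound. The event that none of the $q_p$ private-key queries lands on an injected identity (favourable coins, with per-query failure bounded against the $q_1$ entries of the $H_1$ list by $q_1/2^k$) contributes $(1-q_1/2^k)^{q_p}$; the no-collision event above contributes $(1-(q_2+q_3)/2^k)^{q_s}$; and, conditioned on no abort, $B$'s transcript is distributed exactly as the real game, so $F$ still forges with its advantage. The product is the stated lower bound, and $t'=t+t_B+q_1+q_2+q_3+q_p+q_s$ merely sums $F$'s running time, $B$'s bookkeeping, and the per-oracle costs. To convert the forgery into a solution I would rewind $F$ (forking lemma) at the step where $H_3(R^*)$ is fixed, obtaining a second forgery on the same $R^*$ and the same $H_2(m^*)$ but a fresh $H_3'(R^*)$; subtracting the two relations $r^*S=H_2(m^*)P+H_3(R^*)\,d_{ID^*}$ and $r^*S'=H_2(m^*)P+H_3'(R^*)\,d_{ID^*}$ cancels the message term and leaves $r^*(S-S')=(H_3(R^*)-H_3'(R^*))\,d_{ID^*}$.

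I expect two steps to be the genuine obstacles, and I would scrutinise them most. The first is the signing simulation on the challenge identity: because the verification equation pins the unknown $d_{ID}$ to two \emph{independent} random hashes --- $H_2(m)$ sitting on the $e(P,P)$ component and $H_3(R)$ on the $e(Q_{ID},Y)=g^{ab}$ component --- the familiar Cha--Cheon-style trick of absorbing the unknown into a single programmable hash does not transfer cleanly, and a careless attempt forces one hash to zero and corrupts the distribution; a correct treatment must choose $R$ (perhaps as a combination of $P$, $Q_{ID}$ and $Y$) and solve simultaneously for both programmed values. The second is the extraction itself: the ElGamal-style factor $r^{-1}$ hides the per-signature randomizer $r^*$, so the subtracted relation recovers $d_{ID^*}=abP$ only up to the unknown multiplier $r^*$, and removing it is precisely where a gap (GDH) oracle, or an additional rewinding that also pins $r^*$, becomes essential. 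Once these two points are settled, the residual abort analysis and the assembly of the numerical bound are routine.
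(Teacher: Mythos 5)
Your simulation strategy and probability accounting coincide with the paper's: the paper's algorithm $B$ likewise sets $Y=aP$, programs $H_1$ with a biased coin (returning $bP$ on "controlled" identities so that $d_{ID}=bY$ is computable, and embedding the challenge otherwise), answers extraction and signing queries from the list, and derives exactly the two abort factors $(1-q_1/2^k)^{q_p}$ and $(1-(q_2+q_3)/2^k)^{q_s}$ multiplied by the forger's advantage $k^{-n}$, so on that front you have reproduced the argument. Where you diverge is the exploitation of the forgery: the paper does \emph{not} invoke the forking lemma here (it reserves that for Theorem \ref{thm3.9}); instead it manipulates a single forgery, setting $S'=rS^*$ and $R'=r^{-1}R^*$ and asserting $e(S',R')=e((h_2P+h_3'abP),R^*)$, from which it declares the CDH instance solved. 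Your rewinding-based extraction is the more standard and more defensible route, and the two obstacles you flag are real and are precisely the points the paper glosses over: the paper's signing oracle simply "runs the signing algorithm" without explaining how to sign under an injected identity whose private key $B$ does not know (your simultaneous programming of $H_2$ and $H_3$ is needed to fill this in), and the paper's extraction silently assumes access to the per-signature randomizer $r$, which is exactly the unknown multiplier $r^*$ you identify as blocking recovery of $abP$ from the subtracted relations. So your proposal matches the paper where the paper is careful and is more honest than the paper where it is not; to turn it into a complete proof you would still need to resolve the two difficulties you name, neither of which the paper's own argument actually disposes of.
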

 
\begin{proof}
Suppose a forger $F$ wants to forge any signature and let there must be a simulator $B$, which helps $F$. We have to design an algorithm $B$ to solve the CDHP. 

\textbf{Setup}: Consider three hash functions $H_1:\{0,1\}^* \times \mathbb{G}_1 \rightarrow \mathbb{Z}_q^*, H_2:\{0,1\}^* \rightarrow \mathbb{Z}_q^*$  and $H_3:\mathbb{G}_1 \rightarrow \mathbb{Z}_q^*$, generator $P$  of $\mathbb{G}_1$  and bilinear function  $e: \mathbb{G}_1  \times \mathbb{G}_1 \rightarrow \mathbb{G}_2$, $B$ chooses an element $a \in \mathbb{Z}_q^*$, and set system public key as $Y=aP$ and $g=e(P,P)$. $B$ now set parameter $pp=<q,e,g,P,P_0,\mathbb{G}_1,\mathbb{G}_2,H_1,H_2,H_3,Y>$ and gives $pp$ to $F$.

\textbf{Oracles}: Forger $F$ performs the following oracles.
\begin{itemize}
    \item \textit{$H_1$ oracle}: $B$ makes a list $H_1^{List}$, which is initially empty having tuple $<ID_i,h_1i,b_i,x_i>$. When $F$ queries on identity $ID_i$, $B$ responds $F$ in the following way.
    \begin{itemize}
        \item $B$ gives $h_1i$ to the $F$, if $ID_i$ found in the $H_1^{List}$ of  tuple $<ID_i,h_1i,b_i,x_i>$.
        \item Otherwise, $B$ chooses a bit $x \in \{0,1\}$, where $Pr[x=0] = \alpha$. Pick an element $b \in \mathbb{Z}_q^*$, set $H_1(ID)=h_1=bQ_{ID}$ for $x=1$, else, set $H_1(ID)=h_1=bP$. Adds tuple $<ID_i,h_{1i},b,x>$ to list $H_1^{List}$, and gives $h_{1i}$ to $F$.
    \end{itemize}
	
Note, $H_1 (ID)$ gives no information to $F$ until they query the $H_1$ oracle on $ID$ because $H_1$ is the random oracle. 

    \item \textit{$H_2$ oracle}. For given message $m$, $B$ chooses a random element $h_2 \in \mathbb{Z}_q^*$, set $H_2(m)=h_2$ and returns it to $F$.
    \item \textit{$H_3$ oracle}. For given message $R$, $B$ chooses a random element $h_3 \in \mathbb{Z}_q^*$, set $H_3(R)=h_3$ and returns it to the $F$.
    \item \textit{Private key extraction oracle}. $F$ submit an identity $ID$. $B$ maintains a list $H_1^{List}$ having tuple $<ID_i,d_{IDi},b_i>$.  $B$ responds private key $d_{ID}$ to {F}, if queried key contained in the list. Otherwise, $B$ chooses a random element $b \in \mathbb{Z}_q^*$, set $d_{ID}=bY$, adds $d_{ID}$ in the list and gives $d_{ID}$ to the $F$.  
    \item \textit{Signing oracle}. For a given input $<ID,m>$, $B$ queries the private key from the EF-IBS scheme  if it does not have a private key $d_{ID}$ corresponding to $ID$. $B$ runs the signing algorithm of EF-IBS to get the signature  $\sigma=<S,R>$. $B$ invokes a hash query such that $h_1=H_1 (ID)$, $h_2=H_2(m)$ and $h_3=H_3(R)$. It then computes $\sigma'=<S',R'>$, where $S'=rS$ and returns it to $F$.
\end{itemize}
 
\textbf{Challenge}: At the end, $F$ responds a tuple $<ID^*,m^*,S^*,R^*>$ against identity $ID^*$. Since $F$ is allowed to win the EF-ID-CMA game with a non-negligible advantage against the EF-IBS scheme as assumed, there is $e(S^*,R^* ) \overset{?}{=}g^{h_2} e(h_1,Y)^{h_3}$, where $h_1=H_1(ID^*)$, $h_2=H_2(m^*)$ and $h_3=H_3(R^*)$.

To break the EF-IBS scheme, B submits the challenge tuple $\sigma'=<S',R'>$, where $S'=rS^*$ and $R'=r^{-1}R^*$, $h_2=H_2 (m)$ and $h'_3=H_3(R' )$. We can get 
\vspace{-10mm}

\begin{align*}
    e(S',R')=e(rS^*,r^{-1}R^*)=e((h_2 P+h'_3abP),R^*)
\end{align*}

\textbf{Analysis}. The probability that B does not abort the game is defined by four events.
\begin{itemize}
    \item $E_1$: The extract oracles fail if $H_1$ oracle outputs the inconsistent outputs with probability at most $(q_1/2)^k$. The simulation is complete $q_p$ time, which happens with probability at least $(1-q_1/2^k)^{q_p}$. 
	\item $E_2$: The signature oracles fail if $H_2$ and $H_3$ oracles output the inconsistent outputs with probability at most $(q_2+q_3)/2^k$. The simulation is complete $q_s$ time, which happens with probability at least $(1-(q_2+q_3)/2^k  )^{q_s}$. 
	\item $E_3$: When $F$ runs the sign oracle, $B$ does not stop. 
	\item $E_4$: The forger $B$ computes the valid message-signature forgery.
\end{itemize}
	
From the above four events, we obtain the probability that $F$ can break the scheme is 
\vspace{-10mm}

\begin{align*}
    Pr[\neg E_1 \land \neg E_2 \land E_4 | E_3] \ge k^{-n} \left(1 - \frac{q_1}{2^k}\right)^{q_p} \left(1 - \frac{q_2 + q_3}{2^k}\right)^{q_s}
\end{align*}

\end{proof} 

\begin{theorem} \label{thm3.8}
(\textbf{EKA-ID-CMA}). Suppose $H_1, H_2$ and $H_3$ are three random oracle models and adversary A wishes to defame the KGC that his private key is used by the KGC and let an algorithm $B$, which helps $A$. Suppose $A$ executes private key extract oracles, signature oracles, $H_1$ hash oracles, $H_2$ hash oracles, and runs at most $t$ times with advantage at most $k^{-n}$. Under the assumption of ROM and intractable to solve the CDH problem, our proposed EF-IBS Scheme is resilient to existential key abuse against the adaptive chosen message and identity attacks (EUS-ID-CMA).

\end{theorem}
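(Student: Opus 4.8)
The plan is to mirror the CDH reduction of Theorem~\ref{thm3.7}, adapting it to the extra power that the key-abuse game grants the adversary. Assuming a PPT adversary $Adv$ that wins the EKA-ID-CMA game with non-negligible advantage $k^{-n}$, I would build a simulator $B$ that, on a CDH instance $(P, aP, bP)$, aims to output $abP$. The key structural observation is that the genuine system public key is $Y = s_0(s_1 + \cdots + s_n)P$, so that writing $\alpha = s_0(s_1 + \cdots + s_n)$ gives $Y = \alpha P$ and $d_{ID} = \alpha Q_{ID}$. Thus $B$ plants the instance by setting $Y = aP$ (leaving $\alpha = a$ implicit and unknown) and programming $H_1$ so that the challenge identity satisfies $Q_{ID^{*}} = H_1(ID^{*}) = bP$; with this choice the genuine private key on $ID^{*}$ equals $d_{ID^{*}} = \alpha Q_{ID^{*}} = abP$, which is precisely the CDH target and which $B$ must never reveal.

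Next I would answer the oracles through the lists $H_1^{List}, H_2^{List}, H_3^{List}$ exactly as in Theorem~\ref{thm3.7}. Each non-challenge identity $ID_i$ is assigned $Q_{ID_i} = c_i P$ for a known $c_i \in \mathbb{Z}_q^{*}$, so that a private-key query is answered by $d_{ID_i} = c_i Y = a c_i P$, computable from $aP$. The one genuinely new feature of the key-abuse game is that $Adv$ is allowed to request a private key on the challenge identity itself, obtaining the pair $\langle d^{*}_{ID}, Y^{*}\rangle$. I would resolve this by exploiting that, in the malicious key-issuing interaction, this key is issued relative to the adversary-influenced parameter $Y^{*}$ rather than the genuine $Y = aP$ into which the instance is embedded; $B$ therefore returns a key consistent with $Y^{*}$ and never hands over $abP$. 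Signing queries on $(ID^{*}, m)$ with $m \ne m^{*}$ are simulated by random-oracle programming: $B$ picks the commitment $R$ and a simulated $S$, then back-patches $H_3(R)$ (and if necessary $H_2(m)$) so that the verification identity $e(S,R) = g^{H_2(m)} e(Q_{ID^{*}}, Y)^{H_3(R)}$ holds without knowledge of $d_{ID^{*}}$.

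To extract the answer I would apply a forking/rewinding argument to the forgery $\langle ID^{*}, m^{*}, (R^{*}, S^{*})\rangle$, which by the winning condition must verify under the \emph{genuine} $Y$. Replaying $Adv$ on identical coins up to its $H_3(R^{*})$ query and answering with two distinct values $h_3 \ne h_3'$ yields two signatures sharing the commitment $R^{*} = r^{*}P$, whence
\begin{align*}
S_1^{*} - S_2^{*} = r^{*-1}(h_3 - h_3')\, d_{ID^{*}},
\end{align*}
so that $d_{ID^{*}} = abP$ is determined up to the hidden scalar $r^{*}$ by $abP = (h_3 - h_3')^{-1} r^{*}(S_1^{*} - S_2^{*})$. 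The probability and running-time accounting would then follow Theorem~\ref{thm3.7}, incurring the usual losses over the $H_1$, $H_2$, $H_3$, key-extraction and signing queries together with the forking factor, giving a bound of the same shape as (\ref{eq3.9a}).

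The hard part is the last step. Because the commitment $R^{*} = r^{*}P$ hides the scalar $r^{*}$ and $S^{*}$ carries the factor $r^{*-1}$, the rewinding relation fixes $d_{ID^{*}}$ only \emph{relative to} $r^{*}$; pairing the relation against $P$ cleanly eliminates $r^{*}$ but delivers $e(R^{*}, (h_3 - h_3')^{-1}(S_1^{*} - S_2^{*})) = e(P,P)^{ab}$ in the target group $\mathbb{G}_2$ rather than $abP \in \mathbb{G}_1$. Bridging this gap to a genuine CDH solution is the main obstacle: it appears to require either a DDH/gap oracle in $\mathbb{G}_1$ (reducing instead to a gap-CDH assumption), a reformulation of the target as the bilinear value $e(P,P)^{ab}$, or a way to cancel the blinding $r^{*}$ using the adversary's own abuse key $d^{*}_{ID}$ on $ID^{*}$. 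Making this elimination rigorous, while keeping the $Y$-versus-$Y^{*}$ separation that lets $B$ serve the challenge-identity private-key query without leaking $abP$, is the crux of the argument.
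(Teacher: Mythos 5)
Your reduction sets up the CDH instance exactly as the paper does: $Y=aP$, $Q_{ID^*}=H_1(ID^*)=bP$, so that the genuine private key $d_{ID^*}=abP$ is the CDH target, with non-challenge identities assigned $Q_{ID_i}=c_iP$ so their keys $c_iY$ are computable, and signing queries answered by back-patching $H_2$ and $H_3$. Up to that point you are on the paper's track. The divergence, and the genuine gap, is in the extraction step, and you have correctly diagnosed it yourself: the forking route yields $S_1^{*}-S_2^{*}=r^{*-1}(h_3-h_3')\,d_{ID^{*}}$ with $r^{*}$ unknown, and pairing away $r^{*}$ only lands you in $\mathbb{G}_2$ with $e(P,P)^{ab}$, which is a BDH-type value, not a solution to CDH in $\mathbb{G}_1$. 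As written, your reduction therefore does not solve the stated problem, and none of the three escape hatches you list (gap oracle, restating the target, or using the abuse key) is what the paper uses.

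The paper closes the gap by exploiting the winning condition of the EKA-ID-CMA game rather than by rewinding. In that game the adversary is \emph{required} to have obtained a signature on $\langle ID^{*},m\rangle$ from the signing oracle for some $m\ne m^{*}$, and the simulator itself chooses the commitment randomness $r$ when it answers that query ($R=rP$, $S=r^{-1}(h_2P+h_3 d_{ID})$). The paper's forgery analysis assumes the adversary's output $\langle m^{*},S^{*},R^{*}\rangle$ rides on such a simulator-generated commitment, so $B$ knows the scalar $r^{*}$ with $R^{*}=r^{*}P$. A \emph{single} forgery then gives $e(S^{*},R^{*})=e(h_2P+h_3abP,\,P)$, hence $r^{*}S^{*}=h_2P+h_3abP$ and $abP=h_3^{-1}(r^{*}S^{*}-h_2P)$ directly in $\mathbb{G}_1$ --- no forking, and the blinding factor is stripped off using the simulator's own coin $r^{*}$ rather than eliminated algebraically. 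If you want to repair your write-up along the paper's lines, the fix is to argue (from the game's restrictions and the fact that $H_3$ is only programmed at commitments the simulator created for $ID^{*}$) that the forgery's $R^{*}$ must coincide with a signing-oracle commitment, and then read off $r^{*}$ from the simulator's transcript; the forking lemma can then be dropped entirely. The probability accounting also changes: instead of a forking loss you pick up the factors $(1-1/q_1)(1-1/(q_1(q_2+q_3)))$ for not aborting and $1/(q_1(q_2+q_3))$ for the message collision event, as in the paper.
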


\begin{proof}
Suppose a forger $F$ wants to abuse a user by forging a signature and let there must be a simulator B, which helps $F$. We have to design an algorithm $B$ to solve the CDH problem.  

\textbf{Setup}: $B$ considers $P$ as a generator of group $\mathbb{G}_1$, three hash functions $H_1:\{0,1\}^* \times \mathbb{G}_1 \rightarrow \mathbb{Z}_q^*$, $H_2:\{0,1\}^* \rightarrow \mathbb{Z}_q^*$  and $H_3:\mathbb{G}_1 \rightarrow \mathbb{Z}_q^*$, a bilinear map function $e: \mathbb{G}_1  \times \mathbb{G}_1 \rightarrow \mathbb{G}_2$. $B$ chooses $a \in \mathbb{Z}_q^*$, and compute $Y=aP$, and public parameter  $pp=<q,e,g,P,P_0,\mathbb{G}_1,\mathbb{G}_2,H_1,H_2,H_3,Y>$. $B$ outputs $pp$ to $F$.

\textbf{Oracles}: Forger $F$ asks the queries, where $B$ responds the following oracles.
\begin{itemize}
    \item \textit{$H_1$ oracle}: $B$ prepares a list $H_1^{List}$ having tuple $<ID_i,h_1i,b_i,x_i>$, which is initially empty. $F$ queries on Identity $ID$, $B$ responds $F$ in the following way.
    \begin{itemize}
        \item $B$ gives $h_{1i}$  to $F$, if $ID$ found in the $H_1^{List}$ of  tuple $<ID_i,h_1i,b_i,x_i>$.
        \item Otherwise, $B$ chooses a bit $x \in \{0,1\}$, where $Pr[x=0]=\alpha$. Pick a random element $b \in \mathbb{Z}_q^*$, set $H_1(ID)=h_1=bQ_{ID}$ for $x=1$, else, set $H_1(ID)=h_1=bP$. Add the tuple $<ID_i,h_{1i},b,x>$ to the list $H_1^{List}$, and give $h_{1i}$ to $F$.

    \end{itemize}
	Note, $H_1(ID)$ gives no information to $F$ until they query the $H_1$ oracle on $ID$ because $H_1$ is the random oracle. 
	\item \textit{$H_2$ oracle}. For given message $m$, $B$ chooses a random element $h_2 \in \mathbb{Z}_q^*$, set $H_2 (m)=h_2$ and returns to the $F$. 
	\item \textit{$H_3$ oracle}. For a given message $R$, $B$ chooses a random element $r\in \mathbb{Z}_q^*$, set $h_3=H_3(R)=rP$ and returns to $F$.
	\item \textit{Private key extraction oracles}. $F$ submit an identity $ID$. $B$ maintains a list $H_1^{List}$ having tuple $<ID_i,d_{IDi},b_i>$.  $B$ responds private key $d_{ID}$ to $F$, if queried key contained in the list. Otherwise, $B$ chooses a random element $b \in \mathbb{Z}_q^*$, set $d_{ID}=bY$, adds $d_{ID}$ in the list and gives $d_{ID}$ to the $F$. 
	\item \textit{Signing oracle}. For signing oracle on $<ID,m>$, if B does not have private key $d_{ID}$ corresponding to $ID$, it computes the private key by running the private key extraction oracle. Using private key $d_{ID}$ and $H_2$ and $H_3$ responds, $B$ outputs the signing query as follows: 
	\begin{itemize}
	    \item If $ID$ is queried previously, $B$ abort the simulation. 
	    \item Otherwise, $B$ execute the private key oracle using $ID_i$ corresponding to $H_1^{List}$ and corresponding private key $d_{ID}$, queries the $H_2$ oracle to get $h_2$, queries the $H_3$ oracle to get $h_3$, computes the signature  $R=rP,S=r^{-1}(h_2i P+h_3i d_{ID})$. $B$ sends $<S,R>$ to $A$.
	\end{itemize}
\end{itemize}

\textbf{Challenge}: At the end, $A$ responds to the challenge $<ID^*,m^*,S^*,R^*>$ against identity $ID^*$ such that the private key against $ID^*$ that has been queried in private key extraction oracle previously and there must be a signing oracle that outputs $<ID^*,m,S,R>$. It seems that $B$ does not abort the game with probability $(1-1/q_1 )(1-1/(q_1 (q_2+q_3)))$. Also, the probability that $m^*=m$ is $1/(q_1 (q_2+q_3))$. So, there is 

\vspace{-15mm}

\begin{align*}
    e(S^*,R^*)&=g^(h_2)e(bP,aP)^(h_3)\\
&=e(P,P)^(h_2 ) e(bP,aP)^(h_3 ) \\
&=e(P,P)^(h_2 ) e(bP,aP)^(h_3 )\\
&=e(h_2 P,P)e(h_3 abP,P)\\
&=e(h_2 P+h_3 abP,P)
\end{align*}

Thus, $B$ obtains $r^* S^*=h_2 P+h_3 abP$, i.e., $abP=h_3^{-1}(r^* S^*-h_2 P)$. Therefore, $B$ can break the CDH problem.

\end{proof}
 
\begin{theorem} \label{thm3.9}
(\textbf{EUS-ID-CMA}). Suppose $H_1$, $H_2$ and $H_3$ are three random oracle models and adversary $A$ who wish to defame the KGC that his private key is used by the KGC and let an algorithm $B$, which helps $A$. Suppose A executes at most $q_p$ private key extract oracles, $q_s$ signature oracles, $q_1$ $H_1$ hash oracles, $q_2$ $H_2$ hash oracles, runs at most $t$ times with advantage at most $k^{-n}$. Under the assumption of ROM and intractable to solve the CDH problem, our proposed EF-IBS Scheme is existential user slandering secured against the adaptive chosen message and identity attacks (EUS-ID-CMA).
\end{theorem}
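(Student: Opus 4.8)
The plan is to establish EUS-ID-CMA security by a reduction to the CDH problem in the random-oracle model, closely mirroring the structure already used for Theorem \ref{thm3.7} and Theorem \ref{thm3.8}. Concretely, I would assume a PPT slandering adversary $A$ that, after at most $q_1,q_2$ queries to the $H_1,H_2$ oracles, $q_p$ private-key-extraction queries and $q_s$ signing queries, outputs a fresh forgery $\langle ID^*,m^*,S^*,R^* \rangle$ with advantage $k^{-n}$, and build a simulator $B$ that uses $A$ to compute $abP$ from a given instance $(P,aP,bP)$ of CDH. The embedding is the same as before: $B$ sets the system public key $Y=aP$ (so that the aggregate secret $s_0(s_1+\cdots+s_n)$ plays the role of the unknown $a$), puts $g=e(P,P)$, publishes $pp=\langle q,e,g,P,P_0,\mathbb{G}_1,\mathbb{G}_2,H_1,H_2,H_3,Y\rangle$, and hands it to $A$.

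First I would specify the oracle simulations. The $H_1$ oracle uses a biased partition: on a fresh identity $B$ flips a coin $x$ with $\Pr[x=0]=\alpha$, returns $H_1(ID)=bQ_{ID}$ on the CDH-embedded branch when $x=1$ and $H_1(ID)=bP$ on the extractable branch when $x=0$, recording $\langle ID_i,h_{1i},b,x\rangle$ in $H_1^{List}$ exactly as in the proof of Theorem \ref{thm3.8}. The $H_2$ and $H_3$ oracles return fresh uniform elements of $\mathbb{Z}_q^*$, caching answers for repeated inputs. Private-key extraction is answered on the extractable branch by returning $d_{ID}=bY$, which is correctly distributed because $d_{ID}=s_0(s_1+\cdots+s_n)Q_{ID}$ matches $bY=b\,aP$ when $Q_{ID}=bP$; signing queries are answered by sampling the nonce $r$, setting $R=rP$, programming $H_3(R)$ when needed, and outputting $S=r^{-1}(H_2(m)P+H_3(R)d_{ID})$, which satisfies (\ref{eq3.2}) by the consistency shown in Theorem \ref{thm3.6}. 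Since the EUS model also hands $A$ the CPC secret keys and $Y$, $B$ simply samples the $s_i$ itself and reveals them, while withholding the master key $s_0$.

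Next I would exploit the freshness constraint. Because $A$ has queried the private key on $ID^*$ but has \emph{never} obtained a signature on $\langle ID^*,m^*\rangle$, a valid forgery must satisfy $e(S^*,R^*)=g^{H_2(m^*)}\,e(H_1(ID^*),Y)^{H_3(R^*)}$. Conditioning on the event that $ID^*$ landed on the CDH-embedded branch ($H_1(ID^*)=bP$) and that $R^*$ is the programmed point, rewriting this relation as in Theorem \ref{thm3.8} yields $r^*S^*=H_2(m^*)P+H_3(R^*)\,abP$, whence $abP=H_3(R^*)^{-1}\bigl(r^*S^*-H_2(m^*)P\bigr)$, which $B$ outputs as its CDH solution. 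The probability bookkeeping would reuse the four abort events $E_1,\dots,E_4$ of Theorem \ref{thm3.7}: the extraction oracle stays consistent with probability at least $(1-q_1/2^k)^{q_p}$, the signing oracle with probability at least $(1-(q_2+q_3)/2^k)^{q_s}$, and the partition succeeds with the usual $\alpha$-factor, so that $B$'s CDH advantage is polynomially related to $A$'s advantage $k^{-n}$.

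The hardest part will be reconciling the reduction with the fact that the EUS model explicitly gives $A$ the target private key $d_{ID^*}$ together with all CPC secrets: naively, possession of $d_{ID^*}$ lets $A$ sign any message, so $B$ cannot simultaneously embed the CDH instance \emph{inside} that key and extract $abP$ from the forgery. I would resolve this by separating the two roles of the unknown $a$: it is embedded only through the aggregate $Y$ and through the message/nonce channel (via the $H_3$ programming together with the freshness of $m^*$), not through the user's extractable key, so that a forgery on a never-signed $m^*$ still pins down $abP$ even though $d_{ID^*}$ is known. Making this partition airtight --- guaranteeing that the one-honest-CPC assumption of the adversarial model keeps $s_0$ and hence the relevant part of the aggregate key genuinely hidden, and that programming $H_3(R^*)$ never collides with an earlier signing query --- is the delicate step, and it is precisely where that assumption must be invoked.
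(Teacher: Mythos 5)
Your overall framing (embed the CDH instance in $Y=aP$ and in the $H_1$ answer for the target identity, partition identities with a biased coin, answer extraction on the extractable branch with $d_{ID}=bY$, simulate signing by programming $H_3$) matches the paper's setup for Theorems \ref{thm3.7}--\ref{thm3.9}. The divergence, and the gap, is in the extraction step. You propose to recover $abP$ from a \emph{single} forgery via $abP=H_3(R^*)^{-1}\bigl(r^*S^*-H_2(m^*)P\bigr)$, "as in Theorem \ref{thm3.8}." But in Theorem \ref{thm3.8} the challenge signature is tied to a signing-oracle transcript, so the simulator knows the nonce $r$ behind $R$. In the EUS game the pair $\langle S^*,R^*\rangle$ is produced by the adversary on a message that was never submitted to the signing oracle, so $B$ has no way to learn $r^*=\log_P R^*$, and your closed-form expression for $abP$ is not computable by the simulator. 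The paper's proof of Theorem \ref{thm3.9} avoids exactly this by invoking the forking lemma: it rewinds the adversary to obtain two valid forgeries on $m^*$ with distinct hash values $h_2\ne h'_2$ (and $h_3\ne h'_3$), divides the two verification equations $e(S^*,R^*)=g^{h_2}e(bP,aP)^{h_3}$ and $e(S'^*,R'^*)=g^{h'_2}e(bP,aP)^{h'_3}$, and solves the resulting linear relation $(h_3-h'_3)abP=r^*S^*-r'^*S'^*-(h_2-h'_2)P$ for $abP$. Without the rewinding step your reduction does not go through; this is not a cosmetic difference but the load-bearing idea of the paper's argument.

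A secondary point: you correctly flag the tension that the EUS adversary is handed $d_{ID^*}$ and the CPC secrets, so the CDH instance cannot simply sit inside the target's extractable key, but your proposed resolution (routing the hardness "through the message/nonce channel via the $H_3$ programming") is asserted rather than argued, and as stated it does not yield a computable witness for $B$. The paper resolves nothing here either --- its simulation of the extraction oracle returns $d_{ID}=bY$, which on the branch $H_1(ID^*)=bP$ is exactly the CDH target --- but at minimum its two-transcript forking argument gives a concrete algebraic derivation of $abP$ that does not require knowing $r^*$ in isolation. If you want to keep a single-forgery reduction you would have to program $H_3(R^*)$ only after seeing $R^*$ and arrange for $S^*$ to cancel the unknown nonce, which the scheme's verification equation does not permit; otherwise you should adopt the forking-lemma route and fold the rewinding loss into your probability bound alongside the events $E_1,\dots,E_4$.
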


\begin{proof}
Suppose a malicious user $A$ who wishes to defame the KGC that his private key is used by him and lets an algorithm $B$, which helps $A$. We have to design an algorithm $B$ to solve the CDHP. 
\textbf{Setup}: Consider three hash functions $H_1:\{0,1\}^* \times \mathbb{G}_1 \rightarrow \mathbb{Z}_q^*$, $H_2:{0,1}^* \rightarrow \mathbb{Z}_q^*$, and $H_3:\mathbb{G}_1 \rightarrow \mathbb{Z}_q^*$, generator $P$ of $\mathbb{G}_1$, and bilinear function $e: \mathbb{G}_1 \times \mathbb{G}_1 \rightarrow \mathbb{G}_2$. $B$ chooses an element $a \in \mathbb{Z}_q^*$ and sets the system public key as $Y=aP$ and $g=e(P,P)$. $B$ now sets the parameter $pp=\langle q,e,g,P,P_0,\mathbb{G}_1,\mathbb{G}_2,H_1,H_2,H_3,Y \rangle$ and gives $pp$ to $F$.

\textbf{Oracles}: Forger $F$ performs the following oracles.
\begin{itemize}
    \item \textit{$H_1$ oracle}: $B$ makes a list $H_1^{List}$, which is initially empty having tuple $<ID_i,h_1i,b_i,x_i>$. When $F$ queries on identity $ID_i$, $B$ responds $F$ in the following way.
    \begin{itemize}
        \item 	$B$ gives $h_{1i}$ to the $F$, if $ID_i$ found in the $H_1^{List}$ of  tuple $<ID_i,h_1i,b_i,x_i>$.
        \item Otherwise, $B$ chooses a bit $x \in \{0,1\}$, where $Pr[x=0] =\alpha$. Pick an element $b \in \mathbb{Z}_q^*$, set $H_1(ID)=h_1=bQ_{ID}$ for $x=1$, else, set $H_1(ID)=h_1=bP$. Add the tuple $<ID_i,h_{1i},b,x>$ to the list $H_1^{List}$, and give $h_{1i}$ to $F$.
    \end{itemize}

Note, $H_1 (ID)$ gives no information to $F$ until they query the $H_1$ oracle on $ID$ because $H_1$ is the random oracle. 
    \item \textit{$H_2$ oracle}. For given message $m_i$, $B$ chooses a random element $h_2 \in \mathbb{Z}_q^*$, set $H_2(m)=h_2$ and returns it to the $F$.
    \item \textit{$H_3$ oracle}. For given message $R$, $B$ chooses a random element $h_3 \in \mathbb{Z}_q^*$, set $H_3(R)=h_3$ and returns it to the $F$.
    \item \textit{Private key extraction oracle}. After successful execution of $H_1$ oracle, $B$ obtains $h_{1i}$ against $ID_i$. Using tuple $<ID_i,h_1i,b_i,x_i>$ in $H_1^{List}$, $B$ performs the following steps. 
    \begin{itemize}
        \item If $x_i =1$, abort the game. 
        \item Otherwise, choose a random element $b \in \mathbb{Z}_q^*$, set $d_{ID}=bY$ and gives $d_{ID}$ to the $F$.
    \end{itemize}
	 
	\item \textit{Signing oracle}. For a given input, $<ID,m>$, $B$ executes the signature oracle to generate the signature $<S,R>$ using the output of the $h_2$ and $h_3$ oracles as follows.
	\begin{itemize}
	    \item If $ID$ is queried previously, $B$ abort the simulation. 
	    \item Otherwise, $B$ execute the private key oracle using $ID$ corresponding to $H_1^{List}$ and corresponding private key $bY$ and picks $r \in \mathbb{Z}_q^*$ and set the signature $<S,R>$, where $R=rP$ and $S=r^{-1}(h_2P+h_3abP)$
	\end{itemize}
\end{itemize}
	
\textbf{Challenge}: At the end, $A$ responds to the challenge $<ID^*,m^*,S^*,R^*>$ against identity $ID^*$ where, the private key against $ID^*$ that has been queried in private key extraction oracle previously. So, $B$ generates the private key $d_i$. It has been observed that A can break our EF-IBS against the EUS-ID-CMA if $d_i=d_{ID}$. 

From the forking lemma in [13], B can replay the same tuple but with a distinct $H_2$ choice. It then produces two valid signature $<S^*,R^*>$ and $<S^{'*},R^{'*}>$ on message $m^*$ corresponding to hash function $H_2$ and $H'_2$ having distinct values $h_2\ne h'_2$ respectively. Since, $R^*=r^*P$, $S^*=r^{*-1}(H_2(m^* )P+H_3 (R^*)abP)$ and $R^{'*}=r^{'*} P, S^{'*}=(r^{'*-1})(H'_2 (m^* )P+H_3 (R^{'*})abP)$.

\vspace{-10mm}

\begin{align*}
    \frac{e(S^*,R^*)}{e(S'^{*},R'^{*})}
    &= \frac{g^{h_2}e(bP,aP)^{h_3}}{g^{h'_2}e(bP,aP)^{h'_3}}
    =g^{h_2-h'_2}e(bP,aP)^{h_3-h'_3}\\
    &=g^{h_2-h'_2}e(bP,aP)^{h_3-h'_3}
    =e(P,P)^{h_2-h'_2}e(bP,aP)^{h_3-h'_3}\\
    &=e(((h_2-h'_2)+(h_3-h'_3)ab)P,P)\\
    &= e(r^* S^* (r^{'*}S^{'*})^{-1},P)
\end{align*}

\vspace{-10mm}

\begin{align*}
    r^*S^*(r'^* S'^*)^{-1} &= ((h_2-h'_2) + (h_3-h'_3)ab)P \\
    (h_3-h'_3)abP &= r^*S^*(r^{'*}S^{'*})^{-1} - (h_2-h'_2)P \\
    abP &= (h_3-h'_3)^{-1}(r^* S^*(r^{'*} S^{'*})^{-1} - (h_2-h'_2))P
\end{align*}

Therefore B can solve the CDH problem.  
\end{proof}

\section{Performance Analysis}

This section analyzes the performance of the proposed ESKI-IBE and EF-IBS schemes, in terms of computation overhead (in msec), communication overhead (in Bytes) and storage cost (in Bytes). 
	
\subsection{Experiment Simulation}
In order to validate the proposed schemes, we simulate the experiments on the \textit{Intel(R) Core(TM) i7-2600K CPU @ 3.4 GHz}, and \textit{8 GB of RAM}, using \textit{gcc 4.6}. We consider the Tate pairing constructs over the Type-A curve of the PBC library \cite{lynn2010pairing} in order to achieve the 1024-bit RSA level of security. The Type-A super-singular elliptic curve $E/\mathbb{F}_P: y^2=x^3+x$ built on two prime $p$ and $q$, such that $|p|=512-bit$, $q=2^{159}+2^{17}+1$ is Solaris prime ($|q|=160-bit$) satisfying $p+1=12qr$, where $\mathbb{G}_1=\mathbb{G}_2$, with embedding degree is 2. We take the following cryptographic operations for analysis of the computational cost of the proposed EF-IBS system: bilinear pairing operations, scalar multiplication in ECC, addition in ECC, modular multiplication, modular inversion, exponentiation and map-to-point hash function. Table \ref{tbl3.1} shows the computation cost of required cryptographic operations used in the entire thesis. We follow the methodologies given in \cite{cao2010pairing,debiao2011id,barreto2003selection,chung2007id} in order to compute the computation cost of required cryptographic operations.

\begin{table}[h!]
\centering
\begin{tabular}{|c| c| c| c|} 
 \hline
 Operations &	Notations &	Cost (in $T_M$) & Cost (in ms) \\ [0.5ex] 
 \hline\hline
 Modular multiplication &	$T_M$ &	01.00 $T_M$ 	& $\approx 0.23$\\
Modular inversion &	$T_I$ &	11.60 $T_M$ &	$\approx (11.60*0.23)$ $\approx 2.67$\\
Points addition &	$T_A$ &	00.12 $T_M$ &	$\approx (0.12*0.23)$ $\approx  0.03$\\
Scalar point multiplication &	$T_{SM}$	& 29.00 $T_M$	& $\approx (29*0.23)$ $\approx  6.67$\\
Modular Exponentiation &	$T_E$ &	240.00 $T_M$	& $\approx (240*0.23)$ $\approx  55.2$\\
Map-To-Point hash operation &	$T_H$	 & 29.00 $T_M$ &	$\approx  6.67$\\
Bilinear pairing &	$T_P$ &	87.00 $T_M$	& $\approx  20.01$\\ 
 \hline
\end{tabular}
\caption{Computational cost of required cryptographic operations (in ms)}
\label{tbl3.1}
\end{table}

\subsection{Performance Evaluation of ESKI}
\textit{Computation cost}. Now, we compare the computation cost of setup and key generation phases of our proposed ESKI model with the corresponding phases of other Escrow-free Schemes in the literature \cite{karati2018pairing,karati2018provably,chen2015removing,chen2015t,chen2015escrowcomp,chen2016escrow,li2017efficient,zhang2012efficient,zhang2013id}. Using notations and computation cost of operations given in Table \ref{tbl3.1}, we summarize the computation cost of KGC-Setup, CPC-Setup, Key issuing, Key securing and Key extraction phase of our ESKI model for $n=0,1,10$ and $100$ number of CPCs, given in Table \ref{tbl3.2}. In Figure \ref{fig3.1}, it can be seen that the computation cost of CPC-Setup and key issuing phase increases linearly, that is $2n(T_P+T_S )=(51.36*n) ms$ as the value of $n$ grows. On the other side, the computation cost of the Setup, Key issuing, and key retrieving phases are $13.34 ms, 106.72 ms$, and $46.7 ms$, respectively, and remain constant for any number of CPCs. It is noted that our scheme acts as a general IBE scheme which takes 166.8 ms in setup and key generation for $n=0$.

\subsection{Performance Evaluation of ESKI-IBE scheme}
This section estimates the performance of our proposed ESKI-IBE scheme, in terms of computation cost and communication cost. For comparison of our scheme with other related schemes, we assume that HIBE/HIBS scheme in the related schemes is 1 level, which is equivalent to a pure IBE/IBS scheme. As our proposed ESKI scheme employs multiple authorities, we compare our scheme with Chen \textit{et al.} \cite{chen2015removing}, Chen \textit{et al.} \cite{chen2015t} and Li \textit{et al.} scheme \cite{li2017efficient}. 

\begin{table}
        \centering
        \caption{Computation time (in ms) of KGC-Setup, CPC-Setup, Key-Issuing, Key-Securing and Key-Extraction phase of our ESKI scheme for n = \{0, 1, 10 and 100\}, the number of CPC participated in the system.}
        \label{tbl3.2}
        \begin{tabular}{|c|c|c|c|c|c|}
            \hline
            \multicolumn{1}{|c|}{Phases} & \multicolumn{1}{|c|}{\#operations} &
            \multicolumn{4}{|c|}{Computation cost (in ms)} \\
            \cline{3-6}
              &    & n=0 & n=1 & n=10 & n=100 \\
            \hline
            \hline
            KGC-Setup &	$1T_{SM}+1T_H$	& 13.34	& 13.34	& 13.34	& 13.34 \\
            CPC-Setup &	$2n(T_P+T_{SM})$	&  0	&  53.36	& 533.6	& 5,336\\
            Key Issuing	& $4(T_P+T_{SM})$	&  106.72 &	106.72	& 106.72	& 106.72\\
            Key Securing	& $2n(T_P+T_{SM})$ & 0	&53.36	& 533.6	& 5,336\\
            Key Extracting	& $2T_P+1T_{SM}$	& 46.7	& 46.7	& 46.7	& 46.7\\
            Total &	$(4n+6)(T_P+T_{SM})+1T_H$	& 166.8	& 273.5	& 1,234	& 10,838\\
            \hline
        \end{tabular}
    \end{table}

\textit{Computation cost}. Table \ref{tbl3.3} summarizes the computation cost of Chen \textit{et al.} \cite{chen2015removing}, Chen \textit{et al.} \cite{chen2015t} and Li \textit{et al.} scheme \cite{li2017efficient}, our scheme-I and our scheme-II. It can be observed from Figure \ref{fig3.2} that our scheme (when we do not consider the CPC cost) takes constant time (46.69 ms) while the overhead of other scheme increase linearly. Similarly, Figure \ref{fig3.3} shows the comparison of our scheme (when we considered CPC cost) with others where it can be seen that our scheme performed better as compared to Chen \textit{et al.} \cite{chen2015removing} and Chen \textit{et al.} \cite{chen2015t}. 

\begin{figure}
  \centering
  \includegraphics[width=0.8\linewidth]{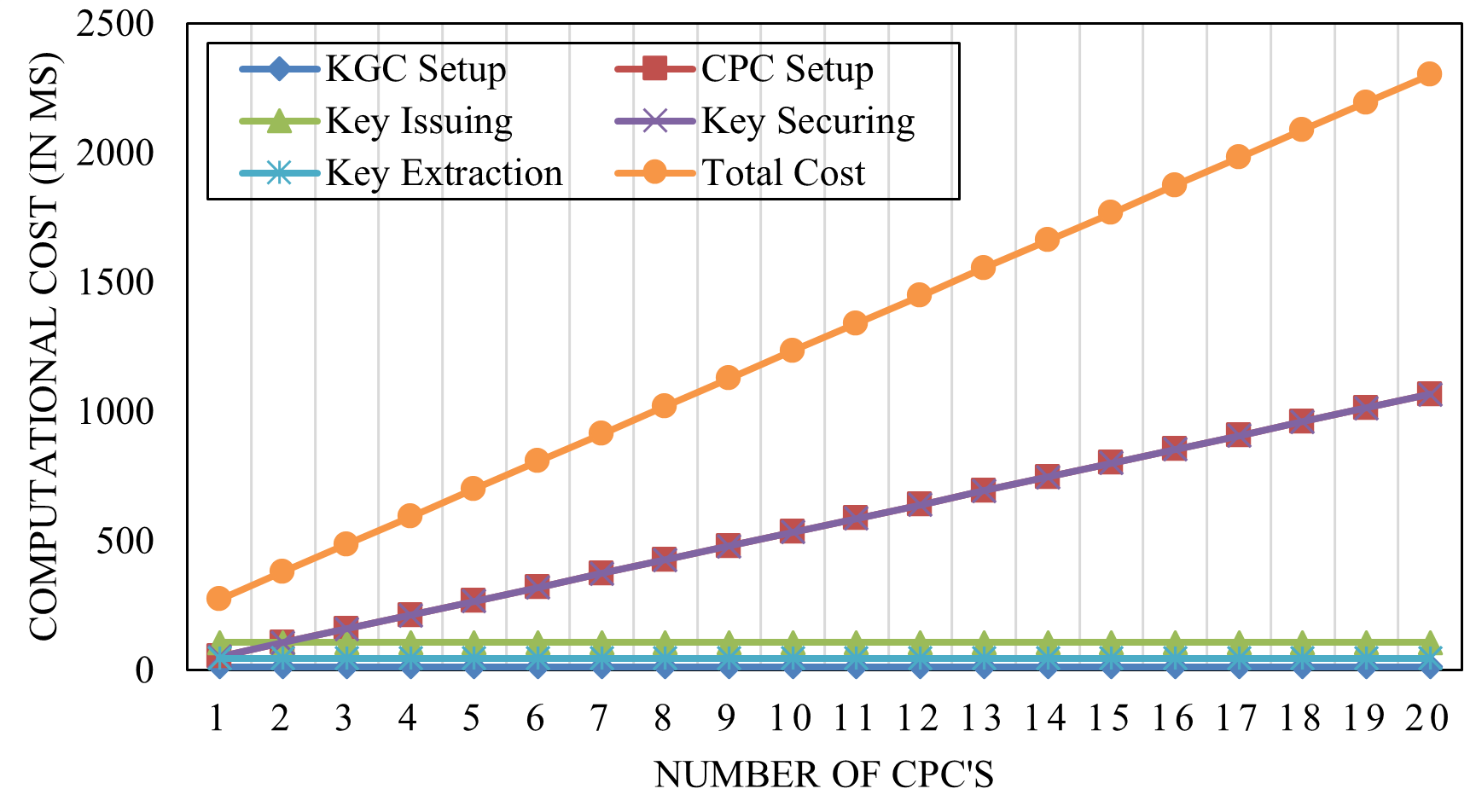}
  \caption{Computation overhead (in ms) of the different algorithms of proposed ESKI-IBE scheme.}
\label{fig3.1}
\end{figure}

\begin{figure}
  \centering
  \includegraphics[width=0.8\linewidth]{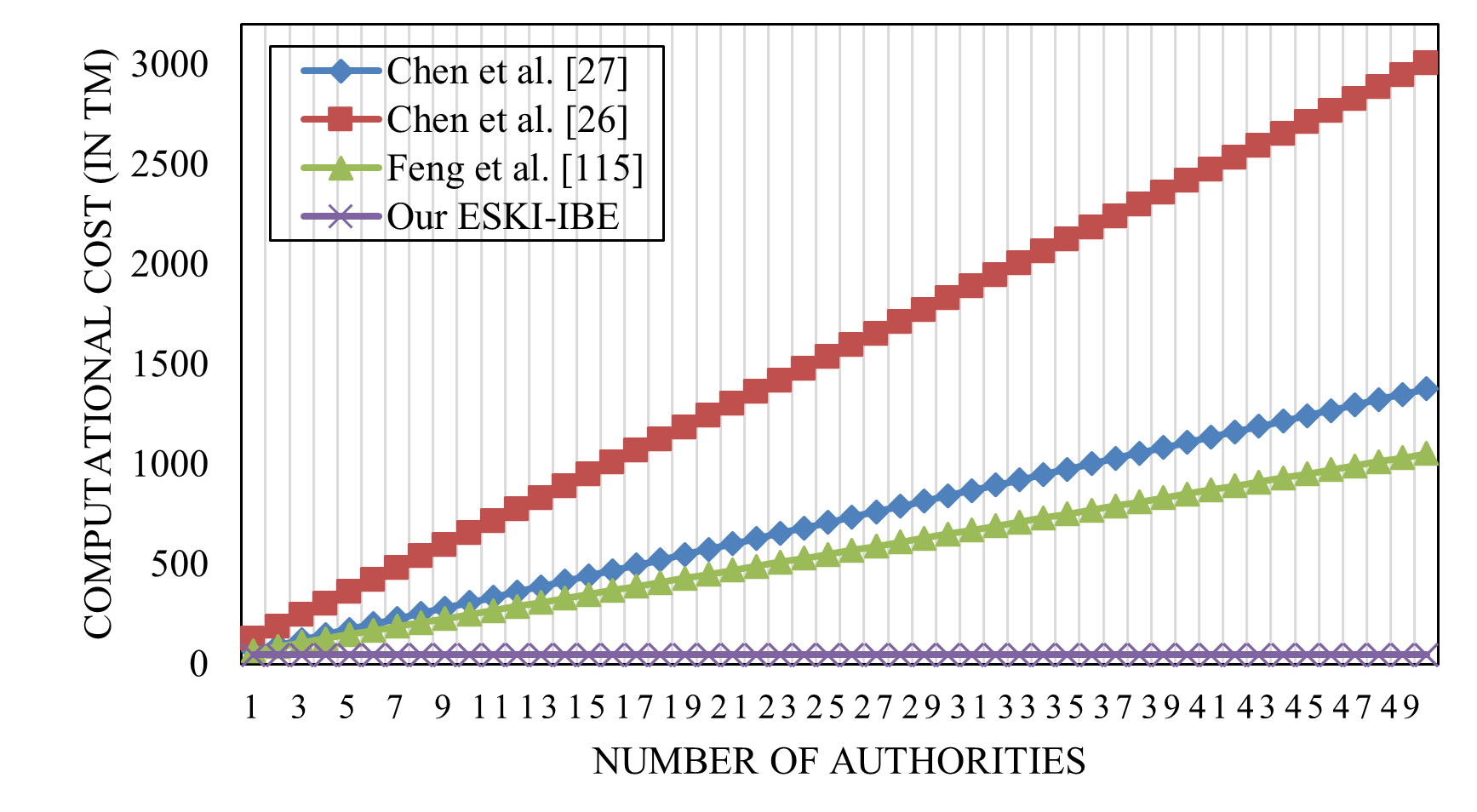}
  \caption{Setup and Key generation comparison of proposed ESKI-IBE scheme with others in terms of computation Cost (Without outsource Computing)}
\label{fig3.2}
\end{figure}

\begin{table}
        \centering
        \caption{Computation comparison of our proposed scheme with related schemes.}
        \label{tbl3.3}
        \begin{tabular}{|c|c|c|c|}
            \hline
             Scheme &	Set up (A)	& Key Extraction (B)&	Total  Cost (A+B) \\
            \hline
            \hline
            Chen \textit{et al.} \cite{chen2015removing} & $(n+1)T_P$ &	$(3+n)T_{SM}+(2+n)T_A$	& 26.71n+40.1 \\
           Chen \textit{et al.} \cite{chen2015t} & $(n+1)T_P$ &	$(4n+7)(T_{SM}+T_A)$	& 58.7n+66.9 \\
            Feng Li. \cite{li2017efficient} & $(2n+1)T_{SM}+nT_A+1T_H$ & 	$(2+n)T_{SM}+nT_A+1T_P$	& $20.1n+46.7$ \\
            Our W/o CPC &	$1T_{SM}+1T_H$ &	$5T_{SM}$ 	& 46.49\\
            Our CPC	& $(2n+1)T_{SM}+1T_H$ &	$(2n+5)T_{SM}$ 	& 26.68n+46.49\\
            \hline
            
        \end{tabular}
    \end{table}

\begin{figure}
  \centering
  \includegraphics[width=0.8\linewidth]{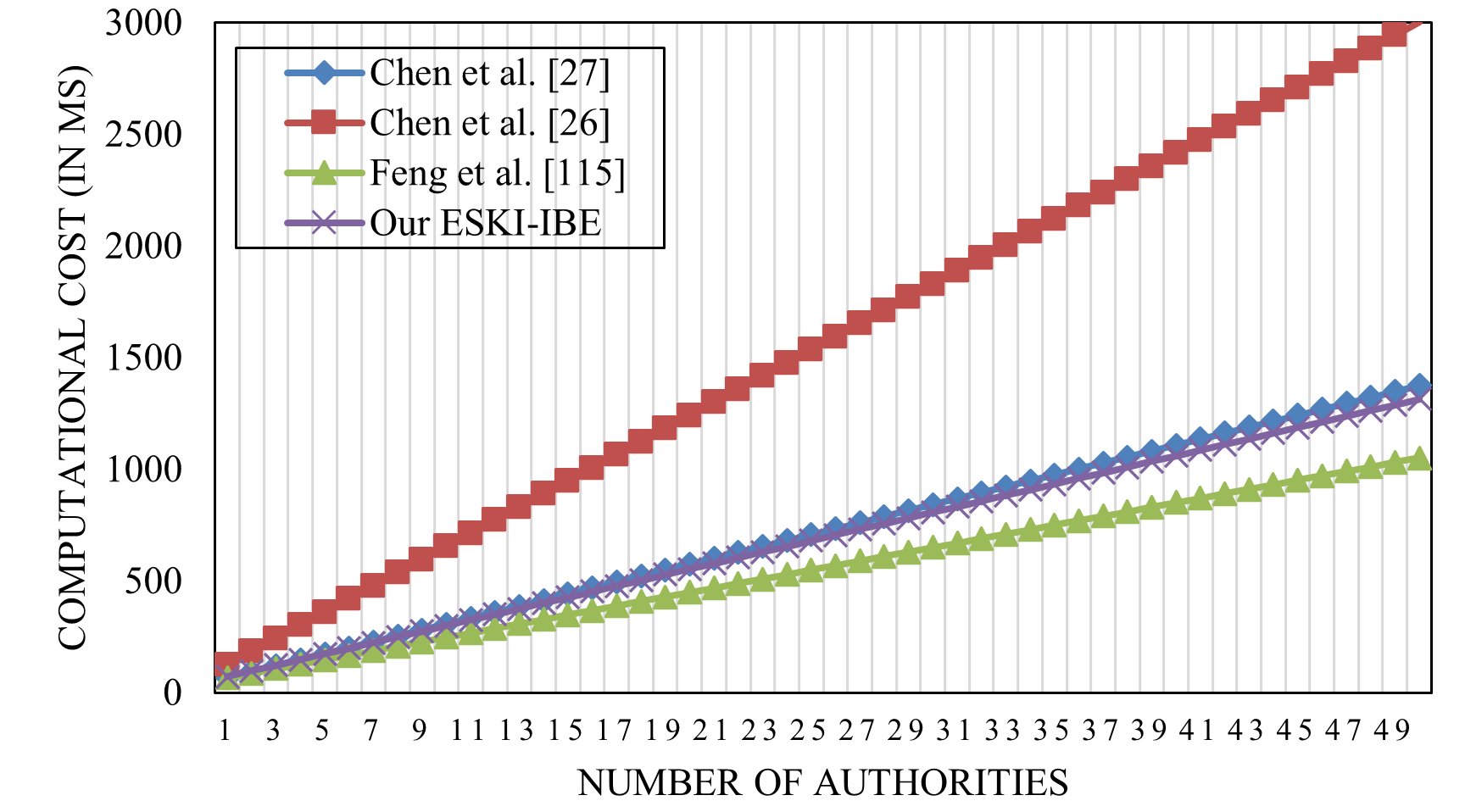}
  \caption{Setup and Key generation comparison of proposed ESKI-IBE scheme with others in terms of computation Cost (With outsource Computing)}
\label{fig3.3}
\end{figure}

\begin{figure}
  \centering
  \includegraphics[width=0.8\linewidth]{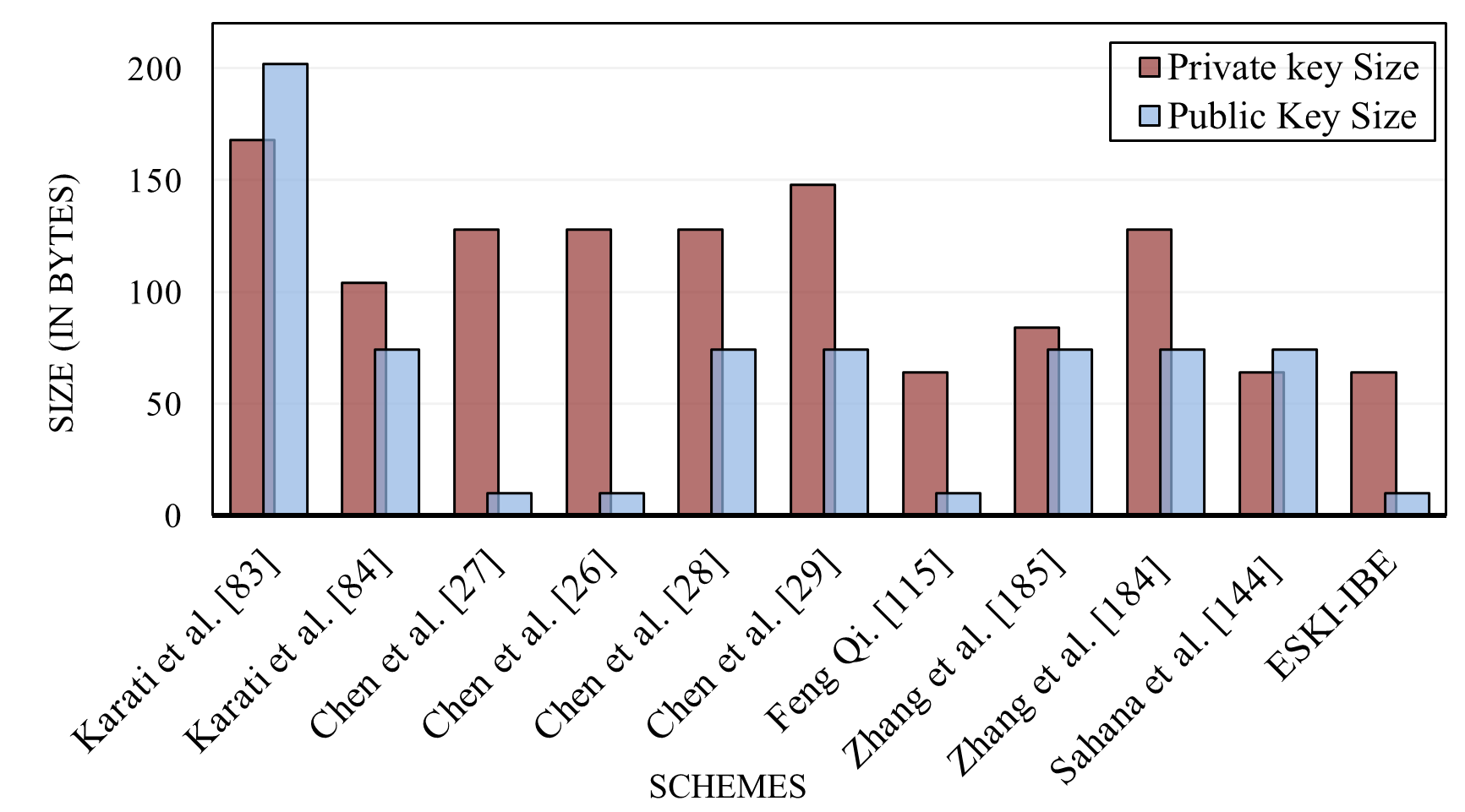}
  \caption{Communication Cost (in Bytes) of proposed ESKI-IBE proposed scheme}
\label{fig3.4}
\end{figure}

\textit{Communication cost}. In order to compare the communication cost of our proposed scheme with other schemes, we consider the size of $|\mathbb{G}_1|= |\mathbb{G}_3 |$ = 512-bit (64 Bytes), $|\mathbb{Z}_q |$ = 160-bit (20 Bytes),  and $|ID|$ = 80-bit (10 Bytes), where $\mathbb{G}_3$ is a group used in Chen \textit{et al.} \cite{chen2015removing} and Chen \textit{et al.} \cite{chen2015t}. Table \ref{tbl3.4} summarizes the comparison of the communication cost of our proposed scheme with other Escrow-free Schemes in the literature [\cite{karati2018pairing,karati2018provably,chen2015removing,chen2015t,chen2015escrowcomp,chen2016escrow,li2017efficient,zhang2012efficient,zhang2013id,sahana2019provable}. It can be observed from Figure \ref{fig3.4} that the proposed ESKI model have the private key and public key size of 64 Bytes and 10 Bytes respectively. 

\begin{table}
        \centering
        \caption{Bandwidth and security comparison of our proposed scheme with related schemes}
        \label{tbl3.4}
        \begin{tabular}{|c|c|c|c|c|}
            \hline
             Schemes &	Public key size &	Private key size &	Crypt. Tool &	Assumption \\
            \hline
            \hline
Karati \textit{et al.} \cite{karati2018pairing} &  168 Bytes & 202 Bytes &	ECC &	ECDLP\\
Karati \textit{et al.} \cite{karati2018provably} & 104 Bytes &	 74 Bytes &	Bilinear Pairing & 	EBSDH$^@$, BSDH$^\#$\\
Chen \textit{et al.} \cite{chen2015removing} & 128 Bytes & 10 Bytes &	COBG &	DSE$^\&$ \\
Chen \textit{et al.} \cite{chen2015t} & 128 Bytes & 10 Bytes &	COBG$^\&\&$ &	DSE\\
Chen \textit{et al.} \cite{chen2015escrowcomp} & 128 Bytes & 74 Bytes &	COBG &	DSE\\
Chen \textit{et al.} \cite{chen2016escrow} &148 Bytes	& 74 Bytes &	Bilinear Pairing & 	CDH$^{\#\#}$\\
Feng Qi. \cite{li2017efficient} &  64 Bytes &	 10 Bytes &	Bilinear pairing & 	BDH$^{@@}$\\
Zhang \textit{et al.} \cite{zhang2012efficient} & 84 Bytes &	74 Bytes &	Bilinear pairing &	CDH\\
Zhang \textit{et al.} \cite{zhang2013id} & 128 Bytes &	 74 Bytes &	Bilinear pairing &	BDH \\
Sahana \textit{et al.} \cite{sahana2019provable} & 64 Bytes &	 74 Bytes &	Bilinear pairing &	CDH\\
Our scheme	& 64 Bytes & 10 Bytes &	Bilinear pairing & 	BDH\\
            \hline
        \end{tabular}
        {\\$^@$Extended Bilinear Strong Diffie-Hellman, $^\#$Bilinear Strong Diffie-Hellman, $^{\#\#}$  computational Diffie-Hellman problem, $^{@@}$ Bilinear Diffie-Hellman problem,  $^\&$Dual system encryption, $^\&\&$Composite Order Bilinear Groups}
    \end{table}

\begin{table}
        \centering
        \caption{General comparison of our proposed scheme with related schemes.}
        \label{tbl3.5}
        \begin{tabular}{|c|c|c|c|c|}
            \hline
             Scheme	& RKEP $^@$ &	RSKIP $^\#$	& RUSP $^*$	& FID $^{@@}$ \\
            \hline
            \hline
            Karati \textit{et al.} \cite{karati2018pairing} & $\checkmark$&	$\checkmark$& 	$\times$& 	$\times$ \\
            Karati et  al. \cite{karati2018provably} & $\checkmark$&	$\checkmark$&	$\times$&	$\times$\\
            Chen \textit{et al.} \cite{chen2015removing} & $\checkmark$&	$\times$&	$\times$&	$\checkmark$\\
            Chen \textit{et al.} \cite{chen2015t} &$\checkmark$&	$\checkmark$	&$\times$	&$\checkmark$ \\
            Chen \textit{et al.} \cite{chen2015escrowcomp} &$\checkmark$	&$\times$	&$\checkmark$	&$\times$ \\
            Chen \textit{et al.} \cite{chen2016escrow} & $\checkmark$ &	$\times$&	$\checkmark$&	$\times$\\
            Feng Qi. \cite{li2017efficient} &$\checkmark$&	$\times$&	$\times$&	$\checkmark$\\
            Zhang \textit{et al.} \cite{zhang2012efficient} &$\checkmark$	&$\checkmark$&	$\times$&	$\times$\\
            Zhang \textit{et al.} \cite{zhang2013id} &$\checkmark$&	$\times$&	$\times$&	$\times$\\
            Sahana \textit{et al.} \cite{sahana2019provable} & $\checkmark$&	$\times$&	$\times$&	$\times$\\
            Our scheme	&$\checkmark$	&$\checkmark$	&$\checkmark$	&$\checkmark$ \\
            \hline
        \end{tabular}
        \footnotesize{\\$^@$ Resilient to Key escrow problem, $^\#$  Resilient to secure key issuing problem, $^*$ Resilient to user slandering problem, $^{@@}$ Achieve full identity based advantage}\\
\end{table}

\textit{General comparison}. Table \ref{tbl3.5} summarizes the general comparison of our scheme with related schemes in which we observed that our proposed ESKI scheme is resilient to key escrow problem (RKEP), resilient to secure key issuing problem (RSKI), resilient to user slandering problem (RUSP) and achieve full identity-based feature (FID).

\subsection{Performance Analysis of EF-IBS Scheme }
This section estimates the performance of our proposed EF-IBS scheme, in terms of computation cost and communication cost. 

\begin{table}
        \centering
        \caption{Computation cost (in ms) comparison of proposed IBS with related schemes.}
        \label{tbl3.6}
        \begin{tabular}{|c|c|c|c|}
            \hline
             Schemes &Signing & 	Verification & 	Total\\
            \hline
            \hline
            \cite{karati2018provably} & $2T_{SM}+1T_A+1T_E$ & 	$1T_p+3T_E$	& $1T_p+2T_{SM}+1T_A+4T_E$ \\
            \cite{chen2015escrowcomp} & $5T_{SM}+2T_A$ & $5T_p+2T_{SM}+1T_A$ &	$5T_p+7T_{SM}+3T_A$ \\
            \cite{chen2015escrowcloud} & $4T_{SM}+1T_A $ &	$4T_p+2T_{SM}$ &	$4T_p+6T_{SM}+1T_A$ \\
            \cite{sahana2019provable} & $2T_{SM}+1T_A$ &	$2T_p+1T_{SM}+1T_A$ &	$2T_p+3T_{SM}+2T_A$ \\
            \cite{shim2013eibas} & $4T_{SM}+2T_A+1T_E$ & $4T_p+4T_{SM}+1T_E$ &	$4T_p+8T_SM+2T_A+2T_E$ \\
            \cite{tseng2019top} & $2T_SM+7T_E$ &	$7T_p+4T_{SM}+1T_E$ &	$7T_p+6T_{SM}+8T_E$ \\
            \cite{yang2019top} & $3T_{SM}+3T_E$ &	$3T_p+2T_{SM}$ & $3T_p+7T_{SM}+3T_E$ \\
            Our &	$3T_{SM}+1T_A$ & $2T_p+1T_E$ & $2T_p+3T_{SM}+1T_E+1T_A$ \\
            \hline
        \end{tabular}
    \end{table}

\begin{table}
        \centering
        \caption{Communication cost and security comparison of proposed IBS with related schemes}
        \label{tbl3.7}
        \begin{tabular}{|c|c|c|c|c|c|c|}
            \hline
            Schemes & Private  &Public  & 	Signature  &	Cryptographic  &	Crypt.  &	Security \\
             & key size &  key size & 	size &	 Primitive &	 Tool & Assump.\\
            \hline
            \hline
            \cite{karati2018provably} & $1|\mathbb{G}_1 |+2|\mathbb{Z}_q |$ &	$1|\mathbb{G}_1 |+1|ID|$ &	$3|\mathbb{G}_1 |$	& CLS &	BP $^{\#\#}$ 	& EBSDH$^a$,  \\
            & & & & & & BSDH$^b$ \\
            \cite{chen2015escrowcomp} & $3|\mathbb{G}_1 |$ &	$1|\mathbb{G}_1 |+1|ID|$ &	$4|\mathbb{G}_1 |$	& HIBS &	COBG$^{@@}$ &	DSE $^{**}$\\
            \cite{chen2015escrowcloud} & $2|\mathbb{G}_1 |+1|\mathbb{Z}_q |$ &	$1|\mathbb{G}_1 |+1|ID|$ &	$3|\mathbb{G}_1 |$ &	HIBS &	BP &	CDH $^@$ \\
            \cite{sahana2019provable} & $2|\mathbb{G}_1 |$	& $1|\mathbb{G}_1 |+1|ID|$ &	$2|\mathbb{G}_1 |$ &	Partial IBS &	BP &	CDH\\
            \cite{shim2013eibas} & $2(|\mathbb{G}_1 |+|\mathbb{Z}_q |)$	& $2|\mathbb{G}_1 |+1|ID|$ & 	$3|\mathbb{G}_1 |$ &	CLS &	BP &	CDH\\
            \cite{tseng2019top} & $2|\mathbb{G}_1 |$ &	$2|\mathbb{G}_1 |+1|ID|$ &	$5|\mathbb{G}_1 |$	& CLS &	BP &	GCDH $^\#$ \\
            \cite{yang2019top} & $2|\mathbb{G}_1 |$ &	$1|\mathbb{G}_1 |+1|ID|$ &	$3|\mathbb{G}_1 |$ &	CLS &	BP &	CDH \\
            Our & 	$1|\mathbb{G}_1 |$ &	$1|ID|$	& $2|\mathbb{G}_1 |$ &	IBS & BP & 	BDH $^*$ \\
            \hline
        \end{tabular}
        \footnotesize{$^a$ Extended Bilinear Strong Diffie-Hellman, $^b$ Bilinear Strong Diffie-Hellman, $^@$ computational Diffie-Hellman problem, $^\#$ generalized computational Diffie-Hellman problem, $^*$ Bilinear Diffie-Hellman problem, $^{@@}$ Composite order bilinear group, $^{\#\#}$ Bilinear Pairing and $^{**}$ dual system encryption}\\
       
    \end{table}

\begin{figure}
  \centering
  \includegraphics[width=0.8\linewidth]{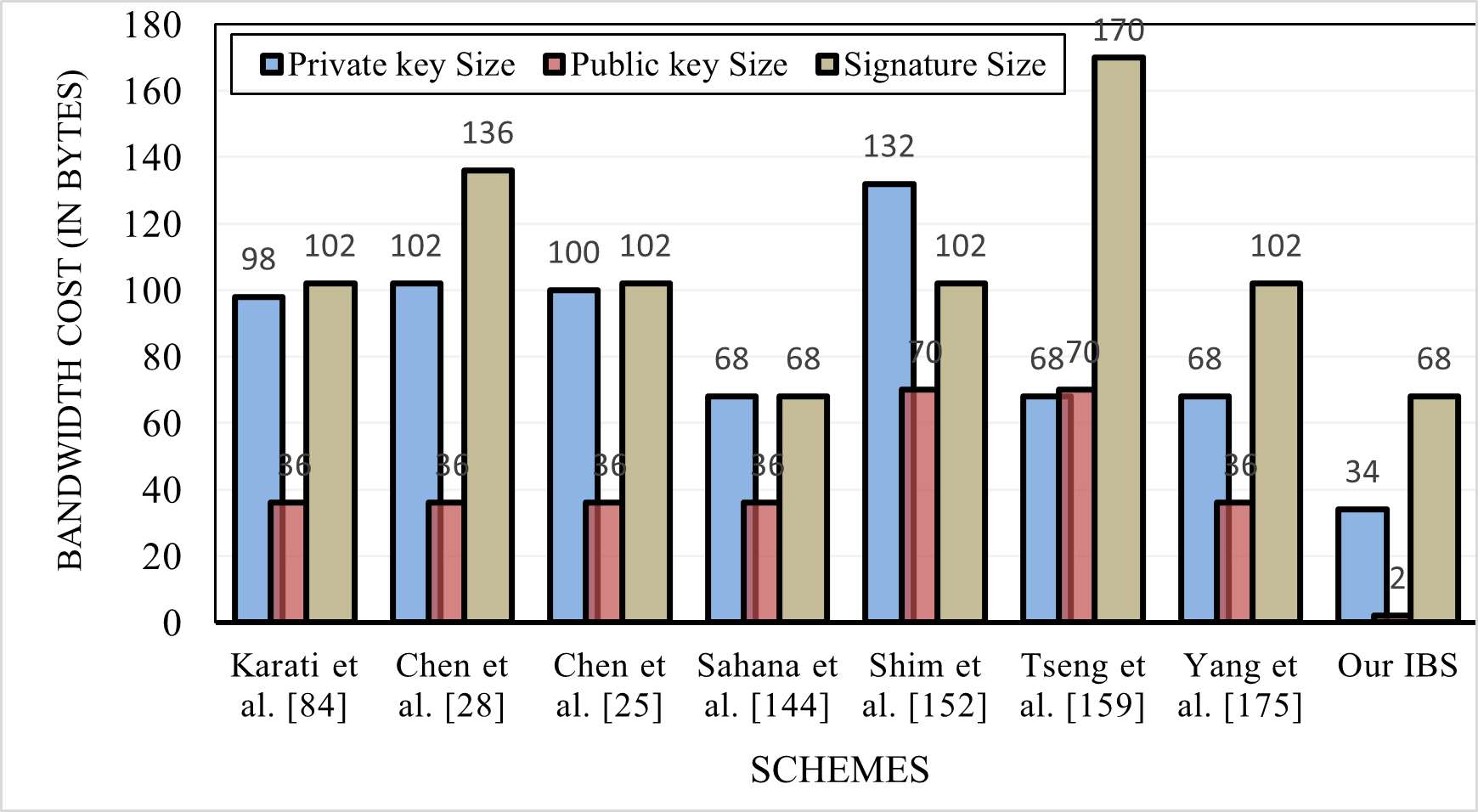}
  \caption{Signing, verification and total computational cost (in ms) comparison of proposed EF-IBS scheme with other schemes}
\label{fig3.5}
\end{figure}

\begin{figure}
  \centering
  \includegraphics[width=0.8\linewidth]{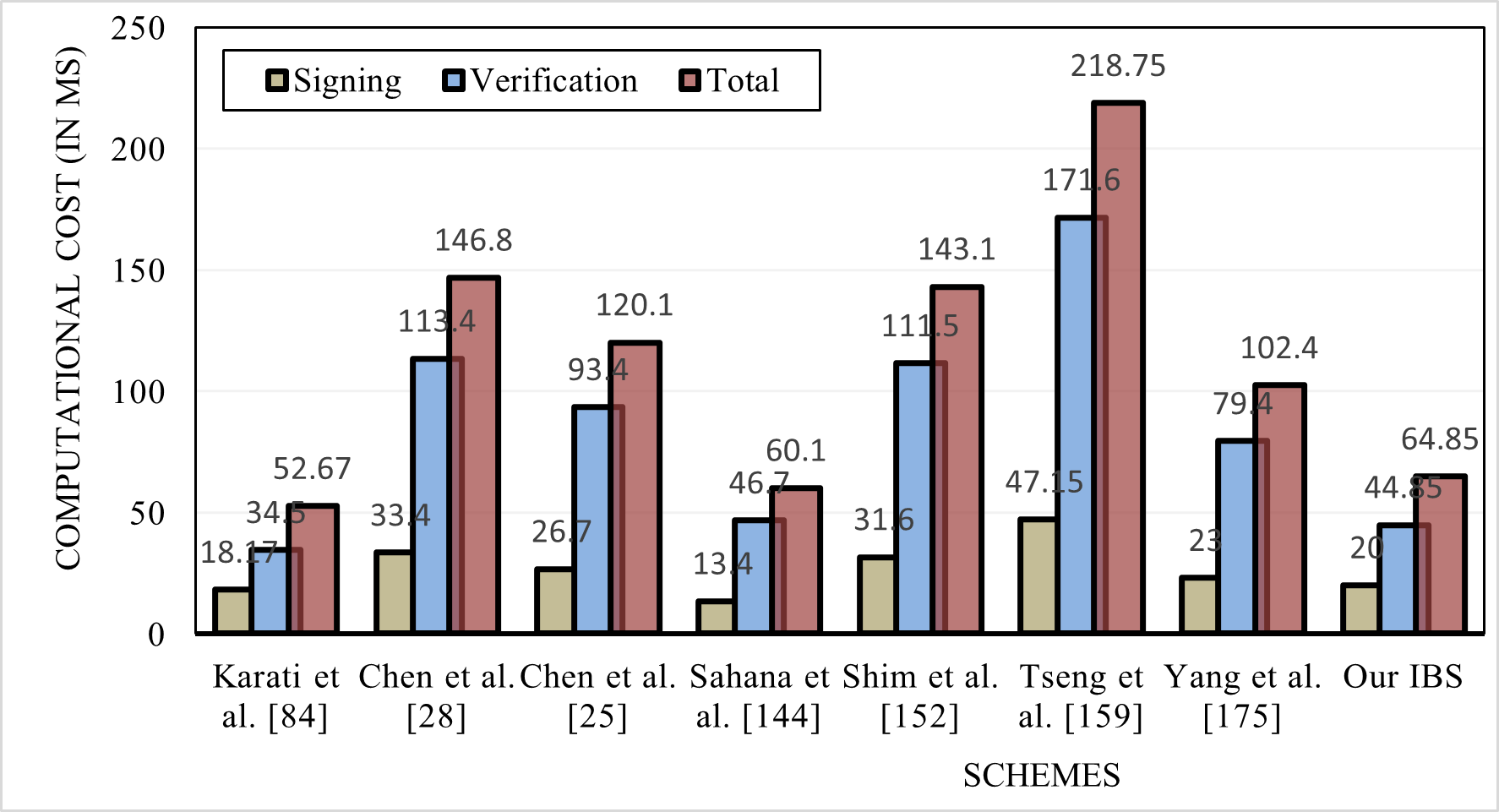}
  \caption{Private key, public key and signature size comparison of proposed EF-IBS scheme with related schemes}
\label{fig3.6}
\end{figure}

\textit{Computation cost}. In Table \ref{tbl3.6}, we compare the signing and verifying cost of our proposed EF-IBS scheme with related schemes such as Karati \textit{et al.} \cite{karati2018provably}, Chen \textit{et al.} \cite{chen2015escrowcomp}, Chen \textit{et al.} \cite{chen2015escrowcloud}, Sahana \textit{et al.} \cite{sahana2019provable}, Shim \textit{et al.} \cite{shim2013eibas}, Tseng \textit{et al.} \cite{tseng2019top} and Yang \textit{et al.} \cite{yang2019top} IBS schemes. For signing a message, the proposed scheme takes $3T_{SM}+1T_A=20.04 ms$, while Karati \textit{et al.} \cite{karati2018provably}, Chen \textit{et al.} \cite{chen2015escrowcomp}, Chen \textit{et al.} \cite{chen2015escrowcloud}, Sahana \textit{et al.} \cite{sahana2019provable}, Shim \textit{et al.} \cite{shim2013eibas}, Tseng \textit{et al.} \cite{tseng2019top} and Yang \textit{et al.} \cite{yang2019top} IBS schemes take $18.17 ms, 33.4 ms, 26.7 ms, 13.4 ms, 31.6 ms, 47.15 ms$ and $23 ms$ respectively. For verifying a message-signature pair, the proposed scheme takes $2T_{BP}+1T_E=44.85 ms$, while Karati \textit{et al.} \cite{karati2018provably}, Chen \textit{et al.} \cite{chen2015escrowcomp}, Chen \textit{et al.} \cite{chen2015escrowcloud}, Sahana \textit{et al.} \cite{sahana2019provable}, Shim \textit{et al.} \cite{shim2013eibas}, Tseng \textit{et al.} \cite{tseng2019top} and Yang \textit{et al.} \cite{yang2019top} IBS schemes take $34.5 ms$, $113.4 ms$, $93.4 ms$, $46.7 ms$, $111.5 ms$, $171.6 ms$ and $79.4 ms$ respectively. Thus, the proposed system takes 64.85 ms for total computational cost, while Karati \textit{et al.} \cite{karati2018provably}, Chen \textit{et al.} \cite{chen2015escrowcomp}, Chen \textit{et al.} \cite{chen2015escrowcloud}, Sahana \textit{et al.} \cite{sahana2019provable}, Shim \textit{et al.} \cite{shim2013eibas}, Tseng \textit{et al.} \cite{tseng2019top} and Yang \textit{et al.} \cite{yang2019top} IBS schemes take $54.67 ms$, $146.8 ms$, $120.1 ms$, $60.1 ms$, $143.1 ms$, $218.75 ms$, and $102.4 ms$ respectively. Figure \ref{fig3.5} compares our proposed scheme with other related schemes.

\textit{Storage cost}. Table \ref{tbl3.7} compares the communication cost and security assumption of our scheme with Karati \textit{et al.} \cite{karati2018provably}, Chen \textit{et al.} \cite{chen2015escrowcomp}, Chen \textit{et al.} \cite{chen2015escrowcloud}, Sahana \textit{et al.} \cite{sahana2019provable}, Shim \textit{et al.} \cite{shim2013eibas}, Tseng \textit{et al.} \cite{tseng2019top} and Yang \textit{et al.} \cite{yang2019top} IBS schemes. For communication cost evaluation, we assume $|ID| = 2 bytes$. For the super-singular curve over the binary field  $\mathbb{F}_{2^{271}}$ with the order of $\mathbb{G}_1$ is 252-bit prime and using compression technique [27], we consider $|\mathbb{G}_1 |=34 bytes$ and $|\mathbb{Z}_q | = 32 bytes$. We assume that the HIBE/HIBS scheme in \cite{karati2018provably}, \cite{chen2015escrowcomp}, \cite{chen2015escrowcloud} [7], [10], [11] are $1$-level HIBE/HIBS scheme, which is equivalent to standard IBE/IBS scheme. The proposed scheme requires to transmit $2|\mathbb{G}_1 |$= 68 Bytes while Karati \textit{et al.} \cite{karati2018provably}, Chen \textit{et al.} \cite{chen2015escrowcomp}, Chen \textit{et al.} \cite{chen2015escrowcloud}, Sahana \textit{et al.} \cite{sahana2019provable}, Shim \textit{et al.} \cite{shim2013eibas}, Tseng \textit{et al.} \cite{tseng2019top} and Yang \textit{et al.} \cite{yang2019top} IBS schemes take $102 Bytes$, $136 Bytes$, $102 Bytes$, $68 Bytes$, $102 Bytes$, $170 Bytes$ and $102 Bytes$ respectively. Also, the proposed scheme has the least public key ($2 Bytes$) and private key ($34 Bytes$) size in comparison with the related schemes, shown in Figure (\ref{fig3.6}).

\textit{Batch Verification}. Here, we discuss that the proposed EF-IBS scheme verifies multiple message-signature pairs simultaneously in the least communication and computation cost compared to the situation where multiple message-signature pairs are verified one by one. On given message-signature pairs $<m_k,S_k,R_k>$, sender’s $ID$, and $pp$, the recipient verifies the signature-message pairs using  (\ref{eq3.10}).

\begin{equation} \label{eq3.10}
    \prod_{k=0}^ne(S_k,R_k ) \overset{?}{=}g^{\sum _(k=0)^nH_3(m_k)}e(Q_ID,Y)^{
    sum_(k=0)^nH_3 (R_k)}
\end{equation}

 The Equation (\ref{eq3.10}) ensures the validity of multiple signature-message pairs. The consistency of the above Equation (3.7) is verified as follows. From LHS,
\vspace{-10mm} 
 
\begin{align*}
    \prod_{k=0}^ne(S_k,R_k) &=\prod_{k=0}^ne(r_k^{-1}(H_2(m_k)P+H_3(R_k)d_ID),r_kP)\\ 
&=\prod_{k=0}^ne((H_2(m_k)P+H_3(R_k)d_{ID}),P) \\
&=\prod_{k=0}^ne(H_2(m_k)P,P)  \prod_{k=0}^ne(H_3 (R_k)Q_{ID},Y) \\
&=\prod_{k=0}^ne(P,P)^{H_2(m_k)}  \prod_{k=0}^ne(Q_{ID},Y)^(H_2(R_k))\\ 
&=\prod_{k=0}^ng^{H_2(m_k)}  \prod_{k=0}^n(Q_{ID},Y)^{H_2(R_k)}\\
&=g^{\sum_{k=0}^nH_3(m_k)}e(Q_{ID},Y)^{\sum_{k=0}^nH_3(R_k)}
\end{align*}

This completes the correctness of (\ref{eq3.10}). For verifying k message-signature pairs simultaneously, the batch verification needs $(k+1) T_BP+kT_E$ while individual verification needs $2kT_BP+2kT_E$ of computation cost.

\textit{Comparison with secret-sharing based solution}. In the secret-sharing solution \cite{boneh2001identity}, the master key is distributed among multiple PKGs, and the user requests each PKG for its private key share using the secret-sharing scheme. Each PKG authenticates the user and computes the private key shares using their master key shares and issues them to the user. The user then combines the shares and extracts their private key. While in our scheme, the user’s private key is derived from the KGC, further protected by multiple authorities under their secret keys. In the proposed IBS, the KGC authenticates the user and issues a partial private key to him, and multiple authorities protect the private key. Furthermore, private key distribution among the entities is secured under the user’s secret key. Therefore, our proposed IBS scheme is secured against the malicious-but-passive PKG that overcomes the inefficiency due to multiple-user authentication.

\textit{Comparison with CL-PKC and CB-PKC}. In CL-PLC \cite{al2003certificateless}, the user derived his private key, which consists of a partial private key derived from the trusted certificate authority and the user’s secret key to sign a message. For verifying the message-signature pair, the recipient needs the signer identity ID and the public key corresponding to the signer’s secret key. Similarly, in CB-PKC \cite{gentry2003certificate}, the user’s private key involves the secret key of the certificate authority and his secret key. Both these two cryptosystems were partially built over the framework of IBC directly without introducing more infrastructures. However, due to the introduction of the user’s secret key, they do not adequately preserve the identity-based property. On comparing CL-PKC \cite{al2003certificateless} and CB-PKC \cite{gentry2003certificate}, the user’s private key in our proposed approach is derived from the KGC and simultaneously protected by the multiple CPCs. Assuming at least one CPC is honest, our approach addresses the key escrow, key abusing, and user slandering problem.   
	
\section{Summary}
This section introduces a solution that can enhance trust in Private Key Generators (PKG) within Identity-Based Cryptography (IBC). It presents a secure and efficient key issuing scheme that addresses key escrow, secure key issuing, and user slandering issues. The proposed scheme is then utilized to implement an escrow-free Identity-Based Encryption (IBE) scheme that ensures confidentiality, and an escrow-free Identity-Based Signature (IBS) scheme that is secure against forgery. The proposed schemes are benchmarked against existing solutions in computation and communication costs, and the results demonstrate that the proposed schemes require the least computation and communication cost, making them ideal for practical applications. This approach provides an alternative to traditional IBC, which relies on a trusted third party, the PKG. The use of an escrow-free key issuing scheme provides a secure and efficient way to generate keys without relying on a trusted PKG.

\end{doublespace} \label{chapter3}
\begin{savequote}[75mm] 
Difficulties in your life do not come to destroy you but to help you realise your hidden potential and power. Let difficulties know that you too are difficult.
\qauthor{Avul Pacir Zainulabidin Abdul Kalam-Atomic Scientist and President of India} 
\end{savequote}

\chapter{Identity-Based Blind Signature Scheme for End-to-End Verifiable Internet-Voting System}
\justify
\begin{doublespace}

The Electronic Voting System (EVS) \cite{vassil2016diffusion} is undoubtedly more efficient, participatory, and fair than the traditional postal voting system. However, inconsistencies have been observed in the EVS system, including ballot privacy, voter anonymity, transparency, and auditing. To address these shortcomings, experts have utilized several cryptographic tools, such as homomorphic cryptosystems \cite{cervero2014efficient, chillotti2016homomorphic, yang2018secure}, public key encryption \cite{kusters2012clash}, and zero-knowledge proof \cite{kusters2012clash}. An early secure EVS system using a mix-net approach was developed by David Chaum \cite{chaum1981untraceable}. Blind signature schemes \cite{fujioka1992practical, mu1998anonymous, lopez2014pairing}, proxy servers \cite{chen2014secure}, and secret sharing mechanisms \cite{porkodi2011multi} have also been used to implement the EVS system. In this chapter, we introduce two secure \textit{Identity-Based Blind Signature} (IDBS) schemes: \textit{IDBS-I} and \textit{IDBS-II}. These schemes have been designed to prevent forgeable attacks and ensure blindness. We have implemented an \textit{End-to-End verifiable internet voting (E2E-VIV)} system with batch verification based on the proposed IDBS-II scheme. Additionally, we have demonstrated that the \textit{E2E-VIV} system meets the required security goals.

There have been discussed several practical E2E-VIV systems. The first system based on the principle of E2E-VIV is the Rijnland Internet-based election system (RIES) \cite{hubbers2005ries} that enables a voter to access the public algorithm and parameters to confirm his participation in the election. This verification procedure is supposed to be secure as compared to the traditional postal system, but it is far weaker than desired individual verification. Peter Ryan invented \textit{Pret a Voter} system \cite{ryan2009pret} that uses a list of shuffled ordering candidates and a space of vote marking and a traditional cryptographic method. It enables a voter to verify his vote based on the shuffled ordering candidate. Identical to \textit{Pret a voter}’s election experience, \textit{Punchscan} system discussed by Popoveniuc \textit{et al.} \cite{popoveniuc2006introduction} allows voting on the paper ballots and the security is achieved by the use of an optical scanner that reads the privacy-preserving area. The voter is permitted to verify his receipt with the published record.

The \textit{Scantegrity} system \cite{chaum2008scantegrity, chaum2009scantegrity} generates a random number and distributes it to the election officials using a secret sharing scheme. It computes a three-letter code using the random number for each printed ballot and generates a table for post-election verification. During the election, a voter marks a choice with invisible ink on the paper to obtain the three-letter code. He can record the code and ballot ID to check if his vote was tabulated. In \textit{Remotegrity} system \cite{zagorski2013remotegrity}, the voter obtains a lottery-style code and authentication card with serial numbers. It uses both the serial number and code against the candidate’s selection and authentication code to cast a vote. \textit{Adida} presents a web-based voting system, known as Helios, that can achieve E2E verifiability and privacy \cite{adida2008helios}. This method has a unique design mechanism in the voting device to help the voter in verification during vote casting. A voter would cast the encoded vote only if he is convinced that the voting device is not cheating him. Corteir \textit{et al.} \cite{cortier2013attacking} point out that the repetition of the voter’s choice in the Helios will leak any information about his vote. In \cite{chondros2016d, zacharias2016demos}, the E-voting system achieves E2E verifiability in the standard model and is secure under the well-known decisional Diffie-Hellman problem. Joaquim \textit{et al.} \cite{joaquim2013eviv} present an E2E-VIV system that provides the advantage of mobility and vote privacy to the voters. In [22], each voter has a voter security token that encrypts the vote to transmit his encoded candidate choice to the officials.

Most of the above-discussed E2E-VIV systems are paper ballot based. Recently, a team of experts has suggested moving the election process online as a voter is particularly interested in the online system \cite{joaquim2013eviv}. In 1992, Fujiokta \textit{et al.} \cite{fujioka1992practical} gave the first secret E-voting mechanism based on the blind signature scheme, which ensures privacy and fairness. Since then, several blind signatures-based E-voting systems have been discussed \cite{lopez2014pairing, yang2018secure, chen2014secure, porkodi2011multi, kusters2012clash}. However, they have lacks security and transparency. The E2E verifiability is one of the tools to promise the integrity and transparency of the system [23]. Such internet voting systems are based on traditional public key cryptosystem which requires a significant computational cost for certificate management and public keys. In this chapter, we proposed two identity-based blind signature schemes and implement an End-to-End Verifiable Internet Voting system. We consider the smartphone as a voting device to collect the voter’s biometric features and authenticate him against his biometric features and identity, for example, national identification number, employee ID, and social security number. The proposed E2E-IVS system provides mobility to a voter and enables him to cast his vote secretly on a public computer/device without revealing any information about the vote. Now, we discuss the proposed E2E-IVS system.

\section{System Model}

In this section, we give the brief architecture of the E2E-IVS system and its security requirements. We also define the identity-based blind signature scheme and its security model.

\subsection{Architecture Design }

Here, we discuss the architecture of the proposed E2E-VIV system, which involves five entities: key generation centre (KGC), voters, candidates, voting device, and election commission authority (ECA). 
\begin{itemize}
    \item \textit{Key Generation Center}. The KGC initializes the system, authenticates each entity (voters, candidates, and ECA) and issues the private keys to them. Additionally, the KGC maintains a record of registered voters and registered candidates. 
    \item \textit{Voter}.  A voter is a person in any organization or a citizen of the country having a unique ID and registered with KGC to obtain his private key. 
    \item \textit{Election Commission Authority}. The ECA, in the proposed system, is an election-conducting server that performs the following three processes. (1) It authenticates the voter and distributes a secret digital blank ballot to him without knowing his ID. It maintains the record of the valid voters for whom blank ballots are already issued. (2) It receives and verifies the vote during the vote-casting phase and issues a witness to him. (3) It filters the duplicate ballots, counts them, and announces the winner.
    \item \textit{Voting Device}. A voting device could be a smartphone or computer that contains a fingerprint and iris sensor, which is either implanted or attached to the voting device.
    \item \textit{Candidate}. A person nominated by any group or political party who can participate in the election process. He must be first registered with KGC and obtain a private key. 
\end{itemize}
	
The proposed E2E-IVS system consists of the following six algorithms.
\begin{itemize}
    \item Initialization. The KGC initializes the system and computes its master key $s_0$ and public parameter $pp$. The KGC keeps $s_0$ secret and publishes $pp$. 
	\item \textit{Registration}. Before the election process, each participating entity is required to register with KGC. The KGC authenticates the entities against their IDs and biometric information and issues the private keys to them over an insecure channel. Further, KGC prepares two lists: $List^{RV}$ and $List^{RC}$ which contain the biometric details of all the registered voters and registered candidates, respectively, and passes them to the ECA after registration. 
	\item \textit{Authentication and Ballot distribution}. In PKI-based cryptography, the user’s private key is associated with his ID. Thus, we cannot use the voter’s private key for signature purposes as it may lose the voter’s anonymity. The pseudonym public and private keys preserve voter anonymity. In order to preserve anonymity, the system enables the voter to generate the pseudonym public and pseudonym private keys. During the blank distribution phase, a voter blinds and signs the pseudonym public key using his pseudonym private key and asks the ECA for a blind blank ballot against his pseudonym public key. To authenticate a voter, the ECA maintains two lists ($List^{RV}$  and $List^{VV}$) for the registered voters and voters for whom the ballots have previously been distributed. The ECA first checks if the voter is registered (i.e., found in list $List^{RV}$), and the blank ballot is not issued to him previously (i.e., not found in list $List^{VV}$), only then ECA issues a fresh pseudonym certificate, known as a blank ballot, against the pseudonym public key to the voter and adds an entry in the list $List^{VV}$; otherwise, it raises an error. Then, the voter extracts the blank digital ballot using his secret values.
	\item \textit{Vote casting}. An IVS system has many inconsistencies, e.g., network congestion due to voter traffic during the election period, election server failure, and people with no internet facility during the election period. The proposed system allows a voter for early voting to overcome these difficulties. A voter may cast a vote before the election date using a timestamp T that ensures the freshness of the vote. During the election and/or pre-election, a voter chooses a candidate from the list $List^{RC}$, computes the electronic ballot $Bt$ using BLS short signature scheme and passes it to ECA. The ECA authenticates the voter and validates the ballot B, and issues an electronic receipt if both conditions are satisfied; otherwise, it rejects the vote and adds $Bt$ in $List^B$ (initially empty). The ECA includes a unique random value in each receipt that makes it fresh and unique. 
	\item \textit{Vote counting}. When the voting day is over, the ECA stops to receive additional ballots. It makes sure that there are no invalid or duplicate electronic ballots. To identify the duplicate votes in the list $List^B$, ECA manages two lists: $List^V$ to store the details of valid voters, and $List^{IV}$ to store the details of invalid voters. The ECA separates the invalid votes from the list $List^B$ by identifying two ballots having the same signature. Finally, the ECA publishes both lists. 
    \item \textit{Auditing}. After the election process is over, the voter verifies if his vote corresponding to his receipt is in the ballot list $List^V$ or $List^{IV}$. Also, anyone can verify if all ballots are counted correctly. 
\end{itemize}

\subsection{Security Requirements for E2E-IVS System}
This section discusses the security requirements for the internet-based voting system.
\begin{itemize}
    \item 	\textit{Voter’s anonymity}. It must ensure the vote anonymous against the voter’s ID who is able to vote. 
    \item \textit{Vote integrity}. The system must be secure against an adversarial attack on an individual voter’s devices and on its architecture. It must also ensure the vote's privacy. 
    \item \textit{Voter’s eligibility}. The system must permit only eligible/authentic voters to vote.
    \item \textit{Uniqueness}. It must ensure that a voter gets only one ballot, and can cast only one vote.
    \item \textit{Resistant to bribery and coercion}. In any election process, the collusion between a voter and a candidate may give two main inconsistencies: i) the candidate may give some bribe to the voter (bribery), or ii) the candidate may threaten or force a voter (coercion). During as well as after the vote counting process, a voter can freely cast his vote under coercion and bribery situations and cannot prove whom he supported.
    \item \textit{Ballot stuffing}. An illegal practice of voting, i.e., casting multiple ballots during a vote should not be possible, whereby one ballot is allowed.
    \item \textit{Individual verifiability}. Any voter can verify that his ballot is correctly involved in the ballot list. 
    \item \textit{Universal verifiability}. Anyone can verify that all ballots in the list have been correctly tallied.
    \item \textit{Batch verifiability}. In order to optimize the cost of verification, the system must verify the massive ballots simultaneously. 
\end{itemize}

\subsection{Identity-based Blind Signature Scheme}
It involves three entities: Signer, User and Private Key Generator (PKG) that perform four randomized probabilistic polynomial-time (PPT) algorithms: Setup, Key Extract, Blind Signature, and Verify. 
\begin{itemize}
    \item \textit{Setup}: The PKG computes the master key s and a public parameter $pp$. It keeps s secret and publishes$pp$.
    \item \textit{Key Extract}: The PKG computes the private key $d_{IDS}$ against the signer’s $ID_S$ using and param and sends $d_{IDS}$ to the signer. 
    \item \textit{Blind Signature}: Without knowing the content of message $M$, the signer signs on it as follows. 
    \begin{itemize}
        \item \textit{Blinding}: The user blinds m using its secret key and requests to the signer for signature on the blinded $M$.
        \item \textit{Signing}: Given blinded $M$, the signer computes a blind signature using its private key $d_S$ and sends it to user.
        \item \textit{Unblinding}: The user retrieves the original signature $\sigma$. 
    \end{itemize}

	Verifying: On given tuple $<M,\sigma>$ and signer’s identity $ID_S$, the user verifies the signature.
\end{itemize}
	 
\subsection{Security Threat}
The proposed IDBS scheme is considered to be secure if it achieves the blindness property and under the chosen message and ID attack, it is existentially unforgeable (EUF-ID-CMA). 

\textbf{\textit{EUF-ID-CMA}}: It can be defined by the following game playing between the Forger $F$ that acts as a malicious user and challenger $Ch$ that acts as the signer.

The IDBS scheme is EUF-ID-CMA secured if forger $F$ has a negligible advantage in the following game. 

\textbf{Setup}: The $Ch$ computes its master key and public parameters, where it keeps the master key and responds public parameter to $F$. 

\textbf{Oracle}: $F$ performs the following oracles. 
\begin{itemize}
    \item \textit{Extract oracle}: On given $ID$, $Ch$ runs this oracle to compute the private key $d_S$ and sends to $F$.
    \item \textit{Blind signature oracle}: For given $ID$ and message $M$, $F$ blindly requests a signature on m from $Ch$. The $Ch$ runs the oracle to generate the signature and gives to $F$ 
\end{itemize}

\textbf{Forgery}: At the end, the forger $F$ responds with a Message-Signature pair $<M^*,\sigma^*>$ against $ID^*$. The forger $F$ will win the game if it fulfils the following conditions. 
\begin{itemize}
    \item $<M^*,\sigma^*>$ is the valid message-signature pair against $ID^*$.
    \item The Blindsig oracle has not been queried on $<M^*,\sigma^*>$. 
    \item The extract oracle has not been queried on $ID^*$.
\end{itemize}

\textbf{Unlinkability}. The blindness security notion can be understood using the following game playing between the adversary $Adv$ that acts as a malicious signer and Challenger Ch that acts as the honest user. The user have two distinct message $<M_b,M_{1-b}>$ engaged in the proposed IDBS scheme and obtains the signatures $<\sigma_b,\sigma_{1-b}>$, where $b\in \{0,1\}$.

The IDBS scheme is unlinkable if forger $F$ has a negligible advantage in the following game.

\textbf{Setup}: The Ch computes its master key and public parameters, where it keeps the master key and responds public parameter to the $F$. 
Oracle: $F$ performs the following oracles. 
\begin{itemize}
    \item \textit{Extract oracle}: For given $ID$, $Ch$ runs this oracle to get the private key $d_S$ and sends it to $F$.
	\item \textit{Blind signature oracle}: For given $ID$ and message $M$, $F$ blindly requests a signature on $M$ from $Ch$. The $Ch$ runs the oracle to generate the signature $\sigma$ and gives to $F$.
\end{itemize}

\textbf{Challenge}: $Ch$ pick a bit $b \in \{0,1\}$ at random and ask $Adv$ for signature on $M_b$ and $M_{1-b}$. Finally, $Ch$ generates the signature $<\sigma_b,\sigma_{1-b}>$ on message $<M_b,M_{1-b}>$ and gives $<\sigma_b,\sigma_{1-b}>$ to $F$.

\textbf{Response}: For the given tuple $\langle M_0, M_1, \sigma_b, \sigma_{1-b} \rangle$, $F$ predicts a bit $b' \in {0,1}$ and wins the game if $b = b'$ holds with advantage $|\Pr[b=b']| \geq \frac{1}{2} + k^{-n}$.

\section{Proposed Identity-based Blind Signature schemes}
In this section, we proposed two identity-based blind signature schemes: IDBS-I and IDBS-II. 

\subsection{ID-based Blind Signature-I Scheme}
The proposed IDBS-I scheme consists of four algorithms run between the user and the signer defined as follows:
\begin{itemize}
    \item \textit{Setup}: Suppose $P$ be the generator of group $\mathbb{G}_1$ of prime order $q$. Bilinear map $e:\mathbb{G}_1 \times \mathbb{G}_1 \rightarrow \mathbb{G}_2$. Let three pre-image resistant cryptographic hash functions are $H_1:\{0,1\}^* \rightarrow \mathbb{G}_1$, $H_2:\{0,1\}^* \rightarrow \mathbb{Z}_q$, $H_3:\{0,1\}^* \times \mathbb{G}_1 \rightarrow \mathbb{Z}_q$, and $H_4: \mathbb{G}_2 \times \mathbb{Z}_q \rightarrow \{0,1\}^*$. Let the private key of the signer, and the user is denoted as $d_S$ and $d_U$, respectively. Let $t$ indicate the timestamp. PKG selects randomly $s \in \mathbb{Z}_q$ and computes public key $P_{Pub}=sP$.  PKG publishes $pp=<\mathbb{G}_1,q,e,P,P_{Pub},H_1,H_2,H_3,H_4>$, and keeps secret key $s$ secretly.
    \item \textit{Extract}: PKG computes $d_S=sQ_S$, and $d_U=sd_U$, where $Q_S=H_1(ID_S)$ and $Q_U=H_1(ID_U)$, and sends $d_S$ and $d_U$ to the signer and the user, respectively. 
    \item \textit{BlindSig}: Suppose a user wants to sign on message M, four sub-algorithms (Commitment, Authenticating \& Blinding, Signing, Unblinding) must run between the signer and the user as follows:
    \begin{itemize}
        \item \textit{Commitment}: On random chosen integer $s \in \mathbb{Z}_q$, the signer computes $k=e(d_S,\\ rH_2(t)Q_U)$ and $R=rH_2(t)Q_S$, and passes $R$ to the user. 
        \item \textit{Authenticating \& Blinding}: Using his private key, the user computes $K=e(d_U,R)$. If $k \ne K$, the user picks a random number $a \in \mathbb{Z}_q$ as a blinding factor, computes $A=a^{-1}R$, $h=H_3 (M,A)$, $b_M=ah$ and $X=H_4 (b_M,K)$ sends $<b_M,X>$ to the signer. 
        \item \textit{Signing}: The signer computes $X'=H_4 (b_M,k)$ and check if $X'=X$ holds. For valid justification, the signer produces a signature with his private key as $S=(rH_2 (t)+b_M)d_S$ and sends it back to the user.
        \item \textit{Unblinding}: The user unblinds the blinded signature $S$ with blinding factor $a$ as $S'=a^{-1}S$, and publishes signature $<S',A,M>$  on the message $m$. 
    \end{itemize}
    \item \textit{Verify}: On given $<S',A,M>$, verify that $(S',P,A+H_3 (M,A) Q_S,P_{Pub})$ is a valid Gap-Diffie-Hellman tuple, i.e., if DDHP is easy to solve while CDHP is hard, the signature is valid. Using the pairing mapping function, the verifier accepts the signature if and only if
    \begin{equation} \label{eq4.1}
        e(S',P)=e(A+H_3 (M,A) Q_S,P_{Pub})
    \end{equation}
\end{itemize}

\subsection{ID-based Blind Signature-II Scheme}
Our proposed IDBS-II scheme consists of four PPT algorithms: setup, extract, blind signature, and verify. These are defined as follows. 
\begin{itemize}
    \item \textit{Setup}: Given a security parameter $k$, the PKG assumes an additive group $\mathbb{G}_1$ of order $q$, where $q$ is large prime number of $k$-bit and its generator be $P$. PKG defines two hash function $H_1:\{0,1\}^* \times \mathbb{G}_1 \rightarrow \mathbb{Z}_q$ and $H_2:\{0,1\}^* \times \mathbb{G}_1 \rightarrow \mathbb{Z}_q$. The PKG chooses a random element $s \in \mathbb{Z}_q$ (its master key) and computes the public key $P_{Pub}=sP$.  The PKG publishes the public parameter $pp=<\mathbb{G}_1,q,P,P_0,H_1,H_2>$, and keeps $s$ secret. 
    \item \textit{Extract}: For given signer’s identity $ID_S$, $pp$ and its master key $s$, PKG chooses a random number $a\in \mathbb{Z}_q$, computes the signer’s private key $<A=aP,d_{IDS}=a+sQ_S>$, where $Q_S=H_1 (A,ID_S)$ and gives $<A,d_S>$ to the signer. 
    \begin{itemize}
        \item \textit{Blind signature}: The signer and user perform the following steps to obtain a signature on the message. 
        \item \textit{Commitment}: The signer selects two random elements $n_1,n_2 \in \mathbb{Z}_q$, computes $Q_1= n_1P$  and $Q_2= n_2 P$, and sends them to the user. 
        \item \textit{Blinding}: For given received parameters $<Q_1,Q_2>$ and message $m$, the user selects six elements $g,h, i,j,k,l \in \mathbb{Z}q$, such that $\gcd(i,j)=1$ and $ki+lj=\gcd(i,j)$. For the selection of elements $k$ and $l$, we use the Extended Euclidean algorithm. The user then computes the parameters $b{M1}$ and $b_{M2}$ using equations (\ref{eq4.2})-(\ref{eq4.7}) and asks the signer for a signature on $<b_{M1},b_{M2}>$.

        \vspace{-5mm}
        \begin{equation} \label{eq4.2}
            R_1=gQ_1+iP  \hspace{3mm} and \hspace{3mm}  r_1=absc(R_1)modq
        \end{equation}
        \vspace{-5mm}
        \begin{equation} \label{eq4.3}
            R_2=hQ_2+jP  \hspace{3mm} and \hspace{3mm}  r_2=absc(R_2 )modq
        \end{equation}
        \vspace{-5mm}
        \begin{equation}\label{eq4.4}
            r= r_1 r_2 modq
        \end{equation}
        \vspace{-5mm}
        \begin{equation} \label{eq4.5}
            R=R_1+R_2+r(A+Q_{ID}P_0)
        \end{equation}
        \vspace{-5mm}
        \begin{equation} \label{eq4.6}
            b_{M1}=kg^{-1} i(H_2(M,R)-r)modq
        \end{equation}
        \vspace{-5mm}
        \begin{equation} \label{eq4.7}
            b_{M2}=lh^{-1} j(H_2 (M,R)-r)modq                     
        \end{equation}
        \vspace{-5mm}
        \item \textit{Signature}: On received parameters $<b_{M1},b_{M2}>$ and its private key $d_{IDS}$, signer computes $S'_1=(d_{IDS} b_{M1}- n_1)modq$ and $S'_2=(d_{IDS} b_{M2}- n_2)modq$, and sends $<S'_1,S'_2>$ to the user.
        \item \textit{Unblinding}: On received parameters $<S'_1,S'_2>$ using his secret values, i.e., $<g,h,\\ i,j>$, user computes the original signature $\sigma=<S,R>$, where $S=(S_1+S_2)$, $S_1=(S'_1 g-i )$ and $S_2=(S'_2 h-j)$. The user sends the signature pair $<S,R,A>$ to the verifier. 
    \end{itemize}
    \item \textit{Verify}: For given signature pair $<A, S, R>$, the user verifies the valid signature if and only if it satisfied the Eequation $H_2(M,R)(A+H_1 (A,ID_S )P_{Pub})= SP+R$.
\end{itemize}

\section{Security Analysis}
In this section, we provide the security proof of the proposed IDBS-I and IDBS-II schemes. 

\begin{theorem} \label{thm4.1}
(\textbf{Consistency}). The proposed IDBS-I and IDBS-II schemes are consistency.
\end{theorem}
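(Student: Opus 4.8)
The plan is to read ``consistency'' as correctness: for each scheme I must show that a signature produced by honestly running the protocol satisfies the verifier's acceptance equation. In both cases this reduces to substituting the defining expressions for the signature components into the verification relation and checking that the two sides coincide; no probabilistic or adversarial reasoning is needed, only the group and bilinear structure together with the algebraic identities built into key extraction. I would therefore organise the proof as two independent substitution arguments, one per scheme.

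For IDBS-I the target is the pairing identity $e(S',P)=e(A+H_3(M,A)Q_S,P_{Pub})$. First I would unfold the unblinded signature $S'=a^{-1}S$ using $S=(rH_2(t)+b_M)d_S$ and $d_S=sQ_S$, then substitute $A=a^{-1}R=a^{-1}rH_2(t)Q_S$ and $b_M=aH_3(M,A)$. The blinding factor $a$ should cancel, leaving $S'=s\bigl(A+H_3(M,A)Q_S\bigr)$ as a scalar multiple of a single group element. Applying bilinearity to push the scalar $s$ into the second argument and using $P_{Pub}=sP$ then yields the verification equation at once. The only side point needing care is that the commitment value $k=e(d_S,rH_2(t)Q_U)$ agrees with the user-side $K=e(d_U,R)$, so the authentication check is well defined; this follows from $d_S=sQ_S$, $d_U=sQ_U$ together with the bilinearity and symmetry of $e$.

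For IDBS-II the verification equation $H_2(M,R)\bigl(A+H_1(A,ID_S)P_{Pub}\bigr)=SP+R$ lives entirely in $\mathbb{G}_1$, and the pivotal observation I would establish first is the key identity $d_{IDS}P=A+Q_SP_{Pub}$, immediate from $d_{IDS}=a+sQ_S$, $A=aP$ and $P_{Pub}=sP$; this rewrites the left-hand side as $H_2(M,R)\,d_{IDS}P$. Next I would expand $S=S_1+S_2$ through the unblinding formulas and the signer's $S_1',S_2'$, using $b_{M1}g=ki(H_2(M,R)-r)$ and $b_{M2}h=lj(H_2(M,R)-r)$; at this point the Bézout relation $ki+lj=1$ collapses the two $d_{IDS}$ contributions into the single coefficient $H_2(M,R)-r$. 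Substituting $R=R_1+R_2+r\,d_{IDS}P$ (again invoking the key identity, and identifying $P_0=P_{Pub}$, $Q_{ID}=Q_S$ as in the setup) and expanding $R_1,R_2$ in terms of $n_1,n_2,i,j$, I expect the masking terms $n_1g+n_2h$ and $i+j$ to cancel against their counterparts in $SP$, and the residual $-r\,d_{IDS}P$ from $S$ to cancel the $+r\,d_{IDS}P$ from $R$, leaving exactly $SP+R=H_2(M,R)\,d_{IDS}P$.

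The main obstacle will be the bookkeeping in IDBS-II: tracking the eight masking quantities $g,h,i,j,k,l,n_1,n_2$ and confirming that every blinding contribution cancels. The genuinely essential step, rather than mere arithmetic, is the use of $ki+lj=1$, since this is precisely what forces the secret $d_{IDS}$ to reappear with the correct coefficient $H_2(M,R)-r$ after recombination; without it the two halves of the signature would not merge into a verifiable whole. I would also verify that $r=r_1r_2$, derived from the abscissae of $R_1,R_2$, is computed consistently by the user throughout, so that the same $r$ occurs both inside $R$ and inside the definitions of $b_{M1},b_{M2}$, which is what makes the final cancellation exact.
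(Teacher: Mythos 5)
Your proposal is correct and follows essentially the same route as the paper's proof: both parts are direct substitution arguments, with IDBS-I reducing to cancelling the blinding factor $a$ and moving the master secret into $P_{Pub}$ via bilinearity, and IDBS-II hinging on the B\'ezout identity $ki+lj=1$ to recombine the two signature halves and on the cancellation of the masking terms $n_1g,\,n_2h,\,i,\,j$ and of $\pm r\,d_{IDS}P$ against the expansion of $R$. The extra check that the commitment values $k$ and $K$ agree in IDBS-I is a sound addition not spelled out in the paper's proof, but it does not change the argument.
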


\begin{proof}
The correctness of the proposed IDBS-I scheme is verified as follows. 

Since, $S'=a^{-1}S$, we have 

\vspace{-10mm}

\begin{align*}
    e(S',P)&=e(a^{-1}S,P)\\
&=e(a^{-1}(rH_2 (t)+b_M)d_{IDS},P) \\
&=e(a^{-1} (rH_2 (t)+aH_3 (m,A))Q_{IDS} ,P_{Pub}) \\
&=e(a^{-1} R+H_3 (m,A)Q_{IDS}  ,P_{Pub} ) \\
&=e(A+H_3(m,A)Q_{IDS}  ,P_{Pub})=RHS
\end{align*}

This completes the correctness of the proposed IDBS-I scheme. Now, we prove the consistency of the proposed IDBS-II scheme. 

We have $S=(S_1+S_2)$,  then 
\vspace{-10mm}

\begin{align*}
    SP+R&=(S_1+S_2)P+R
=(S'_1g-i+S'_2h-j)P+R \\
&=((d_{IDS} b_{M1}- n_1 )g-i+(d_{IDS} b_{M2}- n_2 )h-j)P+R \\
&=(d_{IDS} ki(H_2 (M,R)-r)- n_1 g-i+d_{IDS} lj(H_2 (M,R)-r)- n_2 h-j)P+R \\
&=d_{IDS} H_2 (M,R)(ki+lj)P-(ki+lj)rd_{IDS} P-n_1 gP-n_2 hP-iP-jP+R \\
%We know ki+lj=1, using the Extended Euclidean algorithm. \\
&=d_{IDS} H_2 (M,R)P-(rd_{IDS} P+n_1 gP+n_2 hP+iP+jP)+R \\
&=(a+sQ_{ID} ) H_2 (M,R)P-(r(a+sQ_{ID})P+gQ_1+hQ_2+iP+jP)+R \\
&=(A+Q_{ID} P_{Pub} ) H_2 (M,R)-R+R \\
&=(A+Q_{ID} P_{Pub})H_2 (M,R)
\end{align*}

This proves the consistency of the proposed IDBS-II scheme.
\end{proof}

\begin{theorem} \label{thm4.2}
(\textbf{Un-forgeability}). Suppose $H_1$ and $H_2$ are two random oracle models and a forger $F$ wants to forge a signature on message $M$. Suppose forger $F$ executes at most $q_E$ extract oracles, $q_B$ Blind signature oracles, $q_1$ $H_1$ hash oracles, $q_2$ $H_2$ hash oracles, runs at most $t$ times with advantage at most $k^{-n}$. Under the assumption of ROM and intractable to solve the ECDLP, our proposed IDBS Scheme is existential and Un-forgeable secured against the adaptive chosen message and identity attacks. Forger $F(t,q_1,q_2,q_E,q_B,k^{-n} )$ have the following advantage to break the proposed IDBS-II scheme.
\begin{equation} \label{eq4.8}
    |Pr[F(t,q_1,q_2,q_E,q_B,k^{-n})]| \ge \epsilon(1-q_1/k)^{q_2+q_E}
\end{equation}
   
\end{theorem}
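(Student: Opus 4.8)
The plan is to prove unforgeability by a reduction in the random oracle model: assuming a forger $F$ against IDBS-II, I would construct an algorithm $B$ that solves a given ECDLP instance $(P, Y = xP) \in \mathbb{G}_1$ and recovers $x$. First $B$ embeds the challenge into the system public key by setting $P_{Pub} = Y$ and publishing $pp = \langle \mathbb{G}_1, q, P, P_{Pub}, H_1, H_2 \rangle$, treating $H_1$ and $H_2$ as random oracles that it programs. Before running $F$, $B$ guesses the index $I \in \{1, \dots, q_1\}$ of the $H_1$-query that will carry the target pair $(A^*, ID^*)$; if the guess is wrong $B$ aborts, costing a factor $1/q_1$.

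Next I would describe the oracle simulation, whose crucial trick exploits the identity $A + Q_S P_{Pub} = d_{IDS} P$ underlying the verification equation. To answer an Extract query on $ID \ne ID^*$ without knowing the master key $s = x$, $B$ draws $d_{IDS}, Q_S \in \mathbb{Z}_q$ at random, sets $A = d_{IDS}P - Q_S P_{Pub}$, programs $H_1(A, ID) = Q_S$, and returns $(A, d_{IDS})$; this package is valid by construction. The $H_2$-oracle returns fresh random scalars recorded in a list, and blind-signature queries on already-extracted identities are answered by running the honest signer steps with the stored $d_{IDS}$. Accounting for the programming collisions — the event that a freshly chosen $A$ or an $H_2$ value clashes with a previously fixed oracle entry over the $q_2 + q_E$ simulated events — is what yields the stated factor $(1 - q_1/k)^{q_2 + q_E}$.

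With the simulation in place, I would invoke the Forking Lemma (as used in Theorem \ref{thm3.9}) on the $H_2$-query at the target point $(M^*, R^*)$: rewinding $F$ and supplying a second challenge $c' \ne c$ yields, with related probability, a second valid forgery sharing the same $A^*$ and $R^*$. Subtracting the two verification equations $c(A^* + Q_S^* P_{Pub}) = S^* P + R^*$ and $c'(A^* + Q_S^* P_{Pub}) = S'^* P + R^*$ gives $(c - c')(A^* + Q_S^* Y) = (S^* - S'^*)P$, so that $A^* + Q_S^* Y = \delta P$ with the known scalar $\delta = (S^* - S'^*)(c - c')^{-1} \bmod q$.

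The hard part is the final extraction of $x$: because $A^*$ is an adversarially chosen group element whose discrete logarithm $B$ does not know, the single relation $A^* + Q_S^* Y = \delta P$ does not by itself isolate $x$. I would resolve this by a second, nested rewinding on the $H_1$-response at the target, replaying $F$ with a different value $Q_S^{**} \ne Q_S^*$ for the same $A^*$ to obtain $A^* + Q_S^{**} Y = \delta' P$; subtracting eliminates $A^*$ and gives $(Q_S^* - Q_S^{**}) Y = (\delta - \delta')P$, whence $x = (\delta - \delta')(Q_S^* - Q_S^{**})^{-1} \bmod q$. I expect this double-forking step, together with faithfully simulating the interactive blind-signature oracle on the target identity without its private key while preserving the blinding transcript, to be the genuine obstacle; the clean single-forking bound quoted in the statement appears to understate this cost, and making the extraction rigorous is where the real work lies.
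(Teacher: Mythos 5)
Your reduction follows the same skeleton as the paper's proof: embed the ECDLP instance in the public key, program $H_1$ so that Extract queries on $ID \ne ID^*$ can be answered via the identity $d_{ID}P = A + H_1(A,ID)P_{Pub}$ (the paper sets $R_{Xi}=m_iP_0+n_iP$ with $H_1=-m_i$, which is exactly your $A = d\,P - Q_S P_{Pub}$ trick), and charge the simulation failures to two collision events whose product gives the stated factor $(1-q_1/k)^{q_2+q_E}$. Where you diverge is the extraction step. The paper replays the forger to obtain \emph{three} signatures $s_i^* = z_i(r_X + s_0 h) - u$ on the same message with three distinct $H_2$ outputs $z_1,z_2,z_3$, and asserts that solving this linear system recovers the unknown discrete logarithms $s_0, r_X, u$. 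You instead do a single fork on $H_2$ to get $A^* + Q_S^* Y = \delta P$, observe correctly that this one relation cannot isolate $x$ because $A^*$ has unknown discrete logarithm, and then fork a second time on $H_1$ to eliminate $A^*$. Your concern is well founded: in the paper's system the unknowns $r_X$ and $s_0$ only ever appear through the single combination $W = r_X + s_0 h$ (the value $h = H_1(ID,R_X)$ is identical across all three replays), so the three equations determine only $W$ and $u$ and the system is rank-deficient in $(s_0,r_X,u)$; the third replay buys nothing. Your nested double-fork on $H_1$ is the standard and rigorous way to close this gap, at the price of an additional quadratic loss in the success probability that, as you note, the theorem's stated bound (and the paper's analysis, which accounts only for the two simulation-abort events) does not reflect.
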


\begin{proof}
Consider a forger $F$ wish to forge any signature in the proposed ID-BS scheme and let there exist an algorithm $B$ which helps $F$. We design an algorithm $B$ that helps $F$ to solve the ECDLP.

\textbf{Setup}: $B$ considers two cryptographic hash functions $H_1:\{0,1\}^* \times \mathbb{G}_1 \rightarrow \mathbb{Z}_q$ and $H_2:\{0,1\}^* \rightarrow \mathbb{Z}_q$ and is accountable to simulate these oracles. $B$ picks $a \in \mathbb{Z}_q$, set $P_0=aP$ and gives public parameter $pp=<\mathbb{G}_1,q,P,P_0,H_1,H_2>$ to $F$.

\textbf{Oracles}: Forger $F$ can perform the following oracles.
\begin{itemize}
    \item \textit{$H_1$  oracle}: $B$ prepares an empty list $H_1^{List}$ having tuple $<R_{Xi},ID_i,H_1 (ID_i,R_{Xi}),*>$. When $F$ queries to $H_1^{List}$ on $<ID_i,R_{Xi}>$, $B$ responds $F$ in the following way.
    \begin{itemize}
        \item $B$ gives $H_1(ID_i,R_{Xi})$ to $F$ and adds the tuple $<R_{Xi},ID_i,H_1(ID_i,R_{Xi}),*>$ to list $H_1^{List}$, if $ID_i=ID^*$.
        \item Otherwise, $B$ chooses randomly $m_i \in \mathbb{Z}_q$ and gives $H_1 (ID_i,R_{Xi})=-m_i$ to $F$ and adds tuple $<R_{Xi},ID_i,H_1 (ID_i,R_{Xi}),m_i>$ to list $H_1^{List}$.
        \item $B$ gives $H_1(ID_i,R_{Xi})$ to $F$, if $<ID_i,R_{Xi}>$ found in the $H_1^{List}$ in the tuple of $<R_i,ID_i,\\ H_1 (ID_i,R_{Xi}),m_i>$ or $<R_{Xi},ID_i,H_1(ID_i,R_{Xi} ),*>$.
    \end{itemize}
	
Note, $H_1(ID,R_X)$ gives no information to $F$ until he queries the $H_1$ oracle on $ID$ as $H_1$ is the random oracle.

    \item \textit{$H_2$ oracle}: On given parameters $R$ and $h$, $B$ runs  $H_2$ oracle and gives the output to the $F$. Suppose $z_1$, $z_2$ and $z_3$ are outputs when $H_2$ oracle executes three times. 
    \item \textit{Key Extract oracle}: $B$ simulate the extract oracles. It pick $n_i \in \mathbb{Z}_q$ and set $R_{Xi}= m_iP_0+n_iP$ in such that $d_{Xi}=m_i$ and $H_1(ID,A_i)=a_i$, and add $<R_{Xi},ID_i,H_1 (ID_i,R_{Xi}),d_{Xi}>$ in list.  We set these parameters such that they satisfy the Equation (\ref{eq4.9}).
    
    \begin{equation} \label{eq4.9}
        d_{Xi} P=R_{Xi}+H_1 (ID_i,R_{Xi})P_0
    \end{equation}
    \item  \textit{Ballot issuing oracle}:  $F$  queries the blind ballot issuing algorithm to get a blind signature on message $M_i$ with identity $ID_i$. Let $F$ give the blinded ballot $<b'_{R1},b'_{R2}>$ to $B$. Then, $B$ responds to the following oracles.
    \begin{itemize}
        \item 	If $ID_i \ne ID^*$, using $ID_i$ corresponding to $H_1^{List}$, $B$ executes the extraction oracles and signs the ballot using the corresponding private key.
        \item If $ID_i=ID^*$, $B$ executes $H_2$ oracles on ballot $R_i$ and picks up the tuple $<R_{Xi},ID_i,\\ H_1 (ID_i,R_{Xi}),d_{Xi}>$ from $H_1^{List}$ to sign ballot $R_i$ and outputs the corresponding signature $<\sigma_i^*=(s_i^*,R^*,h^*),R_X^*>$ to $F$.
    \end{itemize}

\end{itemize}

\textbf{Forgery}. $F$ responds a signature $<
\sigma_i^*=(s_i^*,R^*,h^* ), R_X^*>$ pair against identity $ID^*$. In order to forge a signature, suppose the forger $F$ creates three distinct signatures $(\sigma_1^*,\sigma_2^*,\sigma_3^*)$ on same message $M$, where $<\sigma_1^*=(s_1^*,R^*,h_V^*),R_X^*>$,    $<\sigma_2^*=(s_2^*,R^*,h_V^*),R_X^*>$ and $<
\sigma_3^*=(s_3^*,R^*,h_V^* ),R_X^*>$.
We consider $s_0,r_X$  and $u$ as the discrete logarithm of $P_0,R_X$  and $R$, respectively, i.e., $P_0=s_0 P$, $R_X=r_X P$ and $R=uP$. From $sP=H_2(R,h_V )(R_X+H_1 (ID_X,R_X )P_0 )-R$, we get $s_i^*=(z_i (r_X+s_0 H_1 (ID,R_X ))-u )modq$ for $1 \le i \le 3$. The parameters $s_0,r_X$  and $u$ from these equations are unknown to $F$. Thus, $F$ solves three linear equations to obtain these values, which is equivalent to solving the ECDL problem. 

\textbf{Analysis}. The probability that $B$ does not abort the game is denoted as $Pr[\neg E_1 \wedge \neg E_2]$.

\begin{itemize}
    \item $E_1$: The extract oracle fails if $H_1$ oracle outputs the inconsistent outputs with probability at most $q_1/k$. The simulation is completed in  $q_E$ time, which happens with probability at least $(1-q_1/k)^{q_E}$. 
    \item $E_2$: The execution of $H_2$ oracle fails if $H_2$ oracle gives the inconsistent outputs with probability at most $q_1/k$. The simulation is completed in $q_2$ time, which happens with probability at least $(1-q_1/k)^{q_2}$.
\end{itemize}

From these two events, we obtain the probability that Adv can break the scheme with

\centering
$Pr[\neg E_1 \wedge \neg E_2] \ge \epsilon (1-\frac{q_1}{k})^{q_2+q_E}$

\end{proof}

\begin{theorem} \label{thm4.3}
(Blindness). The proposed IDBS-II scheme achieves the blindness property.
\end{theorem}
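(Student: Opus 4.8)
The plan is to establish \emph{perfect} (information-theoretic) blindness, which reduces the adversary's advantage in the unlinkability game to exactly zero and hence is certainly negligible. I would work directly inside the game already defined: the adversary $Adv$ plays the malicious signer, interacts with the honest user (the challenger) on the two messages $M_b, M_{1-b}$ in the order fixed by the secret bit $b$, and at the end is handed both valid signatures $\sigma_b, \sigma_{1-b}$. The goal is to argue that the guess $b'$ matches $b$ with probability exactly $1/2$, i.e., the claimed advantage $k^{-n}$ vanishes. The standard route is to show that the signer's entire view of a single protocol instance is statistically independent of the message-signature pair it ultimately yields.

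First I would fix an arbitrary transcript as seen by the signer in one instance, namely the commitment $\langle Q_1, Q_2\rangle$ it sent, the blinded request $\langle b_{M1}, b_{M2}\rangle$ it received, and the blinded response $\langle S'_1, S'_2\rangle$ it returned. Then I would fix an arbitrary valid output signature $\langle M, \sigma=(S,R)\rangle$ with its $A$, and exhibit blinding factors $(g,h,i,j,k,l)$ satisfying $\gcd(i,j)=1$ and $ki+lj=1$ that are simultaneously consistent with this transcript and this signature. The linking relations are $R_1=gQ_1+iP$, $R_2=hQ_2+jP$, $R=R_1+R_2+r(A+Q_S P_{Pub})$ with $r=r_1 r_2$, together with the blinding identities for $b_{M1},b_{M2}$ and the unblinding identities $S_1=S'_1 g-i$, $S_2=S'_2 h-j$. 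Using the established consistency relation $d_{IDS}P=A+Q_S P_{Pub}$ and the fact that $ki+lj=1$, a direct substitution collapses $S_1+S_2$ to $d_{IDS}\bigl(H_2(M,R)-r\bigr)-n_1 g-n_2 h-i-j$, which I would feed back into the verification equation to solve for the blinding factors in terms of the fixed transcript and signature.

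The decisive step is to prove that, for each fixed transcript, the map carrying the uniformly chosen $(g,h,i,j)$ — with $(k,l)$ determined from $(i,j)$ by the extended Euclidean algorithm — to the output pair $(\sigma,R)$ is a measure-preserving bijection onto the set of valid signatures. Once this is shown, the conditional distribution of $\sigma$ given the transcript is uniform on valid signatures and therefore carries no information about which message was being signed; combining the two instances then gives $\Pr[b'=b]=1/2$, so the scheme is perfectly blind.

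The main obstacle I anticipate is the nonlinear coupling introduced by $r=r_1 r_2$, where $r_1=\mathrm{absc}(R_1)$ and $r_2=\mathrm{absc}(R_2)$ depend on the $x$-coordinates of $R_1,R_2$, and by the occurrence of $H_2(M,R)$ inside $b_{M1},b_{M2}$ while $R$ itself depends on $r$. Because of this circularity the relations cannot be inverted purely linearly; instead I would argue that for each admissible target $R$ the pair $(R_1,R_2)$, and hence $(g,i)$ and $(h,j)$ up to the scaling freedom, is pinned down, and then verify that the extended-Euclidean choice of $(k,l)$ does not bias the number of consistent factors across different signatures. Proving that this correspondence is genuinely bijective, rather than merely surjective, so that uniformity is preserved \emph{exactly}, is where the argument will demand the most care.
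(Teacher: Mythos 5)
Your plan diverges from the paper's proof in a substantive way: you aim for \emph{perfect} (information-theoretic) blindness via a counting argument, whereas the paper settles for a computational unlinkability claim. The paper's proof simply writes the blinding factors in terms of the transcript and the final signature (e.g.\ $g=k\,b_{M1}^{-1}i(H_2(M,R)-r)$, $i=(S'_1g-S_1)$, and so on), substitutes into the expression for $R$, and observes that the malicious signer would need the value $r$, whose recovery from the remaining unknowns it asserts is equivalent to solving ECDLP. That argument is weaker than yours in its conclusion but avoids the step on which your proposal hinges.

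The genuine gap is exactly the step you flag at the end: the claim that, for each fixed transcript, the map from uniformly chosen blinding factors to valid signatures is a measure-preserving bijection. As stated, this is the entire content of perfect blindness, and nothing in your outline makes it plausible here. The quantity $r=r_1r_2$ with $r_1=\mathrm{absc}(R_1)=\mathrm{absc}(gQ_1+iP)$ and $r_2=\mathrm{absc}(hQ_2+jP)$ is a highly nonlinear (non-algebraic) function of $(g,h,i,j)$, and $R=R_1+R_2+r(A+Q_{ID}P_{Pub})$ couples it back into the target point; your assertion that ``for each admissible target $R$ the pair $(R_1,R_2)$ is pinned down'' is one scalar group equation in two point unknowns perturbed by the $\mathrm{absc}$ terms, so it does not pin anything down, and there is no evident way to show the fibres of the map all have the same size. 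There is also unaccounted freedom in $(k,l)$: the Bézout pair with $ki+lj=1$ is only determined up to adding $(jt,-it)$, and fixing it ``by the extended Euclidean algorithm'' introduces a non-uniform selection that your uniformity claim must absorb. Unless you can actually exhibit, for every transcript and every valid $(M,\sigma)$, a consistent tuple $(g,h,i,j,k,l)$ and count them, the argument does not close. The realistic repair is to retreat to the computational statement the paper proves: show that any signer who links a transcript to a signature can be used to recover the blinding randomness (equivalently $r$), and reduce that recovery to ECDLP.
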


\begin{proof}
Suppose an adversary $Adv$ that plays the role of ECA and challenger $Ch$ that plays a role of an honest voter runs the blind signature algorithm. Suppose Adv obtains the parameters $<b_{R1},b_{R2},s'_1,s'_2>$ by executing the ballot issuing oracle. Let the corresponding signature be $<R,h_V,s>$. There exists a tuple of values $<a,b,c,d,e,f>$ that links $<b_{R1},b_{R2},s'_1,s'_2>$ to $<R,h_V,s>$.

From $b_{R1}=ea^{-1}c(H_2 (R,h_V )-r)$ and $b_{R2}=fb^{-1}d(H_2 (R,h_V )-r)$, we get $a=eb_{R1}^{-1} c(H_2 (R,h)-r)$ and $b=fb_{R2}^{-1} d(H_2 (R,h_V )-r)$, respectively. Similarly, from $s_1=(s'_1a-c )$ and $s_2=(s'_2b-d )$, we get $c=(s'_1 a-s_1)$ and $d=(s'_2b-s_2)$, respectively. By substituting these values in $R=aA_1+bA_2+(c+d)P+r(R_E+Q_E P_0)$, it can be noted that $Adv$ must know the value of $r$ to compute $R$. The value $r$ depends on $<a,b,c,d>$ whose production is equivalent to solving the ECDL problem. Thus, the proposed E2E-VIV system provides voter anonymity.
\end{proof}

\section{Construction of End-to-End Verifiable Internet Voting System}
Here, we construct the E2E-IVS system using the IDBS-II scheme, which consists of six algorithms. In Algorithm (\ref{alg4.1}), the KGC initializes the system, and computes its master key $s_0$ and public parameter param. It keeps $s_0$ secret and publishes $pp$. 

\begin{algorithm} 
	\caption{System initialization } 
	\begin{algorithmic}[1]
	    \State Given a security parameter $k$, choose three groups $\mathbb{G}_1$,$\mathbb{G}_2$ and $\mathbb{G}_T$ of order $q$ ($k$-bit), a generator $P$ of $\mathbb{G}_1$, a generator $Q$ of $\mathbb{G}_2$, and bilinear map $e:\mathbb{G}_1 \times \mathbb{G}_2 \rightarrow \mathbb{G}_T$.
	    \State Choose five hash functions, $H_1:\{0,1\}^* \times \mathbb{G}_1 \rightarrow \mathbb{Z}_q$, $H_2:\{0,1\}^* \times \mathbb{G}_1 \rightarrow  \mathbb{Z}_q$, $H_3:\{0,1\}^* \rightarrow \mathbb{Z}_q$, $H_4:\{0,1\}^* \rightarrow \mathbb{G}_2$ and $H_5:\mathbb{Z}_q  \times \mathbb{G}_1 \times \mathbb{Z}_q  \times \mathbb{G}_1  \times \mathbb{G}_2 \times \{0,1\}^* \times \mathbb{G}_1 \rightarrow \mathbb{Z}_q$.
	    \State Choose an element $s_0 \in \mathbb{Z}_q$ (its master key) and compute the public key $P_{Pub}=sP$.
	    \State Publishes the public parameter  $pp=<q,e,P,P_{Pub},\mathbb{G}_1,\mathbb{G}_2,\mathbb{G}_T,H_1,H_2,H_3,H_4,H_5>$, and keeps $s_0$ secret.
    \end{algorithmic} 
    \label{alg4.1}
\end{algorithm}

\begin{algorithm} 
	\caption{Registration and key generation} 
	\begin{algorithmic}[1]
	    \State KGC prepares two lists: $List^{RV}$ and $List^{RC}$.
	    \State On given identity $ID_E$. The KGC picks a random element $r_E \in \mathbb{Z}_q$, and computes the private key $<d_E, R_E>$, where $R_E=r_EP$, $Q_E=H_1(ID_E ||R_E)$ and $d_E=r_E+sQ_E$, and sends it to the to ECA over a secure channel.
	    \State Candidate scans his biometric data and computes $h_{3C}=H_3(binfo_C ||ID_C)$, using his  identity $ID_C$.
	    \State 	Given $h_{3C}$  and $ID_C$, candidate asks KGC for its private key. The KGC picks a random element $r_C \in \mathbb{Z}_q$, and computes a private key $<d_C,R_C>$, where $R_C=r_CP$, $Q_C=H_1(ID_C||R_C)$ and $d_C=r_C+sQ_C$, and sends it to the candidate and adds an entry $h_{3C}$ to $List^{RC}$.
	    \State Similarly to step 4, the voter obtains his private key $<d_V, R_V>$ from KGC, where KGC maintains list $List^{RV}$ for registered voters.
	    \State	KGC sends lists ($List^{RV}$ and $List^{RC}$) to ECA and candidate and $List^{RC}$ to voter.
    \end{algorithmic} 
    \label{alg4.2}
\end{algorithm}

In Algorithm (\ref{alg4.2}), KGC registers every entity (voter, ECA and candidate) against their $IDs$ and biometric information $h_3$. It issues a private key $(d_x, R_x)$ to the entity over a secure channel and adds an entry of entity’s $h_3$ in the respective list: $List^{RV}$ or $List^{RC}$. The KGC outputs both lists to ECA and $List^{RC}$ to voter. The private key $(d_x, R_x)$ can be verified using Equation (\ref{eq4.10}), where x denotes the entity.
\vspace{-10mm}

\begin{equation} \label{eq4.10}
    d_xP=H_1(ID_x ||R_x ) P_{Pub}+R_x
\end{equation}

\begin{algorithm} 
	\caption{Authentication and Ballot distribution} 
	\begin{algorithmic}[1]
	    \State ECA chooses two secret integers $a_1$, $a_2 \in \mathbb{Z}_q$, computes $<A_1=a_1 P,A_2=a_1 P,R_E>$ and gives them to voter.
    	\State \textit{Ballot request}: Voter performs the following steps:
    	\State \hspace{5mm}	Select six elements $a,b,c,d,e,f \in \mathbb{Z}_q$ such that $gcd(c,d)=1$ and $ec+fd=gcd(c,d)$.
    	\State \hspace{5mm}Compute a pseudonym public key $R$ as follows. 
    	\State \hspace{10mm}$R_1=aA_1+cP$  and   $r_1=absc(R_1 )mod q$
    	\State \hspace{10mm} $R_2=bA_2+dP$ and   $r_2=absc(R_2 )  mod q$
    	\State \hspace{10mm} $r= r_1 r_2 mod q $
    	\State \hspace{10mm} $R= R_1 \mathbb{Z}_q+R_2+r(R_E+Q_E P_0)$
    	\State \hspace{10mm} $h_{3V}=H_3(binfo_V,ID_V)$
    	\State \hspace{10mm} $h_V=ah_3V+c$
    	\State \hspace{10mm} $b_{R1}=ea^{-1}c(H_2 (R,h_V )-r)mod q$
    	\State \hspace{10mm} $b_{R2}=fb^{-1} d(H_2 (R,h_V )-r)mod q$
    	\State \hspace{10mm} $U_1=(ab_{R1}+bb_{R2})P$
    	\State \hspace{10mm} $U_2=rP$
    	\State \hspace{5mm}	Send  $<b_{R1},b_{R2},R,h_V,h_3V,U_1,U_2>$ to ECA for requesting a blank ballot. 
    	\State \textit{Authentication}: ECA authenticates the voter using three conditions. 
    	\State \hspace{10mm} $U_1==(H_2 (R,h_V )P-U_2)$
    	\State \hspace{10mm} $h_{3V} \in List^{RV}$
    	\State \hspace{10mm} $h_{3V} \notin List^{VV}$
	    \State \textit{Ballot issuing}: If the above conditions are satisfied, ECA performs the following steps. 
	    \State \hspace{5mm} Compute $s'_1=(d_E b_{R1}-a_1 )mod q$ and $s'_2=(d_E b_{R2}- a_2)mod q$
    	\State \hspace{5mm} Issue a blind ballot $<s'_1,s'_2>$ to voter and add $h_{3V}$ to $List^{VV}$.
    	\State \textit{Unblinding}: Voter unblinds the blind ballot and computes the blank ballot $BB$ as follow.
    	\State \hspace{5mm}	Compute $s_1=(s'_1a-c )mod q$, $s_2=(s'_2b-d)  modq$ and $s_{12}=(s_1+s_2)mod q$.
    	\State The ballot is $BB = <s_{12},R,h_V>$.
    \end{algorithmic} 
    \label{alg4.3}
\end{algorithm}

\begin{algorithm} 
	\caption{Vote Casting } 
	\begin{algorithmic}[1]
	    \State The voter chooses $C_{name} \in List^{RC}$, timestamp $T$ and preforms the following steps.
	    \State Compute $V_1=aH_4(C_{name}||T)$ and $V_2=a(R_E+Q_E P_0)$.
	    \State Set ballot $B=<BB,V_1,V_2,C_{name},T>$ and sends it to ECA.
	    \State ECA validates the vote ballot $B$ using (\ref{eq4.11}) and (\ref{eq4.12}).
	    \State{
	        \begin{equation} \label{eq4.11}
	            d_E H_2 (R,h_V )P \overset{?}{=} s_12 P+R
	        \end{equation}
	        \begin{equation} \label{eq4.12}
	            e(d_E P,V_1 ) \overset{?}{=}e(V_2,H_4 (C_name||T))
	        \end{equation}
	    }
                                           
	    \If     (\ref{eq4.11}) and (\ref{eq4.12}) are valid,
	    \State ECA picks a random element $n_c \in \mathbb{Z}_q$, and sets receipt as $P_{n_c}=n_c P$, $Rcpt=H_5(B||P_{n_c})$ and $s_{Rcpt}=n_c+d_ERcpt$ and adds $<B,P_{n_c},Rcpt,s_{Rcpt}>$ in $List^B$.
	    \State Sends receipt  $<P_{n_c},Rcpt,s_{Rcpt}>$ to  voter
	    \EndIf
    \end{algorithmic} 
    \label{alg4.4}
\end{algorithm}

In Algorithm (\ref{alg4.3}), the voter requests for the digital ballot in which ECA authenticates the voter by checking his biometric parameters and makes a list $List^{VV}$ for valid voters to whom a ballot is issued. If the valid voter has not received any ballot previously, the ECA issues a fresh blind blank ballot to him without knowing any information about his identity. Finally, the voter extracts the blank ballot. Algorithm (\ref{alg4.4}) defines the vote-casting process in which a voter picks a candidate that he supports, computes the signature on his choice of support using the short signature scheme \cite{boneh2001short} and attaches it with the blank ballot to produce the electronic ballot B. The voter sends the electronic ballot to ECA which verifies the ballot and vote and acknowledges the receipt to the voter. ECA prepares a list $List^B$ where it stores the ballot’s information $<B,P_{n_c},S_{Rcpt},Rcpt>$. 

\begin{table}
        \centering
        \caption{Filtrating duplicate ballots and separating them into valid and invalid ballot lists.}
        \label{tbl4.1}
        \begin{tabular}{|c|c|c|c|c|c|c|}
            \hline
             $R$ &	$s_{12}$&	$h_V$ & C$^a$ &	$V_1$	& $V_2$	& Actions \\
            \hline
            \hline
            
            0$^@$&	0&	0&	0&	$\times$$^\#$&	$\times$&	add one ballot in $List^{VB}$ and reject other \\
            0&	0&	0&	1$^{@@}$&	$\times$&	$\times$&	add one ballot in $List^{VB}$ and other in $List^{IB}$\\
            $\times$&	1&	$\times$&	$\times$&	$\times$&	$\times$&	both ballot are add in $List^{IB}$ \\
            1&	$\times$&	$\times$&	$\times$&	$\times$&	$\times$&	both ballot are add in $List^{IB}$ \\
            $\times$&	$\times$&	1&	$\times$&	$\times$&	$\times$&	both ballot are add in $List^{IB}$ \\
            1&	1&	1&	$\times$&	$\times$&	$\times$&	both ballot are add in $List^{VB}$ \\
            \hline
    \end{tabular}
    \footnotesize{\\$^@$if the corresponding parameter in two ballots has the same values,  $^{@@}$if the corresponding parameter in two ballots has the different value, $^a$candidate vote and $^\#$denotes don’t care.}
\end{table}

\begin{algorithm} 
	
	\begin{algorithmic}[1]
	    \State ECA prepares two list: $List^{VB}$ and $List^{IB}$
	    \State Pick two tuples
	    \State 
	    {
	   \hspace{10mm} $B_i=<s_{12i},R_i,h_{Vi},V_{1i},V_{2i},C_{name_i}>$ and\\
	   \hspace{10mm} $B_j=<s_{12j},R_j,h_{Vj},V_{1j},V_{2j},C_{name_j}> \in List^B$
	    }
	    \State Using Table \ref{tbl4.1}, ECA filters the two ballots into two lists $List^{VB}$ and $List^{IB}$. 
	    \State Publishes $List^{VB}$ and $List^{IB}$. 
    \end{algorithmic} 
    \caption{Vote Counting} 
    \label{alg4.5}
\end{algorithm}

Algorithm (\ref{alg4.5}) counts the valid ballot from the pre-maintained list $List^B$. Suppose two ballots in list $List^B$ are $B_i=< R_i,s_{12i},h_{Vi},C_{name_i},V_{1i},V_{2i}>$  and $B_j=<R_j,s_{12j},h_{Vj},C_{name_j},V_{1j},V_{2j}>$. The ECA creates lists $List^{VB}$, which contains the valid ballot details, and $List^{IB}$, which contains the invalid ballot details. The ECA filters the duplicate ballot and separates it into $List^{VB}$ or $List^{IB}$. The ECA performs $B_i \oplus B_j$, where $\oplus$ denotes the exclusive-or operation that outputs the set of bits for each ballot’s parameters. The resultant bits match with the tuple of Table \ref{tbl4.1} that decide which corresponding action has to be performed with the two ballots. In the end, the ECA publishes both lists: $List^{VB}$ and $List^{IB}$. Algorithm (\ref{alg4.6}) enables a voter to check if his vote is present in the ballot list. If a voter finds that his receipt is present in $List^{VB}$ and, using Equation (\ref{eq4.13}) he checks the ballot that he intended. Also, anyone can find that a receipt is recorded in $List^{VB}$  and can check if the vote is recorded as the voter intended using Equation (\ref{eq4.14}). Further, anyone can count ballots from the published list $List^{VB}$ and can verify that all ballots are correctly counted. 

\begin{algorithm} 
	\caption{Auditing} 
	\begin{algorithmic}[1]
	    \State	On given $<s_Rcpt, P_{n_c}>$, voter finds his ballot in lists $List^{VB}$ and $List^{IB}$
        \State Voter checks whether his ballot is correctly recorded as he intended, using Equation (\ref{eq4.13}).
        \State{
        \begin{equation} \label{eq4.13}
            Q_V s_{Rcpt}P=Q_EH_5(B,P_{n_c})(Q_E^{-1}Q_V R_E+s_V P-R_V )+Q_V P_{n_c}
        \end{equation}
        }
        \State Any entity, say $X$, can check using Equation (\ref{eq4.14}) whether the ballot is correctly recorded as the voter intended without knowing the actual vote.
        \State{ 
        \begin{equation} \label{eq4.14}
            Q_X s_{Rcpt}P=Q_EH_5(B,P_{n_c})(Q_E^{-1}Q_X R_E+s_X P-R_X)+Q_XP_{n_c}
        \end{equation}
        }
    \end{algorithmic} 
    \label{alg4.6}
\end{algorithm}

\section{Security Analysis of E2E-VIV System}
Here, we carry out the security analysis of our proposed E2E-VIV systems.
\begin{itemize}
    \item \textit{Voter anonymity}. During and after the election process, the voter’s identity must be hidden from other entities in the system. In the ballot distribution process, Theorem 4.2 ensures the voter’s anonymity. For requesting a ballot and casting a vote, a voter generates the pseudonym public-private key pair ($R,<a,b,c,d,e,f>$) instead of using his private key ($d_V, R_V$). In the vote casting phase, the ECA verifies the blank ballot BB that corresponds to the pseudonym public key $R$ without knowing the information about the voter’s real identity. In order to protect the voter’s biometric information, we use the pre-image collision-resistant hash function $H_3$ that encrypts the biometric information. So, the use of pseudonym public and private key pair, pre-image resistant hash function $H_3$ and Theorem 4.2 prove that our system preserves the voter’s anonymity. 
    \item \textit{Resistant to Bribery and Coercion}. After the election, in our proposed system, the ECA issues a receipt <$P_{n_c}, Rcpt,s_{Rcpt}$> to each voter, which is only the witness to the voter to prove that he has cast his vote to the intended one. In our proposed system, the ECA adds a random number $n_c$ to each receipt to ensure the freshness of the ballot. Any curious voter can show his choice of vote to others only if he guesses the correct value of $n_c$ which is equivalent to computing the ECDLP. Thus, our proposed scheme is resistant to bribery and coercion attacks. 
    \item \textit{Voter’s eligibility/Authentication}. The use of a functional digital signature authenticates a voter against his $ID_V$ and biometric details $h_{3V}$ and stores $h_{3V}$ in list $List^{RV}$ during the registration process. In blank ballot distribution, the ECA maintains a list $List^{VV}$ for those valid voters who have been already issued a ballot. From a given blank ballot $BB=<b_{R1},b_{R2},R,h_V,h_{3V},\\ U_1,U_2>$, the ECA authenticates the voter only if $U_1==(H_2(R,h_V)P-U_2)$, and $h_{3V}$ is in $List^{RV}$, not in $List^{VV}$. The consistency of the first condition is proved as follows. From LHS,  
    \begin{align*}
        U_1&=(ab_{R1}+bb_{R2})P
            =(ec(H_2 (R,h_V)-r)+fd(H_2 (R,h_V )-r))P\\
&=(H_2 (R,h_V )-r)(ec+fd)P
=H_2 (R,h_V )P-rP
=H_2 (R,h_V )P-U_2
    \end{align*}
    The above equalities show the consistency of the first conditions. 
    \item \textit{Vote Integrity}. During the ballot-issuing process, the ECA issues a blank ballot to a voter whose integrity has been proved in Theorem 4.2. During the vote-casting phase, the integrity of the vote is ensured by the short signature scheme whose security is equivalent to solving the GDH problem \cite{boneh2001short}. Thus, the proposed system achieves vote integrity during and after the election process.
    \item \textit{Uniqueness and Fairness (ballot stuffing)}. In the proposed system, a voter assists a pair of pseudonym public and pseudonym private keys for requesting a blank ballot from the ECA. Before issuing a ballot to a voter, the ECA prepares a list of valid voters to whom the ballots have already been issued. In order to authenticate a voter against his voter’s pseudonym public key $R$ and biometric information $h_{3V}$, ECA checks if $U_1==(H_2 (R,h_{3V})P-U_2)$ holds and $h_{3V}$ is in lists $List^{RV}$, not in $List^{VV}$. Now, the ECA issues a fresh blank ballot $BB$ to the voter and adds $h_{3V}$ in $List^{VV}$. Thus, a voter obtains only one blank ballot and can cast only one vote. Therefore, the proposed system is resistant to ballot stuffing.
    \textit{Verifiability}. After the election, the ECA publishes the list of invalid ballots $List^{IB}$ and valid ballot list $List^{VB}$. From given $<P_{n_c},Rcpt,S_{Rcpt}>$, an eligible voter can check if $<P_{n_c},Rcpt,S_{Rcpt}>$ belongs to $List^{IB}$ or $List^{VB}$. The voter ensures that his vote is counted in the election process if the parameters $<P_{n_c}, Rcpt, S_{Rcpt}>$ satisfy Equation (\ref{eq4.12}). The consistency of Equation (\ref{eq4.13}) is similar to the consistency of the IDBS-II scheme, which is proved in Theorem 1. 
\end{itemize}

This proves the consistency of Equation (\ref{eq4.12}). Similarly, anyone in the system can also check participation of receipt using Equation (\ref{eq4.13}). Since a voter is allowed to check the ballot’s status, it can bring the coercion problem where the voter can check if his vote was correctly counted or not because the voter can show the value of the vote to anyone. In order to overcome this inconsistency, the proposed system publishes two lists: invalid ballot and valid ballot lists, so that he can check to which list his ballot belongs. In order to count the valid voter, anyone can verify that only valid votes were counted. Thus, the ECA proves that all receipts of the votes have been counted corresponding to the valid ballots that were previously verified and stored. Therefore, the proposed system achieves individual verifiability and universal verifiability. Since the proposed system uses the fingerprint and iris sensor-based smartphone as a voting device, the voter can cast a vote at any place. The use of smartphones with sensor technologies makes the E2E-VIV system convenient and user-friendly for non-technical users. The performance efficiency shows that the proposed system is efficient as compared to other systems/schemes. Also, the system provides an early voting facility for those voters who are unable to access the network during the election due to any reason.

\section{Performance Analysis}
This section provides a comparison of our E2E-VIV system with the related electronic voting systems and examines their performance. 

\subsection{Experiment Simulation}

\begin{table} 
        \centering
        \caption{Number of machine cycles required by cryptographic operations.}
        \label{tbl4.2}
        \begin{tabular}{|c|c|c|}
            \hline
             Notation &	Operations	& \# Cycles (in Kilo) \\
            \hline
            \hline
\#Pr    &RSA public key &	766 \\
\#Sk	&RSA private key	& 51,805 \\
\#Exp	&DSA exponentiation &	17,334 \\
\#Sm	&ECC scalar point multiplication &	300 \\
\#Pm	&Pairing multiplication 	& 450 \\
\#MTP	&Map-To-Point hash function	& 310 \\
\#Bp	&Bilinear pairing	& 2,100 \\
        \hline
    \end{tabular}
\end{table}

Suppose the Weil pairing is defined over the Type-F curve (BN curve) of PBC library \cite{lynn2010pairing} with 256-bit of one group and 512 bit of another group, and embedding degree is 12, whose security level is identical to 3072-bit RSA. For 128 bit of AES security of BN curve $E/\mathbb{F}_P:y^2=x^3+b$, where $b \ne 0$, gives the group size $|\mathbb{Z}_q |$ = 256 bit, the field size of $\mathbb{G}_1$, i.e., $|\mathbb{G}_1 |$  = 160 bit, the field size of $\mathbb{G}_2$, i.e., $|\mathbb{G}_2 |$  = 512 bit, and the field size of $\mathbb{G}_T$, i.e., $|\mathbb{G}_T|$ = 3072 bit. To compare our scheme with the non-ECC based schemes \cite{chen2014secure, chung2009approach, li2009verifiable,wu2014electronic}, we consider two large prime numbers of 3072 bit, i.e., $|p_1|=|p_2|$=3072 bit, such that their product is hard to factorize and consider a large prime p of 1024 bit and a prime factor q of 160 bits, i.e., $|p|$ = 1024 bit, and $|q|$ =160 bit, such that $(p-1)|q$. Further, we assume that $|C_{name}|=|ID|=|T|$ = 80 bit.

\subsection{Implementation and Benchmark}

Here, we compare the performance of our proposed internet-voting system with related schemes \cite{lopez2014pairing}, \cite{yang2018secure}, \cite{chen2014secure}, \cite{chung2009approach}, \cite{li2009verifiable}, and \cite{wu2014electronic}, in terms of machine cycles obtained by simulating experiment on \textit{Intel(R) Core(TM) i7-2600K CPU @ 3.4 GHz, and 8 GB of RAM}, using \textit{gcc 4.6}. We consider the same methodology as discussed in \cite{yang2018secure} to estimate the machine cycles consumed by operations (for example, private and public operations based on RSA, operations on elliptic curve and pairing). Table \ref{tbl4.1} summarizes the notations and number of machine cycles consumed by the required cryptographic operations.

Table 4.2 shows the performance of our system with the existing systems \cite{lopez2014pairing}, \cite{yang2018secure}, \cite{chen2014secure}, \cite{chung2009approach}, \cite{li2009verifiable}, and \cite{wu2014electronic}. The proposed system and  system \cite{lopez2014pairing} use elliptic curve cryptography, and their security is based on ECDL and GDH problems. The systems \cite{yang2018secure}, \cite{chen2014secure}, \cite{chung2009approach}, \cite{li2009verifiable}, and \cite{wu2014electronic} are based on the traditional RSA public key cryptosystem and their security is based on the discrete logarithm problem (DLP) and integer factorization problems (IFP). It can be seen from Table \ref{tbl4.2} that the ECC-based operations (scalar multiplication and addition) are efficient than the RSA based operations.
\begin{table} 
        \centering
        \caption{Efficiency comparison of our scheme with related schemes}
        \label{tbl4.3}
        \begin{tabular}{|c|c|c|c|c|c|}
            \hline
            Scheme & \#Cycles &	Blank Ballot  & Vote Ballot & Cryptographic  &	Security  \\
             & & Size (in Bytes) &  size (in Bytes) &  primitives &	 Assumption \\
            \hline
            \hline
            \cite{lopez2014pairing} & 20,940,500 &	128B &	202B &	Short signature &	GDHP \\
             & & & & Blind signature &	DDHP \\
            \cite{yang2018secure} & 173,340,000 & 1152B & 2098B & HE$^*$ & DDHP \\
            & & & & ZKP$^\#$ &	DDHP \\
            \cite{chen2014secure} & 364,935,431 & 1152B &	2304B &	Digital signature &	IFP$^a$ \\
			 & & & & Blind signature &	IFP \\
            \cite{chung2009approach} & 211,053,653 & 768B &	1930B &	Digital signature	& IFP \\
            & & & & Blind signature &	IFP \\
            \cite{li2009verifiable} &77,745,803	& 1152B &	1152B &	Digital signature &	IFP \\
            & & & & 			Blind signature  & 	IFP \\
            \cite{wu2014electronic} & 160,015,290 &	188B &	296B &	Digital signature &	IFP \\
            & & & & 				Mix Net &	DLP$^@$ \\
            Our & 	9,620,000 &	96B &	212B &	Short signature &	GDHP \\
            & & & & 				IDBS &	ECDLP \\

        \hline
    \end{tabular}
    \footnotesize{$^a$integer factorization problem, $^@$discrete logarithm problem, $^\#$Zero Knowledge proof, $^*$Homomorphic Encryption}
\end{table}

Table 4.2 shows that our proposed system needs 16*\#Sm + 2*\#Bp + 2*\#MTP = 9,620  Kcycles, whereas the schemes \cite{lopez2014pairing}, \cite{yang2018secure}, \cite{chen2014secure}, \cite{chung2009approach}, \cite{li2009verifiable}, and \cite{wu2014electronic} need 20,940, 173,340, 364,935, 211,053, 477,746, and 160,015 Kcycles, respectively. The computation cost of our system is much better, i.e., it takes 46\%, 5\%, 2\%, 4\%, 2\% and 6\% of the computation costs of the schemes  \cite{lopez2014pairing}, \cite{yang2018secure}, \cite{chen2014secure}, \cite{chung2009approach}, \cite{li2009verifiable}, and \cite{wu2014electronic}, respectively, shown in Figure (\ref{fig4.1}). For a blank ballot, our system needs 3*32 = 96 bytes, whereas the schemes \cite{lopez2014pairing}, \cite{yang2018secure}, \cite{chen2014secure}, \cite{chung2009approach}, \cite{li2009verifiable}, and \cite{wu2014electronic} need 128, 1152, 1152, 768, 1152, and 188 bytes, respectively.  For a voting ballot, our system needs 4*32+1*64 + (80)/8 +(80)/8 =212 bytes, whereas the schemes \cite{lopez2014pairing}, \cite{yang2018secure}, \cite{chen2014secure}, \cite{chung2009approach}, \cite{li2009verifiable}, and \cite{wu2014electronic} need 202, 2098, 2304, 1930, 1152, and 296  bytes, respectively. From Figure (\ref{fig4.2}), we observe that the ballot bandwidth in our scheme is much better, i.e., it takes 75\%, 8\%, 8\%, 12\%, 8\% and 51\% of ballot bandwidth size of the schemes \cite{lopez2014pairing}, \cite{yang2018secure}, \cite{chen2014secure}, \cite{chung2009approach}, \cite{li2009verifiable}, and \cite{wu2014electronic} respectively, and the vote-ballot bandwidth in our scheme is 105\%, 10\%, 9\%, 10\%, 18\% and 68\% of the vote-ballot bandwidth of  schemes \cite{lopez2014pairing}, \cite{yang2018secure}, \cite{chen2014secure}, \cite{chung2009approach}, \cite{li2009verifiable}, and \cite{wu2014electronic}, respectively.

\begin{figure}
  \centering
  \includegraphics[width=0.8\linewidth]{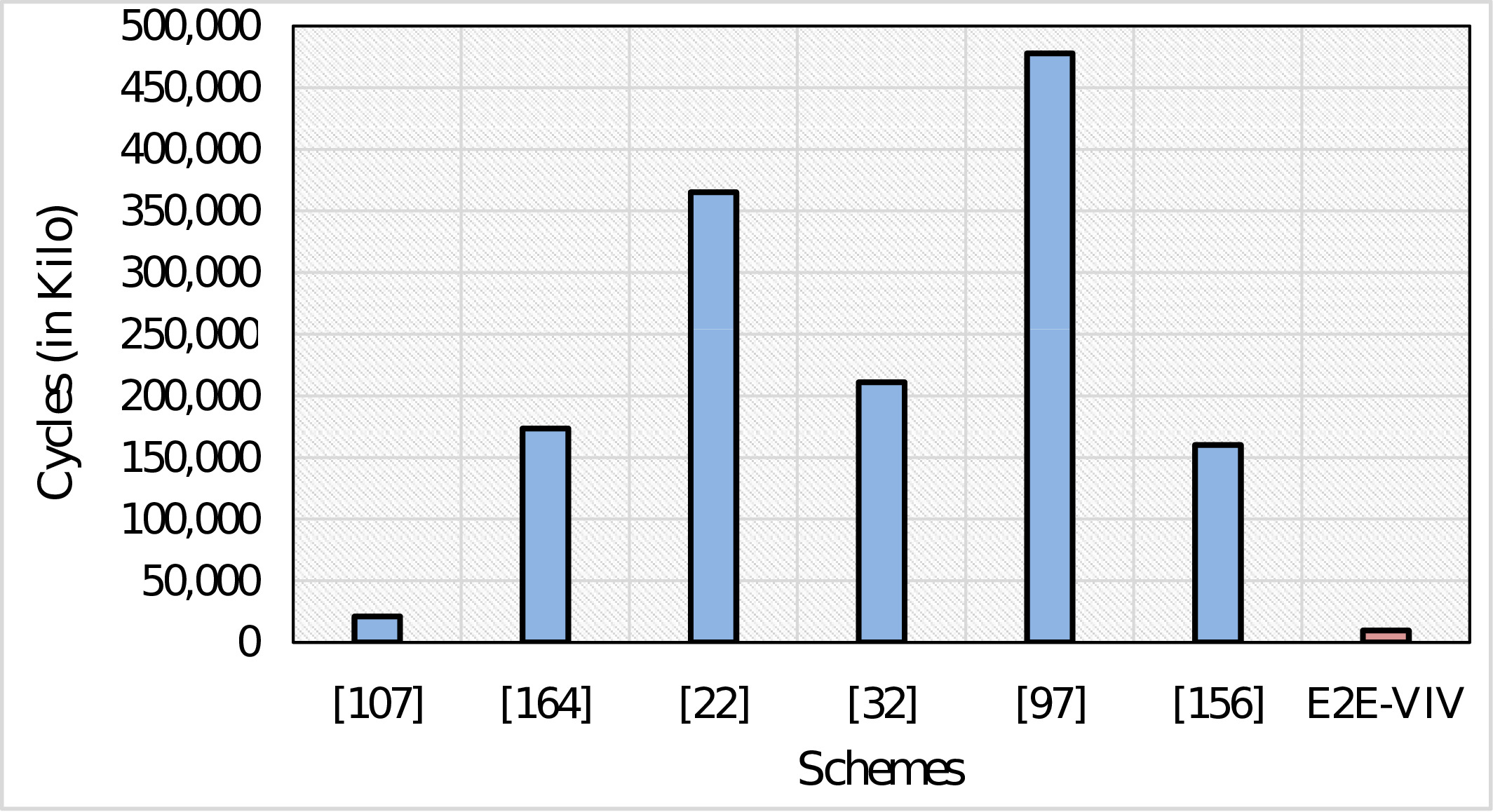}
  \caption{Computation costs of proposed E2E-VIV and other systems}
\label{fig4.1}
\end{figure}

\begin{figure}
  \centering
  \includegraphics[width=0.8\linewidth]{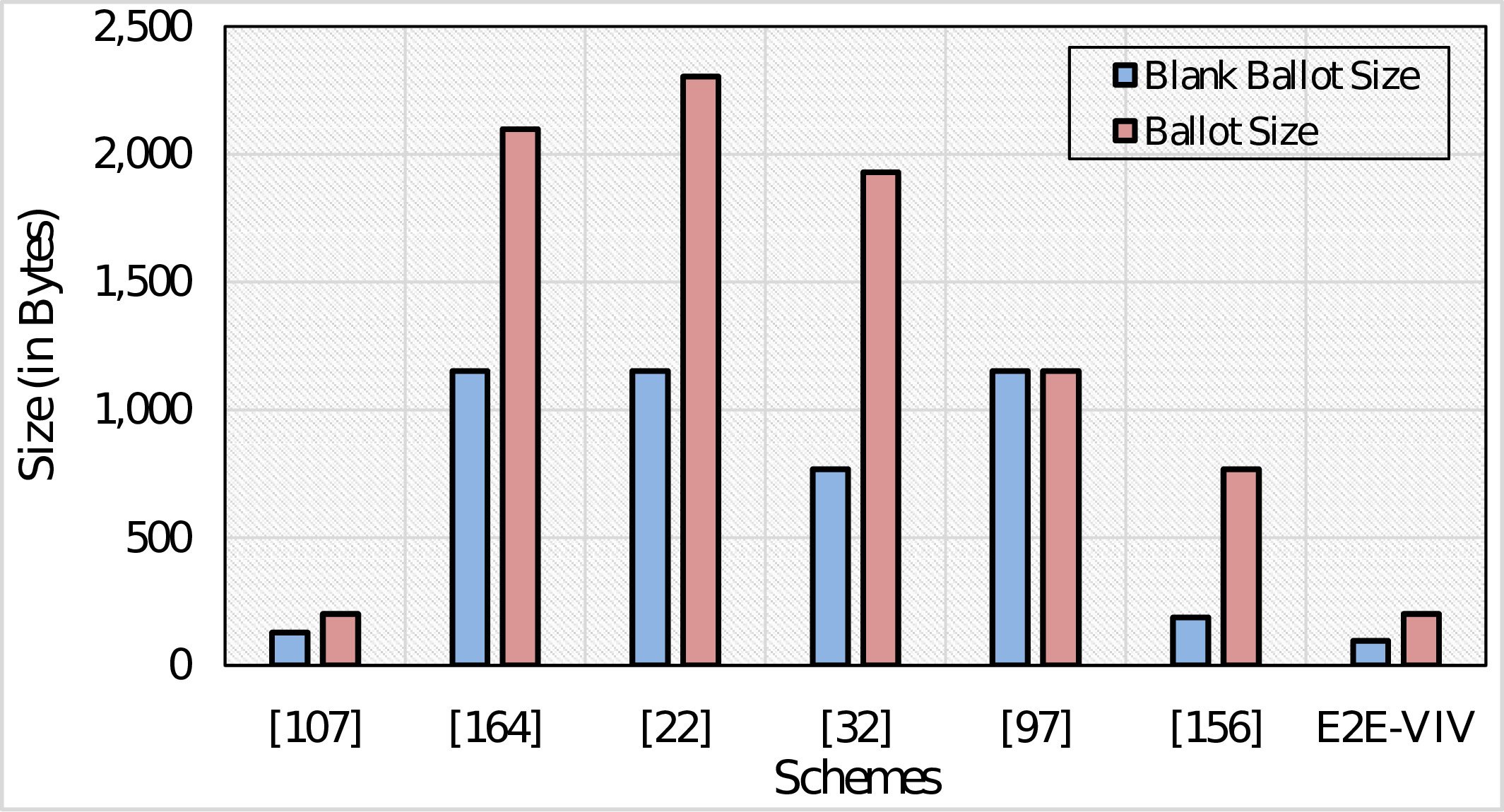}
  \caption{Ballot and vote-ballot size of proposed E2E-VIV and other systems}
\label{fig4.2}
\end{figure}

Now, we compared our E2E-VIV system with related voting systems, such as \cite{lopez2014pairing}, \cite{yang2018secure}, \cite{chen2014secure}, \cite{li2019multi} \cite{chung2009approach}, \cite{li2009verifiable}, and \cite{wu2014electronic}, in terms of following parameters, where $C_1$: voter’s anonymity, $C_2$: resilient to coercion and bribery, $C_3$: Eligibility, $C_4$: Integrity, $C_5$: Fairness, $C_6$: Mobility, $C_7$: Efficiency, $C_8$: Uniqueness, $C_9$: Accuracy, $C_{10}$: Convenience, $C_{11}$: Verifiability, $C_{12}$: early voting, $C_{13}$: Batch verifiability, $C_{14}$: E2E-VIV, as summarized in Table \ref{tbl4.3}. As evident from this table, our E2E-VIV system provides all security features, but no existing scheme satisfies all of them. Thus, the E2E-VIV system is suitable for implementing a voting system in an internet environment with E2E verifiable feature, shown in Table in \ref{tbl4.4}.

\begin{table}
        \centering
        \caption{Security features of our System and existing Schemes}
        \label{tbl4.4}
        \begin{tabular}{|c|c|c|c|c|c|c|c|c|c|c|c|c|c|c|}
            \hline
            Scheme & $C_1$ & $C_2$& $C_3$& $C_4$& $C_5$& $C_6$& $C_7$& $C_8$& $C_9$& $C_{10}$& $C_{11}$& $C_{12}$& $C_{13}$& $C_{14}$ \\
            \hline
            \hline
\cite{lopez2014pairing} & $\checkmark$ & $\checkmark$& $\checkmark$& $\checkmark$& $\checkmark$ & $\times$ & Avg & $\checkmark$& $\checkmark$ & Avg & $\checkmark$& $\times$ & $\times$& $\times$ \\
\cite{yang2018secure} & $\checkmark$ & $\checkmark$& $\checkmark$& $\checkmark$& -- & $\checkmark$ & Avg & $\checkmark$ & $\checkmark$& Avg & $\checkmark$& $\times$ & $\times$& $\checkmark$ \\
\cite{chen2014secure} & $\checkmark$ & $\checkmark$& $\checkmark$& $\checkmark$ & $\checkmark$& $\checkmark$ & Avg & $\checkmark$& $\checkmark$ & high & $\checkmark$& $\times$ & $\times$& $\times$ \\
\cite{li2019multi} & $\checkmark$ & -- & $\checkmark$& $\checkmark$& -- & $\checkmark$ & High & $\checkmark$ & $\checkmark$& High & $\checkmark$& $\times$ & $\times$& $\times$ \\
\cite{chung2009approach} & $\checkmark$ & $\checkmark$& $\checkmark$& $\checkmark$ & $\checkmark$& $\checkmark$& Low & $\checkmark$& $\checkmark$ & High & $\checkmark$& $\times$ & $\times$& $\times$ \\
\cite{li2009verifiable} & $\times$ & $\checkmark$ & $\checkmark$& $\checkmark$& $\checkmark$ & $\checkmark$ & Low & -- & $\checkmark$ & Avg & $\checkmark$& $\times$ & $\times$& $\times$ \\
\cite{wu2014electronic} & $\checkmark$ & $\checkmark$& $\checkmark$& $\checkmark$ & $\checkmark$& $\checkmark$& Avg & $\checkmark$ & $\checkmark$& Avg & $\checkmark$& $\times$ & $\times$& $\times$ \\
Our  & $\checkmark$ & $\checkmark$& $\checkmark$ & $\checkmark$& $\checkmark$ & $\checkmark$& High & $\checkmark$ & $\checkmark$& High & $\checkmark$ & $\checkmark$& $\checkmark$& $\checkmark$ \\

        \hline
    \end{tabular}
\end{table}

\subsection{Batch Verification}
This section illustrates the performance of systems, in terms of computation cost on verifying the blank ballot and vote ballot simultaneously. For convenience, we suppose 20 ballots as a batch. Figure (\ref{fig4.3}) compares the machine cycles (in KCycle) consumed for verifying the batch of 20 valid blank ballots simultaneously, 20 valid/invalid blank ballots simultaneously and 20 valid/invalid blank ballots individually. Similarly, Figure (\ref{fig4.4}) compares the machine cycles (in KCycle) consumed for verifying the batch of 20 valid vote ballots simultaneously, 20 valid/invalid vote ballots simultaneously and 20 valid/invalid vote ballots individually.
Suppose ECA obtains  $BB_i= <s_i,R_i,h_Vi>$ from $i^{th}$ voter, $1 \le i \le n$. The ECA verifies n valid blank ballots using Equation (\ref{eq4.15}). 

\vspace{-10mm}
\begin{equation} \label{eq4.15}
    \sum_{i=0}^nH_2(R_i,h_{Vi})d_EP= \sum_{i=0}^ns_iP+\sum_{i=0}^nR_i
\end{equation}

Similarly, the ECA verifies multiple ballots using Equation (\ref{eq4.16}).
\vspace{-5mm}

\begin{equation} \label{eq4.16}
    e(\sum_{i=0}^nV_{1i} ,d_EP) = e(\sum_{i=0}^nH_4(C_{name_i}),\sum_{i=0}^nV_{2i})
\end{equation}

The consistency of  (\ref{eq4.15}) is verified as follows. From RHS of (\ref{eq4.16}), 

\vspace{-15mm}

\begin{align*}
    \sum_{i=0}^n(s_i)P+\sum_{i=0}^nR_i &= \sum_{i=0}^n(s_1i+s_2i)P+\sum_{i=0}^nR_i \\
&=\sum_{i=0}^n(s'_1i a_i-c_i+s'_2i b_i-d_i)P+\sum_{i=0}^nR_i \\
&=\sum_{i=0}^n(d_E H_2 (R_i,h_{Vi})(e_i c_i+f_i d_i)-(e_i c_i+f_i d_i)r_id_E- a_1a_i-c_i- a_2b_i-d_i)P+\sum_{i=0}^nR_i \\
&=\sum_{i=0}^nd_E H_2 (R_i,h_{Vi})P-\sum_{i=0}^n(r_i (R_E+Q_E P_0)+ R_{1i}+R_{2i})+\sum_{i=0}^nR_i \\
&=\sum_{i=0}^nd_E H_2 (R_i,h_{Vi})P-\sum_{i=0}^n(R_i)+\sum_{i=0}^nR_i \\
&=\sum_{i=0}^n(H_2 (R_i,h_{Vi} ))d_E P\\
\end{align*}

\vspace{-15mm}

This proves the consistency of (\ref{eq4.15}). Similarly, the consistency of (\ref{eq4.16}) is verified as follows. From LHS of Equation (\ref{eq4.16}), we have

\vspace{-15mm}

\begin{align*}
e(\sum_{i=0}^nV_{i1},d_E P)&=\sum_{i=0}^na_iH_4(C_{name_i},d_E P)\\
&= e(\sum_{i=0}^nH_4(C_{name_i}),\sum_{i=0}^na_i(R_E+Q_EP_0 ) ) \\
&= e(\sum_{i=0}^nH_4(C_{name_i}),\sum_{i=0}^nV_{2i} )
\end{align*}
\vspace{-15mm}

This proves the consistency of (\ref{eq4.16}).

\begin{figure}
  \centering
  \includegraphics[width=0.8\linewidth]{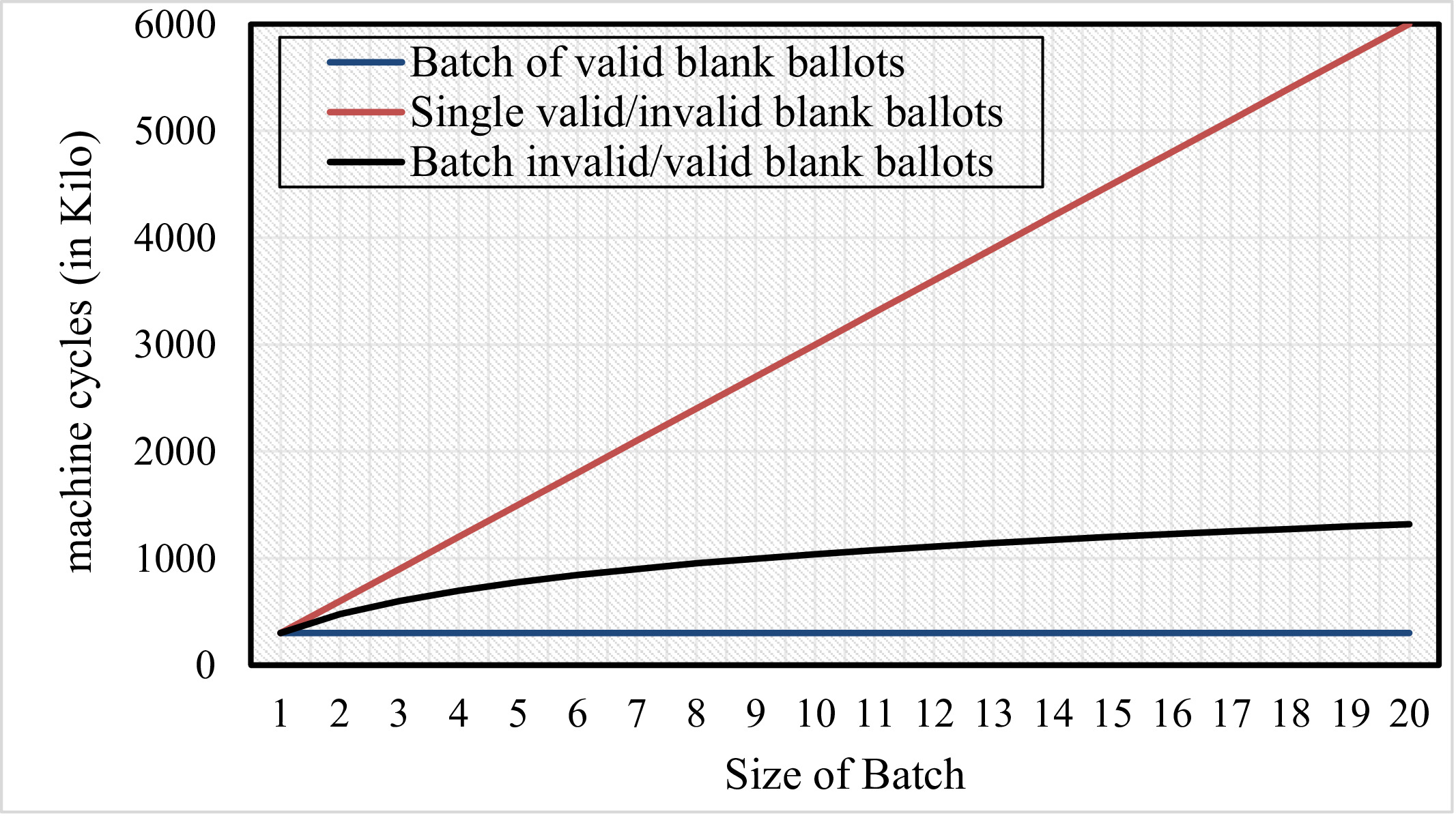}
  \caption{Computational cost of verifying batch of blank ballots of different size}
\label{fig4.3}
\end{figure}

\begin{figure}
  \centering
  \includegraphics[width=0.8\linewidth]{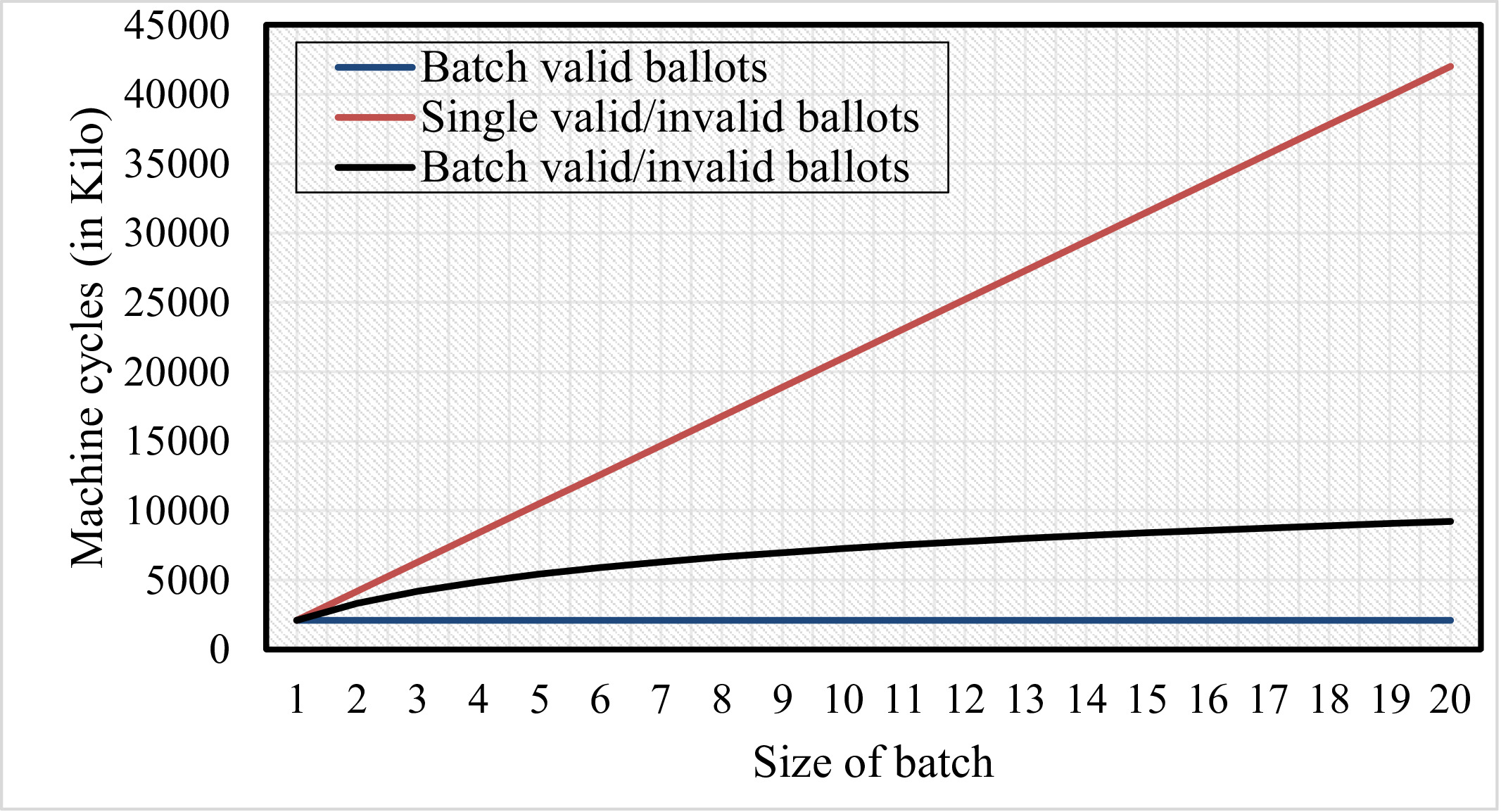}
  \caption{Computational cost of verifying batch of ballots of different size}
\label{fig4.4}
\end{figure}

\textit{Analysis}: The Equations (\ref{eq4.15}) and (\ref{eq4.16}) verify the multiple blank ballots and ballots, respectively, if all blank ballots and ballots are valid, i.e., they individually verify Equations (\ref{eq4.15}) and \ref{eq4.16}), respectively. If any ballot is invalid or modified, the Equations (\ref{eq4.15}) and (\ref{eq4.16}) will not be satisfied. For the batch of valid and invalid ballots, the proposed system uses the divide and conquer mythology to optimize the computation cost. The proposed system divides $n$ ballots into two $n/2$ ballots, and checks Equations (\ref{eq4.15}) and (\ref{eq4.16}). If the ballots satisfy these equations, stop the process; otherwise, the process is repeated until the size of the ballot becomes one. Thus, the proposed system has $Olog(n)$ time complexity for the batch of $n$ invalid and valid ballots, the complexities for n valid ballots and single ballot are $O(1)$ and $O(n)$, respectively, as shown in Table \ref{tbl4.5}.

\begin{table}
        \centering
        \caption{Batch verification analysis of blank ballots and ballots}
        \label{tbl4.5}
        \begin{tabular}{|c|c|c|c|c|}
          \hline
         \multicolumn{1}{|c|}{Verification} & \multicolumn{2}{|c|}{Blank ballots} & \multicolumn{2}{|c|}{Blank ballots} \\
            \cline{2-5}
            &Computation  &	Complexity &	Computation	& Complexity \\
            & cost &	 &	 cost	&  \\
            \hline
            \hline
            Single &	2n*\#Sm &	$O(n)$ &	2n*\#Bp &	$O(n)$ \\
            n (valid) &	2*\#Sm	& $O(1)$	& 2*\#Bp &	$O(1)$ \\
            n (invalid \&valid) &	$Log(n)$ & $2*\#Sm$	$OLog(n)$ &	$OLog(n)*2*\#Bp$	& $OLog(n)$\\
            \hline
    \end{tabular}
\end{table}

\section{Summary}
This chapter presents two Identity-Based Blind Signature (IDBS) schemes: the pairing-free IDBS-I and the pairing-friendly IDBS-II. Both proposed systems are secure against the existential forgery attack under chosen message and $ID$. An End-to-End Verifiable Internet Voting (E2E-VIV) system is also designed in this chapter. The E2E-VIV system authenticates each voter using a unique identifier issued by the appropriate authority and their biometric information. The proposed IDBS-II scheme is used to issue a blank ballot to the voter, and the BLS short signature scheme is used to protect the vote from any modification. The proposed IDBS schemes are compared with related existing schemes and have the least machine cycles. The security analysis shows that the proposed schemes are secure against existential forgery attacks under chosen message and $ID$. The E2E-VIV system provides a secure and verifiable voting process. The use of unique identifiers and biometric information ensures that each vote is cast by an eligible voter, and the IDBS-II scheme ensures that each ballot is issued anonymously. The BLS short signature scheme protects the vote from any modification.

\end{doublespace} \label{chapter4}
\begin{savequote}[75mm] 
Reality is created by the mind, we can change our reality by changing our minds.
\qauthor{Plato- Greek philosopher, pedagogue and mathematician} 
\end{savequote}

\chapter{ID-Based Blind Signature with Message Recovery Scheme for Privacy-Preserving in Cloud Storage}
\justify
\begin{doublespace}
Data outsourcing, which allows users and organizations to exploit external distributed servers such as the cloud, has become a notable breakthrough in big data technology. In cloud-enabled systems, a cloud service provider (CSP) manages, maintains, and makes data available to data owners over the Internet. This enables users to store and process their data remotely, providing them with significant benefits such as cost reduction, scalability, and flexibility. However, data outsourcing also presents significant privacy and security challenges. Once the data owner shifts the data to the untrusted cloud, they lose direct control over the data, and the data becomes susceptible to alteration by the CSP or any outsider attacker.

One of the primary concerns of data owners is the privacy of their stored data. The CSP may have access to the data, and they may not have adequate security measures in place to protect the data from unauthorized access. This could lead to data breaches, data theft, and unauthorized access, which can have severe consequences for the data owner. Moreover, the CSP or any outsider attacker could also alter the data, leading to various security breaches such as data tampering, data loss, and data corruption. These attacks could be challenging to detect, and the data owner may not be aware of the alterations until it is too late. Therefore, data owners need to take appropriate measures to ensure the privacy and security of their data when outsourcing it to the cloud. They can use various security mechanisms such as encryption, access control, and secure communication protocols to protect their data from unauthorized access and alteration. Additionally, regular security audits and risk assessments can help identify potential vulnerabilities and threats, enabling data owners to take proactive measures to mitigate the risks.

In this chapter, we propose an identity-based blind signature with a message recovery (IDBS-MR) scheme, which is secured against existential forgery attacks under the adaptive chosen message and ID attacks (EF-ID-CMA) and ECDL problems. Further, we present a \textit{privacy-preserving data outsourcing mechanism} that audits the integrity of stored data for resource-limited devices in cloud computing using the proposed IDBS-MR scheme. In the proposed privacy-preserving scheme, the data owner verifies the blinded signature received from the TPA and the TPA audits data integrity stored on the cloud. The performance analysis shows that our scheme is efficient as compared to related existing schemes.

In the last decades, there have been discussed various primitives to achieve integrity auditing for outsourced data on the cloud—we categorise such primitives into three distinct verification approaches. The first approach is to authenticate the data structure; generally, Merkel hash tree (MHT), to achieve secure search results \cite{bertino2004selective}. The basic idea is to produce an index for data files based on MHT, and the integrity of results can be achieved by re-generating the signature on the root of MHT. Unfortunately, it is expensive in terms of computation and communication. Besides, they do not ensure the consistency of the results. Another approach is the probabilistic integrity auditing approach that allows the user to insert some bogus block in the cloud \cite{wang2010privacy}. The disadvantage of such methods as they do not support common database operations, e.g., projection. The last approach is to use of signature-verification method \cite{pang2009scalable, mykletun2006authentication} that saves computation costs during integrity auditing.

In 2004, Deswarte \textit{et al.} \cite{deswarte2003remote} discussed the concept of integrity auditing and gave the first integrity auditing scheme based on RSA. However, efficiency was an issue that is addressed by Filho \textit{et al.} \cite{gazzoni2006demonstrating}, followed by they gave an improved integrity auditing scheme. Yamamoto \textit{et al.} \cite{yamamoto2007fast} utilized homomorphic hash function and batch processing in order to audit multiple data blocks. Ateniese \textit{et al.} \cite{ateniese2007provable}, in 2007, presented a provable data possession (PDP) scheme that enables a user to verify whether the server possesses data. In the same year, Juels \textit{et al.} \cite{juels2007pors} extended the PDP scheme and presented a proof of retrievability (POR) scheme that ensures the user checks for stored data possession and retrievability in the remote server. Although both schemes \cite{ateniese2007provable, juels2007pors} have high computation costs on the user side, they are not suitable for implementing secure cloud storage. For cloud storage, Wang \textit{et al.} \cite{wang2010privacy} discussed an integrity auditing scheme by employing a third-party auditor (TPA), in which the auditing process is delegated to the TPA.

In the schemes mentioned earlier, the user generates the signature for the data, and TPA audits the data integrity, which can be optimized for the user with a resource-limited device. Recently, Liu \textit{et al.} \cite{liu2018privacy} addressed this issue and presented a privacy-preserving integrity auditing for lightweight devices using a blind signature scheme. We found that such schemes have expensive operations, such as bilinear pairing on the TPA side. Besides, they need certificate maintenance costs since they are based on traditional public key infrastructure (PKI). Using the IBC technique, Zhang \textit{et al.} \cite{zhang2002id} were the first to propose the IDBS schemes. Later, many IDBS schemes have been presented, such as \cite{gao2012round, kumar2017new, islam2015design, galindo2006generic, he2011efficient, kumar2017secure, dong2014efficient, islam2016provably, mao2006linkability, tian2009security, kumar2017untraceable}. In 2007, Han \textit{et al.} \cite{han2005pairing} presented an identity-based blind signature with a message recovery scheme. However, this scheme \cite{han2005pairing} is implemented using pairing on elliptic curves. Since then, very few IDBS-MR schemes have been offered, given in the literature \cite{han2005pairing,elkamchouchi2008new, james2017identity, james2018pairing, diao2013new}. Recently, Verma \textit{et al.} \cite{verma2017efficient} present a new IDBS-MR scheme using pairing whose security is based on the assumption of ROM and the solving k-CAA problem. 

From the above-discussed schemes, it has been noticed that the existing IDBS-MR schemes are designed on pairing on elliptic curves. Hence, due to the high computation cost of the pairing operation, they could not be suitable for the cloud-storage data. Here, we present a new identity-based blind signature scheme with message recovery without bilinear pairing. We first discuss the system architecture followed by its implementation.

\section{System Model}

Here, we discuss the system architecture of privacy-preserving with integrity auditing for cloud storage. 

\subsection{Network Architecture}
The network architecture for privacy-preserving data outsourcing with integrity consists of four entities: private key generator (PKG), third-party auditor (TPA), user, and cloud, as shown in Figure (\ref{fig5.1}).

\begin{figure}
  \centering
  \includegraphics[width=1\linewidth]{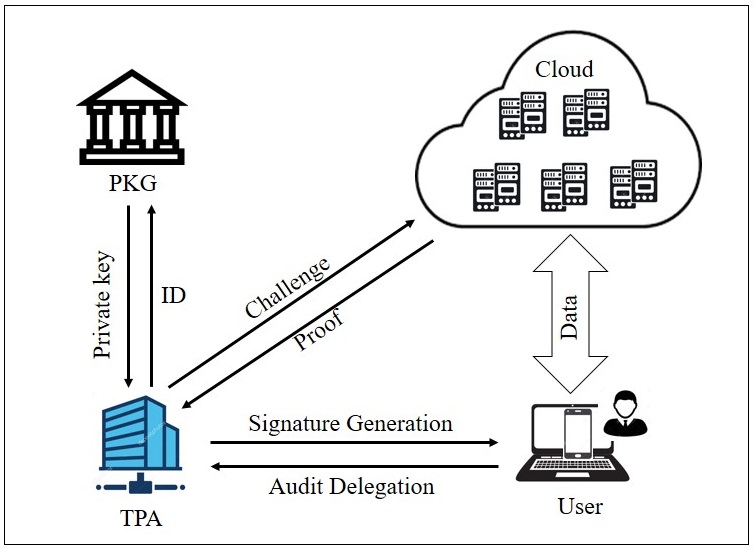}
  \caption{Architecture of proposed Privacy-preserving data outsourcing with integrity auditing}
\label{fig5.1}
\end{figure}

\begin{enumerate}
    \item \textit{Third party Auditor}: TPA is a semi-trusted third-party auditor, which can generate a signature on blinded data blocks requested by the user and can audit the stored cloud data on behalf of the user.
    \item \textit{Private key generator}: PKG generates a private key corresponding to the TPA’s identity and issues it to TPA. \item \textit{Cloud}: Several distributed servers combine to form an untrusted cloud, which is managed and maintained by CSP. CSP may reject the request made by the user or provide an incomplete query to the user in order to save cloud computation and storage. Moreover, the cloud generates storage proof against the audit challenge challenged by the TPA.
    \item \textit{User}: The users are the data owners having resource-constrained devices, for example, smartphones and tablets. In comparing with TPA and CSP, the user has lower storage and computation power. The user wants to use cloud services in a pay-per-use manner.
\end{enumerate}
	
\subsection{Design Goals}

Here, we give the design goals that must achieve the following security requirements in order to enable the privacy-preserving with integrity auditing for cloud data storage:
\begin{itemize}
    \item \textit{Signature delegation and verification}. The user delegates the signing process to TPA since the user has a resource-limited device.
    \item \textit{Integrity auditing}. It allows TPA to validate the consistency of stored data on the cloud on-demand without knowing any information about the original message. 
    \item \textit{Privacy-preserving}. It ensures that no entity can drive users’ data stored on the cloud from the information collected during integrity auditing. 
    \item \textit{Batch verification}. Since the data file is divided into many blocks, which are signed by TPA, it is costly on the user side to verify each signature individually. It enables the user to verify multiple signatures simultaneously. 
    \item \textit{Lightweight}. It ensures that the signature verification on the user side and integrity auditing on the TPA side should be performed with minimum communication as well as computation cost.
\end{itemize}

\subsection{Identity-based Blind Signature with Message Recovery} 
This section gives a brief introduction to identity-based blind signature with message recovery that helps in signature delegation without leaking any content to TSP.

\begin{definition}
(\textit{Identity-Based Blind Signature with data recovery scheme}). The IDBS-MR scheme involves four entities: signer, user, verifier and private key generator (PKG). In the proposed scheme, the signer acts as a TPA, the user is a data owner and PKG generates the private key for TPA. It consists of four randomized probabilistic polynomial-time (PPT) algorithms: setup, extract, blind signature, and verification, as discussed below.
\begin{itemize}
    \item Setup: Using security parameters, the PKG computes the master key and public parameters. PKG keeps the master key and publishes public parameters.
    \item \textit{Extract}: For a given signer’s identity, the PKG computes the private key using its master key and public parameter.
    \item \textit{Blind signature}: The signer and user perform the following steps to sign on a given message. 
    \begin{itemize}
        \item \textit{Commitment}: For the secret number, the signer computes public parameters, passes them to the user and keeps the secret number to him. 
        \item \textit{Blinding}: On a given public parameter and message M, the user blinds the message using random secret numbers. The user then requests the signer for the signature on the blinded message.
        \item \textit{Signature}: For each blinded message, the signer computes the blind signature using his private key and outputs it to the user.
        \item \textit{Unblinding}: The user retrieves the blinded signature using his secret key and outputs the original signature.
    \end{itemize}
    \item \textit{Verification}: On a given blind signature, Verifier recovers the message and then verifies the signature.
\end{itemize}

\end{definition}

\subsection{Security Threat}

To discuss the security of our proposed IDBS-MR schemes, we go through the definition of Zhang \textit{et al.}’s identity-based blind signature scheme [10] and Tso \textit{et al.}’s identity-based signature with message recovery scheme [39] where they discussed the unlinkability and unforgeability. Thus, the proposed IDBS-MR scheme is considered to be secure if it is secured against existential forgeable attack under the chosen message and ID attack (EF-ID-CMA) and achieves the blindness property.

\begin{definition}
Definition 5.2. (EF-ID-CMA). We discuss the unforgeability of our proposed IDBS-MR scheme through the following game playing between a forger $F$ that acts as a malicious user and challenger $Ch$ that acts as the honest signer under adaptive chosen message and identity attack in the random oracle model (ROM).

\textbf{Setup}: The challenger $Ch$ runs the setup algorithm and computes the master key and public parameters. The $Ch$ responds public parameter to $F$. 

\textbf{Oracle}: $F$ performs the following oracles. 
	\textit{Extract oracle}: For given $ID$, forger $F$ requests to run the extract algorithm. The $Ch$ runs this oracle to compute the private key corresponding to an identity $ID_i$, where $1 \le i \le q_k$ and sends it to $F$. Besides, $Ch$ saves the record in list $L_{ext}$, which is initially empty. \textit{Blind signature oracle}: For a message $M_i\in \{0,1\}^{l2}$ of its choice in an adaptive manner, forger $F$ asks blind signature oracle to obtain the blind signature $\sigma$. The $Ch$ executes the Blind signature oracle and responds to the result to $F$ and saves it in the list $L_{BS}$, which is initially empty.
	
\textbf{Forgery}: At the end, the forger $F$ responds a signature $\sigma^*$ on given message $M^*$ with signer’s identity $ID^*$. The forger $F$ will win the game if it fulfils the following conditions.
\begin{itemize}
    \item $\sigma^*$ is the valid signature against $M^*$ and $ID^*$.
    \item The blind signature oracle has not been queried on $<M^*>$. 
	\item The extract oracle has not been queried on $ID^*$.
\end{itemize}
	
Under chosen message and identity attacks, the proposed IDBS-MR scheme is said to be existentially unforgeable, if any forger $F$ has a negligible probability to succeed in the above game. 
\end{definition}

\begin{definition}
\textit{\textbf{Blindness}}. The blindness security notion can be defined by the adversary $Adv$ that acts as a malicious signer and is engaged with two users $U_0$ and $U_1$ in the following game:

\textit{Setup}: It computes the master key and public parameters and responds to public parameters to $Adv$.
For given $ID$, forger $Adv$ requests to run the extract algorithm. This oracle gives the private key corresponding to an identity $ID_i$, where $1 \le i \le q_k$ and sends it to $Adv$. 
For selective $b \in \{0,1\}$, the $U_0$ and $U_1$ get two distinct message $M_b$ and $M_{1-b}$, respectively. 

$U_0$ and $U_1$ compute $\sigma_b$ (signature on $M_b$) and $\sigma_{1-b}$ (signature on $M_{1-b}$), respectively, and give it to $Adv$. 

At the end, $Adv$ predicts a bit $b'\in \{0,1\}$ and wins the game if $b = b'$ holds with advantage $|Pr[b=b']|\ge 1/2+k^{-n}$. The proposed IDBS-MR scheme is blind if $Adv$ wins the above game with a negligible advantage.
\end{definition}

\section{Proposed ID-Based Blind Signature Scheme with Message Recovery}

This section discusses the implementation of the IDBS-MR scheme, which consists of four PPT algorithms: Setup, Key Generation, Blind signature, and Verification, as defined below. 

\begin{figure}
  \centering
  \includegraphics[width=1\linewidth]{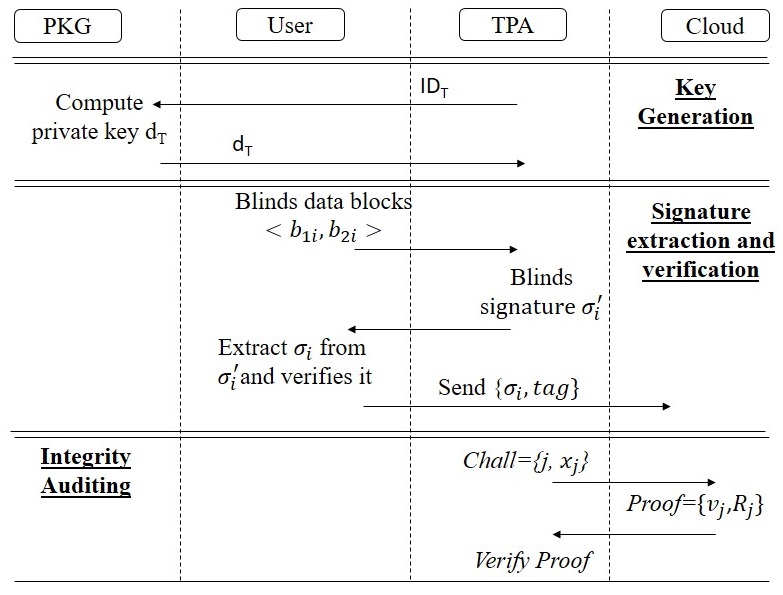}
  \caption{Flow process of proposed privacy-preserving with integrity auditing }
\label{fig5.2}
\end{figure}

\begin{enumerate}
    \item \textit{Setup}: Given a security parameter $k$, the PKG assumes an additive group $\mathbb{G}_1$ of order $q$, where $q$ is large prime number of $k$-bit and $P$ be its generator. Suppose three hash functions $H_1:\{0,1\}^* \times \mathbb{G}_1 \rightarrow \mathbb{Z}_q$, $H_2:\mathbb{G}_1 \rightarrow \mathbb{Z}_q$ and $H_3:\mathbb{G}_1 \rightarrow \{0,1\}^{|q|}$ and two functions $F_1:\{0,1\}^{l_2}  \rightarrow \{0,1\}^{l_1}$ and $F_2: \{0,1\}^{l_1}  \rightarrow \{0,1\}^{l2}$, where $l1$ and $l2$ are two positive integers such that $l_1+l_2=|q|$. Suppose $absc(P)$ gives the $x$-coordinate of point $P$. The PKG chooses a random element $s_0 \in \mathbb{Z}_q$ (its master key) and computes the public key $P_0 = s_0P$. The PKG publishes the public parameter $pp=<\mathbb{G}_1,q,P,P_0,H_1,H_2,F_1,F_2,l_1, l_2,k>$, and keeps $s_0$ secret. 
    \item \textit{Key Generation}: For given signer’s identity $ID_S$, $pp$ and its master key $s_0$, PKG chooses a random number $a\in \mathbb{Z}_q$, computes the signer’s private key $d_{IDS}=a+s_0Q_{IDs}$, where $Q_{IDs}=H_1(A,ID_S)$ and $A=aP$ and gives $<A,d_{IDS}>$ to the signer. 
    \item \textit{Blind signature}: The signer and user perform the following steps to obtain a blind signature on the message.
    \begin{itemize}
        \item \textit{Commitment}: The signer selects two random elements $n_1,n_2 \in \mathbb{Z}_q$, computes $Q_1= n_1 P$  and $Q_2= n_2 P$, and sends them with its public parameter $A$ to the user. 
        \item \textit{Blinding}: For the given received parameters $<Q_1,Q_2,A>$ and message $m \in \{0,1\}^{l2}$, the user selects six elements $g,h,i,j,k,l \in \mathbb{Z}_q$, such that $gcd(i,j)=1$ and $ki+lj=gcd(i,j)$. We use the extended Euclidean algorithm to select elements $k$ and $l$. The user then computes the parameters $b_{M1}$ and $b_{M2}$ given in (\ref{eq5.1})-(\ref{eq5.7}) and asks the signer for a signature on $<b_{M1},b_{M2}>$.
 
        \vspace{-5mm}
        \begin{equation} \label{eq5.1}
            R_1=gQ_1+iP  \hspace{5mm} and \hspace{5mm}   r_1=absc(R_1 )
        \end{equation}
        
        \vspace{-12mm}
        \begin{equation} \label{eq5.2}
            R_2=hQ_2+jP \hspace{5mm} and \hspace{5mm}  r_2=absc(R_2 )
        \end{equation}
        
        \vspace{-12mm}
        \begin{equation} \label{eq5.3}
            r= r_1 r_2
        \end{equation}
        
        \vspace{-12mm}
                \begin{equation} \label{eq5.4}
            u=F_1 (m)||F_2 (F_1 (m)) \oplus m 
        \end{equation} 
        
        \vspace{-12mm}
                \begin{equation} \label{eq5.5}
            R=R_1+R_2+ru(A+Q_{ID} P_0)
        \end{equation}
        
        \vspace{-12mm}
        \begin{equation} \label{eq5.6}
              b_{M1}=kg^{-1} i(H_2(R)-ru)mod q
        \end{equation}
        
        \vspace{-12mm}
        \begin{equation} \label{eq5.7}
            b_{M2}=lh^{-1}j(H_2(R)-ru)mod q
        \end{equation}
        
        \item \textit{Signature}: On received parameters $<b_{M1},b_{M2}>$ and its private key $d_{IDS}$, signer computes the blind signature $<s'_1,s'_2>$, where $s'_1=(d_{IDS} b_M1- n_1 )$ and $s'_2=(d_{IDS} b_{M2}- n_2 )$, and sends $<s'_1,s'_2>$ to the user. 
        \item \textit{Unblinding}: On received parameters $<s'_1,s'_2>$, user sets $s_1=(s'_1g-i )$ and $s_2=(s'_2h-j)$ and computes the original signature $<v,R>$, where $s=(s_1+s_2)$ and  $v=u \oplus H_3 (sP+R)$. The user sends the signature $<v,R,A>$ to the verifier. 
    \end{itemize}
    \item \textit{Verification}: For given signature pair $<v,R,A>$, the user computes u and recovers M', given in Equations (\ref{eq5.8})-(\ref{eq5.9}). 
    \vspace{-5mm}
    \begin{equation} \label{eq5.8}
        u=v\oplus H_3 (H_2 (R)(A+H_1 (A,ID_S ) P_0 )) 
    \end{equation}
    
    \vspace{-12mm}
    \begin{equation} \label{eq5.9}
        m'=F_2 ({_{l_1}}^l|u| ) \oplus|u|_{l_2}^R 
    \end{equation}
\end{enumerate}
	
The user accepts the signature and message $m'$ if and only if ${_{l_1}}^l|u|=F_1 (m')$. This completes the implementation of the proposed IDBS-MR scheme.

\section{Data Outsourcing with Integrity Auditing on Cloud}

Here, we discuss the implementation of privacy-preserving with integrity auditing on the cloud using the proposed IDBS-MR scheme. 

Since the user (data owner) has the limited-resources in terms of computation power and storage space, the signature generation is delegated to TPA in order to reduce the computation cost on the user side. The user blinds the data blocks and requests TPA to sign on his blinded data. TPA generates a blind signature on the blinded data blocks without knowing the original content and responds to it to the user. The correctness of the signature determines the user’s judgment of data integrity in the cloud, which is essential for CSP and the user. Hence, the user receives the corresponding blind signature and verifies its correctness and outsources it to the cloud. Since CSP may respond to an incomplete or invalid output against the user query in order to save its computation overhead, the user needs to check the integrity of the returned output. Therefore, the user requests TPA for integrity auditing of given blocks stored in the cloud. TPA generates an auditing challenge for given blocks and sends it to the cloud for auditing proof generation. Then, the cloud computes the auditing proof and sends it to TPA for integrity auditing. Figure (\ref{fig5.2}). shows the complete process of the proposed scheme.
 
The detail of the proposed scheme is given below. 

\begin{enumerate}
    \item 	\textbf{Systems Setup}
    
    Given a security parameter $k$, the PKG assumes an additive group $\mathbb{G}_1$ of order $q$, where $q$ is a large prime number of $k$-bit and $P$ be its generator. Suppose three hash functions $H_1:\{0,1\}^* \times \mathbb{G}_1 \rightarrow \mathbb{Z}_q$, $H_2:\mathbb{G}_1 \rightarrow \mathbb{Z}_q$, $H_3:\mathbb{G}_1 \rightarrow \{0,1\}^{|q|}$   and $H_4:\{0,1\}^* \rightarrow \mathbb{G}_1$ and two functions $F_1: \{0,1\}^{l_2}  \rightarrow \{0,1\}^{l_1}$ and $F_2:\{0,1\}^{l_1} \rightarrow \{0,1\}^{l_2}$, where $l_1$ and $l_2$ are two positive integers such that $l_1+l_2=|q|$. Suppose $absc(P)$ gives the $x$-coordinate of point $P$. The PKG chooses a random element $s_0 \in \mathbb{Z}_q$ (its master key) and computes the public key $P_0=s_0 P$. The PKG publishes the public parameter $pp=<\mathbb{G}_1,q,P,P_0,H_1,H_2,F_1,F_2,l_1, l_2,k>$, and keeps $s_0$ secret. 
    Suppose the data file of user is denoted as $DF={d_i}$, where $1 \le i \le n$ with corresponding data index are denoted as $I=\{I_1,I_2,..,I_n \}$.
    \item \textbf{Key Generation}
    Given $ID_T$ (TPA’s identity), $PP$ and master secret key $s_0$, PKG generates private key $d_T=a_T+s_0 Q_T$, where $Q_T=H_1 (A_T,ID_T)$ and $A_T=a_TP$ on random chosen integer $a_T \in \mathbb{Z}_q$,  and passes $<A_T,d_T>$ to the TPA. Similarly, PKG computes private key $<A_U,d_U>$ and gives it to the user with identity $ID_U$, where, $d_U=a_U+s_0Q_U$, $Q_U=H_1 (A_U,ID_U)$ and $A_U=a_UP$ on random chosen integer $a_U \in \mathbb{Z}_q$.  
    \item \textbf{Blinded signature delegation} 
    The following phases executed between TPA and user produce a blind signature on data block $d_i \in \{0,1\}^{l2}$, shown in Figure (\ref{fig5.2}). 
    The TPA chooses elements $n_1,n_2 \in \mathbb{Z}_q$, computes $Q_1= n_1 P$  and $Q_2= n_2 P$, and sends $<Q_1,Q_2,A>$ to the user. 
    The user receives the parameters $<Q_1,Q_2,A>$ and picks integers $g,h,i,j,k,l \in \mathbb{Z}_q$, such that $gcd(i,j)=1$ and $ki+lj=gcd(i,j)$. Given data block $d_i$ and file identity $F_{ID}$, the user first hides it and generates the parameters $b_{1i}$ and $b_{2i}$ given in Eq. (5.11)-(5.17) and requests TPA to sign on it.
    \vspace{-5mm}
    \begin{equation} \label{eq5.10}
        R_1=gQ_1+iP  \hspace{5mm} and  \hspace{5mm} r_1=absc(R_1 )
    \end{equation}
    
    \vspace{-12mm}
    \begin{equation} \label{eq5.11}
        R_2=hQ_2+jP \hspace{5mm} and  \hspace{5mm} r_2=absc(R_2 )
    \end{equation}

    \vspace{-12mm}
    \begin{equation} \label{eq5.12}
        r= r_1 r_2 
    \end{equation}
                         
    \vspace{-12mm}                        
    \begin{equation} \label{eq5.13}
        u_i=F_1 (d_i)||F_2 (F_1 (d_i )) \oplus d_i 
    \end{equation}                                         
    \vspace{-12mm}
    \begin{equation} \label{eq5.14}
        R_i=R_1+R_2+ru_i (A+Q_{ID}P_0) 
    \end{equation}
    
    \vspace{-12mm}
    \begin{equation} \label{eq5.15}
          b_{1i}=kg^{-1}i(H_2 (R_i ||F_{ID} ||I_i )-ru_i )
    \end{equation}
    
    \vspace{-12mm}
    \begin{equation} \label{eq5.16}
        b_{2i}=lh^{-1}j(H_2 (R_i ||F_{ID} ||I_i )-ru_i )
    \end{equation} 

    On received parameters $<b_{1i},b_{2i}>$ and its private key $d_{T}$, signer computes the blind signature $\sigma'_i=<s'_{1i},s'_{2i}>$, where $s'_{1i}=(d_T b_{1i}- n_1)$, $s'_{2i}=(d_T b_{2i}- n_2)$ and sends $\sigma'_i$ to the user. 
    
    \item \textbf{Original signature extraction and verification}
    
	On given blinded signature $\sigma'_i$ from the TPA, user extracts the original signature $\sigma_i=<R_i,u_i,s_i>$ using his secret values, i.e., $<g,h,i,j>$ (given in Equations (\ref{eq5.17})-(\ref{eq5.19})). 
	\vspace{-5mm}
	\begin{equation} \label{eq5.17}
	    s_{1i}=(s'_{1i} g-i )
	\end{equation}
	
    \vspace{-12mm}                      
    \begin{equation} \label{eq5.18}
        s_{2i}=(s'_{2i} h-j)
    \end{equation}                                          
    \vspace{-12mm}
    \begin{equation} \label{eq5.19}
        s_i=(s_{1i}+s_{2i})
    \end{equation}
    The user verifies the correctness of the signature using Equation (\ref{eq5.20}).
    \vspace{-5mm}
    \begin{equation} \label{eq5.20}
        s_i P+R_i=H_2 (R_i ||F_{ID} ||I_i)(A+H_1 (A,ID_T) P_0)
    \end{equation}

    If (\ref{eq5.20}) holds, the user outsourced the signature $<\sigma_i,tag>$ on the cloud over a secure channel. 
    
    \item \textbf{Stored data integrity auditing}
    
    The user delegates the TPA and requests him to audit the stored data integrity in the cloud. The user generates the tag for data file $F$ with its identity be $F_{ID}$ as $tag=F_{ID}||Sig(F_{ID})||A_U$, where $Sig(F_{ID} )=d_UH_4(F_{ID})$, and it to the TPA. TPA receives the data file tag and checks its correctness using $e(Sig(F_{ID}),P)=e(H_4 (F_{ID}),A_U+H_1 (A_U,ID_U)P_0) $. The correctness of the equation is verified as follows. 
    
    \begin{align*}
        e(Sig(F_{ID}),P) &=e(d_U H_4 (F_{ID}),P) \\
            &=e(H_4 (F_{ID}),(a_U+s_0 Q_U)P) \\
            &=e(H_4 (F_{ID}),A_U+H_1(A_U,ID_U)P_0 )
    \end{align*}
    If the tag is correct, TPA extracts the data file identifier name $F_{ID}$ from the tag. On given auditing request from the user, TPA picks a random subset $S={s_j} \subset I$ with integer $x_j$ associated with corresponding $s_j$, where $1 \le j \le m$, such that $m<n$. TPA generates a challenge $Chall={j,x_j}$ and sends it to the cloud for auditing proof. The challenge $Chall$ determines the location of the blocks needed to be verified. On receiving $Chall$ from TPA, the cloud computes the auditing proof using Equation (\ref{eq5.21}).

    \begin{equation} \label{eq5.21}
        v_j=u_j \oplus x_jH_3(s_j P+R_j) 
    \end{equation}
    \vspace{-5mm}
     
     Then, the cloud gives the proof $<v_j, R_j>$ to the TPA for integrity auditing of stored data block $d_j$. In order to check the correctness of stored data block $d_j$, TPA checks whether Equations (\ref{eq5.22})-(\ref{eq5.23}) hold. 
     \vspace{-5mm}
     \begin{equation} \label{eq5.22}
         u'_j=v_j \oplus x_j H_3 (H_2 (R_j |(|F_{ID}|)| I_j )(A+H_1 (A,ID_S )P_0 ))
     \end{equation}
     
     \vspace{-12mm}
     \begin{equation} \label{eq5.23}
         d'_j=F_2(({_{l_1}^l}|u'_j|) \oplus |u'_j|_{l_2}^R 
     \end{equation}
     \vspace{-5mm}

    The TPA accepts the challenge associated with the data block ${d'_j}$ if and only if  ${_{l_1}^l}|u'_j|=F_1(d'_j)$.
    
    \item \textbf{Batch signature verification}
    
    Let the user request the TPA to sign p data blocks, where $1 \le p \le n$. Users can verify the correctness of p data blocks in a single equation, given in Equation (\ref{eq5.24}).
    \begin{equation} \label{eq5.24}
        \sum_{i=1}^p(s_i)P+\sum_{i=1}^pR_i =\sum_{i=1}^p(H_2(R_i))(A+H_1 (A,ID_S)P_0)
    \end{equation}

\end{enumerate}

\section{Security Analysis}

Here, we provide the security proof of the proposed IDBS-MR scheme secured against the EF-ID-CMA and blindness. 

\begin{theorem} \label{thm5.1}
\textbf{(Consistency)}. The proposed IDBS-MR scheme is consistent.
\end{theorem}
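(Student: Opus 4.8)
The plan is to verify that the verifier's message-recovery procedure (\ref{eq5.8})--(\ref{eq5.9}) reproduces exactly the embedding built during blinding (\ref{eq5.4})--(\ref{eq5.5}). Consistency reduces to a single group-element identity, namely that the argument fed to $H_3$ during unblinding, $sP+R$, equals the argument recomputed by the verifier, $H_2(R)\bigl(A+H_1(A,ID_S)P_0\bigr)$. Once this identity holds, the mask $H_3(sP+R)$ cancels in (\ref{eq5.8}), the verifier recovers the same $u$ that the user constructed, and the padding-based recovery (\ref{eq5.9}) returns $m'=m$. So the whole argument is: (i) establish the scalar form of $s$; (ii) lift it to the group and prove the identity; (iii) unwind the bit-string decomposition.

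First I would compute $s=s_1+s_2$ by substituting the signer's outputs $s'_1,s'_2$ and the blinding scalars $b_{M1},b_{M2}$ from (\ref{eq5.6})--(\ref{eq5.7}). Since $b_{M1}g=ki\,(H_2(R)-ru)$ and $b_{M2}h=lj\,(H_2(R)-ru)$, collecting terms yields a common factor $(ki+lj)$, and the Bézout relation $ki+lj=\gcd(i,j)=1$ chosen via the Extended Euclidean algorithm collapses this to
\begin{align*}
s = d_{IDS}\bigl(H_2(R)-ru\bigr) - n_1 g - n_2 h - i - j .
\end{align*}
This Bézout cancellation is the one place where the specific choice of $k,l$ matters, so I would flag it explicitly.

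Next I would multiply through by $P$ and reassemble the commitments. Using $Q_1=n_1P$, $Q_2=n_2P$ together with (\ref{eq5.1})--(\ref{eq5.2}) gives $n_1gP+iP=R_1$ and $n_2hP+jP=R_2$, while the key-generation identity $d_{IDS}P=A+H_1(A,ID_S)P_0$ (from $d_{IDS}=a+s_0Q_{IDS}$, $A=aP$, $P_0=s_0P$) converts the leading term. Thus
\begin{align*}
sP = \bigl(H_2(R)-ru\bigr)\bigl(A+H_1(A,ID_S)P_0\bigr) - R_1 - R_2 .
\end{align*}
Adding $R$ and invoking its definition (\ref{eq5.5}), the $-R_1-R_2$ and the $-ru\bigl(A+H_1(A,ID_S)P_0\bigr)$ terms cancel against $R=R_1+R_2+ru\bigl(A+H_1(A,ID_S)P_0\bigr)$, leaving precisely $sP+R=H_2(R)\bigl(A+H_1(A,ID_S)P_0\bigr)$, the target identity.

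Finally I would close the loop: since $v=u\oplus H_3(sP+R)$ and the verifier recomputes the same point, (\ref{eq5.8}) returns $u$ exactly; then, writing $u=F_1(m)\,\|\,\bigl(F_2(F_1(m))\oplus m\bigr)$ with $|F_1(m)|=l_1$ and $|F_2(\cdot)\oplus m|=l_2$, the verifier's split into the leftmost $l_1$ and rightmost $l_2$ bits recovers $m'=F_2(F_1(m))\oplus\bigl(F_2(F_1(m))\oplus m\bigr)=m$, and the redundancy check ${}^{l}_{l_1}|u|=F_1(m')$ holds by construction. The main obstacle is not any single hard step but keeping the scalar bookkeeping disciplined through the cancellation, and, secondarily, being careful that the bit-length bookkeeping ($l_1+l_2=|q|$) used in (\ref{eq5.9}) matches the concatenation in (\ref{eq5.4}); everything else is routine group arithmetic.
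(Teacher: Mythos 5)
Your proposal is correct and follows essentially the same route as the paper's proof: expand $s=s_1+s_2$ through the unblinding, signing, and blinding equations, use the B\'ezout relation $ki+lj=1$ to collapse the scalar, lift to the group via $d_{IDS}P=A+H_1(A,ID_S)P_0$ and the commitments $R_1,R_2$, and cancel against the definition of $R$ to obtain $sP+R=H_2(R)\bigl(A+Q_{ID}P_0\bigr)$, after which the mask cancels and the $F_1,F_2$ padding returns $m'=m$. The only difference is presentational: you isolate the scalar identity for $s$ before multiplying by $P$, whereas the paper carries the whole computation at the level of $sP+R$; the content is identical.
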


\begin{proof}
The consistency of the proposed IDBS-MR scheme is verified as follows.

we have $s=(s_1+s_2)$ then,
\vspace{-12mm}

\begin{align*}
    sP+R &=(s_1+s_2)P+R \\
    &=(s'_1g-i+s'_2h-j)P+R \\
    &=((d_{IDS}b_{M1}- n_1)g-i+(d_{IDS}b_{M2}- n_2)h-j)P+R \\
    &=((d_{IDS}kg^{-1}i(H_2(R)-ru)-n_1)g-i+(d_{IDS}lh^{-1}j(H_2(R)-ru)- n_2 )h-j)P+R \\
    & =(d_{IDS}ki(H_2(R)-ru)- n_1g-i+d_{IDS}lj(H_2(R)-ru)- n_2 h-j)P+R\\
    & =d_{IDS}H_2(R)(ki+lj)P-(ki+lj)rud_{IDS} P-n_1 gP-n_2 hP-iP-jP+R \\
    &=d_{IDS}H_2(R)P-(rud_{IDS}P+n_1 gP+n_2 hP+iP+jP)+R \\
    &=(a+s_0Q_{ID}) H_2(R)P-(ru(a+s_0 Q_{ID})P+gQ_1+hQ_2+iP+jP)+R \\
    &=(a+s_0 Q_{ID})uH_2(R)P-(ru(A+Q_{ID}P_0)+R_1+R_2)+R \\
    & =(A+Q_{ID}P_0 )H_2(R)-R+R\\
    & =(A+Q_{ID}P_0)H_2(R)
\end{align*}
\vspace{-5mm}

This proves the consistency of the IDBS-MR scheme. Thus,  $sP+R=(A+Q_{ID} P_0)H_1 (R)$, and  $u=v \oplus H_3(sP+R)$. Now, $u=F_1(m)||F_2(F_1 (m)) \oplus m$ and hence ${_{l_1}^l}|u|=F_1 (m)$, $|u|_{l_2}^R=m \oplus F_2 (F_1 (m))$ and check if $(_l1^l)|u|=F_1 (m_0)$. If this equality holds, the verifier accepts the parameter $<A,v,R>$ as a correct signature on message $m$.
\end{proof}

\begin{theorem}
\textbf{(Un-forgeability)}. Suppose $H_1$ and $H_2$ are two random oracles models and a forger $F$ wants to forge a signature on message $M$. Suppose forger $F$ executes at most $q_E$ extract oracles, $q_B$ blind signature oracles, $q_1$  $H_1$ hash oracles, $q_2$  $H_2$ hash oracles, $q_3$ $H_3$ hash oracles runs at most $t$ times with advantage at most $k^{-n}$. Under the assumption of ROM and intractable to solve the ECDLP, our proposed IDBS-MR Scheme is existentially unforgeable under adaptive chosen message and identity attacks. Forger $F(t,q_1,q_2  q_E,q_B,k^{-n})$ have the following advantage to break the proposed IDBS-MR scheme.
\vspace{-5mm}

\begin{equation} \label{eq5.25}
    |Pr[ F(t,q_1,q_2,q_E,q_B,k^{-n})]| \ge(1-q_1/k)^{q_2+q_E} 
\end{equation}
\end{theorem}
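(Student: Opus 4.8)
The plan is to prove unforgeability by a reduction in the random oracle model: assuming a forger $F$ that breaks EF-ID-CMA with non-negligible advantage, I would construct a simulator $B$ that solves the ECDLP. Concretely, $B$ receives an instance $(P, P_0 = s_0 P)$ and aims to recover $s_0 \in \mathbb{Z}_q$. First I would have $B$ publish $pp = \langle \mathbb{G}_1, q, P, P_0, H_1, H_2, F_1, F_2, l_1, l_2, k\rangle$ to $F$, treating $H_1, H_2, H_3$ as random oracles that $B$ controls through lists $H_1^{List}, H_2^{List}, H_3^{List}$ that are initialized empty.

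For the oracle simulation I would reuse the bookkeeping of Theorem~\ref{thm4.2}, since IDBS-MR shares its algebraic skeleton with IDBS-II. On an $H_1$-query for $\langle ID_i, A_i\rangle$, $B$ flips a biased coin to decide whether $ID_i$ is the target identity $ID^*$: for the target it embeds the challenge and stores a marker, while for every other identity it samples $m_i \in \mathbb{Z}_q$ and returns $H_1(ID_i, A_i) = -m_i$ so that a matching private key can later be produced. The extract oracle answers non-target queries by choosing $n_i$ and setting $A_i = m_i P_0 + n_i P$, which makes $d_{IDi} = m_i$ satisfy the key-validity relation $d_{IDi} P = A_i + H_1(ID_i, A_i) P_0$; a query on $ID^*$ forces an abort. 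The $H_2$ and $H_3$ oracles return fresh random values recorded in their lists, and the blind-signature oracle signs using the extractable keys for non-target identities and via programmed $H_2$ responses for $ID^*$.

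The extraction step is where the heart of the argument lies. When $F$ halts with a valid forgery $\langle v^*, R^*, A^*\rangle$ on $ID^*$, I would invoke the forking/multiple-forgery technique: by replaying $F$ on the same random tape but with distinct $H_2$ answers, $B$ obtains several accepting tuples sharing the same commitment $R^*$ yet with different hash values. Writing the verification relation $s^* P + R^* = H_2(R^*)(A^* + H_1(A^*, ID^*) P_0)$ and letting $s_0, r_X, u$ denote the unknown discrete logarithms of $P_0, R_X^*, R^*$, these tuples yield a linear system in which the only unknown tied to the challenge is $s_0$; solving it recovers $s_0$ and contradicts ECDLP hardness. The message-recovery layer $u = F_1(m)\,\|\,F_2(F_1(m))\oplus m$ and the hiding through $H_3$ do not obstruct this, since $B$ never needs the cleartext $m$ to run the reduction.

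Finally I would carry out the probability bookkeeping. The reduction aborts only when an $H_1$-inconsistency spoils an extract query or when $H_2$ returns an inconsistent value during a blind-signature query; as in Theorem~\ref{thm4.2} each such event occurs with probability at most $q_1/k$ and is repeated $q_E$ and $q_2$ times respectively, and combining the independent no-abort events gives the claimed bound $|Pr[F(t,q_1,q_2,q_E,q_B,k^{-n})]| \ge (1 - q_1/k)^{q_2 + q_E}$. The hard part will be making the rewinding argument rigorous in the \emph{blind} setting, where the simulator does not observe the messages being signed and must nevertheless guarantee that the forked executions produce forgeries on a common message with genuinely independent hash constraints; bounding this collision event and verifying that the extracted equations stay linearly independent is the main obstacle.
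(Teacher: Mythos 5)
Your proposal follows essentially the same route as the paper's proof: the same oracle bookkeeping with $H_1(ID_i,A_i)=-m_i$ and $A_i=m_iP_0+n_iP$ for extractable keys, abort on the target identity, extraction of $s_0$ from a linear system built out of multiple valid forgeries on the same message, and the identical event analysis yielding $(1-q_1/k)^{q_2+q_E}$. If anything you are more careful than the paper, which simply posits three distinct forgeries rather than justifying them via rewinding as you do.
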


\begin{proof}

Consider a forger $F$ wants to forge any signature in our proposed IDBS-MR scheme and let there exist an algorithm B which helps $F$. We design an algorithm B that helps $F$ to solve the ECDLP.

\textbf{Setup}: B supposes two hash function $H_1:\{0,1\}^* \times \mathbb{G}_1 \rightarrow \mathbb{Z}_q$ and $H_2:\mathbb{G}_1 \rightarrow \{0,1\}^{|q|}$. Let two function are $F_1:\{0,1\}^{l_1} \rightarrow \{0,1\}^{l_2}$ and $F_2:\{0,1\}^{l_2} \rightarrow \{0,1\}^{l_1}$, where $l_1$ and $l_2$ are two positive integers such that $l_1+l_2=|q|$. $B$ is accountable for simulating these oracles. $B$ picks $a \in \mathbb{Z}_q$ and set $P_0=aP$ and gives public parameter $pp=<\mathbb{G}_1,q,P,P_0,H_1,H_2,F_1,F_2,l_1, l_2,k>$ to $F$.

\textbf{Oracles}: Forger $F$ can perform the following oracles.
\begin{itemize}
    \item \textit{$H_1$ oracle}: $B$ prepares an empty list $H_1^{List}$ having tuple $<A_i,ID_i,H_1 (ID_i,A_i ),x_i>$. When $F$ queries to $H_1^{List}$ on Identity $ID_i$, $B$ responds $F$ in the following way.
    \begin{itemize}
        \item $B$ gives $H_1 (ID_i,A_i)$ to $F$, if $ID_i$ found in the $H_1^{List}$ in the tuple of $<A_i,ID_i,H_1 (ID_i,A_i ),x_i>$ or $<A_i,ID_i,H_1 (ID_i,A_i ),*>$.
        \item $B$ sets $H_1(ID_i,A_i )=h$ and gives to $F$ and adds the tuple $<A_i,ID_i,H_1(ID_i ),*>$ to list $H_1^{List}$, if $ID_i=ID^*$.
        \item Otherwise, $B$ chooses randomly $b_i\in \mathbb{Z}_q$ and gives $H_1 (ID_i,A_i)=-b_i$ to $F$ and adds tuple $(ID_i,H_1(ID_i,A_i),x_i)$ to list $H_1^{List}$.
    \end{itemize}
    Note, $H_1(ID,A)$ gives no information to $F$ until he queries the $H_1$ oracle on $ID$ because $H_1$ is the random oracle. 
    \item \textit{$H_2$ oracle}: $B$ provides the $R_i \in \mathbb{G}_1$ on applying the queries $M_i$ to $H_2(R_i)$ and gives to $F$. $B$ maintains a list $H_2^{List}$ for storing the responses to $F$. $B$ search an entry in the list, and if found, gives it to the $F$; otherwise, select $y_i \in \mathbb{Z}_q$ and sends  $y_i=H_2(R_i)$ to $F$ and adds an entry in list $H_2^{List}$.
    \item \textit{$H_3$ oracle}: $B$ maintains a list  $H_3^{List}$, initially empty for storing responses to $F$. $B$ search an entry in list  $H_3^{List}$, if found response to the $F$. otherwise, picks $z_i \in \mathbb{Z}_q$ and sends  $z_i=H_2(s_i P+R_i)$ to $F$ and adds an entry in list $H_3^{List}$.
    \item \textit{$F_1$ and $F_2$ oracles}: $B$ makes two lists $L_{F1}$ and $L_{F2}$ for responding to $F$ against  the requested message. $B$ check list $L_{F1}$, if found then search in $L_{F2}$. Otherwise, $B$ picks $e_{1i} \in \{0,1\}^{l_1}$ and $e_{2_i} \in \{0,1\}^{l_2}$ and set $F_1 (m_i )=e_{1i}$, $F_2(e_{1i})=e_{2i}$. $B$ responds to $F$ and stores them in $L_{F1}$ and $L_{F2}$.
    \item \textit{Extract oracle}: $B$ simulate the extract oracles on given identity $ID_i$ to obtain a private key. $B$ maintains a list $L_{Ext}$, which is initially empty and search an entry in $L_{Ext}$, If found, respond it to the $F$. otherwise do the following:
    \begin{itemize}
        \item For $ID_i=ID$, $B$ abort the process.  
        \item For $ID_i \ne ID$, $B$ abort the process. $B$ pick $c_i \in \mathbb{Z}_q$ and set $A_i=b_iP_0+c_i P$, such that $d_{IDi}=c_i$. We set these parameters such that they satisfy Eq. (6.27). B responds the secret key as $(d_{IDi},A_i)$ and store the tuple $<A_i,ID_i,H_1(ID_i,A_i),d_{IDi})$ in list $L_{Ext}$.
        \begin{equation} \label{eq5.27}
            d_{IDi} P=A_i+H_1 (ID_i,A_i)P_0
        \end{equation}
    \end{itemize}
    \item  \textit{Blind signature oracle}: $F$ queries the Blind Signature algorithm to get a signature on message $M_i$ with identity $ID_i$. Let $F$ gives the blinded message $<b'_M1,b'_M2>$ to $B$. Then, $B$ responds to the following oracles.
    \begin{itemize}
        \item For $ID_i=ID$, $B$ abort the process.
        \item  For $ID_i \ne ID$, $B$ runs the  $F_1$, $F_2$, $H_2$  and $H_3$ oracles and respond to the $F$. $B$ makes a list $L_{BS}$ to store the responses coming from blind signature and responds to the $F$. $B$ first check in list $L_{BS}$ if found then reply to the $F$ otherwise, set $v_i=u \oplus z_i$ and sends it to $F$ and make an entry in list $L_{BS}$. 
    \end{itemize}

\end{itemize}

\textbf{Forgery}: $F$ responses the signature $\sigma_i^*=<v_i^*,R_i^*,A_i^*>$  against message $M^*$ and $ID^*$. In order to forge a signature, suppose forger $F$ creates three distinct signature $<\sigma_A^*,\sigma_B^*,\sigma_C^*>$ on same message $M$, where $\sigma_A^*=<v_A^*,R^*,A^*>$, $\sigma_B^*= <v_B^*,R^*,A^*>$ and $\sigma_C^*=<v_C^*,R^*,A^*>$.

We consider parameter $s_0$,$a$ and $z$ are the discret logarithm of parameter $P_0$, $A$ and $R$ respectively, i.e., $P_0=s_0 P$, $A=aP$ and $R=zP$. From $sP=H_2 (M,R)(A+H_1 (A,ID_S )P_0 )-R$, we can get 

\vspace{-12mm}
\begin{align*}
    s_A^*&=d_A (a+s_0 H_1 (ID,A))-z \\
    s_B^*&=d_B (a+s_0 H_1 (ID,A))-z \\
    s_C^*&=d_C (a+s_0 H_1 (ID,A))-z 
\end{align*}
\vspace{-12mm}

The parameter $s_0$,$a$ and $z$ from the above equations are unknown to $F$. Thus, $F$ solves these values from the linear equations which are equivalent to solving the ECDL problem.

\textbf{Analysis}. The probability that B does not abort the game is defined by two events.
\centering
$Pr[\neg E_1 \wedge \neg E_2]$

\begin{itemize}
    \item $E_1$: The extract oracle fails if $H_1$ oracle gives the inconsistent outputs with probability at most $q_1/k$. The simulation is completed $q_E$ times which happens with probability at least $(1-q_1/k)^{q_E}$.
    \item $E_2$: The execution of $H_2$ oracle fails if $H_2$ oracle gives the inconsistent outputs with probability at most $q_1/k$. The simulation is completed $q_2$ times which happens with probability at least $(1-q)_1/k)^{q_2}$.
\end{itemize}

From the above two events, we obtain the probability that $Adv$ can break the scheme. 

\centering
$Pr[\neg E_1 \wedge \neg E_2] \ge \epsilon \left(1 - \frac{q_1}{k}\right)^{q_2} \left(1 - \frac{q_1}{k}\right)^{q_E}$

\end{proof}

\begin{theorem}
\textbf{(Blindness)}. The proposed IDBSMR scheme achieves the blindness property.
\end{theorem}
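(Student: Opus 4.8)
The plan is to prove blindness by exhibiting, for every completed protocol transcript held by a malicious signer, a consistent set of blinding factors that maps that transcript to \emph{either} of the two candidate message--signature pairs. Concretely, I would adopt the standard ``blinding-factor existence'' argument used for Schnorr-type blind signatures and already employed in Theorem~\ref{thm4.3} of the excerpt. The signer, playing the role of the adversary $Adv$, records during the two signing sessions the view $\langle Q_1, Q_2, b_{M1}, b_{M2}, s'_1, s'_2 \rangle$ for each session, while the final published signatures are $\langle v, R, A \rangle$ and $\langle v', R', A' \rangle$. To establish unlinkability it suffices to show that for \emph{any} pairing of a transcript with a final signature there exist legal secret values $g,h,i,j,k,l \in \mathbb{Z}_q$ (satisfying $\gcd(i,j)=1$ and $ki+lj=\gcd(i,j)$) that reproduce exactly the blinding equations (\ref{eq5.1})--(\ref{eq5.7}) and the unblinding relations $s_1 = s'_1 g - i$, $s_2 = s'_2 h - j$. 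If such factors always exist and are equidistributed, the adversary's guessing advantage is exactly $0$, hence negligible.

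First I would fix an arbitrary transcript and an arbitrary target signature $\langle v, R, A \rangle$, and solve the defining system backwards. From (\ref{eq5.1})--(\ref{eq5.2}) the factors $g,h$ together with $i,j$ determine $R_1, R_2$, and hence $r = r_1 r_2$ via the abscissa function; from (\ref{eq5.5}), $R = R_1 + R_2 + r\,u\,(A + Q_{ID}P_0)$ where $u$ is computed from the recovered message through (\ref{eq5.4}). The key observation is that the user enjoys six degrees of freedom ($g,h,i,j,k,l$) subject to only a few algebraic constraints, so the map from blinding factors to the signer's observable $\langle b_{M1}, b_{M2} \rangle$ is surjective onto the admissible range. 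I would then show that, given the fixed observable $b_{M1} = kg^{-1}i(H_2(R)-ru)$ and $b_{M2} = lh^{-1}j(H_2(R)-ru)$ from (\ref{eq5.6})--(\ref{eq5.7}), one can always back-solve for $k$ and $l$ (using the B\'ezout relation $ki+lj=1$ as an additional linear constraint), thereby matching \emph{any} published $v,R$ to \emph{either} recorded session. This exhibits the required two-to-two correspondence between sessions and signatures.

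The main obstacle I anticipate is showing that the recovered blinding factors are not merely existent but \emph{uniformly distributed} and independent of the bit $b$, since blindness in the strong (unlinkability) sense requires the adversary's two views to be identically distributed regardless of which user signed which message. The subtlety is that $r$ is defined through $r = r_1 r_2 \bmod q$ with $r_1, r_2$ being abscissae of $R_1, R_2$; the nonlinearity of the abscissa map means I must argue that as $g,h,i,j$ range over $\mathbb{Z}_q$, the induced distribution on $(R, r, v)$ is independent of the particular secret key or message, invoking the uniformity of $a,b$ (here the $g,h$ role) as in the unforgeability analysis. I would reduce precisely this hidden dependence to the hardness of ECDLP: any algorithm that could distinguish the two pairings with non-negligible advantage would have to extract $r$ (equivalently the discrete logarithm $z$ of $R$), which mirrors the contradiction reached at the end of Theorem~\ref{thm4.3}. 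Thus the bulk of the work is the distributional argument, and once it is in place the conclusion $|\Pr[b=b']| \le \tfrac{1}{2} + k^{-n}$ follows immediately, completing the proof.
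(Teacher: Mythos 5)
Your proposal follows essentially the same route as the paper's proof: back-solve the blinding relations (Equations (\ref{eq5.27a})--(\ref{eq5.33})) to exhibit a tuple $\langle g,h,i,j,k,l\rangle$ linking the signer's view $\langle b_{M1},b_{M2},s'_1,s'_2\rangle$ to the published signature $\langle v,R,A\rangle$, and then observe that actually pinning down the link requires knowing $r$, which is reduced to the hardness of ECDLP. The equidistribution issue you flag as the main obstacle is genuine, but the paper does not address it either --- its proof stops at the computational claim that the adversary cannot recover $\langle g,h,i,j\rangle$ --- so you have identified both the same argument and its principal weakness.
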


\begin{proof}
Suppose an adversary $Adv$ that plays the role of signer and challenger $Ch$ that play a role of an honest user runs the blind signature algorithm. Suppose $Adv$ obtains the parameters $<b_{M1},b_{M2},s'_1,s'_2>$ by executing the blind signature oracles, identical to Theorem 5.2. Let the corresponding signature be $<R,v,A>$. There exists a tuple of random values $<g,h,i,j,k,l>$ that links the $<b_{M1},b_{M2},s'_1,s'_2>$ to $<R,v,A>$. From the proposed IDBS-MR scheme, we have the following (\ref{eq5.27})-(\ref{eq5.33}).

\vspace{-12mm}
\begin{equation} \label{eq5.27a}
    b_{M1}=kg^{-1}i(H_2(R)-ur)
\end{equation}

\vspace{-12mm}
\begin{equation} \label{eq5.28}
    b_{M2}=lh^{-1}j(H_2(R)-ur)mod q
\end{equation}

\vspace{-12mm}
\begin{equation} \label{eq5.29}
    s'_1=(d_{IDS} b_{M1}- n_1 )mod q
\end{equation}

\vspace{-12mm}
\begin{equation} \label{eq5.30}
    s'_2=(d_{IDS}b_{M2}- n_2)mod q
\end{equation}

\vspace{-12mm}
\begin{equation} \label{eq5.31}
    s_1=(s'_1 g-i )mod q
\end{equation}
\vspace{-12mm}
\begin{equation} \label{eq5.32}
    s_2=(s'_2 h-j)mod q
\end{equation}

\vspace{-12mm}
  \begin{equation} \label{eq5.33}
      R=(g+h)Q_1+(i+j)P+ru(A+Q_{ID}P_0) 
  \end{equation}

From (\ref{eq5.28})-(\ref{eq5.33}), we can get $g=kb_{M1}^{-1}i(H_2 (R)-ur)$ and $h=lb_{M2}^{-1}j(H_2 (R)-ur)$ individually. Similarly, from Equations (\ref{eq5.32}) and (\ref{eq5.33}), we can get $i=(s'_1 g-s_1)$ and $j=(s'_2 h-s_2)$ individually. If we substitute these values in (\ref{eq5.33}), it seems that we must know the value of r to satisfy the (\ref{eq5.33}). However, the value of r depends on the elements $<g,h,i,j>$ whose estimation is equivalent to solving the ECDLP problem. Thus, it is very difficult for any adversary $Adv$ to compute the values $<g,h,i,j,k,l>$ from the (\ref{eq5.28})-(\ref{eq5.33}).
\end{proof}

\section{Performance Analysis }

Here, we evaluate the performance of the proposed scheme in terms of computation and communication overheads. 

\subsection{Implementation and Experiment Simulation}

The implementation of all algorithms is run on  \textit{Intel(R) Core(TM) i7-2600K CPU @ 3.4 GHz}, and \textit{8 GB of RAM}, under  \textit{gcc 4.6} compiler  and we call PBC library \cite{lynn2010pairing} API in order to construct the elliptic curves. We utilize the Type-A super-singular elliptic curve $E/\mathbb{F}_p:y^2=x^2+x$ built on two prime $p$ and $q$, such that $|p|$ = 512 bit, $q=2^{159}+2^{17}+1$ is Solaris prime ($|q|$ = 160 bit) satisfying $p+1=12pq$, and embedding degree is $2$, which is identical to the 1024-bit RSA security level. We consider super-singular curve over the binary field $\mathbb{F}_{2^{271}}$ with the order of $\mathbb{G}_1$ is 252 bit prime and $\mathbb{G}_2$ is 1024 bit. Using compression technique \cite{shim2013eibas}, we consider $|\mathbb{G}_1|$ = 34 bytes, $|\mathbb{G}_2|$ = 128 bytes and $|\mathbb{Z}_q|$ = 32 bytes. Besides, we suppose the block size $|d|$ = 20 bytes. We neglect the XOR operation and find the left $l$-bit and right $l$-bit of a string since they have negligible computation costs. Table \ref{tbl3.1} of chapter \ref{chapter3} illustrates the notations and computation cost (in ms) of different cryptographic operations. 

\subsection{Computation Cost }

Now, we will provide the implementation results by considering the following aspect. We consider the block size 10 KB and the file consist of 10,000 blocks to test the performance on cloud and TSP by taking the challenged blocks from 100 to 1000 with an increment of 100 for each step. We evaluate the computation cost of the signature delegation algorithm, signature recovery and verification algorithm, integrity auditing algorithm and batch signature verification algorithm. We observe in Figure (\ref{fig5.3}) that the computation cost of signature delegation and signature verification depends on the side of data blocks. It grows linearly with respect to the data blocks d. Figure (\ref{fig5.4}) illustrates the batch signature verification cost in which the cost grows very slowly with the increment of data blocks. Figure (\ref{fig5.5}) shows the cost of integrity auditing on the cloud and TPA side that grows linearly with the size of data blocks.

\begin{figure}
  \centering
  \includegraphics[width=0.8\linewidth]{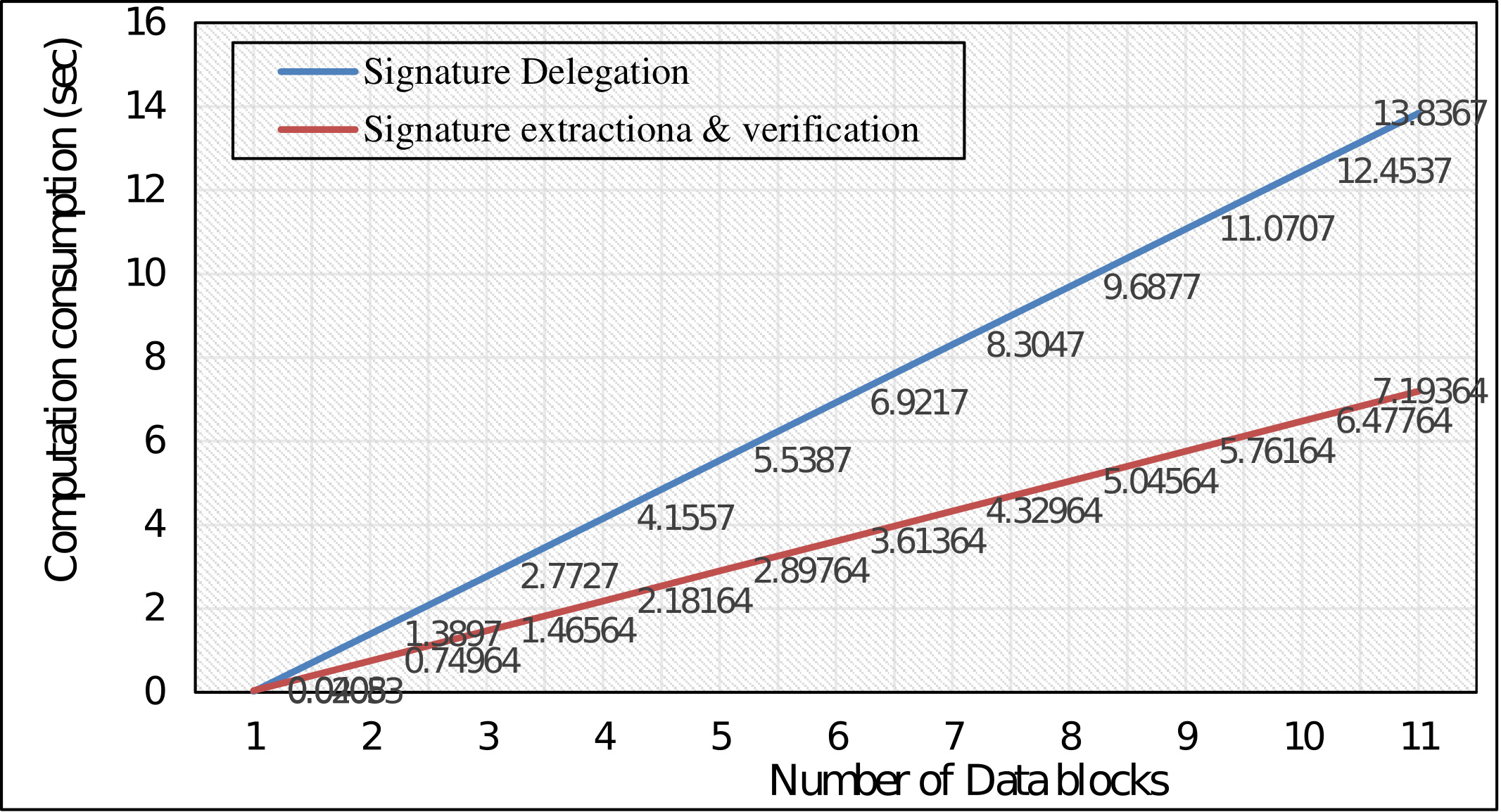}
  \caption{Computation consumption in signature delegation, and recovery and verification}
\label{fig5.3}
\end{figure}

\begin{figure}
  \centering
  \includegraphics[width=0.8\linewidth]{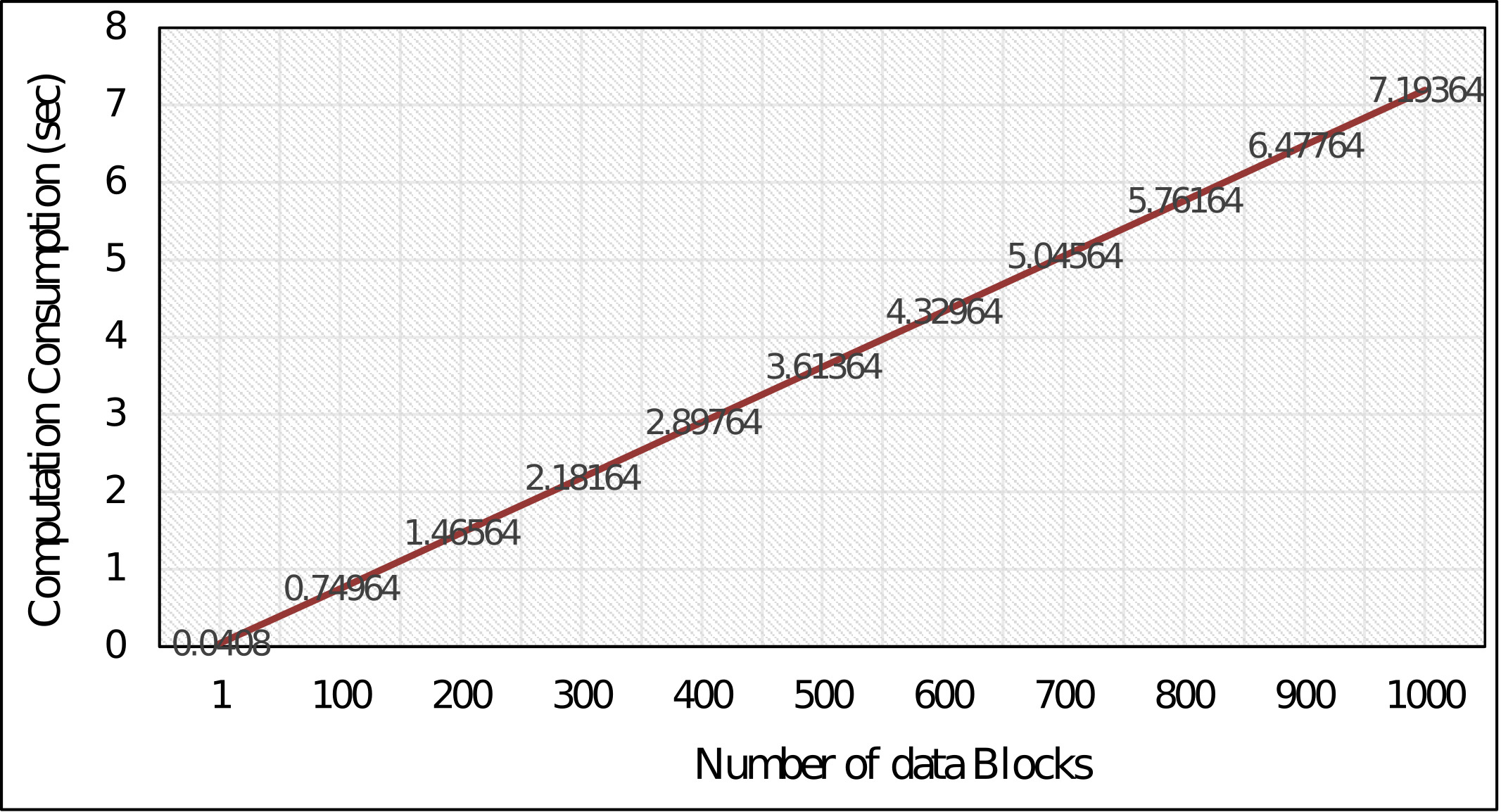}
  \caption{Computation consumption in batch signature recovery and verification}
\label{fig5.4}
\end{figure}

\begin{table}
        \centering
        \caption{Computational cost comparison of proposed IDBS-MR scheme with existing schemes}
        \label{tbl5.1}
        \begin{tabular}{|c|c|c|c|c|}
            \hline
            \multicolumn{1}{|c|}{Schemes} & \multicolumn{3}{|c|}{Computation cost (in ms)} & \multicolumn{1}{|c|}{Signature Size } \\
            \cline{2-4}
                &  BlindSig & Verify     & Total    & (in Bytes) \\
            \hline
            \hline
            \cite{gao2012round} &  126.73 &	80.04 &	206.77 &	122 \\
           \cite{kumar2017secure} &  80.5 & 46.77 &	147.2 &	88 \\
            \cite{dong2014efficient} & 40.02 &	26.68 &	66.70 &	154 \\
            \cite{han2005pairing} & 82.77   &	64.86 &	147.63 &	256 \\
            \cite{islam2015design} &   53.6 &	46.7 &	100.3 &	156 \\
            \cite{elkamchouchi2008new}  & 62.76 & 44.85 &	107.61 &	148 \\
            \cite{james2017identity} & 62.7 & 46.7 &	109.4 &	168 \\
            \cite{verma2017efficient} &  40.08 &	51.5 &	91.58 &	66\\
            Our  & 27.35 &	 6.9 &	34.25 &	98 \\
            \hline
        \end{tabular}
\end{table}

\subsection{Comparative Performance}

Here, we compare our proposed IDBS-MR scheme with the existing schemes \cite{gao2012round}, \cite{kumar2017secure}, \cite{dong2014efficient}, \cite{han2005pairing}, \cite{islam2015design}, \cite{elkamchouchi2008new}, \cite{james2017identity} and \cite{verma2017efficient} in terms of the computation, bandwidth cost and security.

\begin{figure}
  \centering
  \includegraphics[width=0.8\linewidth]{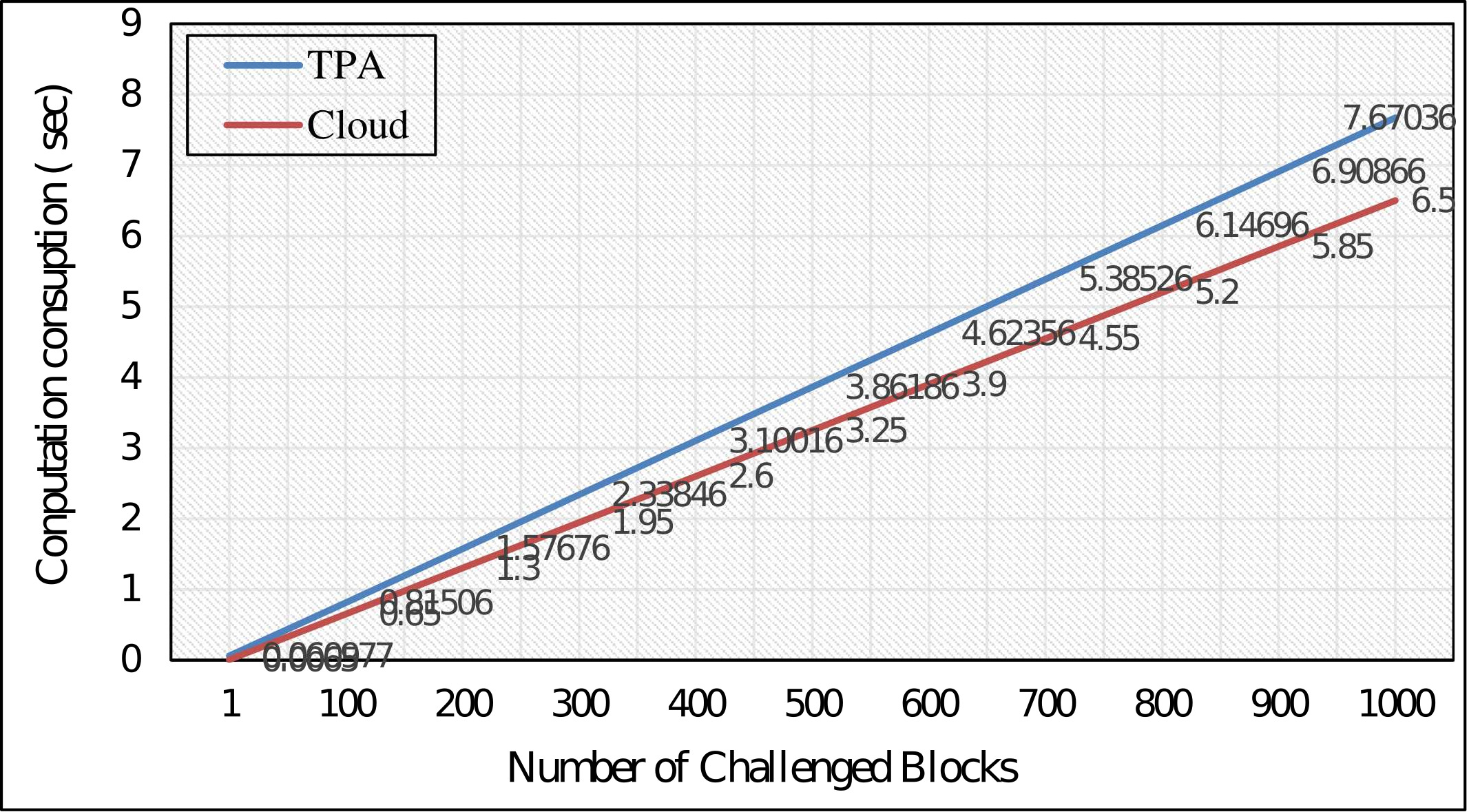}
  \caption{Computation consumption in integrity auditing on Cloud and TPA}
\label{fig5.5}
\end{figure}

\begin{itemize}
    \item 	\textit{\textbf{Computation cost}}. Here, we evaluate the computation of our proposed scheme with related schemes \cite{gao2012round}, \cite{kumar2017secure}, \cite{dong2014efficient}, \cite{han2005pairing}, \cite{islam2015design}, \cite{elkamchouchi2008new}, \cite{james2017identity} and \cite{verma2017efficient}, as summarized in Table \ref{tbl5.1}. In the blind signature algorithm, our proposed scheme consumes  $4T_{SM}+2T_A+7T_M$ = 4*6.67+2*0.23+7*0.03=27.35 ms, while the schemes \cite{gao2012round}, \cite{kumar2017secure}, \cite{dong2014efficient}, \cite{han2005pairing}, \cite{islam2015design}, \cite{elkamchouchi2008new}, \cite{james2017identity} and \cite{verma2017efficient} consume 126.73 ms, 80.5 ms, 42 ms, 82.77 ms, 53.6 ms, 62.7 ms, 62.7 ms, and 40 ms, respectively, of computation time. It is also shown graphically in Figure ((\ref{fig5.5})).  In the verification algorithm, our proposed IDBS-MR scheme needs $1T_{SM}+1T_A$ = 1*6.67+1*0.23 = 6.9 ms, while the schemes \cite{gao2012round}, \cite{kumar2017secure}, \cite{dong2014efficient}, \cite{han2005pairing}, \cite{islam2015design}, \cite{elkamchouchi2008new}, \cite{james2017identity} and \cite{verma2017efficient} take 80 ms, 46.77 ms, 26.68 ms, 64.86 ms, 46.7 ms, 44.85 ms, 46.7 ms, and 51.5 ms respectively, of computation time. Thus, the total computation cost for our proposed scheme is 34.25 ms while the schemes \cite{gao2012round}, \cite{kumar2017secure}, \cite{dong2014efficient}, \cite{han2005pairing}, \cite{islam2015design}, \cite{elkamchouchi2008new}, \cite{james2017identity} and \cite{verma2017efficient} need 206.77 ms, 147.2 ms, 66.7 ms, 147.2 ms, 100.3 ms, 107.6 ms, 106.4 ms, and 91.58 ms, respectively, for blind signature and verification algorithms, as shown in Figure (\ref{fig5.6}). Thus, our scheme needs 83\%, 76\%, 48\%, 77\%, 65\%, 68\%, 68\% and 62\% of computation cost of the schemes \cite{gao2012round}, \cite{kumar2017secure}, \cite{dong2014efficient}, \cite{han2005pairing}, \cite{islam2015design}, \cite{elkamchouchi2008new}, \cite{james2017identity} and \cite{verma2017efficient}, respectively. 
    
    \begin{table}
        \centering
        \caption{Security comparison of proposed IDBS-MR scheme with existing schemes}
        \label{tbl5.2}
        \begin{tabular}{|c|c|c|c|c|}
            \hline
            Schemes	& Cryptographic   &	Security   &	Security  &	Data   \\ 
             Schemes	&   primitives &	 Problem & Model &  Recovery \\

            \hline
            \hline
            \cite{gao2012round} &  IBS \& BS & 1-mBDHIP$^@$ & 	ROM &	No  \\
            \cite{kumar2017secure} &  Choon Cheon’s IBS  \& &	GDHP$^{\#\#}$  &	ROM &	No \\
             &   Boldyreva’s BS &	 &	 & \\
            \cite{dong2014efficient} & CLS \& BS &	ECDLP$^{*}$ &	ROM &	No \\
            \cite{han2005pairing} & IBS \& BS &	IMWPP$^{@@}$ &	ROM &	Yes \\
            
            \cite{islam2015design} &  CLS \& BS	& k-CAA3$^\#$ &	ROM &	No \\
            \cite{elkamchouchi2008new}  & IBS \& BS &	BDHP$^{\&}$ &	ROM &	Yes \\
            \cite{james2017identity} & IBS \& BS &	ECDLP &	ROM &	Yes \\
            \cite{verma2017efficient} &  IBS \& BS & k-CAA &	ROM &	Yes\\
            Our  & IBS \& BS &	ECDLP &	ROM &	Yes \\
            \hline
        \end{tabular}
        \footnotesize{\\$^@$One more bilinear Diffie-Hellman inversion problem, $^\#$3 Traitors collision attacks assumption, $^{\#\#}$ Gap Diffie-Hellman Problem, $^{@@}$ Inversion of Modified Weil Pairings Problem, $^{\&}$Bilinear Diffie-Hellman Problem, $^{*}$Discrete logarithm problem on elliptic curve }
\end{table}

	\item \textit{\textbf{Bandwidth cost}}. Here, we compare the signature size of our proposed scheme with related schemes \cite{gao2012round}, \cite{kumar2017secure}, \cite{dong2014efficient}, \cite{han2005pairing}, \cite{islam2015design}, \cite{elkamchouchi2008new}, \cite{james2017identity} and \cite{verma2017efficient} to evaluate the bandwidth consumption. Our proposed IDBS-MR scheme gives $|\mathbb{G}_1|+2|\mathbb{Z}_q| \approx 98$ bytes of signature, whereas the schemes \cite{gao2012round}, \cite{kumar2017secure}, \cite{dong2014efficient}, \cite{han2005pairing}, \cite{islam2015design}, \cite{elkamchouchi2008new}, \cite{james2017identity} and \cite{verma2017efficient} output 122 bytes, 88 bytes, 154 bytes, 256 bytes, 156 bytes,  148 bytes, 168 bytes and 66 bytes, respectively, of signature, as shown in Table \ref{tbl5.1} and Figure (\ref{fig5.7}). It can be observed from Table \ref{tbl5.1} that our scheme saves around 20\%, 36\%, 62\%, 37\%, 34\% and 42\% of bandwidth of the schemes \cite{gao2012round}, \cite{kumar2017secure}, \cite{dong2014efficient}, \cite{han2005pairing}, \cite{islam2015design}, \cite{elkamchouchi2008new}, \cite{james2017identity} and \cite{verma2017efficient}, respectively.
	
\begin{figure}
  \centering
  \includegraphics[width=0.8\linewidth]{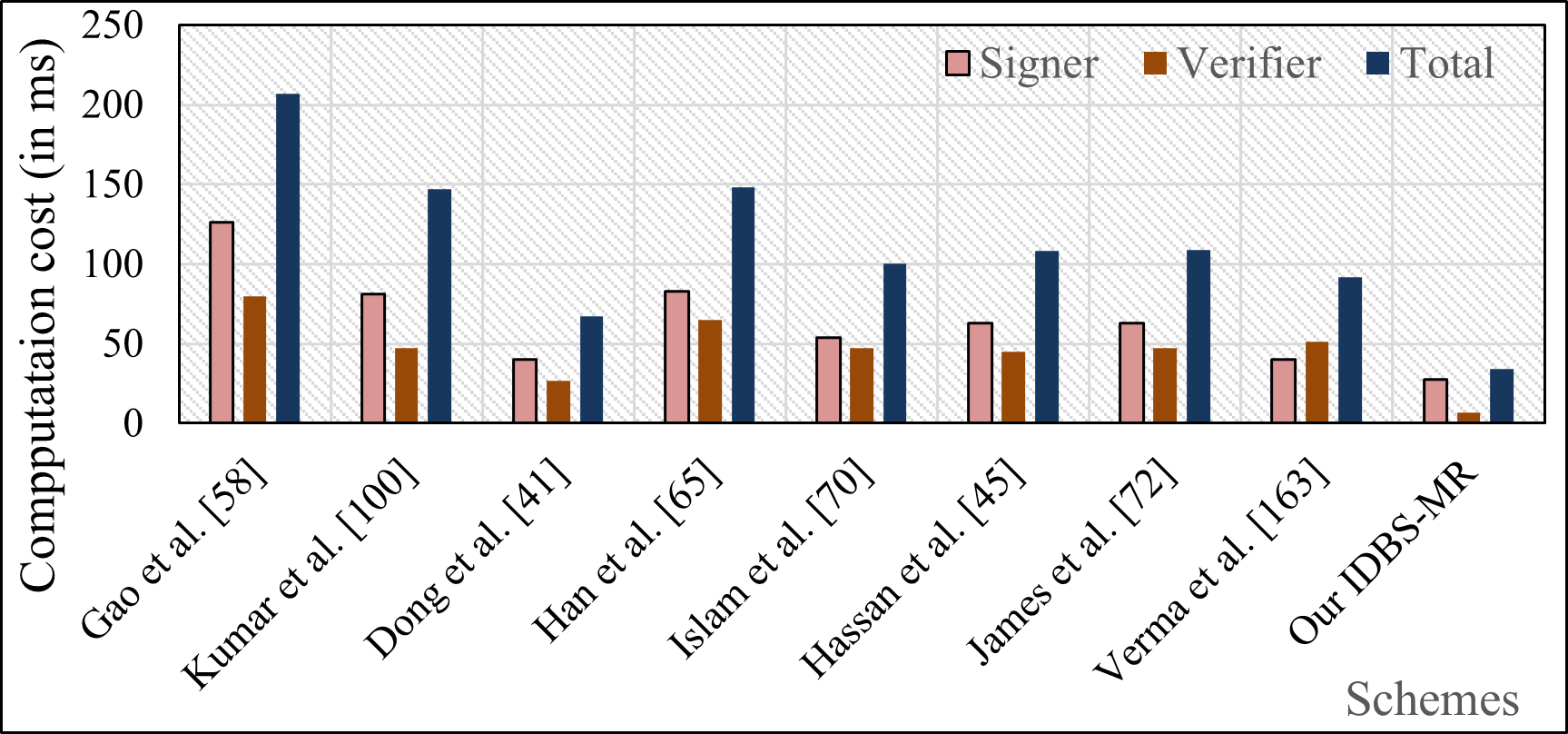}
  \caption{Computation cost of the proposed scheme with other related schemes}
\label{fig5.6}
\end{figure}

\begin{figure}
  \centering
  \includegraphics[width=0.8\linewidth]{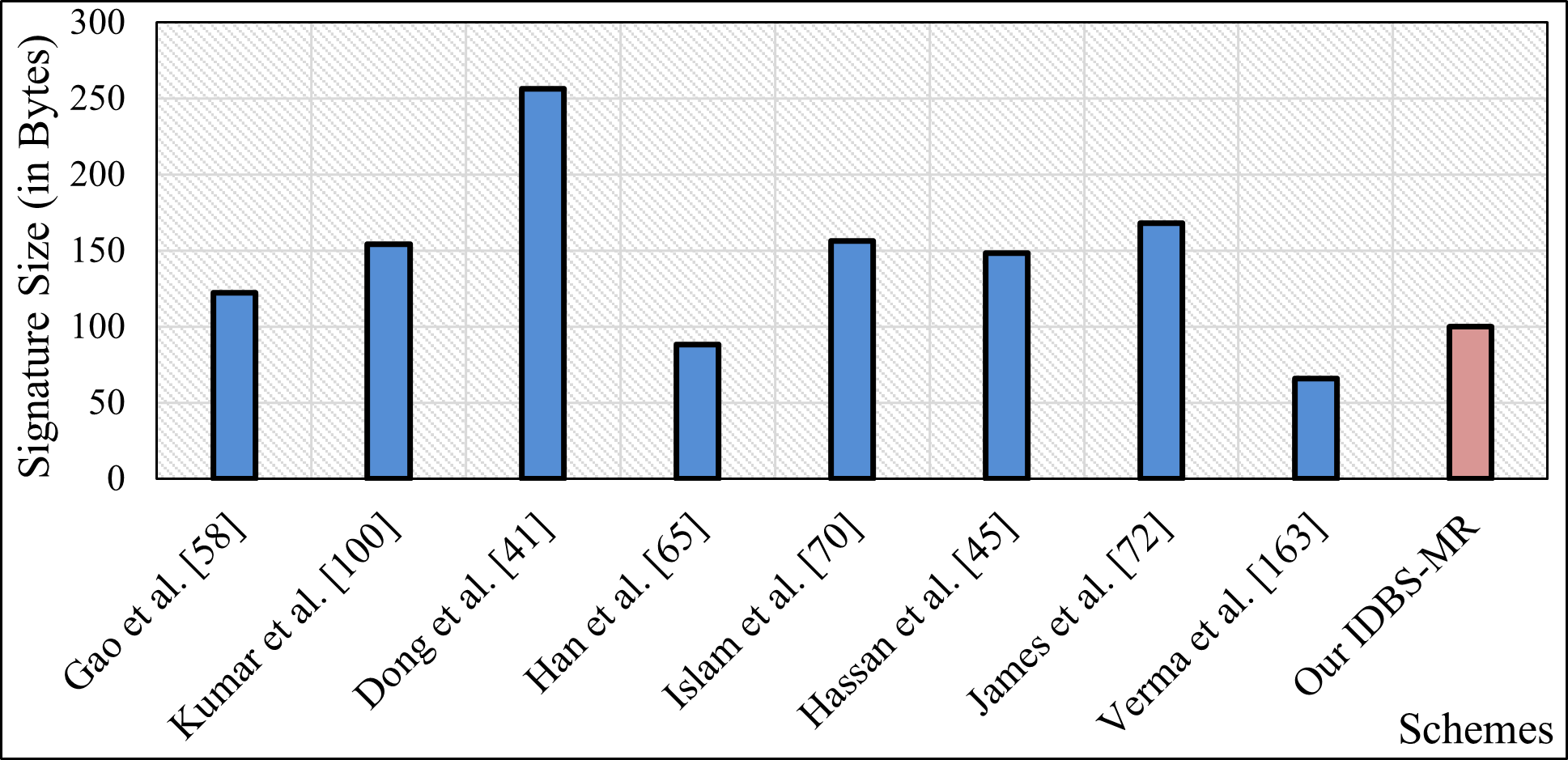}
  \caption{Signature size of the proposed scheme with related schemes}
\label{fig5.7}
\end{figure}

    \item \textit{\textbf{Security features}}. The proposed scheme is constructed on the Random Oracle Model (ROM), and its security is equivalent to solving the elliptic curve discrete logarithm problem (ECDLP). Table \ref{tbl5.2} summarizes the security features of our scheme along with the schemes \cite{gao2012round}, \cite{kumar2017secure}, \cite{dong2014efficient}, \cite{han2005pairing}, \cite{islam2015design}, \cite{elkamchouchi2008new}, \cite{james2017identity} and \cite{verma2017efficient}, in which we found that our scheme achieved all security features. 
\end{itemize}

\section{Summary} 
This chapter proposes a privacy-preserving data integrity auditing protocol for resource-limited devices. The protocol aims to delegate the signing and auditing process of data owners to a third-party auditor (TPA), thereby reducing the computational burden on resource-constrained devices. To achieve this goal, we introduce the IDBS-MR scheme, which eliminates costly operations, such as the map-to-hash function, modular exponentiation, and pairing during integrity checking. We prove that the proposed scheme is EUF-ID-CMA secured under ECDLP in ROM. Moreover, we demonstrate that the TPA can verify more data blocks in the least amount of time, and the user can verify multiple signatures simultaneously. Our evaluation of the proposed scheme shows that it exhibits high efficiency and adaptability, making it suitable for users with limited computational resources. Furthermore, we compare our scheme with other related schemes and demonstrate its superiority in terms of efficiency and adaptability.

\end{doublespace} \label{chapter5}
\begin{savequote}[75mm] 
However difficult life may seem, there is always something you can do and succeed at.
\qauthor{Stephen Hawking - British Theoretical Physicist} 
\end{savequote}

\chapter{ID-Based (Aggregated) Signcryption for Secure Video Streaming and Healthcare System}
\justify
\begin{doublespace}

Signcryption is a cryptographic primitive that performs encryption and signature simultaneously with the least computational overhead. The traditional RSA-based signcryption schemes are inefficient due to the problems of public key and certificate management which will later be overcome by introducing an ID-Based Signcryption (IDSC) scheme. Since IDSC is based on an ID-based setting, it suffers from an inherent issue, known as the key escrow problem. Certificate-less based signcryption scheme addresses the key escrow problem, but it loses the ID-based property. In this chapter, we present two variations of IDSC schemes: \textit{Escrow-Free Identity-Based Signcryption} (EF-IDSC) and \textit{Escrow-Free Identity-Based Aggregated Signcryption} (EF-IDASC) schemes. Both schemes are secured against confidentiality and enforceability attacks. Based on the proposed two schemes, we implement two secure emerging network systems. First, we implement a \textit{Secure Peer-to-Peer Video-on-Demand Streaming System}, whose security is based on the proposed EF-IDSC scheme. Second, we implement \textit{Secure Cloud-Assisted Health Care System} that achieves public verifiability based on IDASC scheme. Now, we discuss the related work.

Zheng \textit{et al.} \cite{zheng1997digital} were the first to propose the signcryption scheme, which achieves confidentiality, integrity, and authenticity of the data in a single step. Malone-Lee \cite{malone2002identity} proposed signcryption in an identity-based environment that addresses the certificate maintenance problem associated with scheme \cite{zheng1997digital}. In order to further optimize the computation cost, Sun \textit{et al.} \cite{sun2008generic, sun2008identity} proposed an Online/Offline Identity-Based Signcryption (OOIDSC) scheme for resource constraints environment by aiding the online/offline feature, in which the offline step executes the costly operations while online step includes the lightweight operations. 

Selvi \textit{et al.} \cite{selvi2010identityonline} pointed out that the scheme \cite{sun2008generic} has some security weaknesses, and the scheme \cite{sun2008identity} is easily forgeable, and proposed a new OOIDSC scheme. Lai \textit{et al.} \cite{lai2017efficient} presented an OOIDSC scheme with the short ciphertext. For a low-power device, Li \textit{et al.} \cite{li2012identity} gave another OOIDSC scheme whose offline phase is independent of message and recipient identity. Liu \textit{et al.} \cite{liu2010online} proposed an OOIDSC scheme that needs only the message and recipient’s identity during online signcryption. However, the scheme \cite{liu2010online} is not secure against the sender's anonymity. Barbosa \textit{et al.} \cite{barbosa2008certificateless} addressed the key escrow problem and discussed the certificate-less signcryption scheme, which is secure against insider attacks.  Li \textit{et al.} \cite{li2015certificateless} presented an Online/Offline Certificateless Signcryption (OOCLSC) scheme and prove its security under the well-known assumption of Computational Diffie-Hellman (CDH), q-Modified Bilinear Diffie-Hellman Inversion (q-mBDHI) and q-traitors collision attack algorithm (q-CAA) problem in the ROM. In \cite{luo2014security}, a new OOCLSC is implemented, however, Shi \textit{et al.} \cite{shi2015security} prove that it is not secure as an adversary can obtain the user’s private key. Li \textit{et al.} \cite{li2017certificateless} discuss an OOCLSC scheme for resource-constrained devices. Recently, Saeed \textit{et al.} \cite{saeed2018hoosc} present a heterogeneous online/offline signcryption for the Internet of things. 

Omala \textit{et al.} \cite{omala2016provably} presented a lightweight Certificateless Signcryption (CLSC) scheme for secure data transmission in the WBAN system. Yin \textit{et al.} \cite{yin2015certificateless} gave an efficient hybrid signcryption scheme in a certificateless setting for secure communication for WSNs. Unlike OOIDSC, schemes \cite{omala2016provably} and \cite{yin2015certificateless} are resistant to key escrow attacks.  Zhang \textit{et al.} \cite{zhang2016light} discuss a data communication scheme for e-health systems using Certificateless Generalized Signcryption (CLGSC) scheme. Caixue Zhou \cite{zhou2018comments} points out that Zhang \textit{et al.}’s scheme \cite{zhang2016light} is susceptible to an insider attack. Recently, Zhou \cite{zhou2019improved} has presented an improvised CLGSC scheme for the mobile healthcare system. 

In order to reduce the transmission and verification overhead, Selvi \textit{et al.} \cite{selvi2009identityagg}  discuss three aggregated signcryption schemes in the identity-based setting, which achieve public verifiability. Wang \textit{et al.} \cite{wang2016identity} propose a first identity-based aggregated signcryption scheme using multi-linear mapping in the standard model. Kar \cite{kar2013provably} propose a new identity-based aggregated signcryption scheme for low-processor devices. However, schemes \cite{selvi2009identityagg,wang2016identity, kar2013provably} are susceptible to the key escrow problem, which is addressed by Eslami \textit{et al.} \cite{eslami2014certificateless}, followed by presenting an aggregate-signcryption scheme in the certificateless setting. Niu \textit{et al.} \cite{niu2017privacy} propose a secure transmission scheme for heterogeneous devices, which transmits k messages from k senders in certificateless settings to m recipient in the IBC sets. Ullah \textit{et al.} \cite{ullah2019secure} give a secure multi-sender/multi-receiver data transfer scheme for a smart camera using an identity-based aggregated signcryption scheme. 

\section{Systems Models}
Here, we discuss the two architectures: Secure Peer-to-Peer Video-on-Demand Streaming (P2P-VoD), and secure cloud-assisted IoMT-enabled healthcare system. We also define the escrow-free identity-based signcryption scheme and escrow-free identity-based aggregated signcryption scheme. Besides, we discuss security threats. 

\begin{figure}
  \centering 
  \includegraphics[width=1\linewidth]{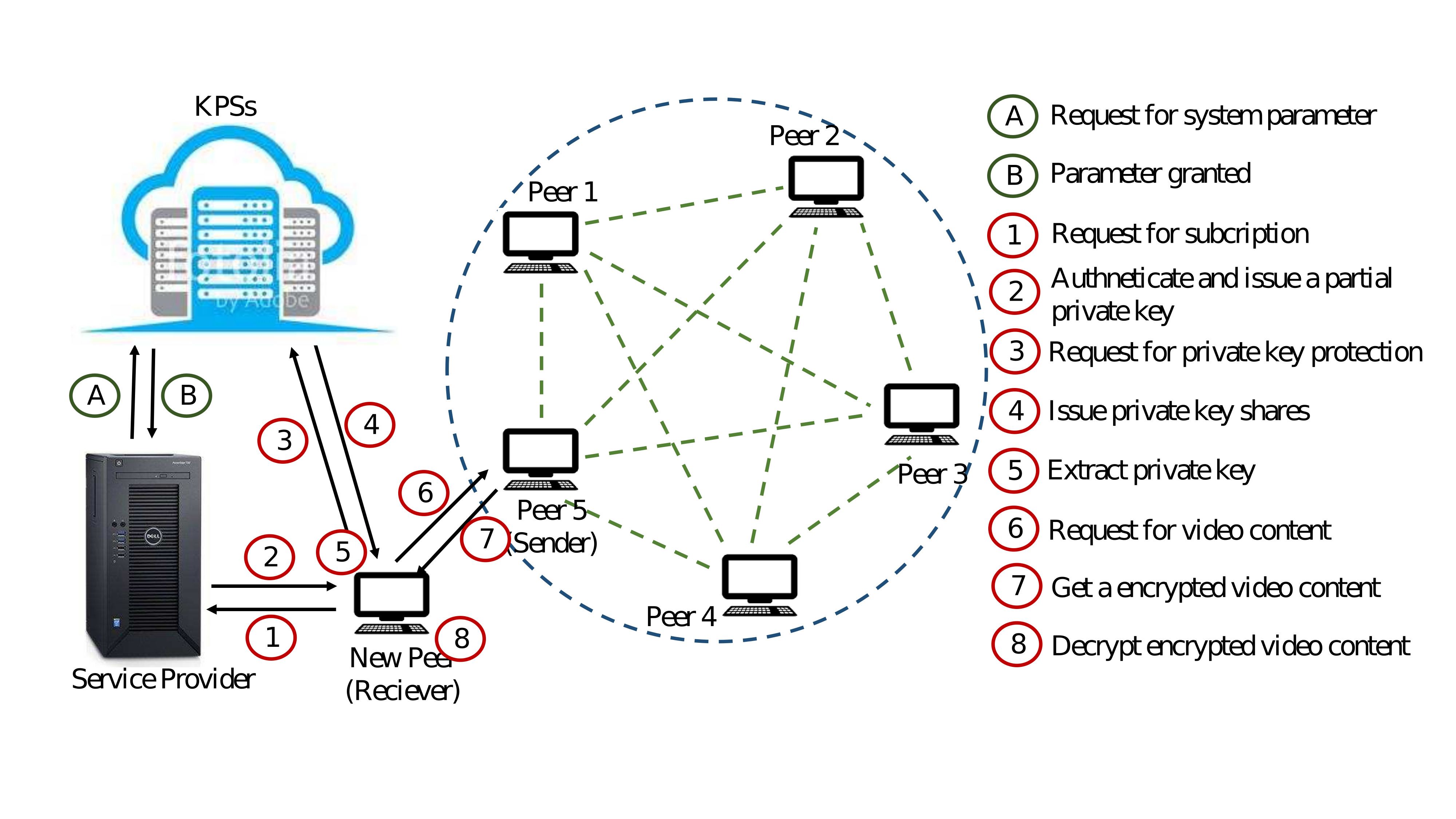}
  \caption{A mechanism of joining a new peer to the proposed P2P-VoD scheme}
\label{Fig6.1}
\end{figure}

\subsection{System Architecture of Video Streaming }

Here, we design a video-on-demand streaming architecture in the mesh-based P2P environment (P2P-VoD), which consists of three entities: service provider (SP), multiple KPSs, and peer (subscriber). 

\begin{itemize}
    \item \textit{Service Provider}: It is a semi-trusted authority that follows the correct way to compute the private key share for a peer but is still curious to know the original private key. It initializes the system with the help of multiple KPSs. It authenticates each peer in the network and provides a partial private key. 
    \item \textit{Key Privacy Servers}: KPSs protect the peer’s private key and issue the protected private key shares to a peer without knowing it. Besides, they offload the expensive computational on the cloud server. 
    \item \textit{Peer (Subscriber)}: Peer acts as a server as well as a client that can upload and download video content simultaneously. Each peer communicates with another peer using the session key in the network.
\end{itemize}

The proposed P2P-VoD system involves three phases: system initialization, connection establishment, and secure video streaming. The detail description of the proposed P2P-VoD system is demonstrated in Figure \ref{Fig6.1} 

\begin{itemize}
    \item \textit{System Initialization}: It involves two algorithms: system setup and peer registration. In the system setup algorithm, SP and KPSs set up the system, where they compute their secret keys and public parameters, given in step A and step B of Figure \ref{Fig6.1}. In the peer’s registration algorithm, a new peer requests the service provider for the subscription, shown in step 1. The SP authenticates the new peer and provides a partial private key (partial credential) against his identity, shown in step 2 of Figure \ref{Fig6.1} Now, the peer request to the multiple KPSs for key protection, given in step 3 followed by multiple Key Protection Servers (KPSs) protect the partial private key (partial credential) and issues him the protected private key (protected credential) shares, given in step 4. The new peer integrates the protected private key (protected credential) shares in order to extract his private key (credential), shown in step 5. Therefore, unlike generic IBC, the service provider, in our scheme does not have any information about the peer’s private key. 
    \item \textit{Connection establishment}: It allows two peers to establish a shared session key between them using their private keys and IDs.
    \item \textit{Secure video streaming}: It involves two algorithms: signcryption and unsigncryption. In the signcryption algorithm, a (receiver) peer requests another (sender) peer in the network, as shown in step 6. The (sender) peer divides the video content into chunks and signcrypts each chunk by its session key and private key, shown in Step 7. The signcrypted chunk is then transmitted to the (receiver) peer. Each peer has a buffer space which includes the information of desirable and available chunks. In order to upload and download video chunks, peers exchange the cache space of their choice. In unsigncryption algorithm, the (receiver) peer unsigncrypts the video chunk and can access the video content if it has a session key, as given in step 8. 
\end{itemize}

Since the proposed P2P-VoD system is dynamic and distributed in nature, it is susceptible to pollution attacks, wherein a polluter peer injects the polluted chunks into the P2P network, and so degrading the streaming quality at the recipient side. Besides, the system is also vulnerable to the untrusted service provider (SP), where SP can intentionally forge or alter the peer’s data chunks or allows any malicious peer in the network to access the data chunks. In order to overcome the pollution attack with the untrusted service provider in the P2P-VoD system, we discuss the Escrow-Free Identity-based Signcyrption (EF-IDSC) scheme. 

\subsection{Definition of Proposed EF-IDSC Scheme}

The proposed EF-IDSC scheme includes four algorithms: Setup, Keygen, Signcrypt and unsigncrypt, defined as follows.

\begin{itemize}
    \item \textit{\textbf{Setup}}: NM and KPSs compute the system parameter pp and secret keys $\langle s_0, s_1, s_2, \ldots, s_n \rangle$, where the master key $s_0$ and $\langle s_1, s_2, \ldots, s_n \rangle$ are kept secret, and $pp$ is published.
    \item \textit{\textbf{KeyGen}}: On given entity's identity $ID_E$, NM and KPSs compute the private key using their master key $s_0$ and secret keys $\langle s_1, s_2, \ldots, s_n \rangle$.
    \begin{itemize}
        \item \textit{ParPriKey}: Entity obtains a partial private key $D_0$ from NM.
        \item \textit{SecPriKey}: Entity obtains a protected private key shares $D_i$ from KPSs.
        \item \textit{KeyExt}: Entity computes its private key $d_E$.
    \end{itemize}
	\item \textit{\textbf{Signcrypt}}: On a given pp and message $M$, signer signcrypts $M$ using his private key $d_S$ and recipient  $ID_R$ and outputs the signcryptext $CT$.
	\item \textit{\textbf{Unsigncrypt}}: On given $pp$, and aggregated signcryptext $CT$, recipient decrypts it using his private key $d_R$ to obtain the messages and sender’s $ID_S$, and validates the messages $M$.
\end{itemize}

The ID-based signcryption scheme is secure if it is indistinguishably secured under chosen ciphertext and ID attack (IND-ID-CCA) and is existentially unforgeable under chosen message attack (EUF-CMA). 

\begin{definition}
Definition 6.1 (IND-CCA). The proposed EF-IDSC scheme is secured against IND-CCA, if no polynomially bounded adversaries $Adv$ has non-negligible advantage $\epsilon$ in-game, defined as:
    
\textbf{Setup}. On given security parameter $k$, challenger$Ch$ outputs public parameters $pp$, master key and sends $pp$ to $Adv$.
	
\textbf{Phase1}: $Adv$ runs the following queries. 
\begin{itemize}
    \item \textit{ParKeyIss query}. Given $ID$,$Ch$ responds the partial private key to $Adv$. 
    \item \textit{PriKeySec query}. Given a partial private key,$Ch$ responds to private key share to $Adv$.
    \item \textit{KeyExt query}. Given a private key share,$Ch$ responds to the full private key to $Adv$.
    \item \textit{Signcrypt query}. Given the identities $ID_S$, $ID_R$, KeyExt’s output, and a data content $chk$,$Ch$ returns the signcryption key and signcryptext $CT$ to $Adv$. 
    \item \textit{Unsigncrypt query}. Given ciphertext $CT$, challenger$Ch$ extracts the full private key $d_R$, outputs $chk$ to $Adv$.
\end{itemize}

\textbf{Challenger phase}. $Adv$ gives $<pkt_0,pkt_1,ID_S^*,ID_R^*>$, where  $|chk_0 |=|chk_1|$,$Ch$ picks a random $bit b\in \{0,1\}$ and computes $p2p_{sk}^*$, $CT^*$ and returns $CT^*$ to $Adv$, where keyExt query $ID_R$ has not been executed previously. 

\textbf{Phase 2}. Similar to phase 1, $Adv$ adaptively runs more queries such that no KeyExt query is allowed on $ID_R^*$  and unsigncrypt query on $<CT^*,ID_S^*,ID_R^*>$ respectively. 

\textbf{Guess}. At the end, $Adv$ guesses $b' \in {0,1}$. If $b = b'$, adversary $Adv$ wins the game with advantage.

\begin{equation} \label{eq6.1}
   A(Adv) := \lvert Pr[b=b'] - \frac{1}{2} \rvert \ge \epsilon
\end{equation}
\end{definition}

\begin{definition}
\textbf{(EUF-CMA)}. The proposed EF-IDSC scheme is secured against EUF-CMA, if no polynomial bounded forger of $F$ has a non-negligible advantage $\epsilon$ in the above game.
\begin{itemize} 
    \item 	\textit{Setup}. $F$ runs the query identical to the IND-CCA game. 
    \item \textit{Phase1}. $F$ runs the query identical to the IND-CCA game. 
	Forgery. $F$ responses the tuple $<chk^{ch},CT^{ch},ID_S^{ch},ID_R^{ch}>$, over which $F$ is not allowed to extract the ParKeyIss query on $ID_S$ but extract the KeyExt query on $ID_R$. $F$ wins the game if the output of unsigncrypt is not $\perp$.  Otherwise, $CT^{ch}$ is valid ciphertext under $ID_S^{ch}$ and $ID_R^{ch}$ (challenged identities of peers A and B).
\end{itemize}
 
\end{definition}

\subsection{Security Requirements for Video-On-Demand}
	
We adopt the security requirements, given in \cite{wang2018content, fiandrotti2015simple}for the P2P-VoD system.
\begin{itemize}
    \item \textit{Peer’s authentication}: It ensures that the claiming (receiver) peer is the legitimate peer in the network for accessing the online requested services.
    \item \textit{Access control}: It ensures that only authorized peers can access the online requested services. 
    \item \textit{Data integrity/privacy}: The video data content should not be altered or modified during transmission. That is, the content sent from the services provider is the same as the content received from the peer.
    \item \textit{Data confidentiality}: The video data content remains confidential other than the intended peer. 
    \item \textit{Replay attack}: It is to ensure that the valid video content transmission is not maliciously repeated or delayed. 
    \item \textit{Pollution attack}: An attacker may intentionally modify the received video chunks or forge video chunks and injects the polluted chunks into the video content which degrades the quality of video content on the recipient side.
    \item \textit{Flooding attack}:  It is a kind of denial of service attack. It occurs when the burdens of the Service providers, KPSs, and peers are bottlenecked under the condition that a flood of users may request key distribution and video encryption at the same time.
\end{itemize}

\subsection{System Architecture for Secure IoCAHC System}

Here, we discuss the network architecture of the proposed IoMT-based cloud-assisted healthcare (IoCAHC) system, illustrated in Figure (\ref{fig6.2}). 

\begin{definition}
(\textbf{IoMT-based healthcare system}). The proposed cloud-assisted IoMT-based healthcare system consists of six entities.  
\begin{itemize}
    \item \textit{Network manager (NM)}. The NM is a semi-trusted authority that computes its master and public key. It authenticates an entity and issues a partial private key to it. 
    \item \textit{Key protection servers (KPSs)}. KPSs protect the private key of an entity using their secret keys and issue a protected private key share to it. They perform computations on the cloud to mitigate the computation overhead.   
    \item \textit{Bio-medical sensor (BMS)}. It is a tiny sensor that has limited storage space, battery life, and computation power. It is installed either on/outside the patient’s body (wearable sensors) or deployed in the patient’s tissues (implanted sensors). 
    \item \textit{Personal assisted device (PAD)}. It is a data sink that has sufficient computation and storage but is not trustworthy as it is effortless for an attacker to retrieve the patient’s sensitive data by physically stealing the phone or statistically attack on it. 
    \item  \textit{Medical cloud server (MCS)}. It is a semi-trusted cloud server, which stores the patient’s PHI. It also validates the PHI and provides the accessibility of PHI to SD.
    \item \textit{Server Device (SD)}. A device on the medical institution side, can access the patient’s PHI on MCS and diagnoses the patient’s diseases based on their resulting PHI.
\end{itemize}
	 
\end{definition}

The proposed system achieves secure communication in the following phases: securing sensor devices within BAN, securing communication within BAN, and outside the BAN, and using the EF-IDASC scheme for secure data transmission between BMS and the server device.

\begin{figure}
  \centering
  \includegraphics[width=1\linewidth]{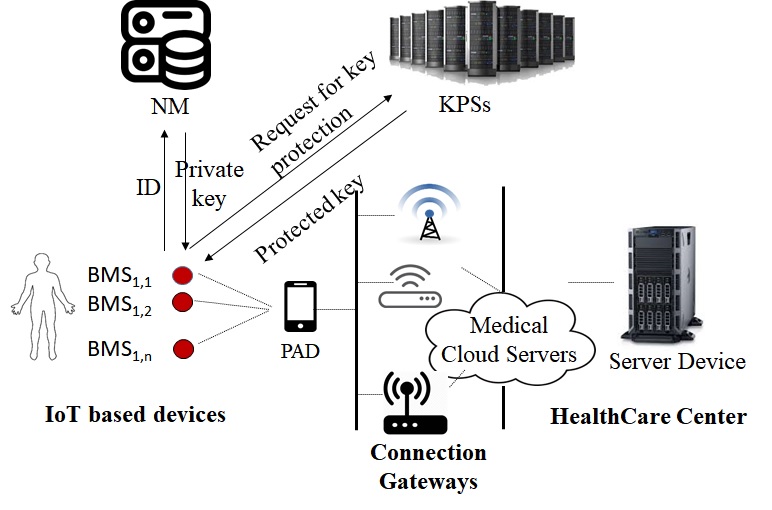}
  \caption{Detailed architecture of proposed secure aggregated signcryption system for cloud-centric IoMT-based healthcare system.}
\label{fig6.2}
\end{figure}

\subsection{Definition of Proposed EF-IDASC Scheme}

\begin{definition}
(\textbf{EF-IDASC}). The proposed EF-IDASC scheme includes the following five algorithms.
\begin{itemize}
    \item \textit{Setup}: Identical to setup phase of P2P-VoD system.
    \item \textit{KeyGen}: Identical to the KeyGen phase of the P2P-VoD system, including three sub-algorithms: ParPriKey, SecPriKey and KeyExt. 
    \item \textit{Signcrypt}: Identical to signcrypt phase of P2P-VoD system and outputs the signcryptext $CT$.
    \item \textit{Aggregation}: On given signcryptext $CT$ from different senders, it outputs the aggregated signcryptext $CT_{agg}$. 
    \item \textit{Unsigncrypt}: On given $pp$, and aggregated signcryptext $CT_{agg}$, recipient decrypts it using his private key $d_R$ to obtain the messages and sender’s $ID_S$, and validates the messages $M$.
\end{itemize}

The proposed IDASC is secure if it is IND-ID-CCA and EUF-CMA secure. The definitions are identical to the definitions of IND-ID-CCA and EUF-CMA, given in section 6.1.3.
\end{definition}

\subsection{Security Goals for IoCAHC System}
According to security goals defined in \cite{pantazis2012energy}, we consider the following security attacks for the proposed systems. 
\begin{itemize}
    \item \textit{Eavesdropping}. An unauthorized party could not listen to sensitive data in the WBAN communication.
    \item \textit{Data modification}. It ensures that data cannot be altered or modified by any adversary.
    \item \textit{Non-traceability}: It ensures that an adversary cannot trace the patient’s identity and any action on given encrypted data. 
    \item \textit{Mutual authentication}. The BMS and SD are authenticated to each other, which ensures that the PHI is coming from the intended BMS and arriving at the SD. 
    \item \textit{Contextual privacy}. Any entity in the system cannot link the PHI's source and destination if they do not collude. 
	Resilient to key escrow. The NM in the system could not compute the BMS’s signcrypted key. 
	\item \textit{Public auditing}. Anyone in the network can check the integrity of data stored on the cloud without downloading and knowing the actual content of the data.
\end{itemize}

\section{Peer-to-Peer Secure VoD Streaming System}

Here, we discuss implementing a secure P2P-VoD system, which consists of the following three phases: system initialization, connection establishment and secure video streaming. 

\subsection{System Initialization}

SP and KPSs are accountable for system initialization, which includes system setup and peer registration.  

\begin{itemize}
    \item \textit{System Setup}: Given a security parameter $k$, let two group $\mathbb{G}_1$, and $\mathbb{G}_2$ of same order $q$, where $q$ is large prime of $k$-bit, and $P$ be the generator of $\mathbb{G}_1$. Suppose $e$ be a pairing functions  $e: \mathbb{G}_1  \times \mathbb{G}_1 \rightarrow \mathbb{G}_2$ and three one-way cryptographic hash functions are,  $H_1:\{0,1\}^*  \rightarrow \mathbb{G}_1$, $H_2: \mathbb{G}_2 \rightarrow \{0,1\}^m \times \mathbb{G}_1 \times \{0,1\}^t$, and $H_3:\mathbb{G}_1  \times \{0,1\}^m \times \{0,1\}^t  \rightarrow \mathbb{Z}_q^*$. SP picks a random number $s_0 \in \mathbb{Z}_q^*$ known as a master secret key, sets the public key as $P_0=s_0P$, and sends $P_0$ to all KPSs. On a given $P_0$, each $KPS^i$ chooses an integer $s_i  \in \mathbb{Z}_q^*$, known as its secret key, sets public key $P_i=s_iP_0$ and sends $P_i$ to the SP and keep safe $s_i$. The SP combines to form the global public key $Y$, $Y=\sum_{i=1}^nP_i =\sum_{i=1}^ns_i P_0 = s_0 (s_1+s_2+ \dots +s_n )P$. The SP published the public parameter, $pp=<q,e,P,P_0,\mathbb{G}_1,\mathbb{G}_2,H_1,H_2,H_3,Y,P_0,..,P_n>$, and keeps $s_0$ secret.
	\item \textit{New Peer’s Registration}: When a new peer wants to join the network, it requests the SP and KPSs for its private key (credential), defined by the following steps:
	\begin{itemize}
	    \item \textit{ParKeyIss}: The peer (sender) chooses an element $x_S \in \mathbb{Z}_q^*$, sets  $X_S=x_S P$, and  $D_S=x_S Q_S$, where $Q_S=H_1(ID_S)$. The peer sends parameters $<X_S,ID_S,D_S>$ and asks to the SP for its partial private key. The SP validates the peer identification $ID_S$ and received parameters using $e(Q_S,X_S) \overset{?}{=}e(D_S,P)$. If the above equation satisfied, SP sets $D_{S0}=s_0 D_S$, $X_{S0}=s_0 X_S$ and responds to the peer with $< D_{S0},X_{S0}>$. The peer accepts the given parameter $< D_{S0},X_{S0}>$, if the equation $e(D_S,P_0) \overset{?}{=} e(D_S0,P)$ holds.
	    \item \textit{ParKeySec}: The peer requests to $KPS^i$ for protection by sending the parameters $< D_{S0},X_{S0}>$. $KPS^i$ accepts $<D_{S0},X_{S0}>$ if $e(Q_S,X_{S0}) \overset{?}{=}e(D_{S0},P)$ holds. Computes the protected private key share $D_{Si}=s_iD_{S0}$, and responses $D_{Si}$ to the peer.
	    \item \textit{KeyExt}: The peer accepts private key share $D_{Si}$, if  $e(D_S,P_i) \overset{?}{=} e(D_{Si},P)$ holds. Peer then extracts its private key $d_S$, $d_S=x_S^{-1} \sum_{i=1}^nD_{Si} =s_0(s_1+s_1+ \dots +s_n )Q_S$. Similarly (receiver) peer computes its private key $d_R$.
	\end{itemize}
\end{itemize}

\subsection{Connection Establishment}

This phase establishes a shared secret key between two peers in the network. This algorithm makes a connection between a (sender) peer and (receiver) peer and establishes a shared session key $p2p_{sk}$ between them as follows:
\begin{itemize}
    \item From his private key $d_S$, identity $ID_S$, receiver’s identity $ID_R$ and pp, the (sender) peer chooses a random element as $a \in \mathbb{Z}_q^*$, and computes $A=aQ_S$, where $Q_S=H_1 (ID_S)$.
    \item Compute session key as $p2p_{sk}=e(d_S,Q_R)^a$, where $Q_R=H_1 (ID_R)$.
    \item Send parameter $A$ to (receiver) peer.
    \item On given parameter $A$, and its private key $d_R$, (receiver) peer computes a shared session key as $p2p_{sk}= e(d_R,A)$.
\end{itemize}
	
It is noted that the P2PKeyExt algorithm is similar to the offline signcryption of a generic offline/online identity-based signcryption scheme with some additional properties. In related existing schemes, the offline signcryption algorithm computes expensive operations independent of the video chunks. It stores the pre-computed parameters for online signcryption whenever any peer requests the video chunk in the network. On the other hand, the P2PKeyExt algorithm in our proposed scheme also computes the expensive operations and stores the computed parameters for signcrypting the chunks. Besides, it establishes a shared session key between the network's two peers that will help signcrypting video chunks during signcrption algorithm. Thus, the P2PKeyExt algorithm performs the offline signcryption and establishes a shared secret key between two peers. Because of this reason, the offline computation cost is high. 

\subsection{Secure Video-On Demanding Streaming}

The proposed P2P-VoD system consists of three sub-algorithms: P2PKeyExt, signcrypt, and unsigncrypt, in which a (sender) peer divides the video content into chunks, signcrypts each chunk with the shared key $p2p_{sk}$ and transmits the signcrypted chunks to (receiver) peer where (receiver) peer unsigncryptes the signcrypted chunk with its shared key $p2p_{sk}$.
\begin{itemize}
    \item \textit{Signcrypt}: The (sender) peer signcrypts the video chunks $chk \in \{0,1\}^m$ using its private key $d_S$ and session key $p2p_sk$. It set $h=H_3(chk,A,T)$,  $C=(a+h)d_S$ and $D=H_2 (p2p_sk) \oplus chk||T||C$, where $T \in \{0,1\}^t$ denotes timestamp. It sends $CT=<D,A>$  to the (receiver) peer. 
	Unsigncrypt: The (receiver) peer extract $chk||T||C =D  \oplus H_2 (p2p_{sk})$, set $h=H_3(chk,A,T)$ and accepts $chk$, if Equation (\ref{eq6.2}) holds.
	\begin{equation} \label{eq6.2}
	    e(C,P)=e(Y,A+hQ_S )
	\end{equation}
\end{itemize}

This completes the full implementation of our proposed secure P2P-VoD system.

\section{System Analysis}

\subsection{Security Proof}

\begin{theorem} \label{thm6.1}
(\textbf{consistency}). The proposed P2P-VoD scheme is consistent.
\end{theorem}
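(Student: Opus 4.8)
The plan is to verify consistency by tracing each quantity through the three phases and checking that every equality the protocol relies on holds for honest parties. First I would confirm that the extracted private key has the closed form claimed in \emph{KeyExt}: since $D_{S0}=s_0 D_S = s_0 x_S Q_S$ and each $KPS^i$ returns $D_{Si}=s_i D_{S0}=s_i s_0 x_S Q_S$, the aggregate is $\sum_{i=1}^n D_{Si}=s_0 x_S(s_1+\dots+s_n)Q_S$, so multiplying by $x_S^{-1}$ cancels the blinding factor and yields $d_S=s_0(s_1+\dots+s_n)Q_S$; the identical argument gives $d_R=s_0(s_1+\dots+s_n)Q_R$. Along the way I would also note that the intermediate pairing checks, such as $e(Q_S,X_S)\overset{?}{=}e(D_S,P)$ and $e(D_S,P_i)\overset{?}{=}e(D_{Si},P)$, are all satisfied by bilinearity, since $D_S=x_S Q_S$ and $P_i=s_i P_0$.

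Next I would show that the two peers agree on the session key. The sender computes $p2p_{sk}=e(d_S,Q_R)^a$ while the receiver computes $e(d_R,A)$ with $A=aQ_S$. Substituting the closed forms of $d_S$ and $d_R$ and using bilinearity, both sides reduce to $e(Q_S,Q_R)^{a s_0(s_1+\dots+s_n)}$, where the symmetry of the pairing $e:\mathbb{G}_1\times\mathbb{G}_1\to\mathbb{G}_2$ is exactly what lets the sender's and receiver's expressions coincide. This equality is the crux of unsigncryption, because the mask $H_2(p2p_{sk})$ must be identical on both ends.

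With the session keys equal, recovery of the chunk is immediate: the receiver forms $D\oplus H_2(p2p_{sk})$, and because the sender set $D=H_2(p2p_{sk})\oplus(chk\|T\|C)$ with the very same mask, the XOR cancels and returns $chk\|T\|C$ intact, so the recomputed $h=H_3(chk,A,T)$ matches the one used in signing. It then remains to verify the acceptance equation $e(C,P)=e(Y,A+hQ_S)$. Substituting $C=(a+h)d_S=(a+h)s_0(s_1+\dots+s_n)Q_S$ into the left side gives $e(C,P)=e(Q_S,P)^{(a+h)s_0(s_1+\dots+s_n)}$, while the right side is $e(Y,A+hQ_S)=e\bigl(s_0(s_1+\dots+s_n)P,\,(a+h)Q_S\bigr)=e(P,Q_S)^{(a+h)s_0(s_1+\dots+s_n)}$; the two agree, so an honestly generated signcryptext is always accepted.

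I do not expect a genuine obstacle, since consistency is a direct computation. The only points demanding care are purely book-keeping: confirming that the blinding factor $x_S^{-1}$ precisely cancels the $x_S$ introduced in \emph{ParKeyIss}, and that the symmetry and bilinearity of $e$ are invoked correctly so that the two derivations of $p2p_{sk}$ and the two sides of the verification pairing truly coincide. The thing to remain vigilant about is that $chk$, $A$, and $T$ are all recovered before $h$ is recomputed, which guarantees the verification hash equals the signing hash.
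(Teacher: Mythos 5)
Your proposal is correct and follows essentially the same route as the paper's proof: verify by direct substitution and bilinearity that the sender's and receiver's session keys coincide, and that the acceptance equation $e(C,P)=e(Y,A+hQ_S)$ holds for an honestly generated signcryptext. The extra bookkeeping you include (the explicit cancellation of the blinding factor $x_S^{-1}$ in KeyExt and the XOR recovery of $chk\|T\|C$) is left implicit in the paper but changes nothing of substance.
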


\begin{proof}
Proof. In order to define the consistency of the P2P-VoD system, we discuss the consistency of session key $p2p_{sk}$ in  P2PKeyExt and unsigncryption algorithm.  

The consistency of session key $p2p_{sk}$ is verified as 

\vspace{-12mm}

\begin{align*}
    p2p_{sk}&=e(d_S,Q_R )^a=e(ad_S,Q_R ) \\
    &=e(as_0 (s_1+s_2+ \dots +s_n ) Q_S,Q_R ) \\
    &=e(aQ_S,s_0 (s_1+s_2+ \dots +s_n ) Q_R )
    &=e(A,d_R )=p2p_{sk}
\end{align*}
\vspace{-5mm}

The consistency of unsigncryption is verified as
\vspace{-12mm}

\begin{align*}
    e(C,P)=& e((a+h)d_S,P)\\
    & =e((a+h)s_0 (s_1+s_2+ \dots +s_n ) Q_S,P) \\
    & =e((a+h)Q_S,s_0 (s_1+s_2+ \dots +s_n )P) \\
    & =e(A+hQ_S,Y)
\end{align*}
\vspace{-5mm}

This proves the consistency of our proposed scheme.
\end{proof}

\begin{theorem} \label{thm6.2}
The proposed EF-IDSC scheme is $(q_1,q_2,q_3,q_{pp},q_p,q_{sc},q_{us},t,\epsilon)$-IND-CCA secure in the ROM, assuming the $(\epsilon',t')$-BDHP assumption, where $t'=t+t_B+q_1+q_2+q_3+ q_{pp}+ q_p+ q_{sc}+ q_{us}$, $\epsilon'=\epsilon/((q_1+q_{pp}q_P^n)q_3)(1-q_{sc}(q_2+2q_{sc})/2^k)$, and $q_1$, $q_2$, $q_3$, $q_{pp}$, $q_p$, $q_{sc}$, $q_{us}$ are the numbers of $H_1$, $H_2$, $H_3$, partial private key, private key, signcrypt queries, and unsigncrypt queries, respectively.
\end{theorem}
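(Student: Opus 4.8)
The plan is to prove Theorem~\ref{thm6.2} via a sequence of reductions that converts any IND-CCA adversary $Adv$ against the EF-IDSC scheme into an algorithm $B$ that solves the BDH problem. First I would have $B$ receive a BDH instance $\langle P, aP, bP, cP\rangle$ and set up the simulated environment by embedding the challenge into the public parameters, typically by implicitly identifying the global public key $Y$ (or the master/secret-key product $s_0(s_1+\dots+s_n)$) with one of the unknowns $a$. The security parameter $k$, the groups $\mathbb{G}_1,\mathbb{G}_2$, and the three hash functions $H_1,H_2,H_3$ are all treated as random oracles controlled by $B$, and $B$ publishes $pp$ to $Adv$ without revealing the embedded unknowns.

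Next I would specify how $B$ answers each oracle query. For the $H_1$ oracle, $B$ maintains a list $H_1^{List}$ and, for each queried identity $ID_i$, flips a biased coin to decide whether to embed the unknown $bP$ (making $ID_i$ a ``target'' identity at which the extraction query must fail) or to program $H_1(ID_i)=\beta_i P$ for a known $\beta_i$, exactly in the style of Theorem~\ref{thm 3.2} in the excerpt. This programming lets $B$ answer $\mathit{ParKeyIss}$, $\mathit{PriKeySec}$, and $\mathit{KeyExt}$ queries consistently on the non-target identities using $\beta_i$ and the published $P_0,P_i$, while aborting on the target identities. The $\mathit{Signcrypt}$ query is answered by patching the $H_2$ and $H_3$ oracles to produce a well-formed signcryptext $CT=\langle D,A\rangle$ even when $B$ does not possess the relevant private key, and the $\mathit{Unsigncrypt}$ query is answered by searching the oracle lists for the unique preimage that makes the verification relation $e(C,P)=e(Y,A+hQ_S)$ hold, rejecting otherwise. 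The small probability that a valid ciphertext is wrongly rejected is accounted for by the factor $(1-q_{sc}(q_2+2q_{sc})/2^k)$ appearing in $\epsilon'$.

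In the challenge phase, $Adv$ submits $\langle chk_0,chk_1,ID_S^*,ID_R^*\rangle$; $B$ aborts unless $ID_R^*$ is a target identity carrying the $bP$ embedding, sets the ephemeral component $A$ using the BDH component $cP$, and returns a challenge signcryptext whose $H_2$ argument is exactly the BDH value $e(P,P)^{abc}$. Because $H_2$ is a random oracle, $Adv$ can distinguish $chk_0$ from $chk_1$ only by querying $H_2$ on this value; hence with probability bounded below by $\epsilon$ the BDH solution appears among $Adv$'s $H_2$ queries, and $B$ outputs a random such entry, succeeding with the extra $1/q_3$-type loss. Collecting the abort probabilities across all $(q_1+q_{pp}q_P^n)$ embedding choices and the oracle-guessing loss $q_3$ yields the stated advantage $\epsilon'=\epsilon/((q_1+q_{pp}q_P^n)q_3)(1-q_{sc}(q_2+2q_{sc})/2^k)$, while the running time $t'=t+t_B+q_1+q_2+q_3+q_{pp}+q_p+q_{sc}+q_{us}$ simply sums the simulation overhead.

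The main obstacle I anticipate is the \emph{unsigncryption oracle} simulation: since $B$ lacks the target recipient's private key, it must decide membership in the ciphertext space and recover $chk$ purely from the bookkeeping in the $H_2,H_3$ lists, and it must do so without ever computing $p2p_{sk}=e(d_S,Q_R)^a$ directly. Making this rejection-sampling consistent while bounding the probability of a soundness error (an adversary crafting a valid $CT$ whose defining hash value was never queried) is the delicate part, and it is precisely what controls the multiplicative loss $(1-q_{sc}(q_2+2q_{sc})/2^k)$ in the reduction. A secondary subtlety is ensuring the coin-flip bias in $H_1$ is tuned to maximize $\Pr[\neg\text{abort}]$ across the combined $\langle q_1+q_{pp}q_P^n\rangle$ key-related queries, mirroring the optimization already carried out in the proof of Theorem~\ref{thm 3.2}.
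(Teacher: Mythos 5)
Your proposal follows essentially the same route as the paper's proof: a random-oracle reduction to BDH that embeds the instance into the public key and the target identity's $H_1$ value via Coron-style biased programming, simulates the signcrypt/unsigncrypt oracles by list bookkeeping and the pairing verification relation, and extracts $e(P,P)^{abc}$ from the recorded hash/signcrypt entries, with the same abort events producing the factors $(q_1+q_{pp}q_P^n)$, $q_3$, and $(1-q_{sc}(q_2+2q_{sc})/2^k)$. The only differences (which BDH component is assigned to $Y$ versus the identities, and extracting the solution by picking a random $H_2$-list entry rather than by the explicit formula $Z=(B^{ch}/e(C^{ch},Y))^{-1/h}$) are cosmetic.
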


\begin{proof}
Suppose the generator of $\mathbb{G}_1$ be $P$, and the tuple $<P,aP,bP,cP>$ be a random instance of the BDH problem given to $B$. To break the IND-CCA security of the proposed scheme, $B$ helps the adversary $Adv$ solve the BDH problem, which is equivalent to computing $e(P,P)^{abc}$. Thus, $B$ responds to the following queries from $Adv$ in the IND-CCA game.

\textbf{Setup}. Suppose three random oracle hash functions are $H_1$, $H_2$, and $H_3$. Let $B$ maintain the following lists: $L_i$ for $H_i$ query, $L_{pp}$ for partial private key query, $L_p$ for private key extraction query, $L_{sc}$ for signcrypt query, and $L_{us}$ for unsigncrypt query. $B$ sets $Y=cP$ and sends it to $Adv$. $B$ randomly chooses a distinct number $j \in {1,2,\ldots,q_1}$.

\begin{itemize}
\item \textit{$H_1$ query}: $Adv$ asks an $H_1$ query on $ID_i$. On a given $ID_i$, $B$ sets $q_i=xP$ and $D_{0i}=xY$. If $ID_i \notin L_1$, $B$ adds the tuple $<ID_i,q_i,D_{0i},x>$ to $L_1$. Otherwise, $B$ outputs $q_Si=H_1(ID_{Si})=a'P$ and $q_Ri=H_1(ID_Ui) = b'P$ to $Adv$, where $a',b' \in \mathbb{Z}_q$.
\item \textit{$H_2$ query}: $Adv$ asks an $H_2$ query on $<A_i,chk_i,T_i>$. $B$ computes $H_2(A_i,chk_i,T_i)=h_i$. If the tuple $<A_i, chk_i,T_i>$ is not in $L_2$, $B$ adds the tuple $<A_i,chk_i,T_i,h_i>$ to $L_2$. Otherwise, $B$ outputs $h_i$ to $Adv$.
\item \textit{$H_3$ query}: $Adv$ asks an $H_3$ query on $<C_i>$. $B$ chooses $\lambda_i \in \mathbb{Z}_q$ and sets $H_3(C_i)=\lambda_i$. If the tuple $<C_i>$ is in $L_3$, $B$ adds the tuple $<C_i,\lambda_i>$ to $L_3$. Otherwise, $B$ outputs $\lambda_i$ to $Adv$.
\end{itemize}

\textbf{Phase 1}. The following queries are asked by $Adv$:

\begin{itemize}
\item \textit{ParKeyIss query}: Given $q_i$, $B$ aborts the process if $q_i$ has been previously asked; otherwise, $B$ outputs $D_{0i}$ to $Adv$.
\item \textit{PriKeySec query}: Given $D_{0i}$, $B$ aborts the process if $D_{0i}$ has been previously asked; otherwise, $B$ picks $s'k \in \mathbb{Z}q$ and sets $D{ki}=s'{ki}D_0$, where $1 \leq k \leq n$ denotes the $k$-th share of the private key. $B$ responds with $D_{ki}$ to $Adv$.
\item \textit{KeyExt query}: Given $D_{ki}$, $B$ aborts the process if $D_{ki}$ has been previously asked. Otherwise, $B$ computes $d_A=aP$ and $d_B=bP$ and responds with $d_B$ to $Adv$.
\item \textit{Signcrypt query}: For given data chunks $chk$, $ID_S$, and $ID_R$, $Adv$ asks the signcrypt query as follows:
\begin{itemize}
\item If $ID_S$ and $ID_R$ have not been previously queried, $B$ performs the following steps:
\begin{itemize}
\item Choose $r \in \mathbb{Z}_q^*$ and set $A= rQ_S$, $h=H_2(chk,B)$, $C=(r+h)d_S$, and $D=H_2(B) \oplus (chk||C||ID_S)$.
\item Respond with $<A,D>$ to $Adv$.
\end{itemize}
\item If $ID_S$ is queried but $ID_R$ has not been queried previously, $B$ already knows the receiver's private key $d_R$ and responds as follows:
\begin{itemize}
\item Pick numbers $r,h \in \mathbb{Z}_q^*$, set $A= rP-hQ_S$, $C= rQ_S$. Add $<A,m,h>$ to list $L_2$.
\item Find the entry $<ID_R,q_R,D{0R},x>$ in $L_1$, set $B=e(A,d_R)$, $D=H_2(B) \oplus chk||C||T$.
\item Respond with $<A,D>$ to $Adv$.
\end{itemize}
\item If $ID_S$ and $ID_R$ have been queried previously, $B$ knows the receiver's private key $d_R$ and responds as follows:
\begin{itemize}
\item Select the elements $r,h \in \mathbb{Z}q^*$ and set $A= rP-hQ_S$, $C= rY$. Add $<A,pkt,h>$ to list $L_2$.
\item Choose $\lambda \in {0,1}^m \times \mathbb{G}1 \times{0,1}^n$, set $D=\lambda \oplus chk||C||T$, and add $<ID_S,ID_R,A,B,C,r,chk,h,\lambda>$ to list $L{sc}$.
\item Respond with $<A,D>$ to $Adv$.
\end{itemize}
\end{itemize}
\item \textit{Unsigncrypt query}: For the given ciphertext $CT={A,D}$, $Adv$ asks the unsigncrypt query and performs the following steps:
\begin{itemize}
\item If $ID_R$ has not been previously queried, $B$ already knows $d_R$. $B$ may execute the unsigncrypt query to obtain $pkt$ and verify the signature. If it does not satisfy the verification, the process is aborted; otherwise, $B$ sends $chk$ to $Adv$.
\item If $ID_S$ and $ID_R$ have not been queried previously, $B$ already knows $d_S$ and responds as follows:
\begin{itemize}
\item Find the entry $<ID_R,q_R,D{0R},x> \in L_1$, set $B=e(A,d_R)$.
\item If $B \in L_3$, $ID_S \in L_1$, and $<A,chk> \in L_2$, compute $C||T||chk \leftarrow y \oplus H_3(B)$, $q_1 \leftarrow H_1 (ID_S)$, and $h \leftarrow H_2(A,chk)$.
\item Check if $e(C,P)=e(Y,A+hq_1)$, and respond with $chk$, $(A, C)$, and $ID_S$.
\item Otherwise, find the entry $<ID_S,ID_R,A,B,C,r,h,chk,\lambda> \in L_{sc}$, and compute $C||T||chk \leftarrow y \oplus h$, $q_1 \leftarrow H_1(ID_S)$, and $h \leftarrow H_2 (A,chk)$. Check if $e(C,P)=e(Y,A+hq_1)$, and respond with $chk$, $(A, C)$, and $ID_S$.
\end{itemize}
\end{itemize}
\end{itemize}

\textbf{Challenge}. $Adv$ responds with two chunks $chk_0$, $chk_1$, and identities $ID_S^{ch}$, $ID_R^{ch}$ to be challenged.

\begin{itemize}
\item If $ID_R^{ch}$ has not been previously queried, $B$ fails.
\item Otherwise, $B$ selects $b \in {0,1}$ to signcrypt $chk$ and responds to $Adv$.
\end{itemize}

\textbf{Phase 2}. Similar to Phase 1, $Adv$ extracts queries, but $Adv$ is not allowed to extract the unsigncrypt query for $CT^{ch}={chk_b,A^{ch},C^{ch}}$ under identities $ID_S^{ch}$ and $ID_R^{ch}$.

\textbf{Guess}. Finally, $Adv$ guesses a bit $b' \in {0,1}$. $Adv$ wins the game if $b=b'$. To solve the BDH problem, $B$ can retrieve the $L_{sc}$ for the tuple $<ID_S^{ch},ID_R^{ch},A^{ch},B^{ch},C^{ch},h,chk_b,\lambda>$, and compute $Z=(B^{ch}/(e(C^{ch},Y)))^{-1/h}$, where

\begin{align*}
Z=(B^{ch}/e(C^{ch},Y))^{-1/h}&=(e(rP-hQ_S^{ch},d_R^{ch})/e(C^{ch},Y) )^{-1/h} \
&=((e(rP,bcP)e(-hQ_S^{ch},d_R^{ch}))/e(rbP,cP))^{-1/h} \ &=e(-hQ_S^{ch},d_R^{ch})^{-1/h} \
&=e(aP,bcP)=e(P,P)^{abc}
\end{align*}

Thus, $B$ responds with $Z=e(P,P)^{abc}$ as the solution to the given BDH problem.

\textbf{Analysis}. The probability that $B$ does not abort the game is defined by four events:

\begin{itemize}
\item \textbf{$E_1$}: $Adv$ has asked for the partial and private key on $ID_i$ with probability $Pr[E_1]=(1/(q_1+q_{pp}q_P^n))$.
\item \textbf{$E_2$}: $Adv$ does not select $ID_i$ as the receiver during the challenge phase, and from $E_1$ we can deduce $\neg E_2$.
\item \textbf{$E_3$}: $B$ stops responding to $Adv$'s signcrypt query when there is a collision in $H_2$ or $H_3$ with probability $Pr[E_3]=1/q_3$.
\item \textbf{$E_4$}: $B$ aborts the valid ciphertext with probability $Pr[E_4]=(q_2+2q_{sc})/2^k$.
\end{itemize}

The probability that $B$ solves the BDH problem is given by:

\begin{equation}
A(Adv) \geq \frac{\epsilon}{(q_1+q_{pp}q_P^n)q_3(1-q_{sc}(q_2+2q_{sc})/2^k)}
\end{equation}
\end{proof}

\begin{theorem} \label{thm6.3}
The proposed EF-IDSC scheme is $(q_1,q_2,q_3,q_{pp},q_p,q_{sc},q_{us},t,\epsilon)$-EUF-CMA secure in the ROM, assuming the $(\epsilon',t')$-BDHP assumption, where $t'=t+t_B+q_1+q_2+q_3+q_{pp}+q_p+q_{sc}+q_{us}$, $\epsilon'=\frac{\epsilon}{(4(q_1+q_{pp}q_P^n)^2(q_2+2q_{sc}))(1-q_{sc}(q_2+2q_{sc})/2^k)^2}$, and $q_1$, $q_2$, $q_3$, $q_{pp}$, $q_p$, $q_{sc}$, $q_{us}$ are the numbers of $H_1$, $H_2$, $H_3$, partial private key, private key, signcrypt queries, and unsigncrypt queries, respectively.
\end{theorem}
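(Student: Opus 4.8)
The plan is to construct a simulator $B$ that, given an instance $\langle P, aP, bP, cP \rangle$ of the BDH problem, interacts with an EUF-CMA forger $F$ and uses a successful forgery to extract $e(P,P)^{abc}$. The overall structure mirrors the IND-CCA reduction of Theorem~\ref{thm6.2}: $B$ sets $Y = cP$ and maintains the lists $L_1, L_2, L_3, L_{pp}, L_p, L_{sc}, L_{us}$ to answer the random-oracle, key-extraction, signcrypt, and unsigncrypt queries consistently. The key difference is in how $B$ plants the BDH challenge. First I would have $B$ guess an index $j \in \{1, \dots, q_1 + q_{pp}q_P^n\}$ and embed $aP$ into the $H_1$ response for the $j$-th queried identity $ID_S^{ch}$ (the intended forgery sender), so that $Q_{S}^{ch} = aP$, while all other identities receive $H_1(ID_i) = x_i P$ with $x_i$ known, allowing $B$ to answer every \emph{ParKeyIss}, \emph{PriKeySec}, and \emph{KeyExt} query not involving $ID_S^{ch}$. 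Signcrypt queries on $ID_S^{ch}$ are handled by the same trapdoor trick used in Theorem~\ref{thm6.2}: choosing $r, h$ first, setting $A = rP - hQ_S^{ch}$ and $C = rY$, and programming $H_2(A, chk, T) = h$, which produces a valid-looking signcryptext without knowing $d_S^{ch}$.

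Next I would invoke the Forking Lemma. Since the signature component satisfies the verification relation $e(C, P) = e(Y, A + h Q_S)$ with $h = H_2(\cdot)$, a forgery fixes $C = (a_0 + h)\, d_S$ where $d_S = s_0(s_1 + \cdots + s_n) Q_S^{ch}$. By rewinding $F$ with the same random tape but a fresh value $h' \neq h$ for the critical $H_2$ query, $B$ obtains two forgeries $\langle A^{ch}, C^{ch}, chk^{ch} \rangle$ and $\langle A^{ch}, C'^{ch}, chk^{ch} \rangle$ on the same commitment $A^{ch}$. Subtracting the two verification equations eliminates the $a_0$ term and yields $(h - h')\, d_S^{ch} = $ a known multiple of $C^{ch} - C'^{ch}$, so $B$ recovers $d_S^{ch} = s_0(s_1 + \cdots + s_n)\, aP$. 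Because $Y = s_0(s_1 + \cdots + s_n) P = cP$, this private key is exactly $a \cdot (cP)$ scaled by the secret product; pairing it appropriately against $bP$ then gives $e(aP, bP)^{\,s_0(\cdots)} = e(P,P)^{abc}$, the BDH solution. I would carry out the algebra so that the unknown secret-key product cancels against the planted value of $Y$.

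The probability bookkeeping is the part requiring the most care. I would track four abort events analogous to those in Theorem~\ref{thm6.2}: (i) $B$ correctly guessing that $ID_S^{ch}$ is the $j$-th identity, contributing a factor $1/(q_1 + q_{pp}q_P^n)$; (ii) the forged identity surviving all \emph{ParKeyIss} queries (the forger is forbidden from extracting the sender partial key, which is consistent with the embedding); (iii) no collision in the $H_2$/$H_3$ programming during the $q_{sc}$ signcrypt simulations, giving a factor $(1 - q_{sc}(q_2 + 2q_{sc})/2^k)$; and (iv) the forking succeeding on the correct hash index among $q_2 + 2q_{sc}$ possibilities. Because forking squares the identity-guessing success and applies twice the collision-freeness factor, multiplying these yields the advertised bound $\epsilon' = \epsilon / \big(4(q_1 + q_{pp}q_P^n)^2 (q_2 + 2q_{sc})\big)\big(1 - q_{sc}(q_2 + 2q_{sc})/2^k\big)^2$.

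The main obstacle I anticipate is the forking step combined with the multi-authority key structure. Unlike a single-PKG scheme, here $d_S = s_0(s_1 + \cdots + s_n) Q_S$, so I must ensure that $B$'s embedding of $aP$ into $Q_S^{ch}$ interacts cleanly with the protected-key-share simulation: the \emph{PriKeySec} and \emph{KeyExt} queries must never force $B$ to reveal a share that would expose the unknown product on the challenge identity. Demonstrating that the simulation remains perfectly indistinguishable to $F$ under this embedding — that is, that the trapdoor-generated signcryptexts are distributed identically to honest ones and that the Forking Lemma's requirement of an identical, correctly-distributed view across both runs is met — is where the delicate argument lies, and I would devote the bulk of the rigorous proof to verifying these two points.
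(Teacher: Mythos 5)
Your proposal is sound, but it takes a genuinely different route from the paper's own argument. The paper's proof of Theorem~\ref{thm6.3} reuses the extraction machinery of Theorem~\ref{thm6.2} essentially verbatim: it embeds $a$ into $Q_S^{ch}$ \emph{and} $b$ into $Q_R^{ch}$, sets $Y=cP$, and from a \emph{single} forgery $\langle A^{ch},C^{ch}\rangle$ computes $Z=\bigl(B^{ch}/e(C^{ch},Y)\bigr)^{-1/h}=e(P,P)^{abc}$ directly, with no rewinding — the ratio cancels the simulator-known parts and leaves $e(-hQ_S^{ch},d_R^{ch})^{-1/h}=e(aP,bcP)$. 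You instead embed $a$ only into the sender's hash value, invoke the Forking Lemma to obtain two forgeries on the same commitment $A^{ch}$ with distinct $H_2$ outputs $h\neq h'$, recover $d_S^{ch}=(h-h')^{-1}(C^{ch}-C'^{ch})=acP$, and then pair against $bP$; this is algebraically correct since $Y=cP$ forces the secret product $s_0(s_1+\cdots+s_n)$ to equal $c$ in the simulation. What your approach buys is that it actually \emph{accounts for} the shape of the stated bound: the factor $4$, the square on $(q_1+q_{pp}q_P^n)$, and the square on the collision-freeness term are characteristic artifacts of the forking/splitting lemma, whereas the paper's single-run extraction gives no natural reason for those squares to appear — the paper appears to have imported a forking-style bound without performing the forking. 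What the paper's route buys is brevity and the avoidance of the delicate rewinding argument you correctly flag (identical views across both runs, and the interaction of the fork with the multi-authority key shares). Your concluding worry about the \emph{PriKeySec}/\emph{KeyExt} simulation on the challenge identity is well placed and is precisely the point the paper glosses over; if you carry out the forking argument rigorously there, your proof would be the more defensible of the two.
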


\begin{proof}
Consider a forger $F$ against our proposed system. In order to forge a signature in the proposed system, we construct a simulator B that helps $F$ for solving the BDH problem with the probability of the least $\epsilon$. Identical to Theorem 6.1, $F$ runs $H_1$, $H_2$, $H_3$, partial private key, private key, signcrypt and unsigncrypt queries. $F$ asks the simulator $B$ in the same way as in Theorem 6.1. So, $B$ outputs the following queries for $F$ in the IND-CCA game.

\textbf{Setup}. Suppose $H_1$, $H_2$ and $H_3$ are three random oracle models. Let $B$ maintain the following lists: $L_i$ for $H_i$ query, $L_{pp}$ for partial private key query, $L_p$ for private key extraction query, $L_{sc}$ for Signcrypt query and list $L_{us}$ for unsigncrypt query. $B$ set $Y=cP$ and sends it to $F$. $B$ randomly chooses a distinct number $j \in \{1,2,..,q_1\}$. 
\begin{itemize}
    \item \textit{$H_1$ query}. $F$ asks $H_1$ query on $ID_i$. On given $ID_i$, $B$ set $q_i=xP$, and $D_{0i}=xY$, if $ID_i \notin L_1$, then adds the tuples $<ID_i,q_i,D_{0i},x>$ in $L_1$. Else, $B$ outputs $q_Si=H_1(ID_{Si})=a'P$ and $q_Ri=H_1 (ID_{Si})=b'P$ to $F$, where $a',b' \in \mathbb{Z}_q$.
    \item \textit{$H_2$ query}. $F$ asks $H_2$ query on $<A_i, chk_i,T_i>$. $B$ computes $H_2(A_i,chk_i,T_i)=h_i$,  if the tuple $<A_i,chk_i,T_i> \notin L_2$ then adds the tuples $<A_i,chk_i,T_i,h_i>$  in list $L_2$. Else, $B$ outputs $h_i$ to $F$.
    \item \textit{$H_3$ query}. $F$ asks $H_3$ query on $<C_i>$. $B$ chooses $\lambda_i\in \mathbb{Z}_q$ and set $H_3(C_i) = \lambda_i$,  if the tuple $<C_i> \in L_3$ and adds the tuples $<C_i,\lambda_i>$ in list $L_3$. Otherwise, $B$ outputs $\lambda_i$ to $F$.
\end{itemize}

\textbf{Phase 1}. The following queries are asked by the $F$.
\begin{itemize}
    \item \textit{ParKeyIss qyery}: On given $q_i$, $B$ abort the process, if $q_i$ is previously asked, otherwise, outputs $D_{0i}$ to $F$.
	\item \textit{PriKeySec query}. On given $D_{0i}$, $B$ abort the process, if $D_{0i}$ is previously asked. Otherwise, $B$ picks $s'_k \in \mathbb{Z}_q$ and set $D_{ki}=s'_{ki}D_0$, where $1 \le k \le n$,  denotes the $k$-th share of private key. $B$ responds $D_{ki}$  to the $F$.
	\item \textit{KeyExt query}. On given $D_{ki}$, $B$ aborts the process, if  $D_{ki}$ is previously asked. Otherwise, $B$ computes $d_S=aP$ and $d_R=bP$, and responds $d_R$ to the $F$. 
	
	\item \textit{Signcrypt query}.  For given data chunks $chk$, $ID_S$ and $ID_R$, $F$ asks the signcrypt query as follows.
	\begin{itemize}
	    \item If $ID_S$ and $ID_R$ are not previously queried, $B$ performs the steps. 
	    \begin{itemize}
	        \item Choose $r \in \mathbb{Z}_q^*$, set $A= rQ_S$, $h=H_2 (chk,B)$, $C=(r+h)d_S$ and $D=H_2(B) \oplus chk||C||ID_S$. 
	        \item Responds $<A,D>$ to the $F$. 
	    \end{itemize}
	    \item If $ID_S$ is queried but $ID_R$ is not queried previously, $B$ already knows the receiver’s private key $d_R$ and responds as follows. 
	    \begin{itemize}
	        \item Selects random numbers $r,h \in \mathbb{Z}_q^*$, set  $A= rP-hQ_S$,  $C= rQ_S$. Add $<A,m,h>$ in list $L_2$.
	        \item find the entry $<ID_R,q_R,D_{0R},x>$ in $L_1$,  set $B=e(A,d_R)$, set $D=H_2(B) \oplus  chk||C||T$. 
	        \item Responds $<A,D>$ to $F$. 
	    \end{itemize}
        \item If  $ID_S$ and $ID_R$ are queried previously, $B$ knows the receiver’s private key $d_R$ and responds as follows.
        \begin{itemize}
            \item 	Select the elements $r,h \in \mathbb{Z}_q^*$, and set  $A= rP-hQ_S$,  $C= rY$. Add $<A,pkt,h>$ in the list $L_2$. 
            \item Chooses $\lambda \in \{0,1\}^m \times \mathbb{G}_1 \times \{0,1\}^n$,  set $D=\lambda \oplus chk||C||T$,  add $<ID_S,ID_R,\\ A, B,C,r,chk,h,\lambda>$ to the list $L_{Sc}$. 
            \item Responds $<A,D>$ to $Adv$. 
        \end{itemize}
	\end{itemize}
	\item \textit{Unsigncrypt query}. For given ciphertext $CT=<A,D>$, $F$ asks the unsigncrypt query and performs the following steps. 
	\begin{itemize}
	    \item If $ID_R$ is not previously queried, $B$ already knows $d_R$. $B$ may execute the unsigncrypt query to obtain the $pkt$ and verify the signature. If it does not satisfied the verification is aborted, otherwise, sends $chk$ to $F$. 
	    \item If  $ID_S$ and $ID_R$ is not queried previously, $B$ already knows the $d_S$ and responds as follows:
	    \begin{itemize}
	        \item Find the entry $<ID_R,q_R,D_{0R},x> \in L_1$, set $B=e(A,d_R)$. 
	        \item If $B \in L_3$, $ID_S \in L_1$ and $<A,chk> \in L_2$, compute $C||T||chk \leftarrow y \oplus H_3(B)$,  $q_1 \leftarrow H_1 (ID_S)$, and  $h \leftarrow H_2 (A,chk)$.
	        \item check if $e(C,P)=e(Y,A+hq_1)$, and responds $m$, $<A, C>$ and $ID_S$,
	        \item Else, find the entry $<ID_S,ID_R,A,B,C,r,h,chk,\lambda> \in L_{sc}$, and compute $C||T||chk \leftarrow y \oplus h$, $q_1 \leftarrow H_1(ID_S)$ and $h \leftarrow H_2 (A,chk)$.
            \item check if $e(C,P)=e(Y,A+hq_1)$, and responds $chk$, $<A, C>$ and $ID_S$,
	    \end{itemize}
	\end{itemize}
\end{itemize}

\textbf{Forgery}. Forger $F$ responds the ciphertext  $CT^{ch}=<A^{ch},C^{ch}$ as valid signcryption under identities $ID_S^{ch}$  and $ID_R^{ch}$. If $ID_S^{ch}$ is not previously queried, the simulation is aborted. Otherwise, $B$ set $A= rP-hQ_S$ and $C= rQ_S$ by running the $H_1$ and $H_2$ query. 

Then, we have $Z=(B^{ch}/(e(C^{ch},Y)))^{-1/h}=e(P,P)^{abc}$ as the solution of the given BDH problem. It can be noticed that if a valid tuple does not found in $L_2$, then $F$ will not have any advantage.

Thus, the probability that $B$ solve the BDH problem is:

\begin{equation}
    A(F) \ge \frac{\epsilon}{(4(q_1+q_{pp} q_P^n )^2 (q_2+2q_{sc}))(1-q_{sc}(q_2+2q_sc)/2^k)^2}
\end{equation}

\end{proof}

\subsection{Security Requirements }

This section discusses the security requirements that our proposed P2P-VoD scheme has achieved.

\begin{itemize}
    \item \textit{Peer authentication}: In our proposed scheme, a peer requests to the SP for the subscription, in which SP authenticates the peer against his $ID$ and gives a partial private key to him. Followed by multiple KPSs providing the protected private keys to him. Thus, SP and multiple KPSs authenticate/register the peer. 
    \item \textit{Access control}: A (sender) peer, in the proposed P2P-VoD system, signcrypts the video chunk using a pair of the session key $p2p_{sk}$, private key $d_S$ and recipient’ peer $ID_R$. It broadcasts the signcrypted chunk in the network which could be unsigncrypted by that peer in the network who has pre-established session key $p2p_{sk}$. The intended (receiver) peer who has previously established a session key $p2p_{sk}$ can access the chunks by unsigncrypting the unsigncrypted chunks. Suppose an adversary or a malicious peer in the network wish to access the video content. He must first guess the private key $d_R$ of the (receiver) peer to compute the session key $p2p_sk$, which is equivalent to solving the BDH problem, as given in Theorem 7.10. Thus, the proposed P2P-VoD system provides access control only to the intended peer in the network.
    \item \textit{Data integrity}: On the (sender) peer side, the video content chunk is encrypted with the session key $p2p_{sk}$, computed from his private key $d_S$ and (receiver) peer $ID_R$. The encrypted content is decrypted by the intended peer who has the session key $p2p_{sk}$ corresponding private key $d_R$. Suppose an adversary alters or modifies the encrypted video content over the internet which will be caught by the (receiver) peer during unsigncryption process. Thus, the proposed P2P-VoD system provides the integrity of video content. 
	\item \textit{Data confidentiality}: The encrypted video data chunk is only intended for the peer whose private key $d_R$ is corresponding to the encrypted key $ID_R$. Suppose an adversary who wishes to alter or modify the video data chunk, he requires to guess the peer’s decryption key, which is equivalent to solving the BDH problem, as given in Theorem 7.10. 
	\item \textit{Replay attack}: The signcryption process of the P2P-VoD scheme uses the timestamp $T$, which ensures that video data chunk transmission is not maliciously repeated or delayed. Thus, the proposed P2P-VoD system is resilient to replay attacks.
	\item \textit{Pollution attacks}. Theorems 7.10 and 7.11 prove that under the assumption of the BDH problem adversary can alter and forge the video chunks respectively with negligible advantage. Therefore, the proposed P2P-VoD system defends polluted data chunks available to other peers. 
    \item \textit{Flood attack}. In the proposed EF-IDSC scheme, the system initialization and secure key mechanism are two independent algorithms. When a new peer wishes to request a video from a neighbour peer, it must be pre-registered with the service provider and KPSs during the system initialization phase and gets a private key (credential) against his identity ID. After obtaining a private key from the service provider, the new peers now can push or pull the video content using the proposed EF-IDSC scheme.
\end{itemize}

\subsection{Performance Evaluation }

This section simulates an experiment on the proposed IF-IDSC scheme of the P2P-VoD system, which includes the storage, communication, and computation cost. The experiment is performed on the same machine as discussed in Chapter \ref{chapter3}. Besides, we consider the cryptographic operations and their computation costs, as mentioned in Table \ref{tbl3.1}. In order to analyze the bandwidth and storage cost, we consider $|\mathbb{G}_1|$ as the size of multiplication group $\mathbb{G}_1$, $|\mathbb{G}_2|$ as the size of multiplication group $\mathbb{G}_2$, $|\mathbb{Z}_q|$ as the size of $\mathbb{Z}_q$, $|id|$ as the size of identity, and $|chk|$ as the size of the video data chunk. 

\begin{table}
        \centering
        \caption{Computational, storage and communication cost comparison of our scheme with other related schemes}
        \label{tbl6.1}
        \begin{tabular}{|c|c|c|c|c|c|}
            \hline
            \multicolumn{1}{|c|}{Schemes} & \multicolumn{3}{|c|}{Computation cost (in ms)} & \multicolumn{2}{|c|}{Size (in bytes)} \\
            \cline{2-6}
             &	Offline &	Online  &	Unsigncrpt &	Offline  &	Signcryptext  \\
                        \hline
            \hline
            \cite{lai2017efficient} &33.35	&0.23	&60.03	&198	&108\\
            \cite{li2015certificateless} &38.18	&6.67	&104.88	&298	&252\\
            \cite{li2017certificateless} &31.51	&0.23	&64.84	&166	&110\\
            \cite{saeed2018hoosc} &31.51	&0.23	&51.52	&234	&178 \\
            \cite{omala2016provably} &---	&26.63	&26.68	&---	&86\\
            Our	    &45.08	&6.67	&46.67	&100	&76\\
            \hline
        \end{tabular}
\end{table}

\begin{table}
        \centering
        \caption{General comparison of our scheme with related schemes}
        \label{tbl6.2}
        \begin{tabular}{|c|c|c|c|c|c|}
            \hline
            Schemes	& Cryptographic   &	Security  &	Analysis  &	Preserve  &	Resilient to \\
            	&   Primitives &	 Assumption. &	 model &	ID-based &	Key Escrow \\
            		&    &	  &	  &	 Feature &	 problem \\
            \hline
            \hline
            \cite{lai2017efficient} & OOIDSC &	k-CAA1, q-BDHI$^@$, q-SDH$^\#$ &	ROM	& Yes &	No\\
           \cite{li2015certificateless} &OOCLSC &	UF-CMA-I, q-CAA	& ROM &	No &	yes\\
            \cite{li2017certificateless} &OOCLSC &	q-BDHI, q-SDH, mBDHIP{\#\#}	& ROM &	No &	Yes\\
            \cite{saeed2018hoosc} &HOOCLSC &	q-BDHIP, q-SDHP, ECDLP &	ROM &	No	& Yes \\
            \cite{omala2016provably} &CLSC	& CDH, GDHP	&ROM	& No &	Yes\\
            Our	    &EF-IDSC &	BDH &	ROM &	Yes &	Yes\\
            \hline
        \end{tabular}
        {\\$^@$q-Bilinear-Diffie Inverse Problem, $^\#$q-strong Diffie-Hellman problem, $^{\#\#}$ modified Bilinear Diffie-Hellman inverse problem,}
\end{table}

We assume three metrics for the performance evaluation to examine the experimental result: computational cost (in msec), storage cost (in Bytes) and communication cost (in Bytes). For the performance analysis, we compare the proposed IDSC scheme with Lai \textit{et al.} \cite{lai2017efficient}, Li \textit{et al.} \cite{li2015certificateless}, Li \textit{et al.} \cite{li2017certificateless}, Saeed \textit{et al.} \cite{saeed2018hoosc} and Omala \textit{et al.} \cite{omala2016provably} schemes.

\begin{figure}
  \centering
  \includegraphics[width=0.8\linewidth]{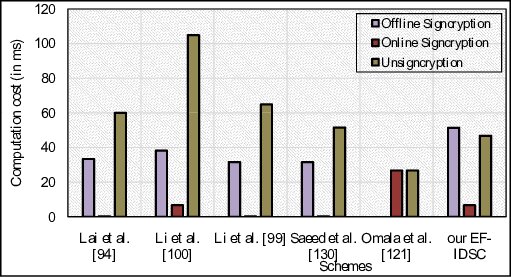}
  \caption{Offline, online and unsigncryption (in ms) comparison of proposed EF-IDSC scheme with other related schemes}
\label{fig6.3}
\end{figure}

\begin{figure}
  \centering
  \includegraphics[width=0.8\linewidth]{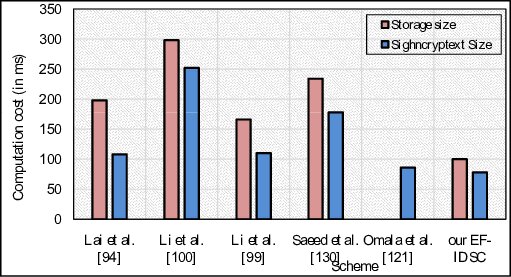}
  \caption{Storage and signcryptext size (in Bytes) comparison of proposed EF-IDSC scheme with other related schemes}
\label{fig6.4}
\end{figure}

\textbf{Computation cost}. The proposed scheme takes 2*20.01 + 1*0.23 + 1*4.83 = 45.08 ms for offline signcryption, while Lai \textit{et al.} \cite{lai2017efficient}, Li \textit{et al.} \cite{li2015certificateless}, Li \textit{et al.} \cite{li2017certificateless} and Saeed \textit{et al.} \cite{saeed2018hoosc} take 33.35 ms, 38.18 ms, 31.51 ms and 31.51 ms, respectively. In online signcryption, the proposed scheme consumes 1*6.67 = 6.67 ms while Lai \textit{et al.} \cite{lai2017efficient}, Li \textit{et al.} \cite{li2015certificateless}, Li \textit{et al.} \cite{li2017certificateless}, Saeed \textit{et al.} \cite{saeed2018hoosc} and Omala \textit{et al.} \cite{omala2016provably} schemes consume 0.23 ms, 0.23 ms, 0.23 ms, 26.68 ms and 6.67 ms, respectively. Similarly, the computational cost of our scheme with Lai \textit{et al.} \cite{lai2017efficient}, Li \textit{et al.} \cite{li2015certificateless}, Li \textit{et al.} \cite{li2017certificateless}, Saeed \textit{et al.} \cite{saeed2018hoosc} and Omala \textit{et al.} \cite{omala2016provably} schemes are 2*20.01 + 1*6.67 = 46.7 ms,  60.03 ms, 104.88 ms, 64.84 ms, 51.52 ms, and, 26.68 ms, respectively, illustrated in Figure (\ref{fig6.3}).

For evaluating storage and communication cost, we assume $|id|$ = 80 bits. For super-singular curve over the binary field  $\mathbb{F}_(2^{271})$ with the order of $\mathbb{G}_1$ is 252 bit prime and using compression technique [35], we consider $|\mathbb{G}_1|$ = 34 bytes, $|\mathbb{G}_2|$ = 34 bytes and $|\mathbb{Z}_q|$ = 32 bytes.

\textbf{Storage size (in Bytes)}. Figure (\ref{fig6.4}) demonstrates the storage cost (in Bytes) and signcryptext size (in Bytes) of our proposed scheme with other schemes. For offline storage,  Lai \textit{et al.} \cite{lai2017efficient} needs 198 bytes, Li \textit{et al.} \cite{li2015certificateless} needs 298 bytes, Li \textit{et al.} \cite{li2017certificateless} needs 166 bytes and Saeed \textit{et al.} \cite{saeed2018hoosc} needs 234 bytes while our scheme needs 100 bytes, summarize in Table \ref{tbl6.1}. 

\textbf{Signcryptext size (in Bytes)}. Figure (\ref{fig6.4}) illustrates the signcryptext size of the proposed scheme with other related schemes. It requires transmitting 108 bytes of messages in Lai \textit{et al.} \cite{lai2017efficient}, 252 bytes of messages in Li \textit{et al.} \cite{li2015certificateless}, 110 bytes of messages in Li \textit{et al.} \cite{li2017certificateless}, 178 bytes of messages in Saeed \textit{et al.} \cite{saeed2018hoosc} and 86 bytes Omala \textit{et al.} \cite{omala2016provably} scheme, while 76 bytes messages in our scheme, summarize in Table 6\ref{tbl6.1}. For transmitting a video chunk of 10 bytes, our proposed scheme saves 28\%, 69\%, 29\%, 56\% and 10\% of communication cost as compared to Lai \textit{et al.} \cite{lai2017efficient}, Li \textit{et al.} \cite{li2015certificateless}, Li \textit{et al.} \cite{li2017certificateless}, Saeed \textit{et al.} \cite{saeed2018hoosc} and Omala \textit{et al.} \cite{omala2016provably} schemes.

From Table \ref{tbl6.2}, it can be seen that the proposed EF-IDSC scheme is the only scheme that achieves identity-based features and is resilient to the key escrow attacks whose security is based on the well-known BDH problem under the random oracle model. The proposed scheme is more suitable for secure video streaming in the P2P-VoD system.

\section{Cloud-Centric Secure IoMT-Enabled Smart Healthcare System}
Here, we discuss implementing the proposed smart healthcare system, which consists of the following four phases. 

\subsection{System Initializations}

This section discusses the system setup and the entity’s registration. In Algorithm 6.1, NM and KPSs set up the proposed system, in which they generate the secret keys and system parameters. The secret keys are secret to them, and public parameters are broadcasted in the network. Algorithm 6.2 registers the new entity that wishes to add to the network. The NM authenticates the entity against its identity ID and issues a partial private key to it. The multiple KPSs protect the partial private key and forward the protected private key shares to the entity. Entity combines the shares and extracts its escrow-free private key. Since BMS has limited computation power, so the expensive computation can be outsourced on a powerful system. 

\begin{algorithm} 
	\caption{System setup} 
	\begin{algorithmic}[1]
	    \State Given a security parameter $k$, the NM picks an element $q$, a $k$-bit large prime number. Let an additive group $\mathbb{G}_1$, multiplicative group $\mathbb{G}_2$ of order of $q$,  $P$ be the generator of $\mathbb{G}_1$ and a pairing function  $e: \mathbb{G}_1  \times \mathbb{G}_1 \rightarrow \mathbb{G}_2$.
	    \State Suppose five one-way cryptographic hash functions are as follows. $H_1:\{0,1\}^l \rightarrow \mathbb{G}_1$, $H_2: \{0,1\}^* \rightarrow  \{0,1\}^{m+l+t} \times \mathbb{G}_1$, $H_3:\mathbb{G}_2  \times \{0,1\}^n \times \mathbb{Z}_q^*$, $H_4:\mathbb{G}_1^m \rightarrow \mathbb{Z}_q^*$, and $H_5:\mathbb{G}_1^n \rightarrow \mathbb{Z}_q^*$, where $m$, $l$ and $t$ denote the size message, identity and timestamp (in bits).
	    \State NM chooses an element $s_0 \in \mathbb{Z}_q^*$ and sets pubic key $P_0=s_0P$, and sends $P_0$ to $KPS^i$. 
	    \State $KPS^i$ chooses an element $s_i \in \mathbb{Z}_q^*$ and sets $P_i=s_i P_0$ and responses $P_i$ back to the NM.
	    \State NM combines all received parameters and computes the system public key $Y=\sum_{i=1}^n P_i = s_0 (s_1+s_2+\ldots+s_n)P$
	    \State NM keeps $s_0$ secret and published the public parameter $pp=<q,e,P,P_0,\mathbb{G}_1,\mathbb{G}_2,H_1,H_2,H_3,H_4,H_5,Y,P_1,P_2,..P_n>$.
    \end{algorithmic} 
    \label{alg6.1}
\end{algorithm}

\begin{algorithm} 
	\caption{Entity’s Authentication and registration} 
	\begin{algorithmic}[1]
	    \State Entity $E \in \{BMS, PAD, SD\}$ with identity $ID_E$,  picks an element $x_E \in \mathbb{Z}_q^*$, set  $X_E=x_E P$,  $D_E=x_E Q_E$, where $Q_E=H_1 (ID_E)$ and sends $<X_E,ID_E,D_E>$ to NM.
	    \State 	The NM validates the parameters using $e(Q_E,X_E) \overset{?}{=}e(D_E,P)$ compute the partial private key as  $D_{E0}=s_{D_E}$, $X_{E0}=s_0 X_E$, and responds $D_{E0}$ back to $E$
	    \State E aborts the process if $e(D_E,P_0) \overset{?}{=} e(D_{E0},P)$ does not hold.
	    \State Otherwise, $E$ accepts it and requests to $KPS^i$ for key protection.
	    \State $KPS^i$ aborts the process if above equation $e(Q_E,X_{E0}) \overset{?}{=}e(D_{E0},P)$  does not hold 
	    \State Otherwise, sends $D_{Ei}=s_iD_{E0}$  to $E$.
	    \State Entity E checks $D_{Ei}$ if equation $e(D_E,P_i)=e(D_{Ei},P)$ holds.
	    \State $E$ unblinds it and retrieves the private key $d_E=x_E^{-1} \sum_{i=1}^nD_{Ei} =s_0  (s_1+s_2+ \dots s_n ) Q_E$. 
	    \State Otherwise, abort the process. 
    \end{algorithmic} 
    \label{alg6.2}
\end{algorithm}

\subsection{Secure Data Communication from BMS to PAD}

This subsection discusses secure data communication from the BMS to PAD. Suppose a WBAN architecture consists of n bio-medical sensors (BMS) connected with a PAD. Let $M_{i,j}$ denotes the personal health information (PHI) collected by $j^{th}$ BMS at time $T_{i,j}$, where $1 \le i \le m$ and $1 \le j \le n$. The secure data transmission mechanism from $j^{th}$ BMS to the PAD is defined by Algorithm 6.3. In Algorithm 6.3, $j^{th}$ BMS collects the PHI $M_{i,j}$ at time $T_{i,j}$, signcrypts it using its private key $d_{BMS}^j$,  and sends the PHI $C_{i,j}$ in encrypted form to the PAD. In order to further improve the efficiency, $j^{th}$ BMS combines the signcryptext $C_{i,j}$ collected on different timestamps, signs it with its private key $d_{BMS}^j$ and stores it in the PAD. On receiving the PHI in encrypted form from $j^{th}$ BMS, the PAD first verifies the authenticity of data, using Eq. (7.5) defined in Algorithm 7.4 without accessing the original PHI and accepts the parameters. Otherwise, reject it.  

\begin{algorithm} 
	\caption{PHI Aggregate Signcryption } 
	\begin{algorithmic}[1]
	\State $j^{th}$ BMS chooses an element $a_j \in \mathbb{Z}_q^*$ and computes $A_j=a_j Q_{BMS}^j$, and $B_j=a_jP$ where $Q_{BMS}^j=H_1 (ID_{BMS}^j)$.
	\State $j^{th}$ BMS sets $K_j= e(a_j d_{BMS}^j,Q_{SD})$, where  $Q_{SD}=H_1(ID_{SD})$ and computes signcryption key as $S_j^k=H_2 (ID_{SD},A_j,K_j,S_j^{k-1})$. Here, $S_j^{k-1}$  is the previous key. 
	\State 	On given time-stamp $T_{i,j} \in \{0,1\}^t$ and PHI $M_{i,j} \in \{0,1\}^m$, $j^{th}$ BMS computes $h_{i,j}=H_3 (M_{i,j},A_j,T_{i,j})$ and $C_{i,j}=(a_j+h_{i,j})d_{BMS}^j$
	\State Sets $D_{i,j}=M_{i,j} ||C_{i,j} ||ID_{BMS}^j ||T_{i,j} \oplus S_j^k$, where $1 \le i \le m$ and $1 \le j \le n$. 
	\State 	Aggregates the signcypted PHI as $C_{aggr,j}=H_4 (C_{1,j},C_{2,j},..C_{m,j} )$ and $E_j=C_{aggr,j} d_{BMS}^j$
	\State Stores $CT_j=<A_j,B_j,C_{aggr,j},D_{i,j},E_j>$ in PAD.
	\State Secure data communication from PAD to MCS
    \end{algorithmic} 
    \label{alg6.3}
\end{algorithm}

This section discusses the secure data communication from the PAD to SD/MCS and the integrity of data on the MCS. PAD combines the aggregated signcryptexts received from multiple BMS into a single compact re-aggregated signcryptext, defined in Algorithm 6.4. In comparison with BMS, PAD, in our proposed model, has sufficient memory and computational power, so we could not bother about computation on PAD. It is to note that PAD is only allowed to verify the aggregated signcryptext and re-aggregated signcryptext without knowing the actual PHI data. Now, PAD offloads the re-aggregated data to the MCS. Algorithm 6.5 defines public verifiability; wherein anyone can verify the integrity of PHI without downloading the actual data from the MCS. Whenever the doctor needs the patient’s current health status, he fetches the encryption data from MCS. In Algorithm 6.6, SD decrypts the signcrypted data using an aggregate unsigncryption scheme to access the original PHI data. It can access the original PHI if the received parameters are successfully verified.

\begin{algorithm} 
	\caption{PHI Re-aggregation} 
	\begin{algorithmic}[1]
	\State Now, PAD collects signcrypted data $CT_j=<A_j,C_{aggr,j},D_{i,j},E_j>$ from $j^{th}$ BMS with $ID_{BMS}^j$, where $1 \le j \le m$. PAD re-aggregates them as
    \State $C_{PAD}=H_5 (C_{aggr,1},C_{aggr,2},..,C_{aggr,n})$
    \State 	Computes the $F=C_{PAD}d_{PAD}$.
    \State 	Re-aggregated signcryptext is $CT_{PAD}=<A_j,B_j,C_{PAD},C_{aggr,j},D_{i,j},E_j,F>$ and stores it on the MCS.
    \end{algorithmic} 
    \label{alg6.4}
\end{algorithm}

\begin{algorithm} 
	\caption{Public Verifiability} 
	\begin{algorithmic}[1]
    \State On stored encrypted data $CT_{PAD}=<A_j,B_j,C_{PAD},C_{aggr,j},D_{i,j},E_j,F>$ on the MCS, user can verify the integrity of data as follows. 
    \State	Checks the equality
    \State{
    \begin{equation} \label{eq6.4}
         e(F,P)=e( Q_{PAD},Y)^{C_{PAD}} 
    \end{equation}
    }
    \State	If yes, checks the equality
    \State{
    \begin{equation} \label{eq6.5}
         e(E_j,B_j)=e(A_j,Y)^{C_{aggr,j}}
    \end{equation}
    }
    \State	Accepts it. 
	\State Otherwise, Abort it. 
	    \end{algorithmic} 
	    \label{alg6.5}
\end{algorithm}

\begin{algorithm} 
	\caption{Aggregate Unsigncryption} 
	\begin{algorithmic}[1]
    	\State On given signcryptext $CT_{PAD}=<A_j,B_j,C_{PAD},C_{aggr,j},D_{i,j},E_j,F>$ from MCS, SD performs 
	    \State Sets $K'_j=e(d_{SD},A_j)$ and $S_j^k=H_2 (ID_{SD},A_j,K'_j,S_j^{k-1})$.
	    \State Decrypt PHI as $M_{i,j}||C_{i,j}||ID_{BMS}^j ||T_{i,j}=D_{i,j} \oplus S_j^k$.
	    \State 	Compute $h_{i,j}= H_3 (M_{i,j},A_j,T_{i,j} )$
	    \State 	Accept the PHI $M_{i,j}$ if the following conditions holds, 
	    \State{
	    \begin{equation}
	        e(\sum_{i=1}^m \sum_{j=1}^n C_{i,j},P)=e(\sum_{i=1}^m \sum_{j=1}^n(A_j+h_{i,j} Q_{BMS}^j) ,Y)
	    \end{equation}
	    }
	  \end{algorithmic} 
	  \label{alg6.6}
\end{algorithm}

The consistency of Equation (6.7) is verified as:
\vspace{-12mm}

\begin{align*}
    e(\sum_[i=1]^m \sum_{j=1}^n C_{i,j},P) &=e(\sum_{i=1}^m \sum_{j=1}^n(a_j+h_{i,j})d_{BMS}^j,P) \\
    & =e(\sum_{i=1}^m \sum_{j=1}^n(a_j+h_{i,j})Q_{BMS}^j,s_0  (s_1+s_1+\dots s_n )P)\\
    &=e(\sum_{i=1}^m \sum_{j=1}^n(A_j+h_{i,j}Q_{BMS}^j),Y)
\end{align*}
\vspace{-5mm}

Also,
\vspace{-12mm}

\begin{align*}
    K'&=e(d_{SD},A)=e(s_0 (s_1+s_1+ \dots s_n ) Q_{SD},aQ_{BMS})\\
    &=e(Q_{SD},as_0 (s_1+s_1+ \dots s_n ) Q_{BMS}) \\
    &=e(Q_{SD},ad_{BMS} )=K
\end{align*}

\subsection{Secure Data Communication from SD to BMS}
After accessing the patient's PHI from MCS, SD diagnoses the patient by sending the prescription P to the BMS in a secure way. Due to the resource-constrained behaviour of a BMS, we will leverage the PAD again, where SD signcrypts the prescription P using Algorithm 6.3 and sends it to MCS. On a particular time T, PAD fetches it and sends the signcrypted prescription to the target BMS. BMS runs Algorithm 6.6 to unsigncrypts the signcrypted prescription to obtain prescription P.

\section{System Analysis}

\subsection{Security Proof}

\begin{theorem}
\textbf{(IND-CCA)}. Under the assumption of random oracle model, assume a PPT adversary $A(q_i,q_p,q_{sc},q_{us},t, \epsilon)$ asking at most $q_i$ queries to oracles $H_i(i =1, 2, 3)$, $q_p$ private key extraction queries, $q_{sc}$ signcrypt queries, $q_{us}$ unsigncrypt queries and wins IND-CCA-I games with non-negligible advantage $\epsilon$, then there is an algorithm $B$ that can solve the BDH problem in polynomial time $t'$ with advantage: 

\centering
$Adv(A) \ge  \frac{\epsilon}{((q_1+q_P^n)q_3 )(1-q_{sc}(q_2+2q_{sc})/2^k   )}$

and,  
Time $t'=t+t_B O(q_1+q_2+q_3+ q_p+ q_{sc}+ q_{us})$

Where, $t_B$ is the running time for algorithm $B$. 
\end{theorem}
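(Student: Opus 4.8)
The plan is to follow the same reduction template used for Theorem~\ref{thm6.2}, adapting it to the aggregated setting: I would construct a simulator $B$ that, on input a random BDH instance $\langle P, aP, bP, cP\rangle$, runs the IND-CCA adversary $A$ as a subroutine and, whenever $A$ succeeds, extracts $e(P,P)^{abc}$. First I would have $B$ set the global public key $Y=cP$, so that the master product $s_0(s_1+\cdots+s_n)$ is implicitly $c$, and publish $pp$ accordingly, maintaining lists $L_1,L_2,L_3$ for the oracles $H_1,H_2,H_3$ together with $L_{pp},L_p,L_{sc},L_{us}$ for the key-issuing, private-key, signcrypt and unsigncrypt queries. $B$ picks an index $\jmath\in\{1,\dots,q_1\}$ uniformly and gambles that the $\jmath$-th distinct $H_1$-query is the challenge receiver identity $ID_R^{ch}$; on that query it embeds $H_1(ID_R^{ch})=bP$, on the guessed sender query it embeds $aP$, and for every ordinary identity it answers $H_1(ID)=xP$ with a known exponent $x$, so that it can reconstruct the partial key $xY$ and the full private key on demand.

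Next I would specify the query simulation. Partial-key, private-key and key-extract queries on ordinary identities are answered from the stored $x$, while any such query on the guessed challenge identity forces $B$ to abort. Signcrypt queries are handled by the usual programming trick: $B$ chooses $r,h$, sets $A=rP-hQ_S$ and $C=rY$ (or $C=rQ_S$ when the sender key is known), and programs $H_2$ so that the emitted signcryptext verifies, recording the tuple in $L_{sc}$. Unsigncrypt queries are answered by searching $L_2$, $L_1$ and $L_{sc}$ for a consistent tuple and checking $e(C,P)=e(Y,A+hQ_S)$; when the receiver private key is unavailable, $B$ relies on these lists rather than on genuine decryption, rejecting any ciphertext not produced through a programmed oracle. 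In the challenge phase, for $chk_0,chk_1$ and identities $ID_S^{ch},ID_R^{ch}$, $B$ aborts unless the guessed receiver matches and otherwise returns a signcryptext built on the embedded values. If $A$ outputs the correct bit, $B$ recovers the answer from the challenge entry in $L_{sc}$ via $Z=\bigl(B^{ch}/e(C^{ch},Y)\bigr)^{-1/h}$, which collapses to $e(aP,bcP)=e(P,P)^{abc}$.

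For the aggregated scheme specifically, I would observe that confidentiality depends only on the per-sender mechanism $D_{i,j}=M_{i,j}\|C_{i,j}\|ID_{BMS}^j\|T_{i,j}\oplus S_j^k$ and the session value $K_j=e(a_jd_{BMS}^j,Q_{SD})$; the aggregation outputs $C_{aggr,j},E_j,C_{PAD},F$ are public and message-blind, so $B$ can synthesise them from the public parameters alone without leaking the challenge bit. This lets the IND-CCA argument factor through the non-aggregated reduction, with aggregation treated as an auxiliary public computation.

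Finally I would carry out the abort analysis as in Theorem~\ref{thm6.2}: let $E_1$ be the event that $A$ never requests the partial/private key on the guessed identity (probability roughly $1/(q_1+q_P^n)$), $E_2$ that the guessed receiver is the one challenged, $E_3$ a collision in the programmed $H_3$ (cost $1/q_3$), and $E_4$ the event that a valid ciphertext is wrongly rejected during unsigncrypt simulation (bounded by $q_{sc}(q_2+2q_{sc})/2^k$). Multiplying the survival probabilities yields
\[
Adv(B)\ \ge\ \frac{\epsilon}{\bigl((q_1+q_P^n)q_3\bigr)\bigl(1-q_{sc}(q_2+2q_{sc})/2^k\bigr)}.
\]
The hard part will be the faithful simulation of the unsigncrypt oracle without the receiver's private key: I must argue that every ciphertext $A$ can construct either matches a programmed entry in the lists or is rejected except with the small probability captured by $E_4$, since an undetected valid ciphertext outside the lists would let $A$ distinguish the simulation from a genuine attack and void the reduction.
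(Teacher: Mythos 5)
Your proposal is correct and takes essentially the approach the paper intends: the paper's own proof of this theorem is only the one-line remark that it is similar to a theorem in an external reference, and your reduction is precisely the argument the paper writes out in full for the non-aggregated Theorem 6.2 (embedding $Y=cP$, programming $H_1$ with $aP$ and $bP$ on the guessed sender/receiver identities, list-based simulation of the signcrypt/unsigncrypt oracles, and extraction of $e(P,P)^{abc}$ from the challenge entry via $Z=\bigl(B^{ch}/e(C^{ch},Y)\bigr)^{-1/h}$), adapted with the observation that the aggregation layer is publicly simulatable. Your abort analysis reproduces the events and the bound claimed in the statement, so the write-up in fact supplies the reduction the paper omits.
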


\begin{proof}
The proof is similar to Theorem 2 of \cite{Kumar2020IoMT}.
\end{proof}

\begin{theorem}
\textbf{(EUF-CMA).} Suppose a forger $F(q_1,q_2,q_3,q_p,q_{sc},q_{us},t,\epsilon)$ who can run no more than time $t$, $H_1$ queries no more than $q_1$, $H_2$ queries no more than $q_2$, $H_3$ queries no more than $q_3$, private key extraction queries no more than $q_P$, signcryption queries no more than $q_{sc}$, and unsigncryption queries no more than $q_{us}$, with an advantage more than equal to $\epsilon$. Under the adaptive chosen message and ID attacks, the proposed EF-IDASC scheme is secured against EUF-CMA-I attack, if no adversary $F(q_1,q_2,q_3,q_p,q_{sc},q_{us},t,k^{-n})$-breaks the scheme.

\centering
$Adv(F) \le \frac{\epsilon}{(4(q_1+q_P^n)^2(q_2+2q_{sc}))(1-q_{sc}(q_2+2q_{sc})/2^k)^2}$
\end{theorem}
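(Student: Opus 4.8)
The plan is to follow exactly the same structure as the EUF-CMA proof for the EF-IDSC scheme (Theorem \ref{thm6.3}), adapting it to the aggregated setting. First I would set up a simulator $B$ that receives a random BDH instance $\langle P, aP, bP, cP \rangle$ and embeds $Y = cP$ as the master public key handed to the forger $F$. The simulator maintains lists $L_1, L_2, L_3, L_{pp}, L_p, L_{sc}, L_{us}$ for the random oracles $H_1, H_2, H_3$ and for the partial-private-key, private-key, signcrypt, and unsigncrypt queries, answering them exactly as in the proof of Theorem \ref{thm6.3}: on an $H_1$ query for a target identity $B$ programs $q_S = a'P$ and $q_R = b'P$, while for non-target identities it sets $q_i = xP$, $D_{0i} = xY$ so that it can answer key-extraction queries consistently. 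The key point carried over from Theorem \ref{thm6.3} is that $B$ can simulate signcryption even when it does not hold the sender's private key by sampling $r,h$ and setting $A = rP - hQ_S$, $C = rY$, and patching the $H_2$-list.

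The genuinely new ingredient is the \emph{aggregation} step. Since $C_{aggr,j} = H_4(C_{1,j},\dots,C_{m,j})$ and $E_j = C_{aggr,j}\,d_{BMS}^j$, and likewise $C_{PAD} = H_5(\dots)$ with $F = C_{PAD}\,d_{PAD}$, I would add oracle handlers for $H_4$ and $H_5$ that return fresh random scalars and record them, so that the aggregated components are bound to the individual signcryptexts through a collision-resistant hash. The central observation I would exploit is that the public-verifiability relations (\ref{eq6.4}) and (\ref{eq6.5}) reduce a forgery on the \emph{aggregated} signcryptext to a forgery on a single underlying component: because each $E_j$ and $F$ is an exponentiated form of a per-sender private key against $Y$, any valid aggregate that $F$ produces must contain at least one component $C_{i,j}$ for an identity on which $F$ never queried the partial key. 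From that component I extract a single-signer BLS-type forgery of exactly the kind handled in Theorem \ref{thm6.3}.

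Next I would invoke the forking-lemma machinery implicitly used in Theorem \ref{thm6.3}: by rewinding $F$ and reprogramming the relevant $H_2$ response, $B$ obtains two valid forgeries $(A^{ch}, C^{ch}, h)$ and $(A^{ch}, C'^{ch}, h')$ with $h \neq h'$ on the same commitment, from which the solution $Z = (B^{ch}/e(C^{ch},Y))^{-1/h} = e(P,P)^{abc}$ is recovered, identically to the computation displayed at the end of Theorem \ref{thm6.3}. The probability bookkeeping mirrors that proof: the factor $(q_1 + q_P^n)^2$ comes from guessing the target identity twice through the forking, $(q_2 + 2q_{sc})$ from the hash-query and signcrypt-collision space, and the squared $(1 - q_{sc}(q_2+2q_{sc})/2^k)$ term from requiring both runs to avoid an abort, yielding
\begin{equation*}
Adv(F) \le \frac{\epsilon}{\bigl(4(q_1+q_P^n)^2(q_2+2q_{sc})\bigr)\bigl(1-q_{sc}(q_2+2q_{sc})/2^k\bigr)^2}.
\end{equation*}

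\textbf{The main obstacle} will be arguing rigorously that a forgery on the aggregate necessarily yields a forgery on an individual component \emph{without} $B$ having to know the honest senders' messages. I must show that the hash-binding via $H_4, H_5$ prevents $F$ from assembling a valid aggregate out of mutually inconsistent or re-used individual signcryptexts, so that the extracted single component is genuinely fresh under the challenged identity $ID_S^{ch}$; otherwise the reduction could extract a component $F$ legitimately obtained from a signcrypt query and fail to solve BDH. Handling this cleanly — identifying the unqueried component and ensuring the rewind targets precisely its $H_2$ entry — is where the care lies, and everything else reduces to the already-established calculations of Theorems \ref{thm6.2} and \ref{thm6.3}.
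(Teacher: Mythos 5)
Your plan follows essentially the same route as the paper: the paper's own proof of this theorem is a single sentence deferring to an external reference (``similar to Theorem 3 of [Kumar2020IoMT]''), and the intended argument is precisely the adaptation of the EF-IDSC reduction (Theorem~\ref{thm6.3}) to the aggregated setting that you describe --- the same BDH embedding $Y=cP$, the same oracle simulation with $H_2$-list patching to sign without the sender's key, and the same probability bookkeeping producing the factor $4(q_1+q_P^n)^2(q_2+2q_{sc})$ and the squared abort term. The only point where you go beyond anything the paper records is the aggregate-to-single-component extraction via the $H_4,H_5$ binding, which you correctly isolate as the sole genuinely new step; the paper leaves it entirely implicit, so your treatment is, if anything, more complete than the source.
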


\begin{proof}
The proof is similar to Theorem 3 of \cite{Kumar2020IoMT}.
\end{proof}

\subsection{Performance Evaluation}
This section evaluates the performance of the proposed smart healthcare system in terms of computational cost and communication cost. Our primary focus is to compute the energy consumption employed on the BMS side during data communication and computation since BMS is a resource-constrained device as compared to the machine on medical institutions. 

\textbf{Experimental result}: We perform the experiment on \textit{Acer E5-573-5108} laptop with \textit{Intel(R) Core(TM) i5-5200U CPU@2.20 GHz} and\textit{ 8 GB} RAM on Windows 10. With an 80-bit security level, all computations are estimated on a super-singular curve $y^2+y=x^3+x$ having embedding degree 4 and used the eta pairing, $\eta:E(\mathbb{F}_(2^{271})) \times E(\mathbb{F}_(2^{271})) \rightarrow E(\mathbb{F}_(2^{4.271}))$. In order to scrutinize the performance, we assume the following operations: elliptic curve (EC) based point scalar multiplication, elliptic curve-based point addition, the pairing of two points on an elliptic curve, exponentiation on pairing, map-to-point hash function, modular inversion and modular-multiplication operations. The execution time of such cryptographic operations is obtained by taking the average of 5 succeeding runs with different inputs using the PBC library. It is also observed in \cite{kumar2017secure} that $1T_P \approx 3T_{SM} \approx 87T_M$, $T_E \approx 21T_M$, $T_H \approx 23T_M$, $T_I \approx 11.6T_M$, and $T_A \approx 0.12T_M$. From \cite{shim2013eibas, cao2008imbas}, we have seen that the electrical requirement for the MICA2 sensor is as follows: the required voltage in active mode is 3.0V, the current drawn in active mode, transmitting and receiving mode are 8.0, 27.0, and 10.0 mA respectively and data rate is 12.4 Kbps. As per the calculation, is given in \cite{shim2014s2drp}, we consider that the energy consumption for operations is calculated as $W=V*I*T$. Table \ref{tbl6.3} summarizes the notations, computation cost (in msec), and energy consumption (in mJ) of various cryptographic operations. As per \cite{shim2014s2drp}, a sensor needs 0.0522 mJ and 0.0193 mJ energy for transmitting and receiving a 1 bytes message.

\begin{table} 
        \centering
        \caption{Computation cost of cryptographic operations}
        \label{tbl6.3}
        \begin{tabular}{|c|c|c|c|}
            \hline
            Operations	& Notation &	Execution time (ms) &	Energy consume. (mJ) \\
            \hline
            \hline
            
Modular multiplication &	$T_M$	&0.027	&0.65 \\
Scalar multiplication (ECC)&	$T_{SM}$	&0.304	&7.30 \\
Point addition (ECC)	& $T_A$	& 0.001 &	0.024 \\
Exponentiation & 	$T_E$	& 0.297 &	7.1 \\
Inversion & 	$T_I$	& 0.008 &	0.192 \\
Map-to-point hash &	$T_H$	& 0.319 &	7.7 \\
Bilinear pairing &	$T_P$	& 2.373	& 56.95\\
        \hline
    \end{tabular}
    
\end{table}

\textbf{Message size}. Here, we study the aggregated-message size overhead in one transmission and m transmission. For communication overhead, we consider $|T|$ = $|M|$ = 4 bytes and $|ID|$ =1 byte. For 1024-bit RSA level of security, $|\mathbb{G}_1|$ must be 64-bytes prime if $\mathbb{G}_2$ is a q-order of a subgroup of the multiplicative group of the finite field $\mathbb{F}_{p^2}$. According to [38], we assume $|\mathbb{G}_1|$ = 42.5 bytes for the finite field $\mathbb{F}_{p^3}$ and $|\mathbb{G}_1|$ =20 bytes for the finite field $\mathbb{F}_{p^6}$. The required aggregated-message size for storing m data collected from a BMS in single transmission and m transmission are $4|\mathbb{G}_1|+m*(|\mathbb{G}_1|+9)$ and $m*(2|\mathbb{G}_1|+9)$  bytes respectively. 

We also compute the storage cost of the re-aggregated message size of n*m PHI data collected from $n$ BMS to MCS via PAD in one transmission and $n*m$ in $n*m$ transmission, respectively. Thus, the required message size for storing $n*m$ data collected from n BMS in one transmission in single transmission and n*m transmission are $(n+3)|\mathbb{G}_1|+n*m*(|\mathbb{G}_1 |+9)$ and $m*n*(2|\mathbb{G}_1|+9)$ bytes respectively. It is obvious that PAD is a storage-rich device so it will not bother storing the large message. The required message size for storing a prescription at SD is $|\mathbb{G}_1|+9$ bytes.

\textbf{Communication overhead}. We consider the size of signcrypted PHI, in order to evaluate the communication overhead. The proposed system stores the signcrypted PHI in the PAD and transmits it to the SD via MCS. For communicating m messages to PAD, BMS requires $4|\mathbb{G}_1|+m*(|\mathbb{G}_1 |+9)$ and $m*(2|\mathbb{G}_1 |+9)$ bytes of overhead in one transmission and $m$ transmissions respectively. For communicating $m*n$ messages to MCS, PAD needs $(n+3)|\mathbb{G}_1|+n*m*(|\mathbb{G}_1 |+9)$ and $m*n*(2|\mathbb{G}_1 |+9)$ bytes of overhead a transmissions, and min $m*n$ transmissions respectively. And, SD requires ($|\mathbb{G}_1 |+9$) bytes of overhead for transmitting a prescription to BMS, as summarized in Table \ref{tbl6.4}. 

\textbf{Energy consumption on communication}.  In our proposed system, BMS consumes $(0.0715(4+m)|\mathbb{G}_1|+0.643m)$  mJ and $m*(0.143|\mathbb{G}_1 |+0.643)$ mJ  of energy communicating m aggregated messages collected from a BMS to PAD in one transmission and m transmission respectively. Figure (\ref{fig6.5}) shows the energy consumption for communication between BMS and PAD on various security parameters. From PAD to SD/MCS, the proposed scheme consumes $((0.214+0.0715*n*(m+1))|\mathbb{G}_1|+m*n*0.643)$ mJ and $m*n*(0.143|\mathbb{G}_1 |+0.643)$ mJ of energy for transmitting $m*n$ re-aggregated messages in one transmission and $m$ transmission respectively. Figure (\ref{fig6.6}) shows the energy consumption for communication between SD and PAD on various security parameters. Similarly, for transmitting a prescription from SD to BMS, the proposed systems consume $(0.0715|\mathbb{G}_1 |+0.643)$  mJ of energy.

\subsection{Performance Comparison} 
Here, we compare our proposed EF-IDASC scheme with Selvi \textit{et al.}’ scheme \cite{selvi2009identityagg}-I, scheme \cite{selvi2009identityagg}-II,  scheme \cite{selvi2009identityagg}-III, Eslami \textit{et al.} scheme \cite{eslami2014certificateless}, Kar \textit{et al.} scheme \cite{kar2013provably} and Niu \textit{et al.} scheme \cite{niu2017privacy}, in terms of computation (in ms), communication (in Bytes), energy consumption (in mJ) and security attack.

\begin{figure}
  \centering
  \includegraphics[width=0.8\linewidth]{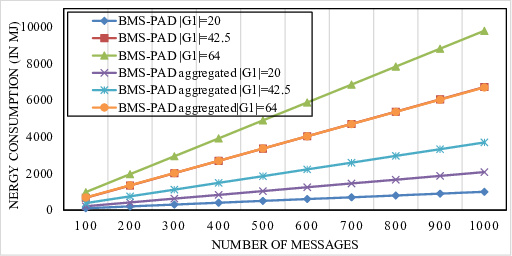}
  \caption{Total energy consumption between BMS and PAD of proposed health care system}
\label{fig6.5}
\end{figure}

\begin{figure}
  \centering
  \includegraphics[width=0.8\linewidth]{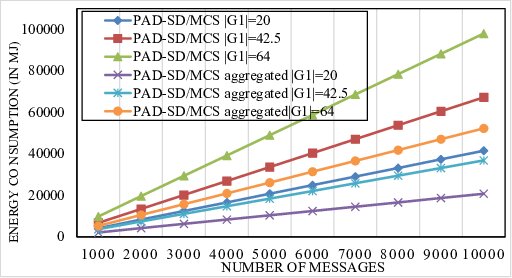}
  \caption{Total energy consumption between PAD and SD/MCS of proposed health care system}
\label{fig6.6}
\end{figure}
 	 
\begin{table} 
        \centering
        \caption{Communication overhead comparison of EF-IDASC scheme with related schemes}
        \label{tbl6.4}
        \begin{tabular}{|c|c|c|c|c|}
            \hline
            \multicolumn{1}{|c|}{Schemes} & \multicolumn{3}{|c|}{Storage cost (in bytes)} & \multicolumn{1}{|c|}{Computation cost (in mJ)} \\
            \cline{2-5}
           & Signcrypt. 	& Private key & 	Public key & 	Energy consumption \\
            \hline
            \hline
            \cite{selvi2009identityagg}-I & 5.5n+2.4 &	$|G1|$	& $|ID|$ & $n(2|G1|+|ID|+|T|+|M|)+1|G1|$	\\
            \cite{selvi2009identityagg}-II & 	5.5n+4.9 &	$|G1|$ &	$|ID|$ & $n(2|G1|+|ID|+|T|+|M|)+2|G1|$ \\
            \cite{selvi2009identityagg}-III & 5.5n+2.4 &	$|G1|$ & $|ID|$ & $n(2|G1|+|ID|+|T|+|M|)+1|G1|$ \\
           \cite{eslami2014certificateless} & 	4.9n+2.4 &	$|G1|+|Zq|$	& $|G1|+|ID|$ & $(2n+1)|\mathbb{G}_1|$\\
            \cite{kar2013provably} & 	7.6n+2.4 &	$|G1|$ &	$|ID|$ & $n(3|\mathbb{G}_1|+|M|)+1|\mathbb{G}_1|$ \\
            \cite{niu2017privacy} & 	5.1n+2.4 &	$|\mathbb{G}_1|+|\mathbb{Z}_q|$ &	$|\mathbb{G}_1|+|ID|$ & $n(2|\mathbb{G}_1|+|M|)+1|\mathbb{G}_1|$ \\
            Our &	3.1n+9.7 &	$|\mathbb{G}_1|$ &	$|ID|$ & $n(|\mathbb{G}_1|+|ID|+|T|+|M|)+4|\mathbb{G}_1|$ \\
        \hline
    \end{tabular}
\end{table}

\begin{table} 
        \centering
        \caption{Computation overhead comparison of EF-IDASC scheme with related schemes}
        \label{tbl6.5}
        \begin{tabular}{|c|c|c|c|c|}
            \hline
            \multicolumn{1}{|c|}{Schemes} & \multicolumn{4}{|c|}{computation cost (in mJ)}  \\
            \cline{2-5}
            Schemes	& Signcrypt  &	Unsigncrypt  &	Energy  &	Total Energy  \\
            	&   (A) & (B) &	 consumption   &	consumption  \\
            \hline
            \hline
            \cite{selvi2009identityagg}-I & $n(3T_{SM}+1T_P)$ &	$(2n+2)T_P$	& 78.85n	& 84.35n+2.4	\\
            \cite{selvi2009identityagg}-II & 	$n(4T_{SM}+1T_P)$	& $(2n+4)T_P+1T_{SM}$	& 86.15n	& 91.65n+4.9 \\
            \cite{selvi2009identityagg}-III & $n(3T_{SM}+1T_P)$	& $2nT_{SM} +4T_P$	& 78.85n &	84.35n+2.4 \\
            \cite{eslami2014certificateless} & $n(4T_{SM}+1T_P)$ & $nT_P$	& 86.15n &	91.05n+2.4\\
            \cite{kar2013provably} & $n(3T_{SM}+1T_P)$ &	$3nT_{SM} +nT_P$ &	78.85n	& 86.45n+2.4 \\
            \cite{niu2017privacy} & $n(4T_{SM}+1T_P)$ &	$nT_{SM} +3T_P$ &	86.15n &	91.25n+2.4 \\
            Our &	$(n+4)T_{SM}+1T_P$	& $nT_{SM} +3T_P$ &	7.3n+86.1	& 10.4n+95.8 \\
        \hline
    \end{tabular}
\end{table}

\textbf{Computation cost}. Due to the resource-constraint behaviour of BMS, we only consider the computation cost on the patient side. In our proposed EF-IDASC scheme, aggregate-signcryption needs $(n+3) T_{SM}+1T_P$ =0.304n+3.28 ms to signcrypt and combine n messages, while aggregate-unsigncryption needs $nT_{SM}+3T_P$ =0.304n+7.12 ms to unsigncrypts and verifies the aggregated signcryptext, which is the least cost as compared to  schemes \cite{selvi2009identityagg}-I, scheme \cite{selvi2009identityagg}-II,  scheme \cite{selvi2009identityagg}-III, Eslami \textit{et al.} scheme \cite{eslami2014certificateless}, Kar \textit{et al.} scheme \cite{kar2013provably} and Niu \textit{et al.} scheme \cite{niu2017privacy}, shown in Table \ref{tbl6.5}. 

\begin{table} 
        \centering
        \caption{Security comparison of EF-IDASC scheme with related schemes}
        \label{tbl6.6}
        \begin{tabular}{|c|c|c|c|c|c|c|c|c|c|c|c|}
            \hline
            \multicolumn{1}{|c|}{Schemes} & \multicolumn{11}{|c|}{computation cost (in mJ)}  \\
            \cline{2-12}
            & $S_1$ &	$S_2$ &	$S_3$&	$S_4$&	$S_5$&	$S_6$&	$S_7$&	$S_8$&	$S_9$&	$S_{10}$&	$S_{11}$  \\
            \hline
            \hline
            \cite{selvi2009identityagg}-I & $\checkmark$	&$\checkmark$	&$\checkmark$	&$\checkmark$	&$\times$	&$\times$	&$\times$	&$\checkmark$	&$\times$	&$\checkmark$	&$\times$	\\
            \cite{selvi2009identityagg}-II & 	$\checkmark$	&$\checkmark$	&$\checkmark$	&$\checkmark$	&$\times$	&$\times$	&$\times$	&$\checkmark$	&$\times$	&$\checkmark$	& $\checkmark$ \\
           \cite{selvi2009identityagg}-III & $\checkmark$	&$\checkmark$	&$\checkmark$	&$\checkmark$	&$\times$	&$\times$	&$\times$	&$\checkmark$	&$\times$	&$\checkmark$	&$\checkmark$ \\
            \cite{eslami2014certificateless}] & $\checkmark$	&$\checkmark$	&$\checkmark$	&$\times$	&$\times$	&$\checkmark$	&$\times$	&$\times$	&$\times$	&$\checkmark$	&$\checkmark$\\
            \cite{kar2013provably} & $\checkmark$	&$\checkmark$	&$\checkmark$	&$\checkmark$	&$\times$	&$\times$	&$\times$	&$\checkmark$	&$\times$	&$\checkmark$	&$\checkmark$ \\
            \cite{niu2017privacy} & $\checkmark$	&$\checkmark$	&$\times$	&$\times$	&$\times$	&$\times$	&$\times$	&$\times$	&$\times$	&$\checkmark$	&$\times$ \\
            Our &	$\checkmark$	&$\checkmark$	&$\checkmark$	&$\checkmark$	&$\checkmark$	&$\checkmark$	&$\checkmark$	&$\checkmark$	&$\checkmark$	&$\checkmark$	&$\checkmark$ \\
        \hline
    \end{tabular}
\end{table}

\textbf{Communication Cost}. To evaluate the communication cost, we consider $|M|$ = 4 bytes, $|tt|$ =4 bytes and $|ID|$ =1 bytes. For super-singular curve over the binary field  $\mathbb{F}_(2^{271})$ with the order of $\mathbb{G}_1$ is 252-bit prime and using compression technique [35], we assume $|\mathbb{G}_1|$ =34 bytes. The proposed EF-IDASC scheme outputs $4|\mathbb{G}_1|+n*(|\mathbb{G}_1 |+9)$ = (136+43n) bytes of aggregated signcryptext of n messages, which has the least communication cost as compared to \cite{selvi2009identityagg}-I, scheme \cite{selvi2009identityagg}-II,  scheme \cite{selvi2009identityagg}-III, Eslami \textit{\textit{et al.}} scheme \cite{eslami2014certificateless}, Kar \textit{et al.} scheme \cite{kar2013provably} and Niu \textit{et al.} scheme \cite{niu2017privacy} schemes, shown in Table \ref{tbl6.5}

\textbf{Total Energy consumption}. This section evaluates the total energy consumption for aggregate-signcrypting ‘$n$’ messages and communicating (transmission and receiving) them from the sender (BMS) to the receiver (SD). The proposed system consumes (7.3n+86.1) mJ to aggregate-signcrypt and (3.07n+9.7)  mJ to communicate n messages from sender (BMS) to receiver (SD). Thus, it consumes (10.4n+95.7) mJ of total energy, which has the least energy consumption as compared to related schemes, such as \cite{selvi2009identityagg}-I, scheme \cite{selvi2009identityagg}-II,  scheme \cite{selvi2009identityagg}-III, Eslami \textit{et al.} scheme \cite{eslami2014certificateless}, Kar \textit{et al.} scheme \cite{kar2013provably} and Niu \textit{et al.} scheme \cite{niu2017privacy} for signcrypting and communicating n message. Table \ref{tbl6.5} compares the total energy consumption of our scheme with other related schemes. Figure (\ref{fig6.7}) shows the total energy consumption comparison of our scheme with another scheme.
\begin{figure}
  \centering
  \includegraphics[width=0.8\linewidth]{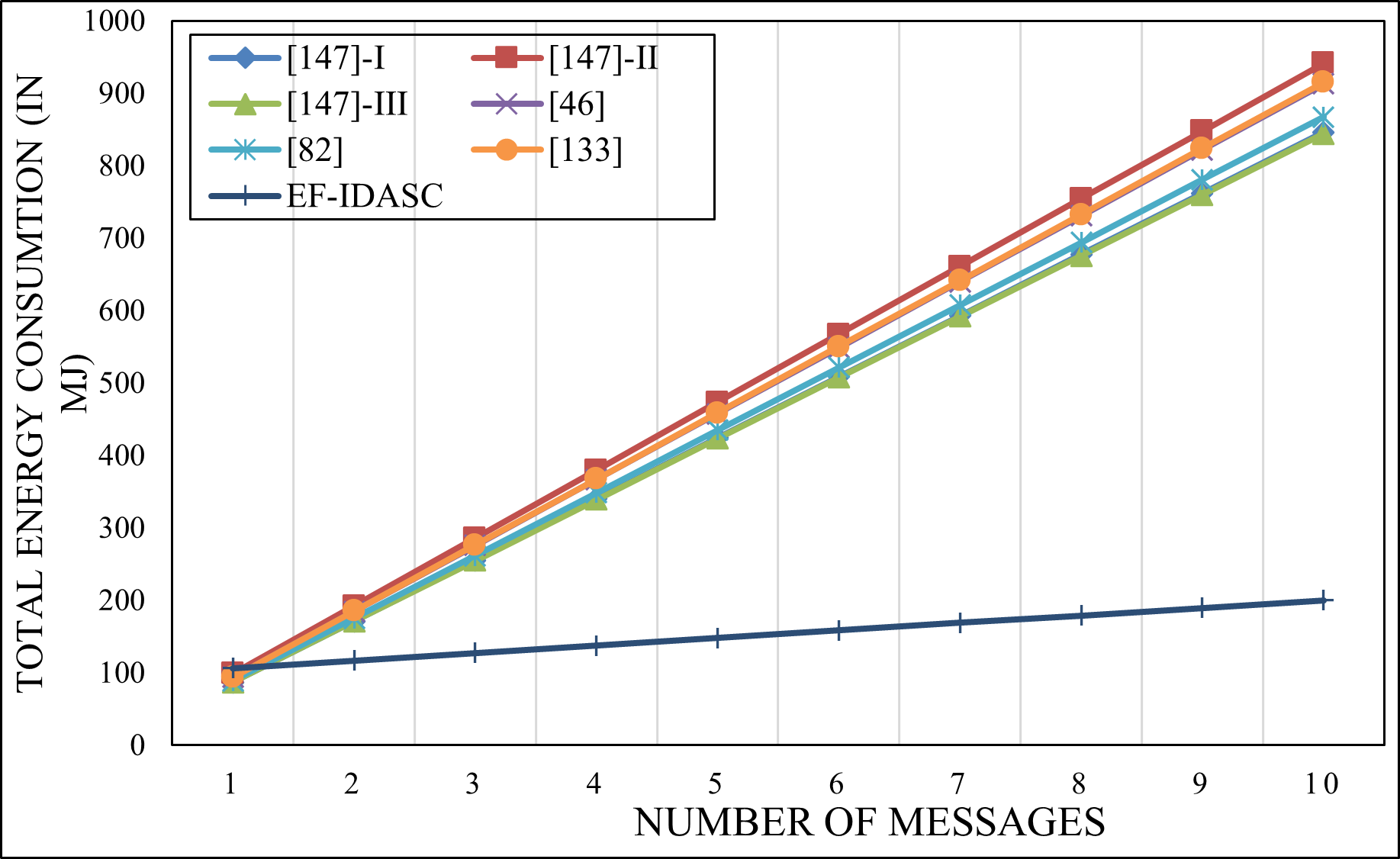}
  \caption{Total energy consumption comparison of the proposed healthcare system with other systems}
\label{fig6.7}
\end{figure}

\textbf{Security comparison}. Here, we use the following notations, $S_1$: mutual authenticity, $S_2$: data confidentiality, $S_3$: data integrity, $S_4$: Unlinkability $S_5$: contextual privacy, $S_6$: resilient to key escrow, $S_7$: entity revocation, $S_8$: patient anonymity, $S_9$: non-repudiation, $S_{10}$: forward secrecy and $S_{11}$: Public auditing. Table \ref{tbl6.6} illustrates that the proposed EF-IDASC scheme has better security on comparing with related schemes.

\section{Summary }
This chapter presents two escrow-free identity-based signcryption (EF-IDSC) schemes, namely, escrow-free identity-based signcryption (EF-IDSC) and escrow-free identity-based aggregated signcryption (EF-IDASC). Both schemes are designed to be secure against the IND-CCA and EUF-CMA attacks. We provide a performance analysis of the proposed schemes, demonstrating that they have the least communication, computation, and storage costs among similar schemes. Moreover, we implement a secure peer-to-peer video-on-demand (P2P-VoD) streaming system based on the proposed EF-IDSC scheme and evaluate its performance. Furthermore, we implement a secure cloud-assisted healthcare system that achieves public verifiability based on the EF-IDASC scheme. The evaluation of the system shows that it can provide secure and efficient healthcare services to patients while preserving their privacy.

\end{doublespace} \label{chapter6}
\begin{savequote}[75mm] 
Science without religion is lame, religion without science is blind.
\qauthor{Albert Einstein} 
\end{savequote}

\chapter{Identity-Based (Anonymous) Authenticated Key Agreement Protocol for Resource-Constrained Devices}
\justify
\begin{doublespace}

Users have benefited greatly from the recent advancements in ubiquitous technologies, particularly sensors, which have allowed for the creation of smart homes, wearable devices, and healthcare monitoring systems, among other applications and services. However, these technologies present significant security challenges, particularly in environments with limited resources like sensor networks. One of the main security concerns with sensor networks is their wireless communication, which is susceptible to eavesdropping, interception, and unauthorized access, allowing attackers to intercept and alter transmitted data and potentially leading to security breaches like data theft, tampering, and unauthorized access. Another major issue is the limited security mechanisms available due to resource constraints like processing power, memory, and battery life, leaving the sensors vulnerable to various types of attacks, including denial-of-service attacks, data tampering, and unauthorized access. Additionally, the resource-constrained nature of sensor networks makes them more vulnerable to node compromise, where an attacker can gain control of a sensor node and use it to launch attacks on the network, leading to data theft, manipulation, and service disruption.

This chapter presents two kinds of key agreement protocols for energy-constrained applications. First, we propose a one-round three-party authenticated ID-based key agreement protocol (OR-3PID-KAP) whose security is based on solving ECDLP and BDHP. The proposed system has the least computational cost, bandwidth cost and message exchange as compared to the related schemes. Second, we proposed an identity-based anonymous authentication and key agreement (IBAAKA) protocol for WBAN in the cloud-assisted environment, which achieves mutual authentication and user anonymity. Recently, there have been discussed several identity-based key agreement authentication protocols for lightweight devices. In 2009, Yang \textit{et al.} \cite{yang2009id} presented an identity-based authentication and key agreement scheme using elliptic curve cryptography. However, Yoon \textit{et al.} \cite{yoon2009robust} showed that the scheme \cite{yang2009id} was prone to impersonation attacks and did not provide perfect forward secrecy. Similar to Yang \textit{et al.} scheme \cite{yang2009id}, Cao \textit{et al.} \cite{cao2010pairing} discussed an identity-based authenticated key agreement protocol with the least message exchange. However, they did not achieve user anonymity and unlinkability.

In 2015, Tsai \textit{et al.} \cite{tsai2015privacy} proposed a new identity-based authentication protocol for mobile cloud computing services where the mobile user and service provider need to register with a trusted third party that provides the long-term secret key for them. This scheme was computationally inefficient as it is based on bilinear pairing. Jiang \textit{et al.} \cite{jiang2016security} proved that the scheme \cite{tsai2015privacy} is not secure against the impersonation attack and did not provide MA. In 2016, Yang \textit{et al.} \cite{yang2016efficient} presented a lightweight anonymous authentication scheme with untraceability for mobile cloud computing. The scheme \cite{yang2016efficient} provides a set of pseudo-IDs and corresponding secret keys to each user. Although managing the set of pseudo-IDs and secret keys is a challenge in a lightweight environment. In 2018, Kumar \textit{et al.} proposed a two-party identity-based authenticated key agreement protocol \cite{kumar2019PF} for resource constraint devices. However, it does not achieve user anonymity. Recently, He \textit{et al.} \cite{he2011efficient} present a new provably secure authentication scheme using an elliptic curve. Wang \textit{et al.} \cite{wang2013cryptanalysis} demonstrate that \cite{he2011efficient} does not provide MA and could not resist a reflection attack. Recently, Islam \textit{et al.} \cite{islam2011more} improved the Yoon \textit{et al.} scheme \cite{yoon2009robust} and presented a new authentication scheme, but it is not resilient to denial of service attacks. Recently, Lui \textit{et al.} \cite{liu2013certificateless} designed a new certificate-less authentication scheme for the WBAN system, which uses the pairing on elliptic curves. He \textit{et al.} \cite{he2016anonymous} proved that the scheme \cite{liu2013certificateless} is not suitable for a secure e-health care system as it is susceptible to impersonation attacks. Besides, they presented an AA scheme for the WBAN system and show that their scheme is provably secure.

Zhao \cite{zhao2014efficient} found that the scheme \cite{liu2013certificateless} is insecure against the public verifier table and does not achieve user anonymity followed by proposed an AA protocol for the WBAN system using ECC. Wang \textit{et al.} \cite{wang2015new} proved that the scheme \cite{zhao2014efficient} is incapable of achieving user unlinkability and put forward an improved authentication protocol using the pairing for WBANs. Jiang \textit{et al.} \cite{jiang2016bilinear} show that protocol \cite{wang2015new} is susceptible to user impersonation attacks. Recently, Jia \textit{et al.} \cite{jia2019provably} proposed an identity-based AA and key agreement scheme for mobile edge computing settings, which satisfies MA as well as user anonymity and untraceability. Li \textit{et al.} \cite{li2017enhanced} proposed an efficient end-to-end authenticated scheme for the healthcare system using ECC. Recently, in 2020, Sowjanya \textit{et al.} \cite{sowjanya2020elliptic} proved that Li \textit{et al.} \cite{li2017enhanced} scheme has some security loopholes followed by they address such issues and presented a new end-to-end authenticated scheme for healthcare system using ECC. In the same year, Nikooghadam \textit{et al.} \cite{nikooghadam2020secure} presented a secure two-factor authenticated key agreement protocol using ECC. However, the scheme satisfied security; it did not preserve user anonymity.

In this direction, we observed that none of the AA schemes is suitable for the e-health care system. Emerging technology like cloud computing, together with the WBAN system, could be advantageous for the next-generation e-health care system. Today, it remains a challenge to design an efficient cloud-enabled anonymous authenticated key agreement protocol for the WBAN system.

\begin{figure}
  \centering
  \includegraphics[width=1\linewidth]{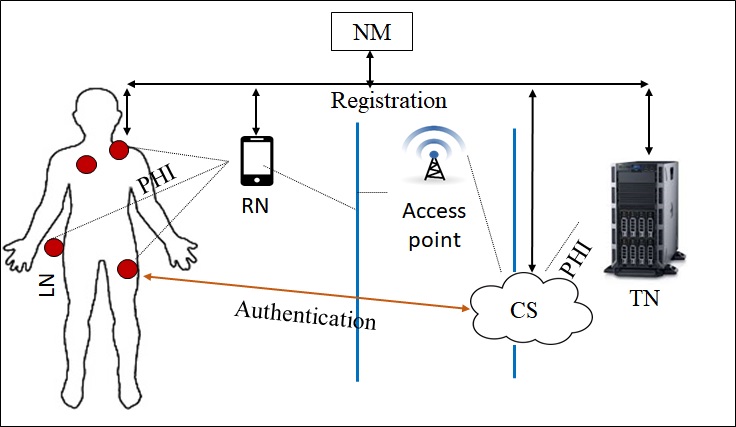}
  \caption{Proposed cloud-enabled WBAN network model}
\label{fig7.1}
\end{figure}

\section{System Model}

This section gives a brief discussion of the network and security model of the proposed WBAN cloud-enabled architecture. 

\subsection{Network Architecture }

The proposed model consists of five entities: Network manager (NM), cloud servers (CS), leaf node (LN), root node (RN) and target node (TN), as illustrated in Figure (\ref{fig7.1}).

\begin{itemize}
    \item \textit{NM}: It is a trusted third party that registers the leaf node, root node, target node, and cloud server and provides them with the secret keys corresponding to their identities.
    \item \textit{LN}: LN is a resource-constraint wearable/implanted sensor that has limited storage space, battery life, and computation power. It monitors and collects the patient’s PHI and forwards it to the TN via RN.
    \item \textit{TN}: TN is a resource-rich remote system that provides medical services to the patient. 
    \item \textit{RN}: It is an intermediate node between the leaf node and the target node, for example, a smartphone. Since LN has limited storage and power, the RN helps LN with long-range data transmission and data storage. It collects the patient’s PHI data from the multiple LNs without knowing any information about it and forwards it to the TN.
    \item \textit{CS}: It is a computationally rich storage device that stores PHI data. Due to the enormous amount of medical data, each target node/leaf node needs to offload its data into the cloud server. Before providing any service to the requested user, CS must be registered with the NM and preloaded with the system's public parameters.
\end{itemize}

\subsection{Security Threat }

Based on the security model of the key agreement protocol by Bellare \textit{et al.} \cite{bellare2000authenticated}, we discuss the security model of our proposed identity-based anonymous authentication and key agreement (IBAAKA) protocol. It can be defined as a game played between an adversary $A$ and challenger $C$. Suppose a scheme denotes $\Gamma$ and two participants denote as P, where P can be a sensor LN and cloud server CS. Let $O_P^i$ denote the random oracle of $i^{th}$ instant of participant $P$. An adversary adaptively communicates to the scheme $\Gamma$ by asking some queries to the random oracle and obtains information from responses. $A$ asks the following oracle queries: 

\begin{itemize}
    \item \textit{h(m)}: On given message m, $Adv$ asks a query to the hash oracle, and the oracle checks if m has been submitted previously. If not, randomly pick an element x and return it to A. Otherwise, return the same value. 
    \item \textit{Extract(ID)}: On given identity ID, $Adv$ execute this oracle, and the oracle gives the private key corresponding to the ID.
    \item \textit{Send($P^i$, msg)}: On given message msg, when $Adv$ submits a query to random oracle $O_P^i$, the oracle yields the real result output by the actual scheme. $Adv$ will execute the Send($P^i$,start) query to begin the protocol. 
    \item \textit{Reveal($P^i$)}: On a given query, A asks a query and obtains the session shared key of participant $\Gamma_P^i$.
    \item \textit{Execute($LN^i$,$CS^i$)}: When $Adv$ asks a query, this oracle gives all the message exchange between the instance $O_{LN}^i$ and $O_{CS}^j$ during the process.
    \item \textit{Test($O_P^i$)}: $Adv$ can ask this query only once during the whole process. The oracle randomly picks a bit $b \in \{0,1\}$ and outputs the original session key SK if $b =1$; otherwise, outputs a random value. 
\end{itemize}

\textit{Partnership}: Two instances, says $O_{LN}^i$ and $O_{CS}^j$ are said to be partner if: 
\begin{itemize}
    \item they exchange messages directly to each other.
    \item Share same session key.
    \item Except $O_{LN}^i$ and $O_{CS}^j$, no other instance accepts session key.
\end{itemize}

\textit{Freshness}: An instance is said to fresh if the session key SK has been accepted without the Reveal and Extract queries executed on $O_P^i$ and its partner. 

Now, we give the brief on the two-security definition that would be concluded from the above game playing between adversary $A$ and challenger $C$. 

On the given output coming from the Test oracle, $A$ guesses a bit $b'$. If $b' = b$, $A$ could break the security of the authentication protocol. The advantage of choosing $b$ can be defined as $Adv(A) = |Pr[b' = b] - \frac{1}{2}|$.

\begin{definition}
(\textbf{Provable security}). If any probabilistic polynomial time (PPT) bounded adversary $Adv$ has negligible advantage $Adv(Adv)$ in the above game, we say that the authenticated protocol is provably secure.
\end{definition}

We say that if an adversary $A$ forges the parameters exchange during transmission on behalf of the participant and forged parameters are received by the partner; then $Adv$ successfully breaks the mutual authentication (MA) protocol.

The advantage of breaking the MA is denoted as $Adv(A)=|Pr[E_{LN-CS}]+Pr[E_{CS-LN}]|$, where $E_{LN-CS}$ and $E_{CS-LN}$ represent the event that $A$ successfully impersonates the LN and produces an actual login parameter and the event that $Adv$ produces a valid response, respectively.

\begin{definition}
(\textbf{MA security}). If any PPT bounded adversary $Adv$ has negligible advantage Adv(A) in the above events. We say that the authenticated protocol is MA secure.
\end{definition}

\subsection{Security Requirement }

Here, we discuss the following security requirements for considering secure WBAN systems. 
\begin{itemize}
    \item \textit{User Anonymity}. It ensures that the identity of the legal user must be hidden from the adversary during the authentication phase. The sensor collected data refers to the user’s personal information, so he must be enjoyed the wireless medical facilities, and at the same time, his privacy will not be revealed to the unauthentic third person. Thus, a user’s identity should be hidden from everyone except the network manager and CS. 
    \item \textit{Perfect forward secrecy}. It ensures that the adversary cannot learn anything about the session key of the previous session, even if he gets the current private key of a user.
    \item \textit{Mutual authentication}. The MA ensures that LN and CS must be authenticated to each other which means the PHI is coming from the intended LN and arriving at the predetermined CS. Therefore, MA confirms the legitimacy of the user and CS’s identities in the network.
    \item \textit{Session key establishment}. In order to achieve the integrity, authenticity, and confidentiality of PHI data, it is necessary to establish a shared session key between the LN and CS. 
    \item \textit{Non-traceability}. In order to protect user privacy, only user anonymity is not sufficient, but the scheme must achieve non-traceability. The non-traceability ensures that the user’s action should not be traced by the network manager and cloud server.
    \item \textit{Key replacement attack}. It is ensured that the adversary could not replace the public key of CS/LN to forge a valid signature on any message.
\end{itemize}

\section{Proposed  Cloud-assisted ID-based AA Protocol}

Here, we discuss the implementation of OR-3PID-KAP and IBAAKA schemes. 

\subsection{One round 3-Party ID-based Key Agreements Protocol}

Suppose three leaf nodes, say A, B and C, are in the proposed network model, as disused in section 7.1, and want to share secret PHI amongst themselves in a secure way. Thus, they should be agreed on a shared secret key, $K_{ABC}$ using the OR-3PID-KAP scheme that consists of three algorithms: Setup, Key Generation and Key agreement.
\begin{itemize}
    \item \textit{Setup}: On given security parameter $k$, NM select a master key $s\in \mathbb{Z}_q$ and set the public key as $P_{Pub} = sP$, where $P$ is the generator of $\mathbb{G}_1$. Let two cryptographic hash functions are $H_1: \{0,1\}^n \times \mathbb{G}_1 \rightarrow \mathbb{Z}_q$ and $H_2: \{0,1\}^n \rightarrow \{0,1\}^n$, and a bilinear map $e: \mathbb{G}_1 \times \mathbb{G}_1 \rightarrow \mathbb{G}_2$. Now NM published the public parameters $pp=<P,P_{Pub},\mathbb{G}_1,\mathbb{G}_2,e,H_1,H_2>$ and keeps master key $s$ secret.

    \item \textit{Extract}: This phase takes system parameter pp, master key $s$, and node’s identities $ID_A$, $ID_B$ and $ID_C$ as input and gives the identity-based long term private key $S_A$, $S_B$ and $S_C$. NM performs the following operations:
    \begin{itemize}
        \item 	Pick three random number $r_A$, $r_B$  and $r_C  \in \mathbb{Z}_q$, and computes:
	\begin{align*}
	    R_A&= r_A P,Q_A= H_1 (ID_A ||R_A) \\
        R_B& = r_B P,Q_B= H_1 (ID_B ||R_B)\\
        R_C&= r_C P,Q_C= H_1 (ID_c ||R_C) \\
	\end{align*}
	    \vspace{-20mm}
    	\item NM computes private key as $S_A= r_AsQ_A$, $S_B= r_BsQ_B$, $S_C= r_CsQ_C$  and  sends $<S_A,R_A>$, $<S_B,R_B>$ and $<S_C,R_C>$ to nodes $A$, $B$ and $C$ respectively, via a secure channel or preloaded at the time of node installation.
	    \item  Each LN can validate its private key with the following equation. 
	    \begin{align*}
	        e(S_i P,P) \overset{?}{=}e(H_1(ID_i||R_i) P_{Pub},R_i)
	    \end{align*}
	\end{itemize}
    \vspace{-5mm}
    Correctness is verified as:
    \begin{align*}
        e(S_iP,P) &=e(r_isQ_iP,P)
        =e(r_i sH_1(ID_i||R_i)P,P) \\ 
        &=e(sH_1(ID_i||R_i)P,r_i P)
        =e(H_1(ID_i||R_i)P_{Pub},R_i)
    \end{align*}
    \item \textit{Key Agreement}: Shared session key $S_k$ is computed as follows:
    
    \vspace{-5mm}
    \begin{align*}
        A \rightarrow B, C &:  a \in \mathbb{Z}_q, T_A=aR_A \\
        B \rightarrow A, C &:  b \in \mathbb{Z}_q, T_B=bR_B \\
        C \rightarrow B, A &: c \in \mathbb{Z}_q, T_C=cR_C 
    \end{align*}
\vspace{-10mm}

A: Given $T_B$ and $T_C$, LN A compute $K_{A|BC}$:
\vspace{-5mm}
\begin{align*}
    K_{A|BC} &= e(Q_B T_B,Q_C T_C)^{aS_A} \\
S_k &=H_2 (ID_A,ID_B,ID_C,R_A,R_B,R_C,T_A,T_B,T_C,K_{A|BC})
\end{align*}

B: Given $T_A$ and $T_C$, LN $B$ compute $K_{B|AC}$:
\vspace{-5mm}
\begin{align*}
    K_{B|AC} &= e(Q_A T_A,Q_C T_C)^{bS_B}\\
S_k &=H_2(ID_A,ID_B,ID_C,R_A,R_B,R_C,T_A,T_B,T_C,K_{B|AC})
\end{align*}

C: Given $T_A$ and $T_B$, LN $C$ compute $K_{C|AB}$:
\begin{align*}
    K_{C|AB} &= e(Q_A T_A,Q_B T_B)^{cS_C} \\
S_k &=H_2(ID_A,ID_B,ID_C,R_A,R_B,R_C,T_A,T_B,T_C,K_{C|AB})
\end{align*}

\end{itemize}

The common secret key $K_{ABC}$ are agreed because: 
\vspace{-8mm}
\begin{align*}
    K_{A|BC} & = e(Q_B T_B,Q_C T_C)^{aS_A} \\
            &= e(Q_B br_B P,Q_C cr_C P)^{ar_A sQ_A}\\
            &= e(P,P)^{sabcQ_A r_A Q_B r_B Q_C r_C}
\end{align*}
\vspace{-10mm}

\begin{align*}
    K_{B|AC} &= e(Q_A T_A,Q_C T_C)^{bS_B}\\
    &= e(Q_A ar_A P,Q_C cr_C P)^{br_BsQ_B} \\
    &= e(P,P)^{sabcQ_A r_A Q_B r_B Q_C r_C } 
\end{align*}
\vspace{-10mm}

\begin{align*}
    K_{C|AB}&= e(Q_A T_A,Q_B T_B)^{cS_C}\\
    &= e(Q_Aar_A P,Q_Bbr_B P)^{cr_csQ_C}\\
    &= e(P,P)^{sabcQ_A r_A Q_B r_B Q_C r_C}
\end{align*}

The above three equations verify the correctness of shared session key $K_{ABC}=K_{A|BC}=K_{B|AC}=K_{C|AB}$.

\subsection{ID-based Anonymous Authenticated Key Agreement Protocol}

This section archives the anonymous authentication between the LN and the cloud. The proposed IBAAKA protocol consists of three algorithms: Setup, registration and authentication. 
\begin{itemize}
    \item \textit{Setup}: Given a security parameter $k$, the NM chooses a random $k$-bit large prime number $q$, an additive group $\mathbb{G}_1$ and multiplicative group $\mathbb{G}_2$ of order of $q$. Let $P$ be the generator of $\mathbb{G}_1$ and pairing function  $e: \mathbb{G}_1  \times \mathbb{G}_1 \rightarrow \mathbb{G}_2$. Suppose five cryptographic hash functions, $H_1:\{0,1\}^I \times \mathbb{G}_1 \rightarrow \mathbb{Z}_q^*$, $H_2:\{0,1\}^{2I} \times \mathbb{G}_1^2 \rightarrow \mathbb{Z}_q^*$, $H_3: \mathbb{G}_2  \rightarrow \{0,1\}^{I+t} \times \mathbb{G}_1^2 \times \mathbb{Z}_q^*$, $H_4:  {0,1}^(2I+t) \times \mathbb{G}_1^2 \rightarrow \mathbb{Z}_q^*$, and $H_5: \{0,1\}^{2I} \times \mathbb{G}_1^3 \rightarrow \mathbb{Z}_q^*$. 
    NM chooses an element $s_0 \in \mathbb{Z}_q^*$ (master key) and set $P_0=s_0P$ (pubic key).  NM keeps $s_0$ secret and published the public parameter $pp=<k,q,e,P,P_0,\mathbb{G}_1,\mathbb{G}_2,H_1,H_2,H_3,H_4,H_5>$.
    \item \textit{Registration}: This phase registers the leaf node and cloud server as follows: 
    \begin{itemize}
        \item 	\textit{Cloud server registration}: Cloud server picks identity $ID_{CS} \in \{0,1\}^*$ and requests to the NM for its private key. The NM checks $ID_{CS}$, chooses a random integer $r_{CS} \in \mathbb{Z}_q^*$, computes the private key $R_{CS}=r_{CS}P$, $s_{CS}=r_{CS}+s_0H_1(ID_{CS}||R_{CS})$ and  sends it to the CS. Now, CS’s private and public keys are denoted as $s_{CS}$ and $<ID_{CS},R_{CS}>$ respectively. 
        \item 	\textit{Leaf Node registration:} LN picks identity $ID_{LN} 
        \in \{0,1\}^*$ and requests to the NM for its private key. The NM checks $ID_{LN}$, chooses a random integer $r_{LN} 
        \in \mathbb{Z}_q^*$,  computes the private key $R_{LN}=r_{LN}P$, $s_{LN}=r_{LN}+s_0 H_1 (ID_{LN}||R_{LN})$ and  sends it to the LN. LN’s private key and public key are denoted as $s_{LN}$ and $<ID_{LN},R_{LN}>$ respectively. Additionally, NM pre-stores the parameter $R_{CS}$ in LN’s storage.
    \end{itemize}
    
\begin{figure}
  \centering
  \includegraphics[width=1\linewidth]{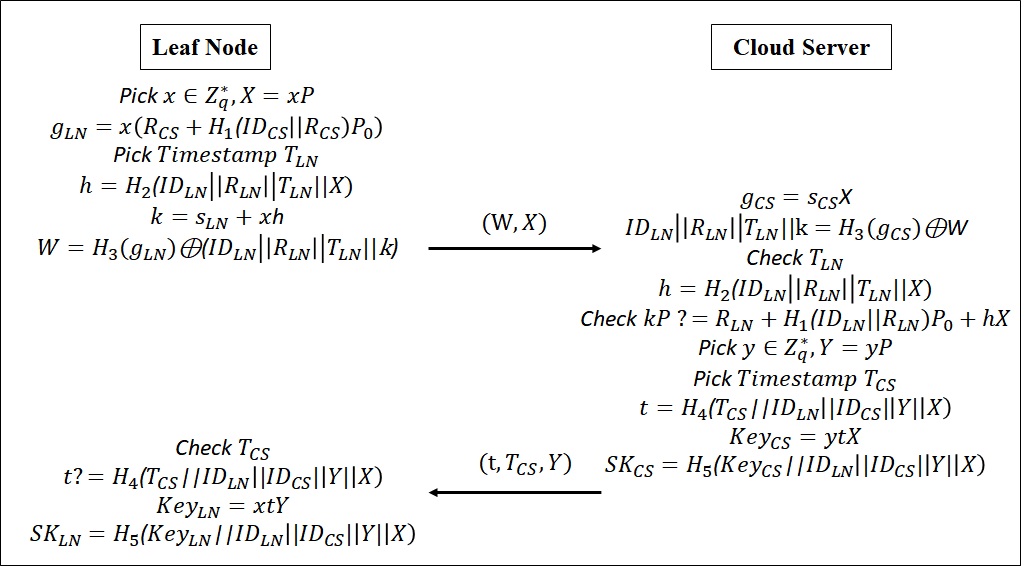}
  \caption{Authentication phase of proposed IBAAKA scheme}
\label{fig7.2}
\end{figure}

    \item \textit{Authentication}: After registration, this algorithm performs the MA between LN and CS in an intervening way, also illustrated in Figure (\ref{tbl7.2}).
    \begin{itemize}
        \item Leaf node (LN) chooses an element $x \in \mathbb{Z}q^*$, sets $X = xP$, and computes the key $\mathbb{G}{LN} = x(R_{CS} + H_1(ID_{CS} || R_{CS})P_0)$.
        \item LN chooses a current time stamp $T_{LN}$ and compute $h=H_2(ID_{LN} ||R_{LN}||T_{LN}||X)$, $k=s_{LN}+xh$ and encrypts  $W=H_3(\mathbb{G}_{LN})\oplus (ID_{LN}||R_{LN}|||T_{LN}||X||k)$ and sends $<W,X>$ to the CS. 
        \item On receiving the parameters, CS computes key $\mathbb{G}_{CS}=s_{CS}X$ and extracts $ID_{LN}$ on decrypting $W$ using key $\mathbb{G}_{CS}$ as $ID_{LN} R_{LN}||T_{LN}||X||k = H_3(\mathbb{G}_{CS}) \oplus W$.
        \item CS checks the timestamp $T_{LN}$ and checks whether the equation $kP \overset{?}{=} R_LN+H_1(ID_{LN}||R_{LN})P_0 +hX$ holds, where, $h=H_2(ID_{LN}||R_{LN}||T_{LN}||X)$. 
    \end{itemize}

    From the given equations, we have:

$\mathbb{G}{SN} = x(R{CS} + H_1(ID_{CS} || R_{CS})P_0)$

$kP \overset{?}{=} R_{LN} + H_1(ID_{LN} || R_{LN})P_0 + H_2(ID_{LN} || R_{LN} || T_{LN} || X)X$
    
    The consistency of the equations is verified as: 
    \vspace{-5mm}
    \begin{align*}
    \mathbb{G}{LN} &= x(R{CS} + H_1(ID_{CS} || R_{CS})P_0) \
    &= x(r_{CS}P + s_0H_1(ID_{CS} || R_{CS})P) \
    &= x(r_{CS} + s_0H_1(ID_{CS} || R_{CS}))P \
    &= s_{CS}X = \mathbb{G}{CS} \
    \
    kP &= (s{LN} + xh)P \
    &= (s_{LN}P + xhP) \
    &= (r_{LN} + s_0H_1(ID_{LN} || R_{LN}))P + xhP \
    &= (r_{LN}P + s_0H_1(ID_{LN} || R_{LN})P + xhP) \
    &= (R_{LN} + H_1(ID_{LN} || R_{LN})P_0 + hX)
    \end{align*}
    \vspace{-20mm}
    
    \item Now, CS chooses a current time stamp $T_{CS}$ and an element $y \in \mathbb{Z}_q^*$, sets  $Y=yP$ and $t=H_5(ID_{LN}||ID_{CS}||T_{CS}||X||Y)$.
    \item CS computes session key as $SK_{CS}=H_6(key_{CS}||ID_{LN}||ID_{CS} ||Y||X)$, where $key_{CS}=ytX$. Then sends $<t,T_{CS},Y>$ to the LN.
    \item LN checks the timestamp $T_{CS}$ and check $t \overset{?}{=}H_5(ID_{LN} ||ID_{CS}||T_{CS}||X||Y)$.
	\item LN computes the session key as $key_{LN}=xtY$ and $SK_{LN}=H_6(key_{LN}||ID_{LN}||ID_{CS}||Y||X)$. 
\end{itemize}

From the above equation, we have 
$key_{LN}=xtY=xtyP=tyX=key_{CS}$

\section{System Analysis}

Here, we discuss the security proof and performance evaluation of proposed OR-3PID-KAP and IBAAKA schemes.

\subsection{Security Analysis of OR-3PID-KAP Scheme}

Here, we discuss the security analysis of the OR-3PID-KAP scheme. 

\begin{theorem} \label{thm7.1}
The proposed OR-3PID-KAP scheme is provable and secure.
\end{theorem}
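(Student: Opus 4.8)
The plan is to prove Theorem~\ref{thm7.1} by reducing the security of the OR-3PID-KAP scheme to the hardness of the Bilinear Diffie-Hellman Problem (BDHP) in the random oracle model, following the game-based framework introduced in Section~7.1.2. First I would set up the standard indistinguishability game between a probabilistic polynomial-time adversary $Adv$ and a challenger $C$, where $Adv$ is allowed to issue the oracle queries $h(\cdot)$, $\mathit{Extract}(\cdot)$, $\mathit{Send}(\cdot)$, $\mathit{Reveal}(\cdot)$, $\mathit{Execute}(\cdot)$, and a single $\mathit{Test}(\cdot)$ query. The goal is to show that any non-negligible advantage $Adv(A)=\lvert \Pr[b'=b]-\tfrac12\rvert$ in distinguishing the true session key from random can be converted into an algorithm that solves a given BDHP instance $\langle P, aP, bP, cP\rangle$, i.e.\ that computes $e(P,P)^{abc}$, thereby contradicting the assumed intractability of BDHP.

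The core of the reduction embeds the BDHP challenge into the ephemeral contributions and private keys of the three parties. Concretely, I would have the challenger plant the unknowns $a$, $b$, $c$ into the values $T_A$, $T_B$, $T_C$ (via the randomizers of leaf nodes $A$, $B$, $C$) of the $\mathit{Test}$ session, while simulating the $H_1$ and $H_2$ oracles through the usual lazy-sampling bookkeeping lists so that the public keys $R_A,R_B,R_C$ and the hash responses remain consistent. Recall from the correctness computation that the agreed value satisfies
\begin{align*}
K_{A|BC}=K_{B|AC}=K_{C|AB}=e(P,P)^{sabc\,Q_A r_A Q_B r_B Q_C r_C},
\end{align*}
so the final session key $S_k=H_2(\cdots,K_{\cdot})$ depends on $Adv$ querying $H_2$ on the correct bilinear value $K$. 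The simulator answers $\mathit{Test}$ with a random string; if $Adv$ has non-negligible distinguishing advantage, then with comparable probability it must at some point query $H_2$ on the true $K$. The reduction then extracts that $K$ from the $H_2$-list and, after dividing out the known scalar factors $s$, $Q_A r_A$, $Q_B r_B$, $Q_C r_C$ that the simulator itself chose, recovers $e(P,P)^{abc}$ as the BDHP solution. I would also verify along the way that the simulation is perfect except with negligible probability of an abort, that freshness guarantees no $\mathit{Reveal}$/$\mathit{Extract}$ query touches the $\mathit{Test}$ session or its partners, and that the partnership definition from Section~7.1.2 correctly identifies the matching instances so that the same $K$ is used by all three parties.

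The main obstacle I anticipate is the guessing of the $\mathit{Test}$ session and the embedding of three distinct unknowns $a,b,c$ simultaneously into the one-round, three-party protocol while keeping the $\mathit{Extract}$ oracle answerable for the non-target identities. Because each party's contribution is multiplied by its hash-derived scalar $Q_i$ and secret $r_i$, the simulator must arrange the $H_1$ responses so that it knows these scalars for identities it embeds the challenge into, yet can still produce a valid private key $S_i=r_i s Q_i$ for every identity on which $Adv$ legitimately issues an $\mathit{Extract}$ query. Handling this bookkeeping without inconsistency, and accounting for the probability loss from guessing which of the $q_{\mathit{Send}}$ or $q_{\mathit{Execute}}$ sessions is the target, is where the careful work lies; I would bound the resulting advantage by $Adv(A)\le q_{H_2}\cdot q_s^{3}\cdot Adv^{\mathrm{BDHP}}$ up to negligible terms. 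Finally, I would separately argue mutual authentication and the remaining security requirements (perfect forward secrecy, non-traceability) either as corollaries of the same reduction or via the ECDLP-based signature verification, completing the proof that the scheme is provably secure.
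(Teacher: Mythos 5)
Your proposal is a sound, standard random-oracle reduction to BDHP, but you should know that the paper itself does not carry out any such argument: its entire ``proof'' of Theorem~\ref{thm7.1} is the single sentence that the result ``is proved similar to Theorem 2 in Wang \textit{et al.}'' together with the assumption that ECDLP and BDH are hard. So there is no in-paper argument to compare yours against; what you have written is essentially the content that the deferred citation is standing in for. Your strategy --- embed the instance $\langle P,aP,bP,cP\rangle$ into the ephemeral values $T_A,T_B,T_C$ of a guessed Test session, simulate $H_1$ and $H_2$ by lazy sampling, answer the Test query with a random string, and argue that a distinguishing adversary must place the true $K=e(P,P)^{sabc\,Q_AQ_BQ_C}$ on the $H_2$ list, from which the simulator strips the scalars it knows --- is the canonical proof shape for this family of tripartite ID-based protocols, and your loss factor $q_{H_2}\cdot q_s^{3}$ is of the expected form. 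One small correction: if you set $T_A=aP$ while the simulator has itself chosen $r_A$ (so that implicitly the ephemeral is $a/r_A$), then the exponent of $K$ collapses to $s\,abc\,Q_AQ_BQ_C$, and the factors to divide out are $s$ and the $Q_i$ only, not $Q_i r_i$ as you wrote; alternatively, if you embed the challenge into $s=a$ via $P_{Pub}=aP$ you must program $H_1$ to keep Extract answerable, which is the bookkeeping you already flag. Your plan to treat mutual authentication, forward secrecy, and the other properties separately also matches the paper, which states them as distinct Theorems~7.3--7.6.
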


\begin{proof}
Let two random oracle models be H1 and H2 and assume ECDLP and BDH problems are difficult to compute, then our proposed one-round key agreement protocol is provable and secures against the adversary. The security proof of our proposal is proved similar to Theorem 2 in Wang \textit{et al.} \cite{wang2016identity}. 
\end{proof}

\begin{theorem}
 The proposed OR-3PID-KAP scheme is achieved the perfect forward secrecy.
\end{theorem}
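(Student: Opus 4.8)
The plan is to show that even if an adversary compromises the long-term private keys $S_A$, $S_B$, $S_C$ of all three participants, it cannot recover the session key $K_{ABC}$ of any completed session. The crux of perfect forward secrecy is that the session key depends not only on the long-term secrets but also on the \emph{ephemeral} randomness $a, b, c \in \mathbb{Z}_q$ chosen fresh in each run, and these ephemeral values are never transmitted in the clear; only $T_A = aR_A$, $T_B = bR_B$, $T_C = cR_C$ appear on the wire.

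First I would recall the exact form of the agreed key. From the consistency computation in the scheme, each participant obtains
\begin{equation}
    K_{ABC} = e(P,P)^{sabc\, Q_A r_A\, Q_B r_B\, Q_C r_C},
\end{equation}
so $S_k = H_2(\cdots, K_{ABC})$. Now I would set up the forward-secrecy game: the adversary is handed the master secret $s$ (hence all long-term keys $S_A, S_B, S_C$, the randomizers $r_A, r_B, r_C$, and the public values $R_A, R_B, R_C$), together with the transcript $T_A, T_B, T_C$ of a past session, and must distinguish $K_{ABC}$ from random. Knowing $s$ and the $r_i, Q_i$ reduces the task to: given $P$, $T_A = a r_A P$, $T_B = b r_B P$, $T_C = c r_C P$, compute $e(P,P)^{abc\,\lambda}$ where $\lambda = s\,Q_A r_A Q_B r_B Q_C r_C$ is a known scalar. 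After dividing out the known factor $\lambda$, this is exactly an instance of the Bilinear Diffie--Hellman problem in the exponents $a, b, c$ (the known $r_i$ only rescale the inputs and can be inverted mod $q$).

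The key steps, in order, are: (i) express $S_k$ via the single pairing value $K_{ABC}$ as above; (ii) argue that the only session-specific unknowns are the ephemeral scalars $a,b,c$, which are discarded after the session and never sent; (iii) formalize a reduction showing that an adversary who, given $s$ and the transcript, recovers $K_{ABC}$ can be turned into a BDHP solver, by embedding a BDHP challenge $\langle P, xP, yP, zP\rangle$ into $\langle T_A, T_B, T_C\rangle$ through the substitutions $a r_A \leftarrow x$, $b r_B \leftarrow y$, $c r_C \leftarrow z$ and then removing the known scalar $\lambda$ to read off $e(P,P)^{xyz}$; and (iv) invoke the assumed hardness of BDHP (and the random-oracle modeling of $H_2$ so that $S_k$ leaks nothing beyond $K_{ABC}$) to conclude the advantage is negligible.

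The main obstacle I anticipate is step (iii): the embedding must be done so that the simulator can still answer all other oracle queries consistently \emph{without} knowing $x, y, z$, while at the same time having handed over the master key $s$ to the adversary. Because $s$ is public to the adversary in the forward-secrecy setting, the simulator cannot hide anything in the long-term keys and must place the entire challenge in the ephemeral transmitted values; I would need to check carefully that the rescaling by the known $r_i$ is invertible mod $q$ (it is, since $r_i \in \mathbb{Z}_q^*$) and that the extra public factor $\lambda$ is genuinely computable by the reduction, so that extracting $e(P,P)^{xyz}$ from $K_{ABC}$ is a clean exponentiation by $\lambda^{-1} \bmod q$. Handling the random-oracle bookkeeping for $H_2$ so that a distinguisher on $S_k$ yields a computation of $K_{ABC}$ is routine and can be modeled exactly as in Theorem~\ref{thm7.1}.
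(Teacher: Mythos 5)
Your proposal is correct and rests on the same core observation as the paper's proof: the session key depends on the ephemeral scalars $a,b,c$, which are never transmitted (only $T_A=aR_A$, $T_B=bR_B$, $T_C=cR_C$ appear on the wire), so compromise of the long-term keys does not expose past session keys. Where you go further is in the formalization. The paper's argument is two sentences: it asserts that recovering $a$, $b$, $c$ is "equivalent to solving the ECDLP and CDHP problem" and stops there. You instead observe, correctly, that the adversary need not recover $a,b,c$ individually at all — it suffices to compute $K_{ABC}=e(P,P)^{sabc\,Q_Ar_AQ_Br_BQ_Cr_C}$, and since every factor in the exponent other than $abc$ is computable from the compromised keys, the task is exactly a Bilinear Diffie--Hellman instance on $\langle P, T_A, T_B, T_C\rangle$ after dividing out the known, invertible scalar $\lambda$. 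This is the right hardness assumption for a pairing-based tripartite key (the paper's appeal to CDHP is loose on this point), and your explicit embedding $ar_A\leftarrow x$, $br_B\leftarrow y$, $cr_C\leftarrow z$ plus the $H_2$ random-oracle bookkeeping is what a reduction-style proof of forward secrecy actually requires. Your version buys a concrete, checkable security statement; the paper's buys brevity at the cost of precision. Your anticipated obstacle about the simulator answering queries while $s$ is public is the genuinely delicate part, but as you note, in the forward-secrecy game the entire challenge lives in the ephemeral values, so handing over $s$ costs the reduction nothing.
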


\begin{proof}
The use of pre-image and uniformly distributed hash function H2 guarantees perfect forward secrecy. The output of hash function H2 gives the shared session key:
\vspace{-5mm}
\begin{align*}
    S_k &=H_2(ID_A,ID_B,ID_C,R_A,R_B,R_C,T_A,T_B,T_C,K_{A|B})\\
    &=H_2(ID_A,ID_B,ID_C,R_A,R_B,R_C,T_A,T_B,T_C,K_{B|AC}) \\
    &=H_2(ID_A,ID_B,ID_C,R_A,R_B,R_C,T_A,T_B,T_C,K_{C|BA})
\end{align*}

 If the adversary wishes to compromise the long-term private key, he must need to compute the user’s secret values a, b and c which is equivalent to solving the ECDLP and CDHP problem. Thus, assuming the ECDLP and CDHP are hard, our proposed scheme is resilient to perfect forward secrecy attacks. 

\end{proof}

\begin{theorem} \label{thm7.3}
The proposed OR-3PID-KAP scheme achieved Mutual Authentication.
 
\end{theorem}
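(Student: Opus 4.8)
The plan is to establish mutual authentication in the sense of \emph{implicit key authentication}: I would give a reduction showing that no probabilistic polynomial-time adversary can impersonate any of the three nodes $A$, $B$, $C$ and still arrive at the common session key $S_k$ without first recovering that node's long-term private key, which in turn would require solving the ECDLP or the BDH problem. Following the MA security model stated earlier, the adversary breaks the protocol if it forges one of the broadcast parameters $T_A$, $T_B$, $T_C$ on behalf of an honest participant and the partners accept. Since these broadcast values are the only data exchanged, the first observation I would make is that merely injecting a forged $T_A^{*}$ is not a successful impersonation: to act as node $A$, the adversary must also compute the same $K_{A|BC}=e(Q_B T_B,Q_C T_C)^{aS_A}$ that $B$ and $C$ implicitly rely on, and this demands possession of $S_A$.

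First I would fix the simulation roles: let the challenger $C$ hold a random BDH instance $\langle P,aP,bP,cP\rangle$ embedded into the public key $P_{Pub}=sP$ and into an honest node's randomizer $R_i=r_iP$, and let the adversary interact through the Extract, Send, Execute and Reveal oracles. Next I would show that the key-agreement equations force the private key into the session key: expanding $K_{B|AC}=e(Q_A T_A,Q_C T_C)^{bS_B}=e(P,P)^{s\,abc\,Q_A r_A Q_B r_B Q_C r_C}$, consistent with the correctness already verified in the scheme, the exponent contains the factor $r_A s$ contributed \emph{solely} through $S_A=r_A s Q_A$. Because $S_A$ is bound to $ID_A$ via $Q_A=H_1(ID_A\|R_A)$ and the self-verification relation $e(S_AP,P)=e(H_1(ID_A\|R_A)P_{Pub},R_A)$, an impersonator cannot substitute a different identity without producing a fresh pair $(R_A^{*},S_A^{*})$ satisfying this relation, which needs the master key $s$. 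The inclusion of $ID_A,ID_B,ID_C$ together with all $R$- and $T$-values inside $H_2$ then seals the binding, so under the random-oracle assumption on $H_2$ any deviation yields an unrelated $S_k$.

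The reduction step is where I expect the real work to lie: given a forged broadcast $T_A^{*}=a^{*}R_A$, the quantity the adversary must still output is $e(Q_B T_B,Q_C T_C)^{a^{*}S_A}$, i.e. $g^{\,r_A s\,Q_A}$ for the publicly computable base $g=e(Q_B T_B,Q_C T_C)$. From a successful forger I would extract an algorithm that, on input $R_A=r_AP$ and $P_{Pub}=sP$, produces $e(P,P)^{r_A s\,(\cdot)}$, i.e. a BDH solver, contradicting the assumption; the cruder route of recovering $S_A$ outright reduces to the ECDLP on $R_A$ or on $P_{Pub}$. The main obstacle I anticipate is making this embedding tight while simultaneously answering Send and Reveal queries for the \emph{non-target} sessions without the master key, which must be handled by programming $H_1$ and $H_2$ and by choosing the honest nodes' randomizers so that $C$ can still compute their session keys — delicate but standard bookkeeping that mirrors the simulation invoked for Theorem~\ref{thm7.1}. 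Once the reduction is in place, the adversary's total advantage, bounded by $\Pr[E_A]+\Pr[E_B]+\Pr[E_C]$ over the three impersonation events, is dominated by the BDH/ECDLP advantage plus the negligible collision terms of the hash oracles, which establishes mutual authentication.
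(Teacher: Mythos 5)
Your proposal is sound and its central observation --- that merely injecting a forged $T_A^{*}$ is not a successful impersonation because the partners' acceptance is only implicit, via agreement on $S_k$, and computing $K_{A|BC}$ still requires the long-term key $S_A=r_A s Q_A$ bound to $ID_A$ through $Q_A=H_1(ID_A\|R_A)$ --- is exactly the idea the paper relies on. However, the paper's own proof of this theorem is far more modest than what you plan: it is a three-sentence informal argument stating that the long-term private keys are issued only to registered nodes by the authority using the master key $s$, that the session key $S_k$ incorporates the private key and identity of every participant, and that therefore only registered/authenticated nodes can arrive at the common $S_k$; that is, it asserts implicit key authentication directly from key possession and stops there. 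The full reduction machinery you sketch --- embedding a BDH instance, programming $H_1$ and $H_2$, answering Send/Reveal queries for non-target sessions, and bounding the advantage by $\Pr[E_A]+\Pr[E_B]+\Pr[E_C]$ --- is never carried out in the paper for this theorem (the paper defers even the provable-security reduction of its Theorem~7.1 to an external reference, and its two-party MA model in Section~7.1.2 is stated for the LN--CS setting rather than the three-party one, so you would also need to adapt that definition as you implicitly do). Your route buys a rigorous, quantitative guarantee at the cost of the delicate simulation bookkeeping you yourself flag; the paper's route buys brevity at the cost of being an assertion rather than a reduction. Both rest on the same binding of identity to session key through $S_A$, $S_B$, $S_C$.
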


\begin{proof}
In our proposed scheme, BS generates the user's long-term private key using master key s and a random variable $r_i$. Using the long-term private key, identity and random integer, nodes A, B and C in the sensor network compute a common session key $S_k$. Because session key $S_k$ includes the private key and identification of all nodes thus only registered/authenticated nodes can establish a session shared key. Therefore, the proposal provides mutual authentication $S_k$.
\end{proof}

\begin{theorem} \label{them7.4}
The proposed OR-3PID-KAP scheme is secured against Man-in-the-middle attacks.
\end{theorem}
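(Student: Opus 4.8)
The plan is to model the man-in-the-middle adversary $\mathcal{A}$ as an active attacker who fully controls the broadcast channel among the three leaf nodes $A$, $B$, $C$: it may intercept, drop, replay, or replace any of the one-round contributions $\{a, T_A\}$, $\{b, T_B\}$, $\{c, T_C\}$ with forged messages $\{a', T_A'\}$ and so on, its goal being to make at least one honest node accept a session key $S_k$ that $\mathcal{A}$ can itself compute, or to cause two honest nodes to accept a key while each believes it is shared with the genuine peers. I would then show that every such strategy reduces either to recovering a long-term private key, hence to the ECDLP, or to computing the shared secret $K_{ABC}$ from public data, hence to the BDH problem, so that resistance against MITM follows from the provable security of Theorem~\ref{thm7.1} together with the mutual authentication established in Theorem~\ref{thm7.3}.

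First I would record the two structural facts that drive the argument. The hash $H_1$ binds the scalar $Q_i = H_1(ID_i \| R_i)$ irrevocably to the pair $(ID_i, R_i)$, so $\mathcal{A}$ cannot substitute a public key of its own choosing for an honest identity without breaking the collision resistance of $H_1$; and the key-derivation hash $H_2$ takes the entire transcript $(ID_A, ID_B, ID_C, R_A, R_B, R_C, T_A, T_B, T_C)$ together with the core value as input. Consequently, two honest nodes derive the same $S_k$ only if they agree on the complete transcript and on
\[
K_{ABC} = e(P,P)^{s\,abc\,Q_A r_A Q_B r_B Q_C r_C},
\]
which a legitimate node can compute precisely because its private key $S_i = r_i s Q_i$ supplies the otherwise missing factor $r_i s$.

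Next I would treat the impersonation cases in turn. Since $\mathcal{A}$ cannot alter an honest node's certified $R_i, Q_i$, the only freedom in impersonating, say, $A$ toward $\{B,C\}$ is to inject an ephemeral $\{a', T_A'\}$. I would argue that for $\mathcal{A}$ to know the key $B$ then computes it must obtain $K_{B|AC} = e(Q_A T_A', Q_C T_C)^{b S_B}$, and because $a'$, $T_A'$ and all the $R_i$ are public while $S_B$ is secret, this value equals $e(P,P)^{s\,r_A r_B r_C\,(\cdots)}$ up to known scalars; extracting it from $P_{Pub}=sP$ and $R_A,R_B,R_C$ using only the bilinear map is an instance of the BDH problem. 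The alternative of forging $S_B$ directly from $S_B P$ and $P_{Pub}$ is an ECDLP instance. The same reduction applies verbatim when $\mathcal{A}$ impersonates two parties, reflects a node's own contribution back to it, or relays between two parallel sessions, since in each case the key ultimately accepted by an honest node still embeds that node's private scalar $r_i s$. Packaging these reductions, I would invoke Theorem~\ref{thm7.1} to rule out the key-computation strategies and Theorem~\ref{thm7.3} to rule out the authentication-forging strategies, concluding that $\mathcal{A}$'s advantage is negligible.

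The main obstacle I anticipate is the case in which $\mathcal{A}$ does not try to learn the key but merely modifies the ephemeral contributions: because the protocol is one-round and carries no explicit key-confirmation flow, I must argue that any tampering forces the honest nodes onto distinct transcripts, hence onto distinct $S_k$ values, so that no common key is ever established and the attack degenerates into a denial of service rather than a genuine key compromise. Making this dichotomy fully rigorous across all combinations of tampered and relayed messages, namely separating the outcome \emph{keys mismatch and nothing is shared} from the outcome \emph{keys match but the shared value is BDH-hard for $\mathcal{A}$}, is the step that will demand the most care.
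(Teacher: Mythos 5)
Your proposal is correct and follows essentially the same route as the paper: the paper's proof likewise has the adversary inject forged ephemeral and long-term contributions $T_A^x, R_A^x,\ldots$, writes out the keys the honest nodes and the attacker can each compute, and concludes that matching an honest node's key requires its long-term private key and secret scalar, i.e.\ solving the ECDLP. Your version is somewhat more careful than the paper's (which does not separately treat the BDH reduction, the $H_1/H_2$ transcript binding, or the key-mismatch versus key-compromise dichotomy), but the underlying argument is the same.
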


\begin{proof}
Suppose an adversary and his assistant $X$ impersonate two nodes, say $A$ and $B$. Suppose the adversary attacks the public channel and obtains information $<R_A,R_B,R_C,T_A,T_B,T_C,ID_1,\dots,ID_n>$.

Let the attacker choose some random numbers $a^x$, $b^x$, $c^x$, $r_A^x$, $r_B^x$, and $r_C^x \in \mathbb{Z}_q$, and compute the parameters $R_A^x= r_A^xP$, $R_B^x= r_B^x P$, $R_C^x= r_C^x P$, $T_A^x= a^x R_A^x$,  $T_B^x= b^x R_B^x$, and $T_C^x= b^x R_C^x$, as defined below:

\begin{align*}
    A &\rightarrow B, C : \text{pick } a \in \mathbb{Z}_q, T_A=aR_A \\
    X_A &\rightarrow B, C : \text{pick } a_x  \in \mathbb{Z}_q, T_A^x= a^xR_A \\
    B &\rightarrow A, C : \text{pick } b \in \mathbb{Z}_q, T_B=bR_B \\
    X_B &\rightarrow A, C : \text{pick } b_x  \in \mathbb{Z}_q, T_B^x= b^xR_B \\
    C &\rightarrow B, A : \text{pick } c \in \mathbb{Z}_q, T_C=cR_C \\
    X_C &\rightarrow B, A : \text{pick } c_x  \in \mathbb{Z}_q, T_C^x= c^x R_C \\
\end{align*}

The nodes then compute as:
\begin{align*}
    A: K_{A|BC} &=e(Q_B T_B^x,Q_C T_C^x)^{aS_A} 
    = e(P,P)^{sa^xb^xc^x Q_A r_A Q_B r_B Q_C r_C } \\
    B: K_{B|AC}  &= e(Q_A T_A^x,Q_C T_C^x)^{bS_B} 
    = e(P,P)^{sa^x b^x c^x Q_A r_A Q_B r_B Q_C r_C} \\
    C: K_{C|AB} &=e(Q_A T_A^x,Q_B T_B^x)^{cS_C}
    = e(P,P)^{sa^x b^x c^x Q_A r_A Q_B r_B Q_C r_C}\\
\end{align*}

From the given parameters, the attacker $X$ computes the following session keys:

\begin{align*}
    X_A: K_{X|BC} &=e(Q_B T_B,Q_C T_C)^{a^x S_A^x} 
    = e(P,P)^{a^x S_A^x bcQ_B r_B Q_C r_C } \\
    X_B: K_{X|AC} &= e(Q_A T_A,Q_C T_C)^{b^x S_B^x} 
    = e(P,P)^{aS_B^x b^x cQ_A r_A Q_C r_C} \\
    X_C: K_{X|AB} &= e(Q_A T_A,Q_B T_B)^{cS_C} 
    = e(P,P)^{acS_C^x c^x Q_A r_A Q_B r_B} \\
\end{align*}

For a successful attack, the attacker must compute the value of the node’s long-term private key and random secret key, which is equivalent to solving the ECDLP problem. In this sense, we say that our proposed scheme is secure against man-in-the-middle attacks. 

\end{proof}

\begin{theorem} \label{thm7.5}
The proposed OR-3PID-KAP scheme is secured against no key control.
\end{theorem}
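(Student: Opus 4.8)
The plan is to show that no proper subset of the three participants can force the shared session key $S_k$ to any pre-selected value, by reducing any such ability to the ability to control the output of the random oracle $H_2$, which is infeasible given its pre-image resistance and uniform distribution. First I would pin down the meaning of the property: the protocol enjoys \emph{no key control} if, for every participant --- and indeed for every coalition of at most two of them --- the probability of biasing $S_k$ towards a target string $S_k^{\ast}$ is negligible in the security parameter $k$.

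Next I would locate where the final key actually originates. In the scheme the session key is $S_k = H_2(ID_A,ID_B,ID_C,R_A,R_B,R_C,T_A,T_B,T_C,K_{ABC})$, where $K_{ABC} = e(P,P)^{s\,abc\,Q_A r_A Q_B r_B Q_C r_C}$. The two structural facts I would exploit are that (i) $K_{ABC}$ depends multiplicatively on all three ephemeral contributions $a$, $b$, $c$ together with the master secret $s$ that no participant holds, and (ii) $S_k$ is obtained by a single application of $H_2$. I would first note that $K_{ABC}$ alone cannot be dictated by any single party: a party holding only (say) $a$ still cannot predict the product $abc$ without $b$ and $c$, and cannot cancel the unknown factor $s$; controlling $K_{ABC}$ would require solving an instance tied to the ECDLP and BDHP, ruled out by Theorem \ref{thm7.1}.

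The main step is to treat the strongest adversary, a \emph{rushing} party --- say $A$ --- who withholds its broadcast $T_A$ until after it has received $T_B$ and $T_C$, and who may then search over candidate secrets $a$. For each candidate $a$ the induced hash input (through $T_A = a R_A$ and the corresponding $K_{ABC}$) is fixed, and by the uniform-distribution property of the random oracle the values $H_2(\cdots)$ are independent and uniform over $\{0,1\}^n$. Hence after $q_{H_2}$ evaluations the probability of producing an input that hashes to $S_k^{\ast}$ is at most $q_{H_2}/2^n$, which is negligible; forcing even $\ell$ chosen bits of $S_k$ costs $\Theta(2^{\ell})$ work. The coalition case of two corrupt parties $A$ and $B$ is handled identically, since the honest party $C$ still injects the uncontrolled factor $c$ and the master secret $s$ keeps $K_{ABC}$ beyond the coalition's reach.

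I expect the principal obstacle to be the formal justification that distinct controlled secrets yield \emph{fresh} random-oracle inputs --- that is, that the adversary cannot collapse many candidate values of $a$ onto a single previously queried hash input and thereby evade the $q_{H_2}/2^n$ bound. Closing this gap requires arguing that the map $a \mapsto (T_A, K_{ABC})$ is effectively injective on the relevant domain, or equivalently that producing a collision in the pre-hash value $K_{ABC}$ is itself as hard as solving the ECDLP/BDHP; once that is secured, the random-oracle counting argument yields the negligible bound and establishes the no key control property.
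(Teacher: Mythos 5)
Your proposal is correct and rests on the same underlying observation as the paper's proof --- that $S_k$ is a single hash of a value $K_{ABC}=e(P,P)^{sabcQ_Ar_AQ_Br_BQ_Cr_C}$ depending on every party's ephemeral secret --- but you develop it far more rigorously than the paper does. The paper's entire argument is the two-sentence remark that no node can pre-compute $S_k$ without knowing the others' secret values; it never formalizes the key-control property, never considers a rushing adversary who withholds $T_A$ until seeing $T_B$ and $T_C$, never treats a coalition of two corrupt parties, and never gives the random-oracle counting bound $q_{H_2}/2^n$. Those are precisely the points where key-control arguments usually break, so your additions are substantive improvements rather than decoration. On the one obstacle you flag --- that distinct candidate secrets must yield fresh oracle inputs --- the resolution is simpler than you suggest: the hash input explicitly contains $T_A=aR_A$, and for $R_A\neq O$ the map $a\mapsto aR_A$ is injective on $\mathbb{Z}_q$, so distinct choices of $a$ already produce distinct $H_2$ queries without any appeal to hardness of collisions in $K_{ABC}$. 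With that noted, your counting argument closes cleanly and yields a quantitatively stronger statement than the one the paper actually proves.
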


\begin{proof}
In our protocol, session Key $S_k$ is computed as:
\begin{align*}
    S_K &=H_2 (ID_A,ID_B,ID_C,R_A,R_B,R_C,T_A,T_B,T_C,K_{A|BC} )\\
&=H_2 (ID_A,ID_B,ID_C,R_A,R_B,R_C,T_A,T_B,T_C,K_{B|AC} )\\
&=H_2 (ID_A,ID_B,ID_C,R_A,R_B,R_C,T_A,T_B,T_C,K_{C|BA})
\end{align*}

where $K_{A|BC}$, $K_{B|AC}$,and $K_{C|AB}$  are defined in section 3, and $a$, $b$ and $c$ are node’ random chosen secret number. It can be seen that any node could not pre-compute the session key $S_k$ without knowing the information of others' secret keys. Therefore, it proves that our protocol does not have control over the key. 
\end{proof}

\begin{theorem} \label{thm7.6}
The proposed OR-3PID-KAP scheme is secured against known session key attacks.
\end{theorem}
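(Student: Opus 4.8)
The plan is to show that the compromise of any collection of previously established session keys leaves the secrecy of every other (past or future) session key intact. The central observation I would exploit is that in each invocation of the \emph{Key Agreement} algorithm the three nodes draw fresh, independent ephemeral scalars $a,b,c \in \mathbb{Z}_q$, and that the session key is delivered only through the random oracle $H_2$, never as the raw pairing value $K_{A|BC}$ itself. Concretely, I would fix a target session with ephemerals $(a,b,c)$ and contributions $(T_A,T_B,T_C)$, and grant the adversary, via \emph{Reveal} queries, the session keys of polynomially many other sessions, each with its own independently sampled $(a_i,b_i,c_i)$ and $(T_{A_i},T_{B_i},T_{C_i})$.

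First I would argue input distinctness: since $T_A=aR_A$, $T_B=bR_B$, $T_C=cR_C$ embed freshly chosen scalars, the tuple $(ID_A,ID_B,ID_C,R_A,R_B,R_C,T_A,T_B,T_C,K_{A|BC})$ fed to $H_2$ differs, with overwhelming probability, from the tuples of all other sessions; a collision would require a repeated ephemeral, which occurs with probability at most $q_s(q_s-1)/(2q)$ for $q_s$ sessions. Modelling $H_2$ as a random oracle, the session key of the target session is then a value that is uniformly distributed and information-theoretically independent of the outputs returned for every other session. Hence the joint distribution of the revealed keys carries no information about the target key, and the adversary's advantage in a \emph{Test} query against the target session is unchanged by the \emph{Reveal} queries on the remaining sessions.

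Second, I would rule out the indirect route in which the adversary tries to reconstruct the shared pairing value $K_{A|BC}$ of the target session from the leaked keys and then query $H_2$ on it. Because $H_2$ is a random oracle, a revealed key $S_{k_i}=H_2(\cdots,K_i)$ reveals nothing about its pre-image $K_i$; extracting $K_i$ would amount to inverting the oracle. Even granting the adversary the raw pairing values $K_i$ of the known sessions, deriving $K_{A|BC}=e(P,P)^{s\,abc\,Q_Ar_AQ_Br_BQ_Cr_C}$ for the fresh exponent contribution $abc$ from values carrying the independent exponents $a_ib_ic_i$ reduces to the Bilinear Diffie--Hellman and ECDL hardness already invoked in Theorem~\ref{thm7.1}, since recovering any of $a,b,c$ from $T_A,T_B,T_C$ is itself an ECDL instance.

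The hard part will be making the reduction in the second step fully rigorous: I must exhibit a simulator that answers all \emph{Reveal} queries consistently --- embedding the BDH challenge into the public parameters $P_0$ and the values $R_A,R_B,R_C$ while still producing correct session keys for the non-target sessions --- yet is unable to answer the target session itself, so that any adversary distinguishing the target key yields a BDH solver. Book-keeping the random-oracle list so that the simulator never needs $K_{A|BC}$ for the target session while always holding a consistent $H_2$ value for the revealed ones is the delicate point; the freshness-of-ephemerals argument of the first step is precisely what guarantees these two requirements do not conflict.
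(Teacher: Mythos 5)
Your proposal is correct, and its first step is, in essence, the entirety of the paper's own proof: the paper simply observes that $S_k$ is produced by $H_2$ applied to the participants' identities, public parameters, and per-session secret values, that the hash output is uniformly distributed in $\{0,1\}^k$, and that the probability of two session keys being related is therefore negligible (about $2^{-k}$), concluding under the assumption that $H_2$ is pre-image resistant and uniformly distributed. You sharpen this considerably by making the input-distinctness explicit — bounding the ephemeral-collision probability by $q_s(q_s-1)/(2q)$ and then invoking the random-oracle model to get information-theoretic independence of the target key from all revealed keys. Your second step, however, goes genuinely beyond the paper: the paper's appeal to pre-image resistance only blocks an adversary who tries to invert $H_2$ on a revealed key, whereas you also close off the route in which the adversary computes the target pairing value $K_{A|BC}$ directly from the public transcripts and revealed material, reducing that to BDH/ECDLP. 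That addition is worthwhile, since without it the paper's argument is incomplete against an adversary who never touches the hash's pre-images; the cost is that a fully rigorous version requires the simulator and random-oracle bookkeeping you identify as the delicate point, which the paper avoids entirely by not attempting a reduction.
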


\begin{proof}
Each node evaluates the hash of the participant’s identities, public parameters, and secret key of each node to compute the session key $S_k$. Since the hash function’s response is uniformly distributed in $\{0,1\}^k$, the probability that two session key has a relation is $1/2k$ which is negligible. Therefore, our protocol is resilient to the known session key attacks under the assumption of pre-image and uniformly distributed cryptographic hash function $H_2$.
\end{proof}

\subsection{Security Analysis of IBAAKA Scheme}
\begin{theorem} \label{thm7.7}
Suppose an adversary $Adv$ can compute the legal login parameters and break the LN-to-CS authentication of the proposed IBAAKA scheme with non-negligible probability. There could be a challenger$Ch$ that can solve the ECDL problem with probability $\epsilon_0 \ge \epsilon(1-1/q)/q_H1$, where $q_H1$ represents a query $H_1$ run by $A$. 
\end{theorem}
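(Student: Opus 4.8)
The plan is to construct the challenger $Ch$ as a reduction that converts any impersonator $Adv$ against the LN-to-CS authentication into a solver for the supplied ECDL instance $(P,Q=aP)$, whose task is to output $a$. The key observation is that the login token $k=s_{LN}+xh$, together with the commitment $X=xP$ and the challenge $h=H_2(ID_{LN}\|R_{LN}\|T_{LN}\|X)$, is a Schnorr-type signature: its acceptance equation $kP=R_{LN}+H_1(ID_{LN}\|R_{LN})P_0+hX$ binds the long-term secret $s_{LN}$ to the ephemeral $x$. First I would embed the challenge by setting the system public key $P_0=Q$, so that the (unknown) master key $s_0$ equals the ECDL answer $a$, and publish the remaining parameters of $pp$ honestly to $Adv$.

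Next I would simulate the oracles. $Ch$ maintains a list for each $H_i$ and answers $H_2,\dots,H_5$ with fresh uniform values, keeping them consistent. For $H_1$ on a registration pair $(ID_i\|R_i)$, $Ch$ uses the standard programming trick already employed in Theorem~\ref{thm5.1}: it draws $b_i,c_i\in\mathbb{Z}_q^*$, sets $R_i=b_iP_0+c_iP$ and $H_1(ID_i\|R_i)=-b_i$, so the implied private key satisfies $s_iP=R_i+H_1(\cdot)P_0=c_iP$, giving the \emph{known} value $s_i=c_i$. This lets $Ch$ answer every Extract, Send, Execute and Reveal query for ordinary identities, and in particular it programs the cloud server $CS$ with a known secret $s_{CS}$ so that $Ch$ can act as the honest verifier. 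To single out the victim, $Ch$ guesses an index $I\in\{1,\dots,q_{H1}\}$ and hopes the $I$-th fresh identity $ID^*$ is the one attacked: for $ID^*$ it sets $R^*=r^*P$ with known $r^*$ but answers $H_1(ID^*\|R^*)=\eta$ for a fresh uniform $\eta\in\mathbb{Z}_q^*$, so that $Ch$ cannot itself form $s_{LN^*}=r^*+a\eta$; an Extract query on $ID^*$ forces an abort.

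For the extraction step, when $Adv$ outputs an accepting login $\langle W^*,X^*\rangle$ against $ID^*$ and $CS$, the challenger recovers the embedded response $k^*$ by computing $\mathbb{G}_{CS}=s_{CS}X^*$ and decrypting $W^*$. I would then invoke the forking (splitting) lemma: rewind $Adv$ to its $H_2$ query on $(ID^*\|R^*\|T^*\|X^*)$ and re-run it with a fresh challenge $h^{*\prime}\neq h^*$, keeping all earlier answers fixed so that the commitment $X^*$ is unchanged. This produces a second token $k^{*\prime}$, and the two acceptance equations
\begin{align*}
k^* P &= R^* + \eta P_0 + h^* X^*,\\
k^{*\prime} P &= R^* + \eta P_0 + h^{*\prime} X^*
\end{align*}
subtract to give $x^*=(k^*-k^{*\prime})/(h^*-h^{*\prime})$, whence $s_{LN^*}=k^*-x^*h^*$ and finally $a=s_0=(s_{LN^*}-r^*)/\eta$, which solves the ECDL instance whenever $\eta\neq 0$. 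Collecting the success conditions---a correct guess of the victim (factor $1/q_{H1}$), a nonzero challenge $\eta$ so that the inversion by $\eta$ is defined (factor $1-1/q$), and a valid rewound forgery on the same $X^*$---yields the claimed bound $\epsilon_0\ge\epsilon(1-1/q)/q_{H1}$.

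The hard part will be the forking step: I must argue that the rewound run produces a \emph{second} accepting transcript that reuses the identical commitment $X^*$ while presenting a distinct challenge, and that the success probability degrades by only the stated factors. Care is also needed to ensure the two simulations (original and rewound) are indistinguishable from the real protocol to $Adv$ up to the forking point, and that all oracle answers other than the single reprogrammed $H_2$ value remain consistent; the non-degeneracy condition $\eta\neq0$ and the distinctness $h^*\neq h^{*\prime}$ are what make the final division well-defined.
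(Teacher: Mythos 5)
Your reduction is correct and follows essentially the same route as the paper's proof: fork the adversary on the $H_2$ challenge to obtain two accepting transcripts $(k,h)$ and $(k',h')$ sharing the same commitment $X$, program $H_1$ so that the verification equation $kP=R_{LN}+H_1(ID_{LN}\|R_{LN})P_0+hX$ can be solved for the hidden discrete logarithm, and collect the loss factors $1/q_{H_1}$ and $(1-1/q)$. The only difference is where the ECDL instance is planted: you embed it in the master public key $P_0$ and recover the master key via $s_{LN^*}=r^*+a\eta$ in three steps, whereas the paper sets $R_{LN}=aP_0+bP$ with $H_1(ID_{LN}\|R_{LN})=-a$ so that the LN's implied private key $s_{LN}=b$ is itself the ECDL answer, extracted in one shot as $b=(h'k-hk')/(h'-h)$; both embeddings are standard and yield the stated bound.
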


\begin{proof}
Consider an adversary $A$ that generates a valid login request tuple $<W,X>$, which can be justifiable by the CS with a non-negligible probability $\epsilon$. Here are the given equations:

\begin{align*}
X &= xP \
\mathbb{G}{LN} &= x(R{CS} + H_1(ID_{CS} || R_{CS})P_0) \
h &= H_2(ID_{LN} || R_{LN} || T_{LN} || X) \
k &= s_{LN} + xh \
W &= H_3(\mathbb{G}{LN}) \oplus (ID{LN} || R_{LN} || T_{LN} || X || k)
\end{align*}

Using the forking lemma, given in \cite{pointcheval2000security}, adversary $A$ will produce another tuple $<W',X>$ on a different choice of $H_2$ in the simulation, where $X = xP$, $\mathbb{G}{LN} = x(R{CS} + H_1(ID_{CS} || R_{CS})P_0)$, $h' = H'2(ID{LN} || R_{LN} || T_{LN} || X)$, $k' = s_{LN} + xh'$, and $W' = H_3(\mathbb{G}{LN}) \oplus (ID{LN} || R_{LN} || T_{LN} || X || k)$.

As $<W,X>$ and $<W',X>$ can provide satisfied verification on the CS side, we have the following two equations:

\begin{align*}
kP &= R_{LN} + H_1(ID_{LN} || R_{LN})P_0 + hP \
k'P &= R_{LN} + H_1(ID_{LN} || R_{LN})P_0 + h'P
\end{align*}

Suppose, $R_{LN} = aP_0 + bP$, $s_{LN} = b$, and $H_1(ID_{LN} || R_{LN}) = -a$. Then we have,

\begin{align*}
h'kP - hk'P &= h'(R_{LN} + H_1(ID_{LN} || R_{LN})P_0 + hP) - h(R_{LN} + H_1(ID_{LN} || R_{LN})P_0 + h'P) \
&= h'R_{LN} + h'H_1(ID_{LN} || R_{LN})P_0 + h'hP - hR_{LN} - hH_1(ID_{LN} || R_{LN})P_0 - hh'P \
&= (h' - h)R_{LN} + (h' - h)H_1(ID_{LN} || R_{LN})P_0 \
&= (h' - h)(aP_0 + bP) + (h' - h)(-a)P_0 \
&= (h' - h)bP
\end{align*}

and,

\begin{align*}
bP = (h' - h)^{-1}(h'k - hk')P
\end{align*}

Thus, for any given tuple $(P, bP)$ of the ECDL problem, adversary $Adv$ solves the problem with a non-negligible advantage $\epsilon_0 \ge (1 - 1/q)/q_{H1}$. Therefore, no adversary could break the LN-to-CS authentication. Thus, our proposed AA scheme achieves LN-to-CS authentication.

\end{proof}

\begin{theorem}
Suppose an adversary $Adv$ can compute legal login parameters and break the CS-to-LN authentication of the proposed IBAAKA scheme with non-negligible probability. There could be constructed a challenger$Ch$ that solves the ECDL problem with probability $\epsilon_0 \ge \epsilon(1-1/q)/q_H1$, where $q_H1$ represents a query $H_1$ made by A. 
\end{theorem}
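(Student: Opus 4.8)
The plan is to mirror the reduction of Theorem \ref{thm7.7}, now targeting the CS side of the handshake. I would build a challenger $Ch$ that, given a random ECDL instance $(P, bP)$, must recover $b$; the idea is to plant this instance inside the cloud server's public key so that forging a CS response forces the adversary to implicitly use the unknown $s_{CS}$. Concretely, I would program the oracle $H_1$ and the registration responses so that $R_{CS} = aP_0 + bP$ for a chosen $a$ and set $H_1(ID_{CS}\,\|\,R_{CS}) = -a$; then the implicit private key satisfies $s_{CS} = r_{CS} + s_0 H_1(ID_{CS}\,\|\,R_{CS}) = b$, i.e. $b$ is exactly the discrete log $Ch$ cannot compute but the genuine CS would use.

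First I would answer all of $Adv$'s oracle calls ($h(\cdot)$, $\mathrm{Extract}$, $\mathrm{Send}$, $\mathrm{Reveal}$, $\mathrm{Execute}$) consistently, aborting only if $Adv$ ever queries $\mathrm{Extract}$ on $ID_{CS}$, which, as in Theorem \ref{thm7.7}, costs only a $1/q_{H_1}$ factor. For every other identity $Ch$ knows the private key and can perfectly emulate both roles, so the simulation is indistinguishable from the real protocol. Running $Adv$ then yields, with non-negligible probability $\epsilon$, a response tuple $\langle t, T_{CS}, Y \rangle$ that the leaf node accepts, where $t = H_5(ID_{LN}\,\|\,ID_{CS}\,\|\,T_{CS}\,\|\,X\,\|\,Y)$ and acceptance commits the forger to the shared value $key_{CS} = y t X$ consistent with the server key $\mathbb{G}_{CS} = s_{CS} X$.

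Next I would invoke the forking lemma \cite{pointcheval2000security}: rewinding $Adv$ and replaying with a fresh value of the relevant oracle produces a second accepting tuple on the same $X, Y$ but a different hash output. Subtracting the two acceptance relations eliminates the common terms and leaves a single linear equation whose only unknown is the discrete log $b$ hidden in $R_{CS}$; solving it gives $b$ and hence an ECDL solution with advantage $\epsilon_0 \ge \epsilon(1 - 1/q)/q_{H_1}$, matching the claimed bound. The contrapositive then establishes CS-to-LN authentication.

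The hard part will be pinning down exactly which quantity in the CS response is unforgeable. Unlike the leaf-node direction of Theorem \ref{thm7.7}, where the explicit signature $k = s_{LN} + xh$ gives a clean linear relation to fork on, the value $t$ here is a hash of public data and is therefore publicly computable; the authentication of CS must instead be extracted from the server's need to recover $ID_{LN}$ by decrypting $W$ with $\mathbb{G}_{CS} = s_{CS}X$ and from the key-confirmation value $key_{CS}$. I would therefore need to argue carefully that, in the random oracle model, a forger producing an accepted response with non-negligible probability must have effectively produced a proof of knowledge of $s_{CS}$, and then arrange the rewinding so that the two transcripts differ in exactly one hashed component, isolating a single equation in $b$.
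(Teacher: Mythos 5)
There is a genuine gap in your approach, and it is precisely the one you flag at the end but do not resolve. The forking lemma works in the LN-to-CS direction of Theorem \ref{thm7.7} because the leaf node emits a Schnorr-type value $k=s_{LN}+xh$ that is \emph{linear} in the secret key with a hashed coefficient $h$; rewinding with a different $h$ yields two linear equations from which the discrete log drops out. The cloud server's response $\langle t, T_{CS}, Y\rangle$ contains no such object: $t=H_5(ID_{LN}\,\|\,ID_{CS}\,\|\,T_{CS}\,\|\,X\,\|\,Y)$ is a bare hash of values into which $s_{CS}$ never enters algebraically, and $Y=yP$ involves only the ephemeral $y$. Rewinding and reprogramming $H_5$ therefore produces a second accepting transcript but \emph{no} equation in $b$ — the "single linear equation whose only unknown is $b$" that your plan subtracts out simply does not exist. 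Embedding $b$ as $s_{CS}$ via $R_{CS}=aP_0+bP$, $H_1(ID_{CS}\,\|\,R_{CS})=-a$ is fine, but no extraction step follows from it along your route.

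The paper's proof takes a different (and structurally necessary) route: it conditions on LN-to-CS authentication holding, does a case analysis ruling out (i) the forger guessing $t$ outright (probability $1/q^k$) and (ii) replaying $X$ and $k$ from earlier sessions (probability at most $q_{LN}^3/q^2$), and then observes that in the remaining case the forger can only learn $ID_{LN}$ — which it needs in order to form a valid $t$ — by querying $H_3$ on $\mathbb{G}_{CS}=s_{CS}X$. The challenger embeds a \emph{CDH} instance $(cP,\,bP)$ as $(X,\;R+H_1(\cdot)P_0)$ and reads the solution $bcP$ off the $H_3$ query list, obtaining advantage $\epsilon-1/q^k-q_{LN}^3/q^2$. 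Note this is a one-shot reduction to CDH with no rewinding, and it does not match the ECDL claim or the bound $\epsilon(1-1/q)/q_{H_1}$ in the theorem statement — the statement you were asked to prove is itself inconsistent with the paper's own argument — but it is the only mechanism in this protocol by which the server's secret is actually exercised, namely as a Diffie–Hellman key rather than as a signing key. Any correct proof has to go through the $H_3$ query on $\mathbb{G}_{CS}$, not through forking.
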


\begin{proof}
Suppose $E^{LN-to-CS}$ represents an event that can break the LN-to-CS authentication, and $E^{CS-to-LN}$ represents an event that can break our proposed protocol's CS-to-LN authentication. From the proof in Lemma 1, we suppose that the event $E^{LN-to-CS}$ does not occur. The event $E^{CS-to-LN}$ denotes that the adversary can forge a response $<t,T_{CS},Y>$ upon intercepting the login tuple $<W,X>$. Now, one of three events will occur:

\begin{itemize}
\item The adversary may pick the correct value of $t$, so the probability that this event occurs is $1/q^k$, where $k$ is the size of the hash function.
\item The parameters $X$ and $k$ have been computed in another session. Let the adversary $Adv$ make $q_{LN}$ queries to the oracle of LN. The probability that this event occurs is $(q_{LN}/q) \times (q_{LN}/q) \times (q_{LN}-2) \le (q_{LN}^3)/q^2$.
\item On the given parameters $<key_{CS}, ID_{LN}, ID_{CS}, Y, X>$, the adversary $A$ runs queries on the $H_6$ oracle. Now we have $Pr[E^{CS-to-LN}|E^{CS-to-LN}] \le \epsilon_0 + 1/q^k + (q_{LN}^3)/q^2$.
\end{itemize}

Suppose $X = cP$, $R_{LN} = aP_0 + bP$, $s_{LN} = b$, and $H_1(ID_{LN}|| R_{LN}) = -a$ for some values $a, b, c \in \mathbb{Z}q^*$. For an instance $<X, R{LN} + H_1(ID_{LN}||R_{LN})P_0> = (cP, aP_0 + bP - aP_0) = (cP, bP)$ of the Computational Diffie-Hellman (CDH) problem, $Adv$ can compute $s_{CS} = bcP$. In this way, $Adv$ could solve the CDH problem with a non-negligible probability $\epsilon_0 \ge \epsilon - 1/q^k - (q_{LN}^3)/q^2$.

Therefore, our proposed AA scheme achieves the CS-to-LN authentication, and no adversary could break the CS-to-LN authentication.
\end{proof}

\begin{theorem} \label{thm7.9}
Suppose the ECDL problem is hard. Our proposed IBAAKA is MA secure.
\end{theorem}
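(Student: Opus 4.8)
\textbf{Proof proposal for Theorem \ref{thm7.9}.}

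The plan is to establish mutual authentication security by reducing it to the two directional authentication results already proved, namely Theorem \ref{thm7.7} (LN-to-CS authentication) and Theorem 7.8 (CS-to-LN authentication), and then to combine their advantages using the security-model definition of mutual authentication given in Definition 7.2. Recall that the advantage of an adversary against mutual authentication was defined as $Adv(A) = |\Pr[E_{LN\text{-}CS}] + \Pr[E_{CS\text{-}LN}]|$, where $E_{LN\text{-}CS}$ is the event that the adversary successfully impersonates the leaf node and produces a valid login request, and $E_{CS\text{-}LN}$ is the event that the adversary produces a valid cloud-server response. The overall strategy is therefore to show that each of these two probabilities is negligible under the hardness of the ECDL problem, so that their sum is negligible as well.

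First I would set up a probabilistic polynomial-time adversary $A$ interacting with the random oracles and the $\textsf{Send}$, $\textsf{Execute}$, $\textsf{Reveal}$, $\textsf{Extract}$, and $\textsf{Test}$ queries described in the security model. I would then decompose the mutual-authentication game into the two independent forgery events. For the event $E_{LN\text{-}CS}$, I invoke Theorem \ref{thm7.7}: any adversary forging a valid login tuple $\langle W,X\rangle$ accepted by CS yields a challenger solving ECDL with probability at least $\epsilon(1-1/q)/q_{H_1}$, so $\Pr[E_{LN\text{-}CS}]$ must itself be negligible. Symmetrically, for the event $E_{CS\text{-}LN}$, I invoke Theorem 7.8, which bounds the probability of forging a valid response $\langle t,T_{CS},Y\rangle$ by a negligible quantity derived from the ECDL (and CDH) hardness. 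Combining the two bounds via the union-style sum in Definition 7.2 gives
\begin{equation} \label{eq7.MA}
Adv(A) = |\Pr[E_{LN\text{-}CS}] + \Pr[E_{CS\text{-}LN}]| \le \frac{\epsilon(1-1/q)}{q_{H_1}} + \epsilon_0,
\end{equation}
which is negligible whenever ECDL is intractable, establishing MA security.

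The key supporting steps are the consistency verifications already embedded in the scheme: the CS side checks $kP \overset{?}{=} R_{LN}+H_1(ID_{LN}\|R_{LN})P_0 + hX$ and the LN side checks $t \overset{?}{=} H_5(ID_{LN}\|ID_{CS}\|T_{CS}\|X\|Y)$. I would argue that forging either of these without the correct private key forces the adversary to extract a discrete logarithm, which is exactly what the forking-lemma arguments of the two directional theorems already capture. Since each directional authentication is anchored in a distinct ECDL instance, I would also remark that the two reductions use independent random-oracle programming, so the events can be bounded separately before summing.

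The hard part will be ensuring that the two directional reductions compose cleanly rather than interfering: the adversary in the mutual-authentication game has access to transcripts from both directions simultaneously, so I must confirm that conditioning on $\neg E_{LN\text{-}CS}$ (as Theorem 7.8 implicitly does) does not inflate $\Pr[E_{CS\text{-}LN}]$ beyond the stated bound, and that the timestamp and session-freshness conditions prevent replay from manufacturing a spurious acceptance that neither directional theorem accounts for. I expect this composition bookkeeping — rather than any new cryptographic content — to be the principal obstacle, and I would address it by arguing that the timestamps $T_{LN},T_{CS}$ guarantee transcript freshness and that the simulator can answer all cross-direction queries honestly, so the forgery probabilities remain those guaranteed by Theorems \ref{thm7.7} and 7.8.
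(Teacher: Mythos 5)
Your proposal is correct and follows essentially the same route as the paper: the paper's own proof of this theorem is simply a two-sentence appeal to the preceding directional results (Theorems \ref{thm7.7} and 7.8), concluding that since neither LN-to-CS nor CS-to-LN authentication can be broken under ECDL hardness, mutual authentication holds. Your version merely makes explicit the decomposition of $Adv(A)$ into the two events and the composition bookkeeping that the paper leaves implicit.
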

\begin{proof}
From Lemma 8.7 and 8.8, we proved that no adversary could break our proposed AA protocol's LN-to-CS authentication or CS-to-LN authentication. Thus, it is proved that the proposed AA protocol is MA secure. 
\end{proof} 

\begin{theorem} \label{thm7.10}
Under the assumption of the random oracle model and hardness of computation Diffie–Hellman problem, our proposed IBAAKA scheme is AKA-secure. 
\end{theorem}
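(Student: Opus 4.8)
The plan is to follow the Bellare--Rogaway style game defined in Section~7.1.2 and reduce the AKA-security of the IBAAKA scheme to the hardness of the CDH problem in the random oracle model. The starting observation is that the accepted session key has the form $SK = H_6(key \,\|\, ID_{LN}\,\|\,ID_{CS}\,\|\,Y\,\|\,X)$ with $key = xytP = t\cdot(xy)P$, where $X=xP$ and $Y=yP$ are the ephemeral values appearing on the channel and $t=H_5(\cdot)$ is computable from public transcript values. Hence, in the random oracle model, an adversary $Adv$ can gain any advantage in the \emph{Test} query only by querying $H_6$ on the correct argument, which requires computing $key$, i.e.\ $(xy)P = \mathrm{CDH}(X,Y)$ up to the known scalar $t$. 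The reduction will embed a CDH instance $(P,aP,bP)$ into the ephemeral values of one guessed session and extract $abP$ from the critical $H_6$ query.

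I would proceed through a short sequence of games. First I would replace $H_1,\dots,H_6$ by lazily-sampled random oracles and abort on any input collision; the loss here is bounded by birthday terms $O((q_H+q_s)^2/q)$. Next I would invoke the already-established mutual-authentication result (Theorem~\ref{thm7.9}): the event that the tested fresh instance accepts \emph{without} a matching partner instance is bounded by the MA-advantage, which is in turn negligible under ECDL. Conditioning on this event not occurring, the tested instance and its partner are both honest instances produced by $Ch$, and $Adv$ behaves as a passive relay on that particular session. I would then guess the index of the tested session uniformly among the at most $q_s$ activated sessions, incurring a factor $1/q_s$.

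In the guessed session the challenger sets $X=aP$ and $Y=bP$ from its CDH instance. Because both endpoints of this session are honest and simulated by $Ch$, the reduction does not need the discrete logs $a,b$: it generates the transcript directly and programs the oracles $H_2,H_3,H_5$ so that the two instances accept and become partners, thereby side-stepping the computation of the authentication tag $k=s_{LN}+xh$, which would otherwise require knowledge of $a$. The \emph{Test} query on this session is answered with a uniformly random string. The key point is that, conditioned on no collisions, $SK$ is information-theoretically independent of $Adv$'s view unless $Adv$ queries $H_6$ on $key\,\|\,ID_{LN}\,\|\,ID_{CS}\,\|\,Y\,\|\,X$; in that case $Ch$ reads the corresponding value $key^\ast$ off the $H_6$ query list, computes $abP = t^{-1}key^\ast$, and returns it as the CDH solution. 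Combining the game hops yields a bound of the shape
\begin{equation*}
Adv_{\mathrm{CDH}} \;\ge\; \frac{1}{q_s\,q_{H_6}}\Big(Adv_{\mathrm{AKA}}(Adv) - Adv_{\mathrm{MA}} - O\big((q_H+q_s)^2/q\big)\Big),
\end{equation*}
so a non-negligible AKA-advantage would contradict the CDH assumption.

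The main obstacle I anticipate is the consistent embedding of the CDH instance into both ephemeral points $X$ and $Y$ of the tested session while keeping the whole simulation indistinguishable from the real execution. The authentication tag $k=s_{LN}+xh$ ties the session key's ephemeral value $X$ to a knowledge-of-$x$ proof, so a naive embedding $X=aP$ leaves $Ch$ unable to produce a verifying $k$. The resolution, and the delicate part to make rigorous, is to first dispose of unmatched acceptances via the proven MA-security so that the tested session runs between two honest, challenger-controlled instances; only then can $Ch$ legitimately bypass tag verification and program the random oracles without the unknown exponents. Getting the freshness and partnering conditions to line up exactly with the conditioning used in this step, and verifying that the programming of $H_2,H_3,H_5$ introduces no detectable inconsistency, is where the care is required.
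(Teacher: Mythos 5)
Your proposal follows essentially the same route as the paper's proof: both dispose of the event that authentication fails by invoking the already-proven MA-security result, and then argue in the random oracle model that any remaining Test-query advantage forces the adversary to query the key-derivation hash on $key = xytP$, from which the CDH solution $xyP$ is read off (up to the known scalar $t$). The only substantive difference is that you make explicit the reduction losses — the $1/(q_s\,q_{H_6})$ factors from guessing the tested session and the critical $H_6$ query, plus the collision terms — which the paper's stated bound $\epsilon_0 \ge \frac{1}{2}\big(\frac{\epsilon}{2} - Pr[E^{LN-to-CS}]\big)$ silently omits.
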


\begin{proof}
Let an adversary $Adv$ correctly guess the random bit $b$ value in a Test query with a non-negligible advantage $\epsilon$. Suppose $E^{H_{SK}}$ is the event that $Adv$ could guess the session key correctly, so we have $Pr[E^{H_{SK}}] \ge \epsilon/2$. Suppose the event that shows $ Adv$ runs the Test query to the oracle LN in the $i$-th instance is $E^{Test(LN)}$, and the event that shows $ Adv$ runs the Test query to the oracle CS in the $j$-th instance is $E^{Test(CS)}$. Also, the event that shows $Adv$ breaks the LN-to-CS authentication of our proposed system is $E^{LN-to-CS}$. We have:

\begin{align*}
Pr[E^{H_{SK}}] &= Pr[E^{H_{SK}} \wedge E^{Test(LN)}] \
&+ Pr[E^{H_{SK}}\wedge E^{Test(CS)}\wedge E^{(LN-to-CS)}] \
&+ Pr[E^{H_{SK}}\wedge E^{Test(CS)} \wedge \neg E^{(LN-to-CS)}] \ge \epsilon/2
\end{align*}

and

\begin{align*}
Pr[E^{H_{SK}} \wedge E^{Test(LN)}] &+ Pr[E^{H_{SK}} \wedge E^{Test(CS)} \wedge \neg E^{(LN-to-CS)}] \
&\ge \frac{\epsilon}{2} - Pr[E^{H_{SK}} \wedge E^{Test(CS)} \wedge E^{(LN-to-CS)}] \
&\ge \frac{\epsilon}{2} - Pr[E^{(LN-to-CS)}]
\end{align*}

As the events $E^{Test(CS)} \wedge \neg E^{(LN-to-CS)}$ and $E^{Test(LN)}$ are equivalent, we have:

\begin{align*}
Pr[E^{H_{SK}} \wedge E^{Test(LN)}] &\ge \frac{1}{2} \left(\frac{\epsilon}{2} - Pr[E^{(LN-to-CS)}]\right)
\end{align*}

Therefore, we conclude that:

\begin{align*}
Pr[t=H_5(ID_{LN}||ID_{CS}||T_{CS}||X||Y)|X,Y] \ge \epsilon
\end{align*}

Lemma 8.7 shows that $Pr[E^{LN-to-CS}]$ is negligible. Then, we obtain:

\begin{align*}
Pr[t=H_5(ID_{LN}||ID_{CS}||T_{CS}||X||Y)|X,Y] \ge \epsilon
\end{align*}

Suppose $X=xP$ and $Y=yP$, where $x,y\in \mathbb{Z}_q^*$. On the given CDH assumption tuple $<X,Y>=<xP,yP>$, $Adv$ must be able to compute $K=xY=yX=xyP$. $C$ uses $A$ to solve the CDH problem with a non-negligible advantage $\epsilon_0 \ge \frac{1}{2}\left(\frac{\epsilon}{2} - Pr[E^{LN-to-CS}]\right)$ to solve it. Therefore, we have proven that our proposed IBAAKA scheme is AKA-secure.
\end{proof}

\subsection{Security Discussion}
The proposed IBAAKA protocol achieves the following security properties:

\begin{itemize}
    \item \textit{Mutual authentication}. In our proposed scheme, LN computes the $\mathbb{G}_{SL}=x(R_{CS}+H_1( ID_{CS}||R_{CS})P_0)$, $h=H_2(ID_{LN}||R_{LN}||T_{LN}||X)$ and $k=s_{LN}+xh$ using its private key $s_{LN}$ and encrypts $W=H_3(\mathbb{G}_LN) \oplus(ID_{LN}||R_{LN}||T_{LN}||X||k)$.  No one other than LN could have knowledge of its private key. The CS then computes the decryption key $\mathbb{G}_{CS}=s_{CS}X$  and decrypts $W$ using key $\mathbb{G}_{CS}$ as $ID_{LN}||R_{LN}||T_{LN}||X||k=H_3(\mathbb{G}_{CS}) \oplus W$. Now, CS verifies the authenticity of LN by checking whether the equation $kP \overset{?}{=} R_{LN}+H_1(ID_{LN}||R_{LN})P_0+hX$ holds. On CS side, CS compute sets  $Y=yP$ and $t=H_5(ID_{LN}||ID_{CS}||T_{CS}||X||Y)$, $key_{CS}=ytX$ and $SK_{CS}=H_6(key_{CS}||ID_{LN} ||ID_{CS} ||Y||X)$. LN checks the timestamp $T_{CS}$ and check whether the equation $t \overset{?}{=}H_5(ID_{LN} ||ID_{CS} ||T_{CS}||X||Y)$ holds and computes the session key as $key_{LN}=xtY$ and $SK_{LN}=H_6(key_{LN}||ID_{LN} ||ID_{CS}||Y||X)$. In addition, Theorem 8.9 proves mutual authentication of the proposed IBAAKA scheme for WBAN. Therefore, the protocol provides mutual authentication.
	\item \textit{User anonymity}. User anonymity ensures that the user’s real identity remains hidden except for the CS and NM.  In our proposed scheme, LN’s identity $ID_{LN}$ is encrypted in login parameters $W$. To reveal the real identity of the LN, the adversary has to generate the CS’ private key $s_{CS}$, equivalent to solving the CDH problem. Thus, an adversary could not identify the real identity of LN. Thus, the protocol achieves sensor/LN anonymity. 
	\item \textit{Session key security}. In an authentication phase of the proposed anonymous authentication protocol, LN and CS agreed on the shared session key SK. The session key security depends on the ECDL problem and ensures that the session key will not be broken.
	\item \textit{Untraceability}. In the proposed protocol, the exchange parameters that is $<W,X>$ and $<t,T_{CS},Y>$ are based on the random numbers $x$ and $y$. Therefore, each new session entity chooses unique $x$ and $y$, so these parameters are different for each session. For two different session parameters, the adversary $A$ cannot find any relationship among these parameters nor trace the sensor activity. Therefore, the proposed scheme achieves the untraceability property. 
	\item \textit{Perfect forward secrecy}. Suppose adversary $A$ has the private key of LN and CS and also $A$ has the intercepted message  $<W,X>$ and $<t,T_{CS},Y>$. From these given parameters, an adversary could not compute the random numbers x and y to get the session key $SK=SK_{CS}=H_5(key_{CS}||ID_{LN}||ID_{CS} ||Y||X)=SK_{LN}=H_5(key_{LN}||ID_{LN}||ID_{CS}||Y||X)$, an adversary requires $key_{CS}=ytX$ or $key_{LN}=xtY$ which is hard to compute if he/she does not know the random element $x$ and $y$. Therefore, under the CDH assumption, the proposed scheme provides forward-perfect secrecy.
	\item \textit{Stolen verifier attack}. The CS has not any information related to any LN for mutual authentication. So, there is no information to be stolen. Thus, the scheme is secure against the stolen verifier attack.
	\item \textit{LN Impersonation attack}. To impersonate a sensor LN, the adversary must generate an actual login parameter $<W,X>$, such that W is encryption of $(ID_{LN}||R_{LN}||T_{LN}||X||k)$ and $k$ is signature on $ID_{LN}||R_{LN}||T_{LN}||X$ using LN’s private key $s_{LN}$. It is hard for an adversary to produce the same signature without LN’s private key; hence it cannot impersonate the leaf node LN. 
	\item \textit{CS impersonation attack}. To login the phase, LN encrypts $(ID_{LN}||R_{LN}||T_{LN} ||X||k)$ using $\mathbb{G}_{LN}$ which is derived from the random of LN and CS’s identity. Without knowing CS’s private key $s_{CS}$, the adversary cannot decrypt the $ID_{LN}$ and $X$ from the login parameters. Thus, he cannot generate the correct value of $t$. Thus, the adversary cannot impersonate the CS. 
	\item \textit{Man-in-the-middle attack}. In our proposed protocol, CS decrypts the login parameter W using its private key to authenticate the leaf node’s ID. By doing so, CS extracts the LN’s identity $ID_{LN}$ and $R_{LN}$. Without knowing the private key of CS, it is hard for an adversary to produce the message by itself. Therefore, the protocol is secure against the man-in-the-middle (MITM) attack. 
\end{itemize}
	
\textbf{Security comparison}. Table \ref{tbl7.1} compares the proposed IBAAKA scheme and other related schemes such as Liu \textit{et al.} scheme \cite{liu2013certificateless}, He \textit{et al.} scheme \cite{he2016anonymous},  Wang and Zhang’s \cite{wang2015new}, Jiang \textit{et al.} scheme \cite{jiang2016bilinear} and Jia \textit{et al.} scheme \cite{jia2019provably} in terms of security properties. We denote “Y” and “N” to represent whether the protocol fulfils the corresponding security property. Table \ref{tbl7.1} shows that the proposed IBAAKA scheme achieves the required security properties.

\begin{table} 
        \centering
        \caption{Security Comparison}
        \label{tbl7.1}
        \begin{tabular}{|c|c|c|c|c|c|c|}
            \hline
            Security requirements &	\cite{liu2013certificateless} & \cite{he2016anonymous} & \cite{wang2015new} & \cite{jiang2016bilinear} & \cite{jia2019provably} & IBAAKA\\
            \hline
            \hline
            MA  &	N	& N &	N &	Y &	Y &	Y\\
            User anonymity & 	N	& N	& Y &	Y	& Y& 	Y\\
            Session key security &	Y &	Y &	Y &	Y &	Y &	Y \\
            Untraceability & 	N &	Y &	Y &	Y &	Y &	Y\\
            Perfect forward & 	N &	Y &	Y &	Y &	Y &	Y\\
            Probable security  & N &Y &	Y &	Y &	Y &	Y\\
            MITM attack &	Y	& Y &	Y &	Y &	Y &	Y\\
            Impersonation attack & 	N &	Y &	N &	Y &	Y &	Y\\
            Key replacement & 	N &	N& 	N &	N &	Y &	Y\\
            User revocation &	N &	N &	N &	N &	N &	Y\\
        \hline
    \end{tabular}
\end{table}

\subsection{Performance Evaluation of OR-3PID-KAP}
We now compare our proposed protocol with other related schemes \cite{xiong2012provably} and  \cite{xiong2013new} in terms of computational cost (in operations), execution cost (in ms) and bandwidth cost (in Bytes). To achieve 1024-bit RSA level security for pairing-based cryptosystem, we found that the TinyPBC library, which has an efficient implementation over binary fields $F_(2^m)$ with 80-bit security \cite{xiong2010tinypairing} which is written in C language and running on TinyOS operating system and MICAs sensor user. MICA has only 4KB RAM, 128KB ROM, and an 8-bit/7.3828-MHz  ATmega128L microcontroller. The running cost of operations and their notations are shown in Table \ref{tbl7.2}.

\begin{table} 
        \centering
        \caption{Notation and Execution time of pairing-related operations on MICAs \cite{xiong2010tinypairing}}
        \label{tbl7.2}
        \begin{tabular}{|c|c|c|}
            \hline
            Operations &	Notation	&Execution cost (sec)  \\
            \hline
            \hline
            Hash-to-Point	&H	&0.89 \\
            Point Compression	& PC	& 0.38\\
            Point Decompression & PD	& 0.38 \\
            Scalar point multiplication & 	E	& 2.45\\
            Pairing 	& P	& 5.32 \\

        \hline
    \end{tabular}
\end{table}

We ignore the computation cost of the hash function operation as it takes less comparison cost than bilinear pairing operations. We focus on pairing operations, scalar multiplication on an elliptic curve, and scalar multiplication on pairing for approximation performance. We then examine the comparison of our scheme with related schemes in terms of computational cost (running in a sec),  

\begin{table} 
        \centering
        \caption{Computational and execution cost comparison of our proposed scheme with other related schemes}
        \label{tbl7.3}
        \begin{tabular}{|c|c|c|c|c|}
            \hline
            Schemes &	Computational Cost &	Execution cost (sec) &	Bandwidth (Bytes)	& \#Round  \\
            \hline
            \hline
            \cite{xiong2012provably} & $2P +7E +2Ex$ &	38.43 &	(2*80+16)/8 = 22 &	6 \\
            \cite{xiong2013new} & $P +5E + Ex$ &	22.89 &	(2*80+16)/8 = 22 &	6 \\
            Our Scheme &	$P + E$ &	7.77	& (2*80+16)/8 = 22	& 1 \\

        \hline
    \end{tabular}
\end{table}

In Table \ref{tbl7.3}, the reader can see that the proposed scheme takes only two multiplication operations on an elliptic curve executed in (5.32 +2.45) = 7.77 sec while [\cite{xiong2012provably} and  \cite{xiong2013new} take 38.43 sec and 22.89 sec respectively. That means our proposed scheme is 20.21\% and 33.94\% of \cite{xiong2012provably} and  \cite{xiong2013new} respectively, in terms of computational running time.

We assume the size of the user identity is 16-bit and the order of the elliptic curve group is 80-bit. As shown in Table \ref{tbl7.3}, we show that similar to \cite{xiong2012provably} and  \cite{xiong2013new}, our proposed scheme consumes (2*80+16)/8 = 22 bytes of channel bandwidth during the key establishment completion. Further, it can be seen that our protocol completes in one round while \cite{xiong2012provably} and  \cite{xiong2013new} take six rounds. 

\begin{table} 
        \centering
        \caption{Security  comparison of the proposed scheme with other schemes}
        \label{tbl7.4}
        \begin{tabular}{|c|c|c|c|c|c|c|c|c|}
            \hline
            Schemes &	PS$^@$ &	PFS$^\#$ &	MA$^{\#\#}$ &	MIMAR$^{@@}$ &	KCIR$^{\&}$ &	KCR$^{*}$ &	KSKS$^\&$ &	UKR$^\&\&$  \\
            \hline
            \hline
            \cite{xiong2012provably} & 	$\checkmark$	&$\checkmark$	&$\checkmark$	&$\checkmark$	& 	$\times$	&$\checkmark$& 	$\checkmark$	&$\checkmark$ \\
            \cite{xiong2013new} & 	$\checkmark$	&$\times$	&$\checkmark$	&$\checkmark$	& 	$\checkmark$	&$\checkmark$& 	$\times$	&$\checkmark$ \\
            Our Scheme &	$\checkmark$	&$\checkmark$	&$\checkmark$	&$\checkmark$	& 	$\checkmark$	&$\checkmark$& 	$\checkmark$	&$\checkmark$ \\

        \hline
    \end{tabular}
    \footnotesize{\\$^@$One Provable security, $^\#$perfect forward security, $^{\#\#}$ Mutual authentication, $^{@@}$ man-in-the-middle attack resilience, $^{\&}$ key compromised impersonation resilience, $^{*}$key control resilience. $^\&$known session key security. $^\&\&$unknown key share resilience }
\end{table}

Table \ref{tbl7.4} compares our scheme with \cite{xiong2012provably} and  \cite{xiong2013new}  in terms of the following security notions: Provable security, PFS: perfect forward security, mutual authentic-cation, man-in-middle attack resilience, key compromised impersonation resilience, key control resilience, the known session key security, unknown key share resilience. Similar \cite{xiong2012provably} and  \cite{xiong2013new}, our protocol also achieves all security notions.

\subsection{Performance Evaluation of IBAAKA }
This section illustrates the performance of the proposed IBAAKA scheme for WBAN systems in the cloud environment and compares it with other related systems such as Liu \textit{et al.} scheme \cite{liu2013certificateless}, He \textit{et al.} scheme \cite{he2016anonymous},  Wang and Zhang’s \cite{wang2015new}, Jiang \textit{et al.} scheme \cite{jiang2016bilinear} and Jia \textit{et al.} scheme \cite{jia2019provably}, in terms of computational cost and communication cost.
To simulate the proposed IBAAKA schemes, we experiment on \textit{Acer E5-573-5108 laptop} with \textit{Intel(R) Core(TM) i5-5200U CPU@2.20 GHz} and 8 GB RAM on Window 10. We consider the evaluation of operations on super-singular curve $y^2+y=x^3+x$ having embedding degree 4 and use the eta pairing $\eta_T:E(F_{2^{271}}) \times E(F_{2^{271}} ) \rightarrow E(F_{2^{4.271}})$. To analyse the performance of the proposed scheme, we take the following cryptographic operations: scalar multiplication on an elliptic curve, point addition on an elliptic curve, pairing operation on two points on an elliptic curve, pairing exponentiation, map-to-point hash function, modular inversion and modular-multiplication operations, which we represented by $T_{SM},T_A,T_P,T_E$,  $T_H$, $T_I$  and $T_M$ respectively. We run the various cryptographic operations over a bilinear pairing group and consider the average of 10 succeeding results using the PBC library \cite{lynn2007pbc}. We estimated that $1T_P=3T_{SM}=87T_M$, $T_E=21T_M$, $T_H=23T_M$, $T_I=11.6T_M$, and $T_A=0.12T_M$.  Table \ref{tbl7.5} abstracts the notations and computation cost (in msec) of required cryptographic operations.

\begin{table} 
        \centering
        \caption{Notations and computation cost (in msec) of different cryptographic operations}
        \label{tbl7.5}
        \begin{tabular}{|c|c|c|}
            \hline
            Operations &	Notation & 	Execution time (ms) \\
            \hline
            \hline
            Modular multiplication & $T_M$	& 0.027\\
            ECC-based multiplication & 	$T_SM$ &	0.304 \\
            ECC-based Point addition &	$T_A$	& 0.001\\
            Exponentiation &	$T_E$	& 0.297\\
            Inversion 	& $T_I$ &	0.008 \\
            Map-to-point hash & $T_H$ &	0.319 \\
            Bilinear pairing	& $T_P$ &	2.373\\
        \hline
    \end{tabular}
\end{table}

\begin{table} 
        \centering
        \caption{Computational cost (in msec): comparative summary
Schemes	Computational cost (msec)}
        \label{tbl7.6}
        \begin{tabular}{|c|c|c|}
            \hline
            Schemes & Sensor/Client/LN	 & Application provider/CS \\
            \hline
            \hline
            \cite{liu2013certificateless} & $4T_{SM}+1T_E$ (1.513) &	$1T_{SM}+1T_P+1T_E$ (2.974) \\
            \cite{he2016anonymous} & $4T_{SM}+1T_H$ (1.535)	& $4T_{SM}+2T_P+1T_H$ (6.281) \\
            \cite{wang2015new} & $3T_{SM}+1T_P+2T_H$ (3.923) &	$2T_{SM}+1T_P+2T_H$  (3.618) \\
            \cite{jia2019provably} & $4T_{SM}+1T_E$ (1.513) &	$4T_{SM}+2T_P+2T_E$ (6.556) \\
            \cite{sowjanya2020elliptic} & $3T_{SM}+1T_M$ (0.939) &	$6T_{SM}+1T_M$ (1.851) \\
            \cite{nikooghadam2020secure} & $3T_{SM}+2T_M$ (0.966) &	$4T_{SM}+2T_A$  (1.218)\\
            Our	& $3T_{SM}+2T_M$ (0.966) &	$5T_{SM}$ (1.52) \\
        \hline
    \end{tabular}
\end{table}

 Now, we evaluate the computational cost of our proposed IBAAKA protocol and other related schemes such as Liu \textit{et al.} scheme \cite{liu2013certificateless}, He \textit{et al.} scheme \cite{he2016anonymous},  Wang and Zhang’s \cite{wang2015new}, Jiang \textit{et al.} scheme \cite{jiang2016bilinear}, Jia \textit{et al.} scheme \cite{jia2019provably} and \cite{nikooghadam2020secure} schemes, as shown in Table \ref{tbl7.6}. The table shows that on the client/sensor side, Liu \textit{et al.} scheme \cite{liu2013certificateless} needs  1.513 ms, He \textit{et al.} scheme \cite{he2016anonymous} needs 1.535ms, Wang \textit{et al.} scheme \cite{wang2015new} needs 3.923 ms, Jia \textit{et al.} scheme \cite{jia2019provably} needs 1.513 ms, Sowjanya \textit{et al.} scheme \cite{sowjanya2020elliptic} needs 0.939 ms,  and Nikooghadam \textit{et al.} \cite{nikooghadam2020secure} needs 0.966 ms while the proposed IBAAKA protocol needs 0.966 ms. On the application provider/CS side, Liu \textit{et al.} scheme \cite{liu2013certificateless} needs 2.974 ms, He \textit{et al.} scheme \cite{he2016anonymous} needs 6.281ms, Wang \textit{et al.} scheme \cite{wang2015new} needs 3.618  ms, Jia \textit{et al.} scheme \cite{jia2019provably} needs 6.556 ms, Sowjanya \textit{et al.} scheme \cite{sowjanya2020elliptic} needs 1.218 ms,  and Nikooghadam \textit{et al.} \cite{nikooghadam2020secure} needs 1.52 ms while the proposed IBAAKA protocol needs 1.52 ms. The total computation costs of Liu \textit{et al.} \cite{liu2013certificateless}, He \textit{et al.} \cite{he2016anonymous},  Wang \textit{et al.} \cite{wang2015new},  Jia \textit{et al.} \cite{jia2019provably}, Sowjanya \textit{et al.} \cite{sowjanya2020elliptic} and Nikooghadam \textit{et al.} \cite{nikooghadam2020secure} and the proposed IBAAKA schemes are 4.487 ms, 7.816 ms, 7.451 ms, 8.069 ms, 2.79 ms, 2.184 ms and 2.486 ms, respectively. Thus, our proposed protocol has the least computation cost on the sensor side than other protocols, shown in Table \ref{tbl7.6} anf Figure (\ref{fig7.3}).

In order to achieve a trusted tight level of security, we consider the supersingular curve $E(\mathbb{F}_p)$ of order $q$ over a finite field $\mathbb{F}_p$, where $p$ and $q$ are two prime numbers, such that $|p|$ =512 and $|q|$=160 bits, respectively. The size of $\mathbb{G}_1$, $\mathbb{G}_2$ and $\mathbb{Z}_q$ are 1024, 512 and 160 bits respectively. For evaluating communication overhead, we consider the size of timestamp and identity as 2 bytes, i.e., $|T|$ = $|ID|$ =16 bits. Therefore, Liu \textit{et al.} scheme \cite{liu2013certificateless} requires transmitting 3408 bits, He \textit{et al.}’s scheme \cite{he2016anonymous} requires transmitting 4288 bits,  Wang \textit{et al.}’s scheme  \cite{wang2015new} requires transmitting 3296 bits, Jiang \textit{et al.}’s scheme \cite{jiang2016bilinear}, requires to transmit 2424 bits, Jia \textit{et al.}’s scheme \cite{jia2019provably} requires to transmit 5328 bits, and our scheme requires to transmit 3440 bits during authentication (see in Table \ref{tbl7.7}) and Figure (\ref{fig7.4}).

\begin{figure}
  \centering
  \includegraphics[width=0.8\linewidth]{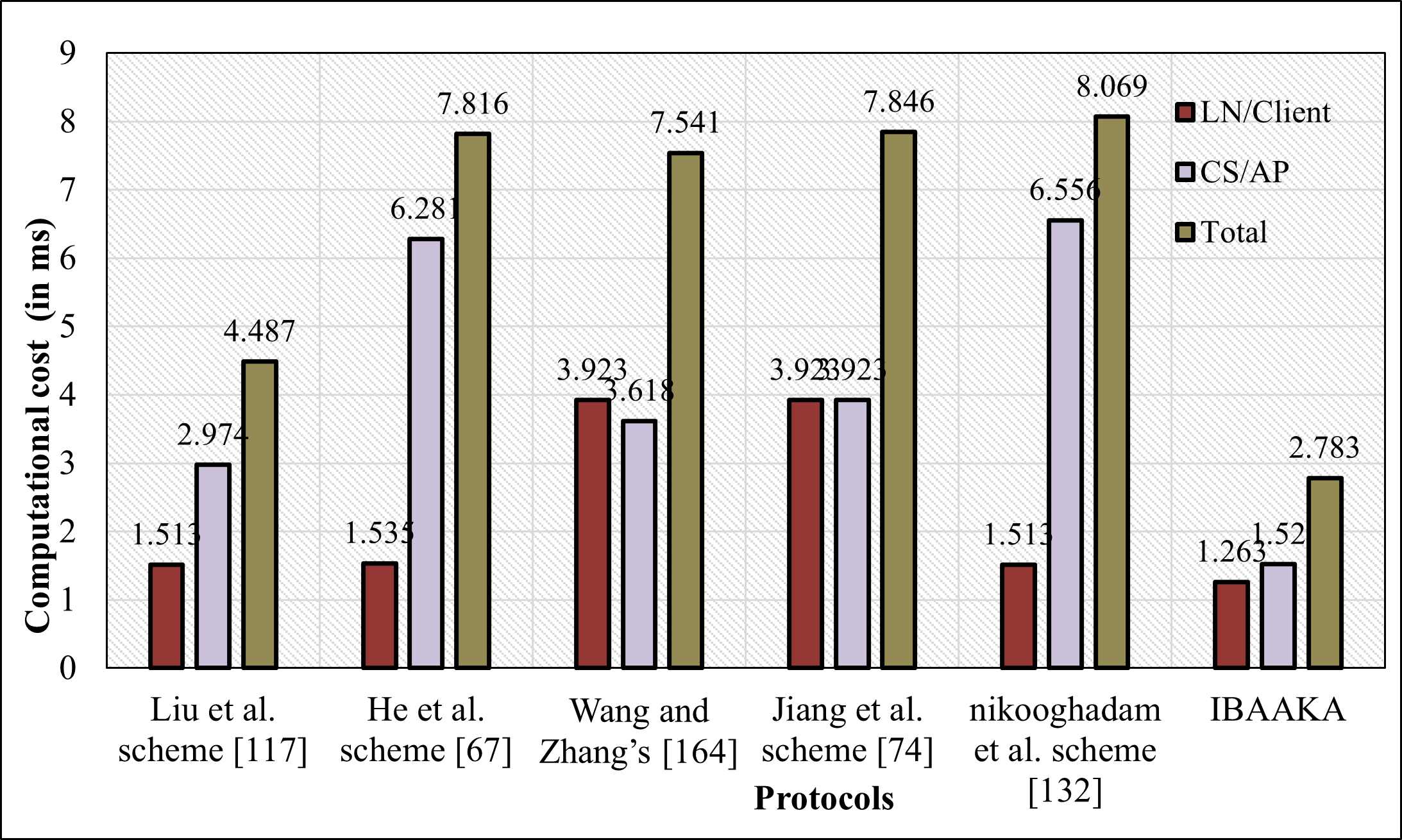}
  \caption{Computational cost comparison of proposed IBAAKA with other related schemes}
\label{fig7.3}
\end{figure}

\begin{figure}
  \centering
  \includegraphics[width=0.8\linewidth]{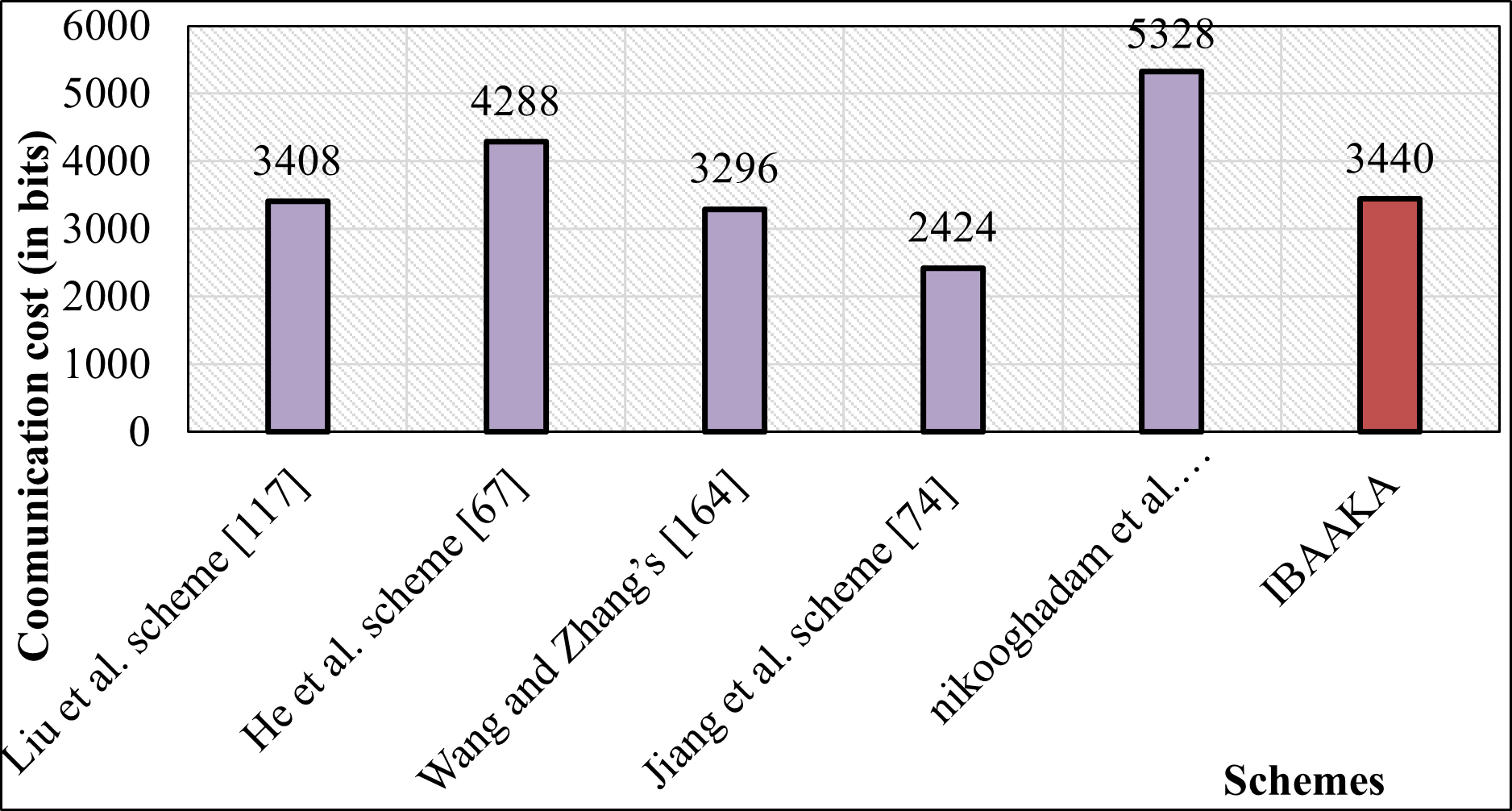}
  \caption{Communication cost comparison of proposed IBAAKA scheme with other schemes}
\label{fig7.4}
\end{figure}

\begin{table} 
        \centering
        \caption{Computational cost (in msec): comparative summary}
        \label{tbl7.7}
        \begin{tabular}{|c|c|c|}
            \hline
            Schemes & 	Computational cost (msec) \\
            \hline
            \hline
            \cite{liu2013certificateless} & $1|T|+3|\mathbb{G}_1 |+2|\mathbb{Z}_q |$	(3408) \\
            \cite{he2016anonymous} & $1|ID|+1|T|+4|\mathbb{G}_1 |+1|\mathbb{Z}_q |$	(4288) \\
            \cite{wang2015new} & $1|ID|+3|T|+3|\mathbb{G}_1 |+1|\mathbb{Z}_q |$ 	(3296) \\
            \cite{jia2019provably} & $1|ID|+2|T|+5|\mathbb{G}_1 |+1|\mathbb{Z}_q |$	(5328) \\
            \cite{sowjanya2020elliptic} & $3|T|+6|\mathbb{G}_1 |+3|\mathbb{Z}_q |$	(6672) \\
            \cite{nikooghadam2020secure} & $2|T|+3|\mathbb{G}_1 |+1|\mathbb{Z}_q |$	(3264)\\
            Our	& $1|ID|+2|T|+3|\mathbb{G}_1 |+2|\mathbb{Z}_q |$	(3440)\\
        \hline
    \end{tabular}
\end{table}

\section{Summary}
In this chapter, we have presented two key authenticated agreement protocols that are designed specifically for the WBAN system operating in a cloud-assisted environment. The first protocol we proposed is a three-party authenticated key agreement protocol that facilitates the computation of a secret shared key between three leaf nodes within the WBAN network. The second protocol we proposed is an efficient identity-based anonymous authentication and key agreement (IBAAKA) scheme that enables mutual authentication between the leaf node and cloud server and facilitates the generation of a session key between them. Our proposed IBAAKA scheme guarantees the anonymity of the user's identity except for the network manager in the registration phase. We have conducted a comprehensive security analysis of our protocols, and we have shown that they are secure under the random oracle model and CDH assumption. When compared to other existing protocols, our schemes have the least computation cost and comparable communication cost. Therefore, our protocols offer a highly efficient solution for authenticated key agreements in the WBAN system, which is of utmost importance in ensuring the security and privacy of the users' sensitive health data.

\end{doublespace} \label{chapter7}
\begin{savequote}[75mm] 
Cultivation of mind should be the ultimate aim of human existence.
\qauthor{Bhimrao Ramji Ambedkar} 
\end{savequote}

\chapter{Conclusion}
\justify
\begin{doublespace}

In this thesis, we survey the mathematics underlying elliptic curves and paring-friendly elliptic curves and their benefits in several emerging internet technology. In particular, we discussed that the most famous families of pairing-friendly elliptic curves, such as BN, BLS12, BLS24, KSS16, and KSS18, are vulnerable to the recent Tower Number Field Sieve (TNFS) and its variants attack. Therefore, we compared and estimated the practical security of several families of pairing-friendly curves. We found those families of curves better than BN, KSS, and BLS in terms of the required key size. 

After investigating the pairing-friendly curves, we discussed the key-escrow and its by-product problems associated with identity-based cryptosystems, one of the widely accepted applications of pairing-based cryptosystems. We gave an overview of solutions to these problems and proposed an efficient and secure key issuing scheme. Thereby, demonstrating that they are feasible to escrow-free identity-based encryption that is secured against confidentiality and escrow-free identity-based signature scheme that is forgeable secured.

We further discussed the security inconsistencies of the electronic voting system (EVS) and proposed two efficient identity-based blind signature (IDBS) schemes: IDBS-I and IDBS-II, secured against the existential forgery attack. We design a state-of-the-art End-to-End verifiable internet voting (E2E-VIV) system whose security is defined by the IDBS-II scheme. The performance analysis demonstrated that the E2E-VIV system is efficient and secured against the required security goals on comparing the related schemes. It also allows batch verifiability to verify multiple ballots and vote ballots simultaneously. 

In the subsequent work of the thesis, we discussed data outsourcing on cloud storage, wherein we found that it is challenging to audit the integrity of data on cloud storage for resource-constraint devices. Therefore, we presented an identity-based blind signature with a message recovery (IDBS-MR) scheme and formulated its security. The system is secured against an existential forgery attack under the adaptive chosen message and ID attacks (EF-ID-CMA) and ECDL problems. Further, we have presented a privacy-preserving data outsourcing mechanism that audits the integrity of stored data for resource-limited devices in cloud computing using the proposed IDBS-MR scheme.

Additionally, we discussed some security issues associated with two emerging areas: video-on-demand streaming with an untrusted service provider and the eHealth care system. We presented an escrow-free identity-based signcryption (EF-IDSC) to address the first issue and implement a secure peer-to-peer video-on-demand (P2P-VoD) streaming system, whose security is based on the proposed EF-IDSC scheme. To address the latter issue, we proposed escrow-free identity-based aggregated signcryption (EF-IDASC) schemes and implemented a secure cloud-assisted healthcare system that achieves public verifiability based on the IDASC scheme. Both systems are secured toward confidentiality and forgeability attacks. 

Finally, we proposed two identity-based authenticated key agreement protocols to obtain a secure session key in the health care system. In this work, we have first proposed a one-round, three-party authenticated ID-based key agreement protocol (OR-3PID-KAP) whose security is based on solving ECDLP and BDHP. Compared to the related schemes, the proposed system has the least computational cost, bandwidth cost, and message exchange. Second, we proposed an identity-based anonymous authentication and key agreement (IBAAKA) protocol for WBAN in the cloud-assisted environment, which achieves mutual authentication and user anonymity.

\section{Contribution in COVID-19}

An emerging infectious disease, coronavirus disease 2019 (COVID-19), has been reported in Wuhan, China, and subsequently spread worldwide within a couple of months. We have observed that the rate of infected patients is overwhelmed by the limited medical institutions, which can be addressed and solved using emerging digital technologies. Inspired by this problem, we have worked on COVID-19 and proposed a state-of-art privacy-preserving medical data-sharing system based on Hyperledger Fabric (MedHypChain) and regularised it to implement the patient-centred interoperability healthcare system. We have analyzed the performance of MedHypChain in three metrics (latency time, execution time, and throughput) for up to 20 permissioned nodes. At last, we compare MedHypChain with new blockchain-based healthcare systems in terms of time, communication, and high-level security features and found that MedHypChain achieves all security features. Here, we give a brief overview of the model.

\subsection{Network Model}
Our proposed MedHypChain system consists of the following participants: Network Administrator (NA), Client, Personal Digital Assistant (PDA), Permissioned Peer (PP), and Tracing Agent (TA).
\begin{itemize}
    \item 	\textbf{NA}: It is a trusted entity responsible for registering other participants in the network. Any Client, PDA, or PP must obtain credentials (key information) from NA before joining the proposed network. NA will play the same role as MSP in Hyperledger Fabric but does not manage peers' identities. 
	\item \textbf{Client}: A client is a user or organization who wishes to use the ledger service. Each client owns the credential obtained from NA for achieving secure, anonymous, and traceable data.
	\item \textbf{PDA}. It is a centralized device (e.g., smartphone) with adequate computation power and storage but is not trustworthy as it is easy for any adversary to retrieve the patient’s sensitive data by physically stealing the phone or statistical attack on it. It can collect the data from the client securely, verify it without knowing anything about the data, and aggregate it in its storage. Besides, the PDA installed a fabric software development kit (SDK) that allows it to interact with the Fabric blockchain and provides a simple API to query data from the ledger and submit the transition to a ledger.
	\item \textbf{PP}: The authorized entities responsible for managing, processing, and maintaining transactions. We adopt execute-order-validate architecture and classify it into Endorsing and ordering peer (E\&OP) and committing peer (CP). The E\&CP is accountable for endorsing the transaction using chaincode (smart contract) and updating the block in the ledger notifying the user, while OP is accountable for ordering the endorsed transactions PBFT consensus protocol. 
	\item \textbf{TA}: It is a trusted entity and is accountable for tracking users' malicious behaviour and revocation. It possesses the account's address and password $(add_t,pwd_t)$ and the private key $sk_t$ corresponding to its identity $ID_t$. Also, it has an extra tracing key to accomplish its responsibility.
\end{itemize}

\subsection{System Components}
The main idea of the proposed MedHypChain model is to design a platform for patient-centric health record sharing in the COVID-19 crisis among the participants anonymously, securely, accurately and efficiently. We employ a permissioned blockchain via Hyperledger that will preserve the privacy and confidentiality of data by adopting suitable encryption and access control schemes. 

The proposed MedHypChain model mainly consists of the following phases: Initialization, Registration, Transaction, Transition, Chaining and Tracing, as defined below.
\begin{itemize}
    \item \textbf{$(msk,pp)\leftarrow Initialization(1^k)$}. This initialization phase will respond to a public parameter $pp$ and master key msk according to the input security parameter $k$. 
	\item \textbf{($add_i,psw_i,sk_i) \leftarrow Registeration(msk,pp,ID_i)$}It allows the participants to registers their account address $add_i$ and account password $pwd_i$. Also, it generates a private key $sk_i$ corresponding to the participant’s $ID_i$. 
	\item \textbf{$(\sigma_p,Cx,add_p) \leftarrow TranPrsl(data,pws_s,sk_s,RS)$}. It signcrypts data using the sender’s private key $sk_s$ and set of recipient identities $RS={ID_1,ID_2,...,ID_n}$ and outputs the signcrypted data Cx. 
	\item \textbf{$\{0 \text{abort},1/(\sigma_e,Tx,add_e)\} \leftarrow Endorse(\sigma_p,Cx,add_p,pws_e)$}. This phase allows the PP to unsigncrypt the transaction $C_x$ using his private key $sk_{pp}$ corresponding to identity $ID_{pp}  \in RS$, validate the transaction $Tx$ using $add_p$, sign it with its private key $pwd_p$. 
	\item \textbf{$\{0,1\} \leftarrow Ordering(\sigma_e,Tx,add_e)$}. It verifies the signature $\sigma_e$ using $add_e$, validates the transaction using the PBFT consensus mechanism, and orders the valid transaction $Tx$. 
	\item \textbf{Commit\&Update(Tx)}. It validates each transaction in a block, updates the blocks in the ledger and notifies the sender. 
	\item \textbf{$ID \leftarrow Tracing(sk_{ta},Tx)$}. This allows TA to trace the actual identity of any malicious transaction. It takes TA private key $sk_{ta}$, malicious transaction $Tx$ and outputs the user’s identity $ID_i$. 
\end{itemize}
	
\subsection{Regulation of MedHypChain for Patient-Centred Interoperability}
Here, we will discuss how the proposed MedHypChain regularizes constructing a private patient-centred care blockchain network in which the data as a transaction is accessible to authorized MS. The proposed patient-centred interoperability healthcare system consists of four participants: NA, Patient, and medical server (MS) implemented under the framework of the proposed MedHypChain. The NA is the same in MedHypChain, and the patient and MS are the two organizations that interoperability manages the patient PHI, record in the ledger, and share over MedHypChain. A patient is a person who has been seen as a symptom of coronavirus or a quarantined person who has a chance of disease. Each patient is surrounded by WBAN, which comprises various tiny sensors with limited battery life and storage space, installed on/outside the patient's body (wearable sensors) or deployed in the patient's tissues (implanted sensors). It collects the patient's PHI data, and due to a limited broadcasting range, it stores PHI in a personal assisted device (PDA). For simplicity, we use smartphones as PDAs with the same functionality as the given PDA in MedHypChain, a network coordinator that helps WBAN communicate with the blockchain network. The MS is a device in the medical institution which can access the patient's PHI and diagnoses the patient's diseases based on their resulting PHI. 

Each patient manages two blockchains: patient-data blockchain $B_1$, and patient-prescription blockchain $B_2$. The first blockchain $B_1$ is created by the patient that contains the patient's health-related information, such as a patient's PHI, his identity, and the physician's name. The blockchain $B_2$ is created by the MS that contains the patient's diagnosis-related information, such as the identity of the physician assigned, PHI, and prescription details. Similarly, MS maintains each patient's blockchains $B_1$ and $B_2$. The MS will access the patient's PHI data if the patient allows MS to access blockchain $B_1$. It will process the data on its server and diagnose a patient remotely. On the flip side, patients via PDA will access the blockchain $B_2$ and read the prescription Pr if the patient is allowed to access blockchain $B_1$. The working of patient-centred health interoperability is defined as follows.

A patient is surrounded by a WBAN, which includes various wearable, implanted, or off-body sensors, each capable of sensing, processing, sampling, and communicating the medical signal to the recipient. For instance, an EEG sensor senses brain electrical activity and a breathing sensor senses respiration. Various sensors sense the patient's body and obtain the PHI $data_i$. Any sensor (leader) aggregates the PHI data as $data = \sum data_i$ and retrieves the current value of vrs from the world state that is maintained by its PDA and proposes a transaction $Tx_P$ as $Tx_P = (vrs+1)|(|data|)|RS|(|RW|)|CID||ESign_{pws_s}(data,vrs+1)||add_s$ via Transaction proposal. The sensor sends the transaction $Cx$ to the PDA, where the PDA verifies the data via EVer and keeps the health data encrypted so that PDA cannot access the data. PDA broadcasts $Cx$ to E\&CP for endorsing Endorsement. Now, E\&CP runs the Inspection and ordering algorithm to validate and order the transaction. If the transaction is valid, E\&CP forwards the transaction to PDA. The PDA passes the transaction to E\&CP for validating the transaction via Committing transaction. After commitment, E\&CP appends the block to the blockchain and notifies the PDA. In this way, patient PHI is stored in the channel blockchain.

MS will access the transaction from the patient-created permissioned blockchain if the patient permits. MS will access data and examine the PHI data based on their experience. MS will suggest a prescription Pr (which includes the instruction to the sensor to actuate as per their command) and create a transaction Cx using the Transaction algorithm. Like patients, MS will make their permissioned blockchain that blocks transactions. Each transaction $Cx=(vrs+1)|(|Pr|)|data|(|RS||RW|)|CID||ESign_(pws_m) (data,vrs+1)||add_m$ is created under MS's private key $sk_m$, similar to the patient side, which consists of prescription Pr and patient's PHI data. Accordingly, the transaction is endorsed by E\&CP via Endorsement, ordered by OP via Inspection and order, and updated by CP via Committing in the blockchain. During blockchain creation, if any participant doubts any illegal transaction in the network, he can request the TA, where TA will trace the participant's original identity for an unlawful transaction via the Tracing algorithm. 

Due to the lack of pages, we give only an overview of the model in this chapter. The complete implementation is given in the full paper.

\section{Future Directions}

    In the future, I am interested in working in post-quantum cryptography and secure computation techniques (for example, homomorphic encryption and secure multi-party computation).

\begin{enumerate}
   
    \item \textit{\textbf{Post-quantum cryptography}}: Several scientists and researchers are developing quantum-based computers that utilize quantum-mechanical phenomena like superposition and entanglement. These computers have the potential to perform complex computations incredibly quickly, such as solving number theory-based mathematical problems using Shor's algorithm. However, current quantum computers lack the processing power to attack any real cryptographic systems. Even so, researchers are continuously working on constructing and implementing new algorithms for quantum computers, making it crucial to create a secure cryptosystem against these attacks. This is where post-quantum cryptography comes in. Post-quantum cryptography is a public-key cryptosystem that is considered to be secure against attacks by quantum computers. Examples of post-cryptosystem include lattice-based cryptosystems and NTRU. I am very interested in working with the lattice-based cryptosystem after I complete my PhD.
    \item \textit{\textbf{Secure computation}}: I am also fascinated by the fields of Secure Multi-Party Computation (MPC) and Homomorphic Encryption in the realm of cryptosystems. MPC allows multiple parties to collaborate on computations without revealing their inputs, making it ideal for secure voting and contract signing. Meanwhile, Homomorphic Encryption enables computations on encrypted data, which helps audit data  on sensitive information. These areas interest me greatly, and I am excited to contribute to their progress and explore their potential. 
\end{enumerate}

\end{doublespace} \label{chapter8}

\singlespacing

% the back matter

\clearpage

\addcontentsline{toc}{chapter}{References}
\bibliographystyle{plainnat}
\bibliography{references}

\begin{thebibliography}{189}
\providecommand{\natexlab}[1]{#1}
\providecommand{\url}[1]{\texttt{#1}}
\expandafter\ifx\csname urlstyle\endcsname\relax
  \providecommand{\doi}[1]{doi: #1}\else
  \providecommand{\doi}{doi: \begingroup \urlstyle{rm}\Url}\fi

\bibitem[Abdulla and Bakiras(2019)]{abdulla2019hitc}
Ahmed~Khalil Abdulla and Spiridon Bakiras.
\newblock Hitc: Data privacy in online social networks with fine-grained access
  control.
\newblock In \emph{Proceedings of the 24th ACM Symposium on Access Control
  Models and Technologies}, pages 123--134, 2019.

\bibitem[Adida(2008)]{adida2008helios}
Ben Adida.
\newblock Helios: Web-based open-audit voting.
\newblock In \emph{USENIX security symposium}, volume~17, pages 335--348, 2008.

\bibitem[Al-Riyami and Paterson(2003)]{al2003certificateless}
Sattam~S Al-Riyami and Kenneth~G Paterson.
\newblock Certificateless public key cryptography.
\newblock In \emph{International conference on the theory and application of
  cryptology and information security}, pages 452--473. Springer, 2003.

\bibitem[Ateniese et~al.(2007)Ateniese, Burns, Curtmola, Herring, Kissner,
  Peterson, and Song]{ateniese2007provable}
Giuseppe Ateniese, Randal Burns, Reza Curtmola, Joseph Herring, Lea Kissner,
  Zachary Peterson, and Dawn Song.
\newblock Provable data possession at untrusted stores.
\newblock In \emph{Proceedings of the 14th ACM conference on Computer and
  communications security}, pages 598--609, 2007.

\bibitem[Barbosa and Farshim(2008)]{barbosa2008certificateless}
Manuel Barbosa and Pooya Farshim.
\newblock Certificateless signcryption.
\newblock In \emph{Proceedings of the 2008 ACM symposium on Information,
  computer and communications security}, pages 369--372, 2008.

\bibitem[Barbulescu and Duquesne(2019)]{barbulescu2019updating}
Razvan Barbulescu and Sylvain Duquesne.
\newblock Updating key size estimations for pairings.
\newblock \emph{Journal of Cryptology}, 32\penalty0 (4):\penalty0 1298--1336,
  2019.

\bibitem[Barreto and Naehrig(2005)]{barreto2005pairing}
Paulo~SLM Barreto and Michael Naehrig.
\newblock Pairing-friendly elliptic curves of prime order.
\newblock In \emph{International workshop on selected areas in cryptography},
  pages 319--331. Springer, 2005.

\bibitem[Barreto et~al.(2002)Barreto, Lynn, and Scott]{barreto2002constructing}
Paulo~SLM Barreto, Ben Lynn, and Michael Scott.
\newblock Constructing elliptic curves with prescribed embedding degrees.
\newblock In \emph{International Conference on Security in Communication
  Networks}, pages 257--267. Springer, 2002.

\bibitem[Barreto et~al.(2003)Barreto, Lynn, and Scott]{barreto2003selection}
Paulo~SLM Barreto, Ben Lynn, and Michael Scott.
\newblock On the selection of pairing-friendly groups.
\newblock In \emph{International workshop on selected areas in cryptography},
  pages 17--25. Springer, 2003.

\bibitem[Bellare et~al.(2000)Bellare, Pointcheval, and
  Rogaway]{bellare2000authenticated}
Mihir Bellare, David Pointcheval, and Phillip Rogaway.
\newblock Authenticated key exchange secure against dictionary attacks.
\newblock In \emph{International conference on the theory and applications of
  cryptographic techniques}, pages 139--155. Springer, 2000.

\bibitem[Bertino et~al.(2004)Bertino, Carminati, Ferrari, Thuraisingham, and
  Gupta]{bertino2004selective}
Elisa Bertino, Barbara Carminati, Elena Ferrari, Bhavani Thuraisingham, and
  Amar Gupta.
\newblock Selective and authentic third-party distribution of xml documents.
\newblock \emph{IEEE Transactions on Knowledge and Data Engineering},
  16\penalty0 (10):\penalty0 1263--1278, 2004.

\bibitem[Boneh(1998)]{boneh1998decision}
Dan Boneh.
\newblock The decision diffie-hellman problem.
\newblock In \emph{International Algorithmic Number Theory Symposium}, pages
  48--63. Springer, 1998.

\bibitem[Boneh and Franklin(2001)]{boneh2001identity}
Dan Boneh and Matt Franklin.
\newblock Identity-based encryption from the weil pairing.
\newblock In \emph{Annual international cryptology conference}, pages 213--229.
  Springer, 2001.

\bibitem[Boneh et~al.(2001)Boneh, Lynn, and Shacham]{boneh2001short}
Dan Boneh, Ben Lynn, and Hovav Shacham.
\newblock Short signatures from the weil pairing.
\newblock In \emph{International conference on the theory and application of
  cryptology and information security}, pages 514--532. Springer, 2001.

\bibitem[Bowe et~al.(2017)Bowe, Gabizon, and Miers]{bowe2017scalable}
Sean Bowe, Ariel Gabizon, and Ian Miers.
\newblock Scalable multi-party computation for zk-snark parameters in the
  random beacon model.
\newblock \emph{IACR Cryptol. ePrint Arch.}, 2017:\penalty0 1050, 2017.

\bibitem[Brezing and Weng(2005)]{brezing2005elliptic}
Friederike Brezing and Annegret Weng.
\newblock Elliptic curves suitable for pairing based cryptography.
\newblock \emph{Designs, Codes and Cryptography}, 37\penalty0 (1):\penalty0
  133--141, 2005.

\bibitem[Cao et~al.(2008)Cao, Kou, Dang, and Zhao]{cao2008imbas}
Xuefei Cao, Weidong Kou, Lanjun Dang, and Bin Zhao.
\newblock Imbas: Identity-based multi-user broadcast authentication in wireless
  sensor networks.
\newblock \emph{Computer communications}, 31\penalty0 (4):\penalty0 659--667,
  2008.

\bibitem[Cao et~al.(2010)Cao, Kou, and Du]{cao2010pairing}
Xuefei Cao, Weidong Kou, and Xiaoni Du.
\newblock A pairing-free identity-based authenticated key agreement protocol
  with minimal message exchanges.
\newblock \emph{Information Sciences}, 180\penalty0 (15):\penalty0 2895--2903,
  2010.

\bibitem[Cerver{\'o} et~al.(2014)Cerver{\'o}, Mateu, Miret, Seb{\'e}, and
  Valera]{cervero2014efficient}
M~{\`A}ngels Cerver{\'o}, V{\'\i}ctor Mateu, Josep~M Miret, Francesc Seb{\'e},
  and Javier Valera.
\newblock An efficient homomorphic e-voting system over elliptic curves.
\newblock In \emph{International Conference on Electronic Government and the
  Information Systems Perspective}, pages 41--53. Springer, 2014.

\bibitem[Chaum et~al.(2008)Chaum, Essex, Carback, Clark, Popoveniuc, Sherman,
  and Vora]{chaum2008scantegrity}
David Chaum, Aleks Essex, Richard Carback, Jeremy Clark, Stefan Popoveniuc,
  Alan Sherman, and Poorvi Vora.
\newblock Scantegrity: End-to-end voter-verifiable optical-scan voting.
\newblock \emph{IEEE Security \& Privacy}, 6\penalty0 (3):\penalty0 40--46,
  2008.

\bibitem[Chaum et~al.(2009)Chaum, Carback, Clark, Essex, Popoveniuc, Rivest,
  Ryan, Shen, Sherman, and Vora]{chaum2009scantegrity}
David Chaum, Richard~T Carback, Jeremy Clark, Aleksander Essex, Stefan
  Popoveniuc, Ronald~L Rivest, Peter~YA Ryan, Emily Shen, Alan~T Sherman, and
  Poorvi~L Vora.
\newblock Scantegrity ii: End-to-end verifiability by voters of optical scan
  elections through confirmation codes.
\newblock \emph{IEEE transactions on information forensics and security},
  4\penalty0 (4):\penalty0 611--627, 2009.

\bibitem[Chaum(1981)]{chaum1981untraceable}
David~L Chaum.
\newblock Untraceable electronic mail, return addresses, and digital
  pseudonyms.
\newblock \emph{Communications of the ACM}, 24\penalty0 (2):\penalty0 84--90,
  1981.

\bibitem[Chen et~al.(2014)Chen, Chen, Jan, and Chen]{chen2014secure}
Chin-Ling Chen, Yu-Yi Chen, Jinn-Ke Jan, and Chih-Cheng Chen.
\newblock A secure anonymous e-voting system based on discrete logarithm
  problem.
\newblock \emph{Applied Mathematics \& Information Sciences}, 8\penalty0
  (5):\penalty0 2571, 2014.

\bibitem[Chen et~al.(2002)Chen, Harrison, Soldera, and
  Smart]{chen2002applications}
Liqun Chen, Keith Harrison, David Soldera, and Nigel~P Smart.
\newblock Applications of multiple trust authorities in pairing based
  cryptosystems.
\newblock In \emph{International Conference on Infrastructure Security}, pages
  260--275. Springer, 2002.

\bibitem[Chen et~al.(2015{\natexlab{a}})Chen, Wang, and
  Su]{chen2015escrowcloud}
Peixin Chen, Xiaofeng Wang, and Jinshu Su.
\newblock An escrow-free hierarchical identity-based signature model for cloud
  storage.
\newblock In \emph{International Conference on Algorithms and Architectures for
  Parallel Processing}, pages 633--647. Springer, 2015{\natexlab{a}}.

\bibitem[Chen et~al.(2015{\natexlab{b}})Chen, Wang, and Su]{chen2015t}
Peixin Chen, Xiaofeng Wang, and Jinshu Su.
\newblock T-hibe: a trustworthy hibe scheme for the osn privacy protection.
\newblock In \emph{2015 International Symposium on Security and Privacy in
  Social Networks and Big Data (SocialSec)}, pages 72--79. IEEE,
  2015{\natexlab{b}}.

\bibitem[Chen et~al.(2015{\natexlab{c}})Chen, Wang, Zhao, Su, and
  You]{chen2015removing}
Peixin Chen, Xiaofeng Wang, Baokang Zhao, Jinshu Su, and Ilsun You.
\newblock Removing key escrow from the lw-hibe scheme.
\newblock In \emph{International Conference on Algorithms and Architectures for
  Parallel Processing}, pages 593--605. Springer, 2015{\natexlab{c}}.

\bibitem[Chen et~al.(2015{\natexlab{d}})Chen, Wang, Zhao, Sun, Su, and
  You]{chen2015escrowcomp}
Peixin Chen, Xiaofeng Wang, Baokang Zhao, Hao Sun, Jinshu Su, and Ilsun You.
\newblock An escrow-free hierarchical identity-based signature scheme from
  composite order bilinear groups.
\newblock In \emph{2015 10th International Conference on Broadband and Wireless
  Computing, Communication and Applications (BWCCA)}, pages 364--369. IEEE,
  2015{\natexlab{d}}.

\bibitem[Chen et~al.(2016)Chen, Su, Zhao, Wang, and You]{chen2016escrow}
Peixin Chen, Jinshu Su, Baokang Zhao, Xiaofeng Wang, and Ilsun You.
\newblock An escrow-free online/offline hibs scheme for privacy protection of
  people-centric sensing.
\newblock \emph{Security and Communication Networks}, 9\penalty0 (14):\penalty0
  2302--2312, 2016.

\bibitem[Chillotti et~al.(2016)Chillotti, Gama, Georgieva, and
  Izabach{\`e}ne]{chillotti2016homomorphic}
Ilaria Chillotti, Nicolas Gama, Mariya Georgieva, and Malika Izabach{\`e}ne.
\newblock A homomorphic lwe based e-voting scheme.
\newblock In \emph{Post-Quantum Cryptography}, pages 245--265. Springer, 2016.

\bibitem[Chondros et~al.(2016)Chondros, Zhang, Zacharias, Diamantopoulos,
  Maneas, Patsonakis, Delis, Kiayias, and Roussopoulos]{chondros2016d}
Nikos Chondros, Bingsheng Zhang, Thomas Zacharias, Panos Diamantopoulos,
  Stathis Maneas, Christos Patsonakis, Alex Delis, Aggelos Kiayias, and Mema
  Roussopoulos.
\newblock D-demos: A distributed, end-to-end verifiable, internet voting
  system.
\newblock In \emph{2016 IEEE 36th International Conference on Distributed
  Computing Systems (ICDCS)}, pages 711--720. IEEE, 2016.

\bibitem[Choon and Cheon(2003)]{choon2003identity}
Jae~Cha Choon and Jung~Hee Cheon.
\newblock An identity-based signature from gap diffie-hellman groups.
\newblock In \emph{International workshop on public key cryptography}, pages
  18--30. Springer, 2003.

\bibitem[Chung and Wu(2009)]{chung2009approach}
Yu-Fang Chung and Zhen-Yu Wu.
\newblock Approach to designing bribery-free and coercion-free electronic
  voting scheme.
\newblock \emph{Journal of Systems and Software}, 82\penalty0 (12):\penalty0
  2081--2090, 2009.

\bibitem[Chung et~al.(2007)Chung, Huang, Lai, and Chen]{chung2007id}
Yu~Fang Chung, Kuo~Hsuan Huang, Feipei Lai, and Tzer~Shyong Chen.
\newblock Id-based digital signature scheme on the elliptic curve cryptosystem.
\newblock \emph{Computer Standards \& Interfaces}, 29\penalty0 (6):\penalty0
  601--604, 2007.

\bibitem[Cocks and Pinch(2001)]{cocks2001identity}
Clifford Cocks and RGE Pinch.
\newblock Identity-based cryptosystems based on the weil pairing.
\newblock \emph{Unpublished manuscript}, 170, 2001.

\bibitem[Cortier and Smyth(2013)]{cortier2013attacking}
V{\'e}ronique Cortier and Ben Smyth.
\newblock Attacking and fixing helios: An analysis of ballot secrecy.
\newblock \emph{Journal of Computer Security}, 21\penalty0 (1):\penalty0
  89--148, 2013.

\bibitem[Debiao et~al.(2011)Debiao, Jianhua, and Jin]{debiao2011id}
He~Debiao, Chen Jianhua, and Hu~Jin.
\newblock An id-based proxy signature schemes without bilinear pairings.
\newblock \emph{Annals of telecommunications-annales des
  t{\'e}l{\'e}communications}, 66\penalty0 (11-12):\penalty0 657--662, 2011.

\bibitem[Deswarte et~al.(2003)Deswarte, Quisquater, and
  Sa{\"\i}dane]{deswarte2003remote}
Yves Deswarte, Jean-Jacques Quisquater, and Ayda Sa{\"\i}dane.
\newblock Remote integrity checking.
\newblock In \emph{Working conference on integrity and internal control in
  information systems}, pages 1--11. Springer, 2003.

\bibitem[Diao et~al.(2013)Diao, Gu, and Yen]{diao2013new}
Lijuan Diao, Junzhong Gu, and I-Ling Yen.
\newblock A new proxy blind signature scheme with message recovery.
\newblock \emph{Information Technology Journal}, 12\penalty0 (21):\penalty0
  6159, 2013.

\bibitem[Diffie and Hellman(1976)]{diffie1976new}
Whitfield Diffie and Martin Hellman.
\newblock New directions in cryptography.
\newblock \emph{IEEE transactions on Information Theory}, 22\penalty0
  (6):\penalty0 644--654, 1976.

\bibitem[Dong et~al.(2014)Dong, Gao, Shi, and Gong]{dong2014efficient}
Guofagn Dong, Fei Gao, Wenbo Shi, and Peng Gong.
\newblock An efficient certificateless blind signature scheme without bilinear
  pairing.
\newblock \emph{Anais da Academia Brasileira de Ci{\^e}ncias}, 86\penalty0
  (2):\penalty0 1003--1011, 2014.

\bibitem[Dry{\l}o(2011)]{drylo2011constructing}
Robert Dry{\l}o.
\newblock On constructing families of pairing-friendly elliptic curves with
  variable discriminant.
\newblock In \emph{International conference on cryptology in india}, pages
  310--319. Springer, 2011.

\bibitem[Dupont et~al.(2005)Dupont, Enge, and Morain]{dupont2005building}
R{\'e}gis Dupont, Andreas Enge, and Fran{\c{c}}ois Morain.
\newblock Building curves with arbitrary small mov degree over finite prime
  fields.
\newblock \emph{Journal of Cryptology}, 18\penalty0 (2):\penalty0 79--89, 2005.

\bibitem[ElGamal(1985)]{elgamal1985public}
Taher ElGamal.
\newblock A public key cryptosystem and a signature scheme based on discrete
  logarithms.
\newblock \emph{IEEE transactions on information theory}, 31\penalty0
  (4):\penalty0 469--472, 1985.

\bibitem[Elkamchouchi and Abouelseoud(2008)]{elkamchouchi2008new}
Hassan~M Elkamchouchi and Yasmine Abouelseoud.
\newblock A new blind identity-based signature scheme with message recovery.
\newblock \emph{IACR Cryptol. ePrint Arch.}, 2008:\penalty0 38, 2008.

\bibitem[Eslami and Pakniat(2014)]{eslami2014certificateless}
Ziba Eslami and Nasrollah Pakniat.
\newblock Certificateless aggregate signcryption: Security model and a concrete
  construction secure in the random oracle model.
\newblock \emph{Journal of King Saud University-Computer and Information
  Sciences}, 26\penalty0 (3):\penalty0 276--286, 2014.

\bibitem[Fiandrotti et~al.(2015)Fiandrotti, Gaeta, and
  Grangetto]{fiandrotti2015simple}
Attilio Fiandrotti, Rossano Gaeta, and Marco Grangetto.
\newblock Simple countermeasures to mitigate the effect of pollution attack in
  network coding-based peer-to-peer live streaming.
\newblock \emph{IEEE Transactions on Multimedia}, 17\penalty0 (4):\penalty0
  562--573, 2015.

\bibitem[Fotiadis and Konstantinou(2018)]{fotiadis2018generating}
Georgios Fotiadis and Elisavet Konstantinou.
\newblock Generating pairing-friendly elliptic curve parameters using sparse
  families.
\newblock \emph{Journal of Mathematical Cryptology}, 12\penalty0 (2):\penalty0
  83--99, 2018.

\bibitem[Fotiadis and Konstantinou(2019)]{fotiadis2019tnfs}
Georgios Fotiadis and Elisavet Konstantinou.
\newblock Tnfs resistant families of pairing-friendly elliptic curves.
\newblock \emph{Theoretical Computer Science}, 800:\penalty0 73--89, 2019.

\bibitem[Fotiadis and Martindale(2018)]{fotiadis2018optimal}
Georgios Fotiadis and Chloe Martindale.
\newblock Optimal tnfs-secure pairings on elliptic curves with even embedding
  degree.
\newblock \emph{Cryptology ePrint Archive}, 2018.

\bibitem[Freeman(2006)]{freeman2006constructing}
David Freeman.
\newblock Constructing pairing-friendly elliptic curves with embedding degree
  10.
\newblock In \emph{International Algorithmic Number Theory Symposium}, pages
  452--465. Springer, 2006.

\bibitem[Freeman et~al.(2010)Freeman, Scott, and Teske]{freeman2010taxonomy}
David Freeman, Michael Scott, and Edlyn Teske.
\newblock A taxonomy of pairing-friendly elliptic curves.
\newblock \emph{Journal of cryptology}, 23\penalty0 (2):\penalty0 224--280,
  2010.

\bibitem[Frey et~al.(1999)Frey, Muller, and Ruck]{frey1999tate}
Gerhard Frey, Michael Muller, and H-G Ruck.
\newblock The tate pairing and the discrete logarithm applied to elliptic curve
  cryptosystems.
\newblock \emph{IEEE Transactions on Information Theory}, 45\penalty0
  (5):\penalty0 1717--1719, 1999.

\bibitem[Fujioka et~al.(1992)Fujioka, Okamoto, and Ohta]{fujioka1992practical}
Atsushi Fujioka, Tatsuaki Okamoto, and Kazuo Ohta.
\newblock A practical secret voting scheme for large scale elections.
\newblock In \emph{International Workshop on the Theory and Application of
  Cryptographic Techniques}, pages 244--251. Springer, 1992.

\bibitem[Gabizon(2019)]{gabizon2019security}
Ariel Gabizon.
\newblock On the security of the bctv pinocchio zk-snark variant.
\newblock \emph{IACR Cryptol. ePrint Arch.}, 2019:\penalty0 119, 2019.

\bibitem[Galbraith et~al.(2007)Galbraith, McKee, and
  Valen{\c{c}}a]{galbraith2007ordinary}
Steven~D Galbraith, James~F McKee, and Paula~C Valen{\c{c}}a.
\newblock Ordinary abelian varieties having small embedding degree.
\newblock \emph{Finite Fields and Their Applications}, 13\penalty0
  (4):\penalty0 800--814, 2007.

\bibitem[Galindo et~al.(2006)Galindo, Herranz, and Kiltz]{galindo2006generic}
David Galindo, Javier Herranz, and Eike Kiltz.
\newblock On the generic construction of identity-based signatures with
  additional properties.
\newblock In \emph{International Conference on the Theory and Application of
  Cryptology and Information Security}, pages 178--193. Springer, 2006.

\bibitem[Gao et~al.(2012)Gao, Wang, Wang, and Li]{gao2012round}
Wei Gao, Guilin Wang, Xueli Wang, and Fei Li.
\newblock Round-optimal id-based blind signature schemes without ros
  assumption.
\newblock \emph{Journal of Communications}, 2012.

\bibitem[Gazzoni~Filho and Barreto(2006)]{gazzoni2006demonstrating}
D{\'e}cio~Luiz Gazzoni~Filho and Paulo S{\'e}rgio Licciardi~Messeder Barreto.
\newblock Demonstrating data possession and uncheatable data transfer.
\newblock \emph{IACR Cryptol. ePrint Arch.}, 2006:\penalty0 150, 2006.

\bibitem[Gentry(2003)]{gentry2003certificate}
Craig Gentry.
\newblock Certificate-based encryption and the certificate revocation problem.
\newblock In \emph{International Conference on the Theory and Applications of
  Cryptographic Techniques}, pages 272--293. Springer, 2003.

\bibitem[Gorbenko and Hanzia(2017)]{gorbenko2017examination}
Ivan Gorbenko and Roman Hanzia.
\newblock Examination and implementation of the fast method for computing the
  order of elliptic curve.
\newblock \emph{Eastern European Journal of Advanced Technologies}, 2
  (9):\penalty0 11--21, 2017.

\bibitem[Guillevic(2020)]{guillevic2020short}
Aurore Guillevic.
\newblock A short-list of pairing-friendly curves resistant to special tnfs at
  the 128-bit security level.
\newblock In \emph{IACR International Conference on Public-Key Cryptography},
  pages 535--564. Springer, 2020.

\bibitem[Guillevic and Singh(2021)]{guillevic2021alpha}
Aurore Guillevic and Shashank Singh.
\newblock On the alpha value of polynomials in the tower number field sieve
  algorithm.
\newblock \emph{Mathematical Cryptology}, 1\penalty0 (1):\penalty0 1--39, 2021.

\bibitem[Guillevic et~al.(2020)Guillevic, Masson, and
  Thom{\'e}]{guillevic2020cocks}
Aurore Guillevic, Simon Masson, and Emmanuel Thom{\'e}.
\newblock Cocks--pinch curves of embedding degrees five to eight and optimal
  ate pairing computation.
\newblock \emph{Designs, Codes and Cryptography}, pages 1--35, 2020.

\bibitem[Han and Chang(2005)]{han2005pairing}
Song Han and Elizabeth Chang.
\newblock A pairing-based blind signature scheme with message recovery.
\newblock \emph{International Journal of Information Technology}, 2\penalty0
  (4):\penalty0 187--192, 2005.

\bibitem[He et~al.(2011)He, Chen, and Zhang]{he2011efficient}
Debiao He, Jianhua Chen, and Rui Zhang.
\newblock An efficient identity-based blind signature scheme without bilinear
  pairings.
\newblock \emph{Computers \& Electrical Engineering}, 37\penalty0 (4):\penalty0
  444--450, 2011.

\bibitem[He et~al.(2016)He, Zeadally, Kumar, and Lee]{he2016anonymous}
Debiao He, Sherali Zeadally, Neeraj Kumar, and Jong-Hyouk Lee.
\newblock Anonymous authentication for wireless body area networks with
  provable security.
\newblock \emph{IEEE Systems Journal}, 11\penalty0 (4):\penalty0 2590--2601,
  2016.

\bibitem[Hubbers et~al.(2005)Hubbers, Jacobs, and Pieters]{hubbers2005ries}
Engelbert Hubbers, Bart Jacobs, and Wolter Pieters.
\newblock Ries-internet voting in action.
\newblock In \emph{29th Annual International Computer Software and Applications
  Conference (COMPSAC'05)}, volume~1, pages 417--424. IEEE, 2005.

\bibitem[Islam and Biswas(2011)]{islam2011more}
Sk~Hafizul Islam and GP~Biswas.
\newblock A more efficient and secure id-based remote mutual authentication
  with key agreement scheme for mobile devices on elliptic curve cryptosystem.
\newblock \emph{Journal of Systems and Software}, 84\penalty0 (11):\penalty0
  1892--1898, 2011.

\bibitem[Islam and Obaidat(2015)]{islam2015design}
SK~Hafizul Islam and Mohammad~S Obaidat.
\newblock Design of provably secure and efficient certificateless blind
  signature scheme using bilinear pairing.
\newblock \emph{Security and Communication Networks}, 8\penalty0 (18):\penalty0
  4319--4332, 2015.

\bibitem[Islam et~al.(2016)Islam, Amin, Biswas, Obaidat, and
  Khan]{islam2016provably}
SK~Hafizul Islam, Ruhul Amin, GP~Biswas, Mohammad~S Obaidat, and
  Muhammad~Khurram Khan.
\newblock Provably secure pairing-free identity-based partially blind signature
  scheme and its application in online e-cash system.
\newblock \emph{Arabian Journal for Science and Engineering}, 41\penalty0
  (8):\penalty0 3163--3176, 2016.

\bibitem[James et~al.(2017)James, Gowri, Babu, and Reddy]{james2017identity}
Salome James, T~Gowri, GV~Babu, and P~Vasudeva Reddy.
\newblock Identity-based blind signature scheme with message recovery.
\newblock \emph{International Journal of Electrical \& Computer Engineering
  (2088-8708)}, 7\penalty0 (5), 2017.

\bibitem[James et~al.(2018)James, Gayathri, and Reddy]{james2018pairing}
Salome James, NB~Gayathri, and P~Vasudeva Reddy.
\newblock Pairing free identity-based blind signature scheme with message
  recovery.
\newblock \emph{Cryptography}, 2\penalty0 (4):\penalty0 29, 2018.

\bibitem[Jia et~al.(2019)Jia, He, Kumar, and Choo]{jia2019provably}
Xiaoying Jia, Debiao He, Neeraj Kumar, and Kim-Kwang~Raymond Choo.
\newblock A provably secure and efficient identity-based anonymous
  authentication scheme for mobile edge computing.
\newblock \emph{IEEE Systems Journal}, 14\penalty0 (1):\penalty0 560--571,
  2019.

\bibitem[Jiang et~al.(2016{\natexlab{a}})Jiang, Lian, Yang, Ma, Tian, and
  Yang]{jiang2016bilinear}
Qi~Jiang, Xinxin Lian, Chao Yang, Jianfeng Ma, Youliang Tian, and Yuanyuan
  Yang.
\newblock A bilinear pairing based anonymous authentication scheme in wireless
  body area networks for mhealth.
\newblock \emph{Journal of medical systems}, 40\penalty0 (11):\penalty0 231,
  2016{\natexlab{a}}.

\bibitem[Jiang et~al.(2016{\natexlab{b}})Jiang, Ma, and Wei]{jiang2016security}
Qi~Jiang, Jianfeng Ma, and Fushan Wei.
\newblock On the security of a privacy-aware authentication scheme for
  distributed mobile cloud computing services.
\newblock \emph{IEEE systems journal}, 12\penalty0 (2):\penalty0 2039--2042,
  2016{\natexlab{b}}.

\bibitem[Joaquim et~al.(2013)Joaquim, Ferreira, and Ribeiro]{joaquim2013eviv}
Rui Joaquim, Paulo Ferreira, and Carlos Ribeiro.
\newblock Eviv: An end-to-end verifiable internet voting system.
\newblock \emph{Computers \& Security}, 32:\penalty0 170--191, 2013.

\bibitem[Joux(2000)]{joux2000one}
Antoine Joux.
\newblock A one round protocol for tripartite diffie--hellman.
\newblock In \emph{International algorithmic number theory symposium}, pages
  385--393. Springer, 2000.

\bibitem[Joux and Nguyen(2003)]{joux2003separating}
Antoine Joux and Kim Nguyen.
\newblock Separating decision diffie--hellman from computational
  diffie--hellman in cryptographic groups.
\newblock \emph{Journal of cryptology}, 16\penalty0 (4):\penalty0 239--247,
  2003.

\bibitem[Juels and Kaliski~Jr(2007)]{juels2007pors}
Ari Juels and Burton~S Kaliski~Jr.
\newblock Pors: Proofs of retrievability for large files.
\newblock In \emph{Proceedings of the 14th ACM conference on Computer and
  communications security}, pages 584--597, 2007.

\bibitem[Kachisa et~al.(2008)Kachisa, Schaefer, and
  Scott]{kachisa2008constructing}
Ezekiel~J Kachisa, Edward~F Schaefer, and Michael Scott.
\newblock Constructing brezing-weng pairing-friendly elliptic curves using
  elements in the cyclotomic field.
\newblock In \emph{International Conference on Pairing-Based Cryptography},
  pages 126--135. Springer, 2008.

\bibitem[Kar(2013)]{kar2013provably}
Jayaprakash Kar.
\newblock Provably secure identity-based aggregate signcryption scheme in
  random oracles.
\newblock \emph{IACR Cryptol. ePrint Arch.}, 2013:\penalty0 37, 2013.

\bibitem[Karati et~al.(2018{\natexlab{a}})Karati, Islam, and
  Biswas]{karati2018pairing}
Arijit Karati, SK~Hafizul Islam, and GP~Biswas.
\newblock A pairing-free and provably secure certificateless signature scheme.
\newblock \emph{Information Sciences}, 450:\penalty0 378--391,
  2018{\natexlab{a}}.

\bibitem[Karati et~al.(2018{\natexlab{b}})Karati, Islam, and
  Karuppiah]{karati2018provably}
Arijit Karati, SK~Hafizul Islam, and Marimuthu Karuppiah.
\newblock Provably secure and lightweight certificateless signature scheme for
  iiot environments.
\newblock \emph{IEEE Transactions on Industrial Informatics}, 14\penalty0
  (8):\penalty0 3701--3711, 2018{\natexlab{b}}.

\bibitem[Kim and Barbulescu(2016)]{kim2016extended}
Taechan Kim and Razvan Barbulescu.
\newblock Extended tower number field sieve: A new complexity for the medium
  prime case.
\newblock In \emph{Annual International Cryptology Conference}, pages 543--571.
  Springer, 2016.

\bibitem[Kim and Jeong(2017)]{kim2017extended}
Taechan Kim and Jinhyuck Jeong.
\newblock Extended tower number field sieve with application to finite fields
  of arbitrary composite extension degree.
\newblock In \emph{IACR International Workshop on Public Key Cryptography},
  pages 388--408. Springer, 2017.

\bibitem[Kiyomura et~al.(2017)Kiyomura, Inoue, Kawahara, Yasuda, Takagi, and
  Kobayashi]{kiyomura2017secure}
Yutaro Kiyomura, Akiko Inoue, Yuto Kawahara, Masaya Yasuda, Tsuyoshi Takagi,
  and Tetsutaro Kobayashi.
\newblock Secure and efficient pairing at 256-bit security level.
\newblock In \emph{International Conference on Applied Cryptography and Network
  Security}, pages 59--79. Springer, 2017.

\bibitem[Koblitz(1987)]{koblitz1987elliptic}
Neal Koblitz.
\newblock Elliptic curve cryptosystems.
\newblock \emph{Mathematics of computation}, 48\penalty0 (177):\penalty0
  203--209, 1987.

\bibitem[Kumar(2019)]{kumar2019aor}
Mahender Kumar.
\newblock Aor-id-kap: An authenticated one-round identity-based key agreement
  protocol for wireless sensor network.
\newblock \emph{computational intelligence in sensor networks}, pages 427--454,
  2019.

\bibitem[Kumar(2020)]{kumar2020cryptanalysis}
Mahender Kumar.
\newblock Cryptanalysis and improvement of anonymous authentication for
  wireless body area networks with provable security.
\newblock \emph{Cryptology ePrint Archive}, 2020.

\bibitem[Kumar and Chand(2019)]{kumar2019eski}
Mahender Kumar and Satish Chand.
\newblock Eski-ibe: Efficient and secure key issuing identity-based encryption
  with cloud privacy centers.
\newblock \emph{Multimedia Tools and Applications}, 78\penalty0 (14):\penalty0
  19753--19786, 2019.

\bibitem[Kumar and Chand(2020{\natexlab{a}})]{Kumar2020IoMT}
mahender Kumar and Satish Chand.
\newblock A secure and efficient cloud-centric internet of medical
  things-enabled smart healthcare system with public verifiability.
\newblock \emph{IEEE Internet of Things Journals}, early access,
  2020{\natexlab{a}}.

\bibitem[Kumar and Chand(2020{\natexlab{b}})]{kumar2020escrow}
Mahender Kumar and Satish Chand.
\newblock Escrow-less identity-based signature scheme with outsourced
  protection in cloud computing.
\newblock \emph{Wireless Personal Communications}, 114:\penalty0 3115--3136,
  2020{\natexlab{b}}.

\bibitem[Kumar and Chand(2020{\natexlab{c}})]{kumar2020lightweight}
Mahender Kumar and Satish Chand.
\newblock A lightweight cloud-assisted identity-based anonymous authentication
  and key agreement protocol for secure wireless body area network.
\newblock \emph{IEEE Systems Journal}, 15\penalty0 (2):\penalty0 2779--2786,
  2020{\natexlab{c}}.

\bibitem[Kumar and Chand(2020{\natexlab{d}})]{kumar2020pairing}
Mahender Kumar and Satish Chand.
\newblock A pairing-less identity-based blind signature with message recovery
  scheme for cloud-assisted services.
\newblock In \emph{Information Security and Cryptology: 15th International
  Conference, Inscrypt 2019, Nanjing, China, December 6--8, 2019, Revised
  Selected Papers 15}, pages 419--434. Springer, 2020{\natexlab{d}}.

\bibitem[Kumar and Chand(2020{\natexlab{e}})]{kumar2020secp2pvod}
Mahender Kumar and Satish Chand.
\newblock Secp2pvod: A secure peer-to-peer video-on-demand system against
  pollution attack and untrusted service provider.
\newblock \emph{Multimedia Tools and Applications}, 79:\penalty0 6163--6190,
  2020{\natexlab{e}}.

\bibitem[Kumar and Katti(2016)]{kumar2016efficient}
Mahender Kumar and CP~Katti.
\newblock An efficient id-based partially blind signature scheme and
  application in electronic-cash payment system.
\newblock \emph{Accent. Trans. Inf. Secur}, 2\penalty0 (6):\penalty0 36--42,
  2016.

\bibitem[Kumar and Saxena(2018{\natexlab{a}})]{kumar2018one}
Mahender Kumar and PC~Saxena.
\newblock One-round authenticated identity-based tri-partite key agreement
  protocol for internet of things based sensors.
\newblock In \emph{Proceedings of 3rd International Conference on Internet of
  Things and Connected Technologies (ICIoTCT)}, pages 26--27,
  2018{\natexlab{a}}.

\bibitem[Kumar and Saxena(2018{\natexlab{b}})]{kumar2019PF}
Mahender Kumar and P.C. Saxena.
\newblock Pairing-free authenticated identity-based two-party key agreement
  protocol for resource-constraint devices.
\newblock In \emph{2018 International Conference on Futuristic Trends in
  Network and Communication Technologies}, volume 958, pages 425--440. Spriger,
  2018{\natexlab{b}}.

\bibitem[Kumar et~al.(2017{\natexlab{a}})Kumar, Katti, and
  Saxena]{kumar2017secure}
Mahender Kumar, Chittaranjan~Padmanabha Katti, and Prem~Chandra Saxena.
\newblock A secure anonymous e-voting system using identity-based blind
  signature scheme.
\newblock In \emph{International conference on information systems security},
  pages 29--49. Springer, 2017{\natexlab{a}}.

\bibitem[Kumar et~al.(2017{\natexlab{b}})Kumar, Katti, and
  Saxena]{kumar2017identity}
Mahender Kumar, CP~Katti, and PC~Saxena.
\newblock An identity-based blind signature approach for e-voting system.
\newblock \emph{International Journal of Modern Education and Computer
  Science}, 9\penalty0 (10):\penalty0 47, 2017{\natexlab{b}}.

\bibitem[Kumar et~al.(2017{\natexlab{c}})Kumar, Katti, and
  Saxena]{kumar2017new}
Mahender Kumar, CP~Katti, and PC~Saxena.
\newblock A new blind signature scheme using identity-based technique.
\newblock \emph{Int. J. Control Theory Appl}, 10\penalty0 (15):\penalty0
  36--42, 2017{\natexlab{c}}.

\bibitem[Kumar et~al.(2017{\natexlab{d}})Kumar, Katti, and
  Saxena]{kumar2017untraceable}
Mahender Kumar, CP~Katti, and PC~Saxena.
\newblock An untraceable identity-based blind signature scheme without pairing
  for e-cash payment system.
\newblock In \emph{International Conference on Ubiquitous Communications and
  Network Computing}, pages 67--78. Springer, 2017{\natexlab{d}}.

\bibitem[Kumar et~al.(2020)Kumar, Chand, and Katti]{kumar2020secure}
Mahender Kumar, Satish Chand, and Chittaranjan~Padmanabha Katti.
\newblock A secure end-to-end verifiable internet-voting system using
  identity-based blind signature.
\newblock \emph{IEEE Systems Journal}, 14\penalty0 (2):\penalty0 2032--2041,
  2020.

\bibitem[Kusters et~al.(2012)Kusters, Truderung, and Vogt]{kusters2012clash}
Ralf Kusters, Tomasz Truderung, and Andreas Vogt.
\newblock Clash attacks on the verifiability of e-voting systems.
\newblock In \emph{2012 IEEE Symposium on Security and Privacy}, pages
  395--409. IEEE, 2012.

\bibitem[Lai et~al.(2017)Lai, Mu, and Guo]{lai2017efficient}
Jianchang Lai, Yi~Mu, and Fuchun Guo.
\newblock Efficient identity-based online/offline encryption and signcryption
  with short ciphertext.
\newblock \emph{International Journal of Information Security}, 16\penalty0
  (3):\penalty0 299--311, 2017.

\bibitem[Lee et~al.(2004)Lee, Boyd, Dawson, Kim, Yang, and Yoo]{lee2004secure}
Byoungcheon Lee, Colin Boyd, Ed~Dawson, Kwangjo Kim, Jeongmo Yang, and Seungjae
  Yoo.
\newblock Secure key issuing in id-based cryptography.
\newblock In \emph{Proceedings of the second workshop on Australasian
  information security, Data Mining and Web Intelligence, and Software
  Internationalisation-Volume 32}, pages 69--74, 2004.

\bibitem[Lee and Park(2009)]{lee2009generating}
Hyang-Sook Lee and Cheol-Min Park.
\newblock Generating pairing-friendly curves with the cm equation of degree 1.
\newblock In \emph{International Conference on Pairing-Based Cryptography},
  pages 66--77. Springer, 2009.

\bibitem[Li et~al.(2009)Li, Hwang, and Lai]{li2009verifiable}
Chun-Ta Li, Min-Shiang Hwang, and Yan-Chi Lai.
\newblock A verifiable electronic voting scheme over the internet.
\newblock In \emph{2009 Sixth International Conference on Information
  Technology: New Generations}, pages 449--454. IEEE, 2009.

\bibitem[Li et~al.(2012)Li, Khan, Alghathbar, and Takagi]{li2012identity}
Fagen Li, Muhammad~Khurram Khan, Khaled Alghathbar, and Tsuyoshi Takagi.
\newblock Identity-based online/offline signcryption for low power devices.
\newblock \emph{Journal of Network and Computer Applications}, 35\penalty0
  (1):\penalty0 340--347, 2012.

\bibitem[Li et~al.(2017{\natexlab{a}})Li, Han, and Jin]{li2017certificateless}
Fagen Li, Yanan Han, and Chunhua Jin.
\newblock Certificateless online/offline signcryption for the internet of
  things.
\newblock \emph{Wireless Networks}, 23\penalty0 (1):\penalty0 145--158,
  2017{\natexlab{a}}.

\bibitem[Li et~al.(2015)Li, Zhao, and Zhang]{li2015certificateless}
Jiguo Li, Jingjing Zhao, and Yichen Zhang.
\newblock Certificateless online/offline signcryption scheme.
\newblock \emph{Security and Communication Networks}, 8\penalty0 (11):\penalty0
  1979--1990, 2015.

\bibitem[Li et~al.(2019)Li, Wang, Huang, Wang, and Xiang]{li2019multi}
Jing Li, Xianmin Wang, Zhengan Huang, Licheng Wang, and Yang Xiang.
\newblock Multi-level multi-secret sharing scheme for decentralized e-voting in
  cloud computing.
\newblock \emph{Journal of Parallel and Distributed Computing}, 130:\penalty0
  91--97, 2019.

\bibitem[Li et~al.(2017{\natexlab{b}})Li, Peng, Kumari, Wu, Karuppiah, and
  Choo]{li2017enhanced}
Xiong Li, Jieyao Peng, Saru Kumari, Fan Wu, Marimuthu Karuppiah, and
  Kim-Kwang~Raymond Choo.
\newblock An enhanced 1-round authentication protocol for wireless body area
  networks with user anonymity.
\newblock \emph{Computers \& Electrical Engineering}, 61:\penalty0 238--249,
  2017{\natexlab{b}}.

\bibitem[Li et~al.(2017{\natexlab{c}})Li, Qi, and Tang]{li2017efficient}
Yuanlong Li, Fang Qi, and Zhe Tang.
\newblock An efficient hierarchical identity-based encryption scheme for the
  key escrow.
\newblock In \emph{International Conference on Security, Privacy and Anonymity
  in Computation, Communication and Storage}, pages 108--120. Springer,
  2017{\natexlab{c}}.

\bibitem[Liu et~al.(2018)Liu, Shen, Chen, Wang, Zhou, and Wang]{liu2018privacy}
Dengzhi Liu, Jian Shen, Yuling Chen, Chen Wang, Tianqi Zhou, and Anxi Wang.
\newblock Privacy-preserving data outsourcing with integrity auditing for
  lightweight devices in cloud computing.
\newblock In \emph{International conference on information security and
  cryptology}, pages 223--239. Springer, 2018.

\bibitem[Liu et~al.(2013)Liu, Zhang, Chen, and Kwak]{liu2013certificateless}
Jingwei Liu, Zonghua Zhang, Xiaofeng Chen, and Kyung~Sup Kwak.
\newblock Certificateless remote anonymous authentication schemes for
  wirelessbody area networks.
\newblock \emph{IEEE Transactions on parallel and distributed systems},
  25\penalty0 (2):\penalty0 332--342, 2013.

\bibitem[Liu et~al.(2010)Liu, Baek, and Zhou]{liu2010online}
Joseph~K Liu, Joonsang Baek, and Jianying Zhou.
\newblock Online/offline identity-based signcryption revisited.
\newblock In \emph{International conference on information security and
  cryptology}, pages 36--51. Springer, 2010.

\bibitem[L{\'o}pez-Garc{\'\i}a et~al.(2014)L{\'o}pez-Garc{\'\i}a, Perez, and
  Rodr{\'\i}guez-Henr{\'\i}quez]{lopez2014pairing}
Lourdes L{\'o}pez-Garc{\'\i}a, Luis J~Dominguez Perez, and Francisco
  Rodr{\'\i}guez-Henr{\'\i}quez.
\newblock A pairing-based blind signature e-voting scheme.
\newblock \emph{The Computer Journal}, 57\penalty0 (10):\penalty0 1460--1471,
  2014.

\bibitem[Luo et~al.(2014)Luo, Tu, and Xu]{luo2014security}
Ming Luo, Min Tu, and Jianfeng Xu.
\newblock A security communication model based on certificateless
  online/offline signcryption for internet of things.
\newblock \emph{Security and Communication Networks}, 7\penalty0 (10):\penalty0
  1560--1569, 2014.

\bibitem[Lynn(2007)]{lynn2007pbc}
Ben Lynn.
\newblock Pbc library-pairing-based cryptography.
\newblock \emph{http://crypto. stanford. edu/pbc/}, 2007.

\bibitem[Lynn(2010)]{lynn2010pairing}
Ben Lynn.
\newblock The pairing-based cryptography (pbc) library, 2010.

\bibitem[Malone-Lee(2002)]{malone2002identity}
John Malone-Lee.
\newblock Identity-based signcryption.
\newblock \emph{IACR Cryptol. ePrint Arch.}, 2002:\penalty0 98, 2002.

\bibitem[Mao(2006)]{mao2006linkability}
Jian Mao.
\newblock Linkability analysis of some blind signature schemes.
\newblock In \emph{International Conference on Computational and Information
  Science}, pages 556--566. Springer, 2006.

\bibitem[Mbiang et~al.(2020)Mbiang, Aranha, and Fouotsa]{mbiang2020computing}
Narcisse~Bang Mbiang, Diego De~Freitas Aranha, and Emmanuel Fouotsa.
\newblock Computing the optimal ate pairing over elliptic curves with embedding
  degrees 54 and 48 at the 256-bit security level.
\newblock \emph{International Journal of Applied Cryptography}, 4\penalty0
  (1):\penalty0 45--59, 2020.

\bibitem[Menezes et~al.(2016)Menezes, Sarkar, and Singh]{menezes2016challenges}
Alfred Menezes, Palash Sarkar, and Shashank Singh.
\newblock Challenges with assessing the impact of nfs advances on the security
  of pairing-based cryptography.
\newblock In \emph{International Conference on Cryptology in Malaysia}, pages
  83--108. Springer, 2016.

\bibitem[Menezes et~al.(1993)Menezes, Okamoto, and
  Vanstone]{menezes1993reducing}
Alfred~J Menezes, Tatsuaki Okamoto, and Scott~A Vanstone.
\newblock Reducing elliptic curve logarithms to logarithms in a finite field.
\newblock \emph{iEEE Transactions on information Theory}, 39\penalty0
  (5):\penalty0 1639--1646, 1993.

\bibitem[Miller(1985)]{miller1985use}
Victor~S Miller.
\newblock Use of elliptic curves in cryptography.
\newblock In \emph{Conference on the theory and application of cryptographic
  techniques}, pages 417--426. Springer, 1985.

\bibitem[Miyaji et~al.(2001)Miyaji, Nakabayashi, and Takano]{miyaji2001new}
Atsuko Miyaji, Masaki Nakabayashi, and Shunzou Takano.
\newblock New explicit conditions of elliptic curve traces for fr-reduction.
\newblock \emph{IEICE transactions on fundamentals of electronics,
  communications and computer sciences}, 84\penalty0 (5):\penalty0 1234--1243,
  2001.

\bibitem[Mu and Varadharajan(1998)]{mu1998anonymous}
Yi~Mu and Vijay Varadharajan.
\newblock Anonymous secure e-voting over a network.
\newblock In \emph{Proceedings 14th Annual Computer Security Applications
  Conference (Cat. No. 98EX217)}, pages 293--299. IEEE, 1998.

\bibitem[Mykletun et~al.(2006)Mykletun, Narasimha, and
  Tsudik]{mykletun2006authentication}
Einar Mykletun, Maithili Narasimha, and Gene Tsudik.
\newblock Authentication and integrity in outsourced databases.
\newblock \emph{ACM Transactions on Storage (TOS)}, 2\penalty0 (2):\penalty0
  107--138, 2006.

\bibitem[Nikooghadam and Amintoosi(2020)]{nikooghadam2020secure}
Mahdi Nikooghadam and Haleh Amintoosi.
\newblock A secure and robust elliptic curve cryptography-based mutual
  authentication scheme for session initiation protocol.
\newblock \emph{Security and Privacy}, 3\penalty0 (1):\penalty0 e92, 2020.

\bibitem[Niu et~al.(2017)Niu, Li, and Wang]{niu2017privacy}
Shufen Niu, Zhenbin Li, and Caifen Wang.
\newblock Privacy-preserving multi-party aggregate signcryption for
  heterogeneous systems.
\newblock In \emph{International Conference on Cloud Computing and Security},
  pages 216--229. Springer, 2017.

\bibitem[Omala et~al.(2016)Omala, Robert, and Li]{omala2016provably}
Anyembe~Andrew Omala, Niyifasha Robert, and Fagen Li.
\newblock A provably-secure transmission scheme for wireless body area
  networks.
\newblock \emph{Journal of medical systems}, 40\penalty0 (11):\penalty0 247,
  2016.

\bibitem[Pang et~al.(2009)Pang, Zhang, and Mouratidis]{pang2009scalable}
HweeHwa Pang, Jilian Zhang, and Kyriakos Mouratidis.
\newblock Scalable verification for outsourced dynamic databases.
\newblock \emph{Proceedings of the VLDB Endowment}, 2\penalty0 (1):\penalty0
  802--813, 2009.

\bibitem[Pantazis et~al.(2012)Pantazis, Nikolidakis, and
  Vergados]{pantazis2012energy}
Nikolaos~A Pantazis, Stefanos~A Nikolidakis, and Dimitrios~D Vergados.
\newblock Energy-efficient routing protocols in wireless sensor networks: A
  survey.
\newblock \emph{IEEE Communications surveys \& tutorials}, 15\penalty0
  (2):\penalty0 551--591, 2012.

\bibitem[Paterson(2002)]{paterson2002cryptography}
Kenneth~G Paterson.
\newblock Cryptography from pairings: a snapshot of current research.
\newblock \emph{Information Security Technical Report}, 7\penalty0
  (3):\penalty0 41--54, 2002.

\bibitem[Pointcheval and Stern(2000)]{pointcheval2000security}
David Pointcheval and Jacques Stern.
\newblock Security arguments for digital signatures and blind signatures.
\newblock \emph{Journal of cryptology}, 13\penalty0 (3):\penalty0 361--396,
  2000.

\bibitem[Popoveniuc and Hosp(2006)]{popoveniuc2006introduction}
Stefan Popoveniuc and Ben Hosp.
\newblock An introduction to punchscan.
\newblock In \emph{IAVoSS workshop on trustworthy elections (WOTE 2006)}, pages
  28--30. Robinson College United Kingdom, 2006.

\bibitem[Porkodi et~al.(2011)Porkodi, Arumuganathan, and
  Vidya]{porkodi2011multi}
Chinniah Porkodi, Ramalingam Arumuganathan, and Krishnasamy Vidya.
\newblock Multi-authority electronic voting scheme based on elliptic curves.
\newblock \emph{IJ Network Security}, 12\penalty0 (2):\penalty0 84--91, 2011.

\bibitem[Rivest et~al.(1978)Rivest, Shamir, and Adleman]{rivest1978method}
Ronald~L Rivest, Adi Shamir, and Leonard Adleman.
\newblock A method for obtaining digital signatures and public-key
  cryptosystems.
\newblock \emph{Communications of the ACM}, 21\penalty0 (2):\penalty0 120--126,
  1978.

\bibitem[Ryan et~al.(2009)Ryan, Bismark, Heather, Schneider, and
  Xia]{ryan2009pret}
Peter~YA Ryan, David Bismark, James Heather, Steve Schneider, and Zhe Xia.
\newblock Pr{\^e}t {\`a} voter: a voter-verifiable voting system.
\newblock \emph{IEEE transactions on information forensics and security},
  4\penalty0 (4):\penalty0 662--673, 2009.

\bibitem[Saeed et~al.(2018)Saeed, Liu, Tian, Gao, and Li]{saeed2018hoosc}
Mutaz Elradi~S Saeed, Qingyin Liu, GuiYun Tian, Bin Gao, and Fagen Li.
\newblock Hoosc: heterogeneous online/offline signcryption for the internet of
  things.
\newblock \emph{Wireless Networks}, 24\penalty0 (8):\penalty0 3141--3160, 2018.

\bibitem[Sahana et~al.(2019)Sahana, Das, and Bhuyan]{sahana2019provable}
Subhas~Chandra Sahana, Manik~Lal Das, and Bubu Bhuyan.
\newblock A provable secure key-escrow-free identity-based signature scheme
  without using secure channel at the phase of private key issuance.
\newblock \emph{S{\=a}dhan{\=a}}, 44\penalty0 (6):\penalty0 132, 2019.

\bibitem[Scott and Barreto(2006)]{scott2006generating}
Michael Scott and Paulo~SLM Barreto.
\newblock Generating more mnt elliptic curves.
\newblock \emph{Designs, Codes and Cryptography}, 38\penalty0 (2):\penalty0
  209--217, 2006.

\bibitem[Scott and Guillevic(2018)]{scott2018new}
Michael Scott and Aurore Guillevic.
\newblock A new family of pairing-friendly elliptic curves.
\newblock In \emph{International Workshop on the Arithmetic of Finite Fields},
  pages 43--57. Springer, 2018.

\bibitem[Selvi et~al.(2009)Selvi, Vivek, Shriram, Kalaivani, and
  Rangan]{selvi2009identityagg}
S~Sharmila~Deva Selvi, S~Sree Vivek, J~Shriram, S~Kalaivani, and C~Pandu
  Rangan.
\newblock Identity based aggregate signcryption schemes.
\newblock In \emph{International Conference on Cryptology in India}, pages
  378--397. Springer, 2009.

\bibitem[Selvi et~al.(2010)Selvi, Vivek, and Rangan]{selvi2010identityonline}
S~Sharmila~Deva Selvi, S~Sree Vivek, and C~Pandu Rangan.
\newblock Identity based online/offline signcryption scheme.
\newblock \emph{IACR Cryptol. ePrint Arch.}, 2010:\penalty0 376, 2010.

\bibitem[Shamir(1984)]{shamir1984identity}
Adi Shamir.
\newblock Identity-based cryptosystems and signature schemes.
\newblock In \emph{Workshop on the theory and application of cryptographic
  techniques}, pages 47--53. Springer, 1984.

\bibitem[Shi et~al.(2015)Shi, Kumar, Gong, Chilamkurti, and
  Chang]{shi2015security}
Wenbo Shi, Neeraj Kumar, Peng Gong, Naveen Chilamkurti, and Hangbae Chang.
\newblock On the security of a certificateless online/offline signcryption for
  internet of things.
\newblock \emph{Peer-to-Peer Networking and Applications}, 8\penalty0
  (5):\penalty0 881--885, 2015.

\bibitem[Shim(2014)]{shim2014s2drp}
Kyung-Ah Shim.
\newblock S2drp: Secure implementations of distributed reprogramming protocol
  for wireless sensor networks.
\newblock \emph{Ad Hoc Networks}, 19:\penalty0 1--8, 2014.

\bibitem[Shim et~al.(2013)Shim, Lee, and Park]{shim2013eibas}
Kyung-Ah Shim, Young-Ran Lee, and Cheol-Min Park.
\newblock Eibas: An efficient identity-based broadcast authentication scheme in
  wireless sensor networks.
\newblock \emph{Ad Hoc Networks}, 11\penalty0 (1):\penalty0 182--189, 2013.

\bibitem[Sowjanya et~al.(2020)Sowjanya, Dasgupta, and
  Ray]{sowjanya2020elliptic}
K~Sowjanya, Mou Dasgupta, and Sangram Ray.
\newblock An elliptic curve cryptography based enhanced anonymous
  authentication protocol for wearable health monitoring systems.
\newblock \emph{International Journal of Information Security}, 19\penalty0
  (1):\penalty0 129--146, 2020.

\bibitem[Sun et~al.(2008{\natexlab{a}})Sun, Huang, Mu, and
  Susilo]{sun2008identity}
Dongdong Sun, Xinyi Huang, Yi~Mu, and Willy Susilo.
\newblock Identity-based on-line/off-line signcryption.
\newblock In \emph{2008 IFIP international conference on network and parallel
  computing}, pages 34--41. IEEE, 2008{\natexlab{a}}.

\bibitem[Sun et~al.(2008{\natexlab{b}})Sun, Mu, and Susilo]{sun2008generic}
Dongdong Sun, Yi~Mu, and Willy Susilo.
\newblock A generic construction of identity-based online/offline signcryption.
\newblock In \emph{2008 IEEE International Symposium on Parallel and
  Distributed Processing with Applications}, pages 707--712. IEEE,
  2008{\natexlab{b}}.

\bibitem[Tanaka and Nakamula(2008)]{tanaka2008constructing}
Satoru Tanaka and Ken Nakamula.
\newblock Constructing pairing-friendly elliptic curves using factorization of
  cyclotomic polynomials.
\newblock In \emph{International Conference on Pairing-Based Cryptography},
  pages 136--145. Springer, 2008.

\bibitem[Tian et~al.(2009)Tian, Li, Xu, and Wang]{tian2009security}
Xiu-Xia Tian, Hong-Jiao Li, Jian-Ping Xu, and Yong Wang.
\newblock A security enforcement id-based partially blind signature scheme.
\newblock In \emph{2009 International conference on web information systems and
  mining}, pages 488--492. IEEE, 2009.

\bibitem[Tsai and Lo(2015)]{tsai2015privacy}
Jia-Lun Tsai and Nai-Wei Lo.
\newblock A privacy-aware authentication scheme for distributed mobile cloud
  computing services.
\newblock \emph{IEEE systems journal}, 9\penalty0 (3):\penalty0 805--815, 2015.

\bibitem[Tseng et~al.(2019)Tseng, Fan, and Chen]{tseng2019top}
Yi-Fan Tseng, Chun-I Fan, and Ching-Wen Chen.
\newblock Top-level secure certificateless signature scheme in the standard
  model.
\newblock \emph{IEEE Systems Journal}, 13\penalty0 (3):\penalty0 2763--2774,
  2019.

\bibitem[Ullah et~al.(2019)Ullah, Marcenaro, and Rinner]{ullah2019secure}
Subhan Ullah, Lucio Marcenaro, and Bernhard Rinner.
\newblock Secure smart cameras by aggregate-signcryption with decryption
  fairness for multi-receiver iot applications.
\newblock \emph{Sensors}, 19\penalty0 (2):\penalty0 327, 2019.

\bibitem[Vassil et~al.(2016)Vassil, Solvak, Vinkel, Trechsel, and
  Alvarez]{vassil2016diffusion}
Kristjan Vassil, Mihkel Solvak, Priit Vinkel, Alexander~H Trechsel, and
  R~Michael Alvarez.
\newblock The diffusion of internet voting. usage patterns of internet voting
  in estonia between 2005 and 2015.
\newblock \emph{Government Information Quarterly}, 33\penalty0 (3):\penalty0
  453--459, 2016.

\bibitem[Verheul(2016)]{verheul2016practical}
Eric~R Verheul.
\newblock Practical backward unlinkable revocation in fido, german e-id, idemix
  and u-prove.
\newblock \emph{IACR Cryptol. ePrint Arch.}, 2016:\penalty0 217, 2016.

\bibitem[Verma and Singh(2017)]{verma2017efficient}
Girraj~Kumar Verma and BB~Singh.
\newblock Efficient identity-based blind message recovery signature scheme from
  pairings.
\newblock \emph{IET Information Security}, 12\penalty0 (2):\penalty0 150--156,
  2017.

\bibitem[Wang and Zhang(2015)]{wang2015new}
Chunzhi Wang and Yanmei Zhang.
\newblock New authentication scheme for wireless body area networks using the
  bilinear pairing.
\newblock \emph{Journal of medical systems}, 39\penalty0 (11):\penalty0 136,
  2015.

\bibitem[Wang et~al.(2010)Wang, Wang, Ren, and Lou]{wang2010privacy}
Cong Wang, Qian Wang, Kui Ren, and Wenjing Lou.
\newblock Privacy-preserving public auditing for data storage security in cloud
  computing.
\newblock In \emph{2010 proceedings ieee infocom}, pages 1--9. Ieee, 2010.

\bibitem[Wang and Ma(2013)]{wang2013cryptanalysis}
Ding Wang and Chun-Guang Ma.
\newblock Cryptanalysis of a remote user authentication scheme for mobile
  client--server environment based on ecc.
\newblock \emph{Information Fusion}, 14\penalty0 (4):\penalty0 498--503, 2013.

\bibitem[Wang et~al.(2018)Wang, Chen, Wang, and Chan]{wang2018content}
Haizhou Wang, Xingshu Chen, Wenxian Wang, and Mei~Ya Chan.
\newblock Content pollution propagation in the overlay network of peer-to-peer
  live streaming systems: modelling and analysis.
\newblock \emph{IET Communications}, 12\penalty0 (17):\penalty0 2119--2131,
  2018.

\bibitem[Wang et~al.(2016)Wang, Liu, Liu, and Wong]{wang2016identity}
Hao Wang, Zhen Liu, Zhe Liu, and Duncan~S Wong.
\newblock Identity-based aggregate signcryption in the standard model from
  multilinear maps.
\newblock \emph{Frontiers of Computer Science}, 10\penalty0 (4):\penalty0
  741--754, 2016.

\bibitem[Wu et~al.(2014)Wu, Wu, Lin, and Wang]{wu2014electronic}
Zhen-Yu Wu, Ju-Chuan Wu, Sung-Chiang Lin, and Charlotte Wang.
\newblock An electronic voting mechanism for fighting bribery and coercion.
\newblock \emph{Journal of network and computer applications}, 40:\penalty0
  139--150, 2014.

\bibitem[Xiong et~al.(2012)Xiong, Chen, and Li]{xiong2012provably}
Hu~Xiong, Zhong Chen, and Fagen Li.
\newblock Provably secure and efficient certificateless authenticated
  tripartite key agreement protocol.
\newblock \emph{mathematical and Computer Modelling}, 55\penalty0
  (3-4):\penalty0 1213--1221, 2012.

\bibitem[Xiong et~al.(2013)Xiong, Chen, and Li]{xiong2013new}
Hu~Xiong, Zhong Chen, and Fagen Li.
\newblock New identity-based three-party authenticated key agreement protocol
  with provable security.
\newblock \emph{Journal of Network and Computer Applications}, 36\penalty0
  (2):\penalty0 927--932, 2013.

\bibitem[Xiong et~al.(2010)Xiong, Wong, and Deng]{xiong2010tinypairing}
Xiaokang Xiong, Duncan~S Wong, and Xiaotie Deng.
\newblock Tinypairing: A fast and lightweight pairing-based cryptographic
  library for wireless sensor networks.
\newblock In \emph{2010 IEEE Wireless Communication and Networking Conference},
  pages 1--6. IEEE, 2010.

\bibitem[Yamamoto et~al.(2007)Yamamoto, Oda, and Aoki]{yamamoto2007fast}
Go~Yamamoto, Satoshi Oda, and Kazumaro Aoki.
\newblock Fast integrity for large data.
\newblock In \emph{Proc. ECRYPT Workshop Software Performance Enhancement for
  Encryption and Decryption}, pages 21--32, 2007.

\bibitem[Yang and Chang(2009)]{yang2009id}
Jen-Ho Yang and Chin-Chen Chang.
\newblock An id-based remote mutual authentication with key agreement scheme
  for mobile devices on elliptic curve cryptosystem.
\newblock \emph{Computers \& security}, 28\penalty0 (3-4):\penalty0 138--143,
  2009.

\bibitem[Yang et~al.(2019)Yang, Wang, Wu, and Mu]{yang2019top}
Wenjie Yang, Shangpeng Wang, Wei Wu, and Yi~Mu.
\newblock Top-level secure certificateless signature against
  malicious-but-passive kgc.
\newblock \emph{IEEE Access}, 7:\penalty0 112870--112878, 2019.

\bibitem[Yang et~al.(2016)Yang, Huang, and Liu]{yang2016efficient}
Xu~Yang, Xinyi Huang, and Joseph~K Liu.
\newblock Efficient handover authentication with user anonymity and
  untraceability for mobile cloud computing.
\newblock \emph{Future Generation Computer Systems}, 62:\penalty0 190--195,
  2016.

\bibitem[Yang et~al.(2018)Yang, Yi, Nepal, Kelarev, and Han]{yang2018secure}
Xuechao Yang, Xun Yi, Surya Nepal, Andrei Kelarev, and Fengling Han.
\newblock A secure verifiable ranked choice online voting system based on
  homomorphic encryption.
\newblock \emph{IEEE Access}, 6:\penalty0 20506--20519, 2018.

\bibitem[Yin and Liang(2015)]{yin2015certificateless}
Aihan Yin and Hongchao Liang.
\newblock Certificateless hybrid signcryption scheme for secure communication
  of wireless sensor networks.
\newblock \emph{Wireless Personal Communications}, 80\penalty0 (3):\penalty0
  1049--1062, 2015.

\bibitem[Yoon and Yoo(2009)]{yoon2009robust}
Eun-Jun Yoon and Kee-Young Yoo.
\newblock Robust id-based remote mutual authentication with key agreement
  scheme for mobile devices on ecc.
\newblock In \emph{2009 International Conference on Computational Science and
  Engineering}, volume~2, pages 633--640. IEEE, 2009.

\bibitem[Zacharias(2016)]{zacharias2016demos}
Thomas Zacharias.
\newblock \emph{The DEMOS family of e-voting systems: End-to-end verifiable
  elections in the standard model}.
\newblock PhD thesis, E$\theta$$\nu$$\iota$$\kappa${\'o}
  $\kappa$$\alpha$$\iota$
  K$\alpha$$\pi$o$\delta$$\iota$$\sigma$$\tau$$\rho$$\iota$$\alpha$$\kappa${\'o}
  $\Pi$$\alpha$$\nu$$\varepsilon$$\pi$$\iota$$\sigma$$\tau$$\acute{\eta}$$\mu$$\iota$o
  A$\theta$$\eta$$\nu$$\acute{\omega}$$\nu$ (EK$\Pi$A).
  $\Sigma$$\chi$o$\lambda$$\acute{\eta}$
  $\Theta$$\varepsilon$$\tau$$\iota$$\kappa$$\acute{\omega}$$\nu$
  E$\pi$$\iota$$\sigma$$\tau$$\eta$$\mu$$\acute{\omega}$$\nu$~…, 2016.

\bibitem[Zag{\'o}rski et~al.(2013)Zag{\'o}rski, Carback, Chaum, Clark, Essex,
  and Vora]{zagorski2013remotegrity}
Filip Zag{\'o}rski, Richard~T Carback, David Chaum, Jeremy Clark, Aleksander
  Essex, and Poorvi~L Vora.
\newblock Remotegrity: Design and use of an end-to-end verifiable remote voting
  system.
\newblock In \emph{International Conference on Applied Cryptography and Network
  Security}, pages 441--457. Springer, 2013.

\bibitem[Zhang et~al.(2016)Zhang, Wang, Ye, and Lin]{zhang2016light}
Aiqing Zhang, Lei Wang, Xinrong Ye, and Xiaodong Lin.
\newblock Light-weight and robust security-aware d2d-assist data transmission
  protocol for mobile-health systems.
\newblock \emph{IEEE Transactions on Information Forensics and Security},
  12\penalty0 (3):\penalty0 662--675, 2016.

\bibitem[Zhang and Kim(2002)]{zhang2002id}
Fangguo Zhang and Kwangjo Kim.
\newblock Id-based blind signature and ring signature from pairings.
\newblock In \emph{International Conference on the Theory and Application of
  Cryptology and Information Security}, pages 533--547. Springer, 2002.

\bibitem[Zhang and Sun(2013)]{zhang2013id}
Jianhong Zhang and Zhibin Sun.
\newblock An id-based server-aided verification short signature scheme avoid
  key escrow.
\newblock \emph{Journal of information science and engineering}, 29\penalty0
  (3):\penalty0 459--473, 2013.

\bibitem[Zhang et~al.(2012)Zhang, Liu, Huang, Au, and
  Susilo]{zhang2012efficient}
Yunmei Zhang, Joseph~K Liu, Xinyi Huang, Man~Ho Au, and Willy Susilo.
\newblock Efficient escrow-free identity-based signature.
\newblock In \emph{International Conference on Provable Security}, pages
  161--174. Springer, 2012.

\bibitem[Zhao(2014)]{zhao2014efficient}
Zhenguo Zhao.
\newblock An efficient anonymous authentication scheme for wireless body area
  networks using elliptic curve cryptosystem.
\newblock \emph{Journal of medical systems}, 38\penalty0 (2):\penalty0 13,
  2014.

\bibitem[Zheng(1997)]{zheng1997digital}
Yuliang Zheng.
\newblock Digital signcryption or how to achieve cost (signature \& encryption)
  cost (signature)+ cost (encryption).
\newblock In \emph{Annual international cryptology conference}, pages 165--179.
  Springer, 1997.

\bibitem[Zhou(2018)]{zhou2018comments}
Caixue Zhou.
\newblock Comments on “light-weight and robust security-aware d2d-assist data
  transmission protocol for mobile-health systems”.
\newblock \emph{IEEE Transactions on Information Forensics and Security},
  13\penalty0 (7):\penalty0 1869--1870, 2018.

\bibitem[Zhou(2019)]{zhou2019improved}
Caixue Zhou.
\newblock An improved lightweight certificateless generalized signcryption
  scheme for mobile-health system.
\newblock \emph{International Journal of Distributed Sensor Networks},
  15\penalty0 (1):\penalty0 1550147718824465, 2019.

\end{thebibliography}
%\addcontentsline{toc}{chapter}{Appendices}
%\include{endmatter/append}
\end{document}